\DeclareSymbolFont{largesymbols}{OMX}{zplm}{m}{n} 
\numberwithin{equation}{section}
\newcolumntype{C}{>{$}c<{$}} 
\newcommand{\tl}[1]{\mathsf{TL}_{#1}}
\newcommand{\func}[2]{#1 \left( #2 \right)}
\newcommand{\brac}[1]{\left( #1 \right)}
\newcommand{\set}[1]{\left\{ #1 \right\}}
\newcommand{\st}{\mspace{5mu} : \mspace{5mu}}
\newcommand{\abs}[1]{\left| #1 \right|}
\newcommand{\inner}[2]{\bigl\langle #1 , #2 \bigr\rangle}
\newcommand{\flr}[1]{\left\lfloor #1 \right\rfloor}
\newcommand{\ZZ}{\mathbb{Z}}
\newcommand{\NN}{\mathbb{N}}
\newcommand{\CC}{\mathbb{C}}
\newcommand{\ii}{\mathfrak{i}}
\newcommand{\wun}{\mathbf{1}}
\newcommand{\wall}[2]{\mspace{3mu} \rule[-4pt]{1pt}{13pt} \mspace{3mu} #1 \quad #2 \mspace{3mu} \rule[-4pt]{1pt}{13pt} \mspace{3mu}}
\newcommand{\NatMod}{\mathcal{V}}
\newcommand{\IrrMod}{\mathcal{L}}
\newcommand{\RadMod}{\mathcal{R}}
\newcommand{\ProjMod}{\mathcal{P}}
\newcommand{\LinkMod}{\mathcal{M}}
\newcommand{\Res}[1]{#1{}\hspace{-0.2em} \downarrow {}}
\newcommand{\Ind}[1]{#1{}\hspace{-0.2em} \uparrow {}}
\newcommand{\qnum}[1]{\left[ #1 \right]_q}
\newcommand{\qnumber}[2]{\left[ #1 \right]_{#2}}
\newcommand{\ses}[3]{0 \rightarrow #1 \rightarrow #2 \rightarrow #3 \rightarrow 0}
\newcommand{\dses}[5]{0 \longrightarrow #1 \overset{#2}{\longrightarrow} #3 \overset{#4}{\longrightarrow} #5 \longrightarrow 0}
\newcommand{\rdses}[5]{#1 \overset{#2}{\longrightarrow} #3 \overset{#4}{\longrightarrow} #5 \longrightarrow 0}
\newcommand{\eqnref}[1]{Equation~\eqref{#1}}
\newcommand{\eqnDref}[2]{Equations~\eqref{#1} and \eqref{#2}}
\newcommand{\secref}[1]{Section~\ref{#1}}
\newcommand{\secDref}[2]{Sections~\ref{#1} and \ref{#2}}
\newcommand{\appref}[1]{Appendix~\ref{#1}}
\newcommand{\figref}[1]{Figure~\ref{#1}}
\newcommand{\thmref}[1]{Theorem~\ref{#1}}
\newcommand{\thmDref}[2]{Theorems~\ref{#1} and \ref{#2}}
\newcommand{\lemref}[1]{Lemma~\ref{#1}}
\newcommand{\propref}[1]{Proposition~\ref{#1}}
\newcommand{\corref}[1]{Corollary~\ref{#1}}
\DeclareMathOperator{\id}{id}
\DeclareMathOperator{\im}{im}
\DeclareMathOperator{\Hom}{Hom}
\DeclareMathOperator{\coker}{coker}
\theoremstyle{plain}
\newtheorem{theorem}{Theorem}[section]
\newtheorem{lemma}[theorem]{Lemma}
\newtheorem{proposition}[theorem]{Proposition}
\newtheorem{corollary}[theorem]{Corollary}
\definecolor{forestgreen}{rgb}{0.13,0.54,0.13}
\begin{document}

\title[Standard Modules, Induction and the Temperley-Lieb Algebra]{Standard Modules, Induction \\ and the Structure of the Temperley-Lieb Algebra}

\author[D Ridout]{David Ridout}

\address[David Ridout]{
Department of Theoretical Physics \\
Research School of Physics and Engineering;
and
Mathematical Sciences Institute;
Australian National University \\
Canberra, ACT 2600 \\
Australia
}

\email{david.ridout@anu.edu.au}

\author[Y Saint-Aubin]{Yvan Saint-Aubin}

\address[Yvan Saint-Aubin]{
D\'{e}partement de Math\'{e}matiques et de Statistique \\
Universit\'{e} de Montr\'{e}al \\
Qu\'{e}bec, Canada, H3C 3J7.
}

\email{saint@dms.umontreal.ca}

\date{\today}

\keywords{Temperley-Lieb algebra, standard modules, cell modules, link modules, principal indecomposable modules, projective modules, conformal field theory, induction, Bratteli diagram, non-semisimple associative algebras.}

\begin{abstract}
The basic properties of the Temperley-Lieb algebra $\tl{n}$ with parameter $\beta=q+q^{-1}$, $q\in\mathbb C \setminus \set{0}$, are reviewed in a pedagogical way. The link and standard (cell) modules that appear in numerous physical applications are defined and a natural bilinear form on the standard modules is used to characterise their maximal submodules. When this bilinear form has a non-trivial radical, some of the standard modules are reducible and $\tl{n}$ is non-semisimple. This happens only when $q$ is a root of unity. Use of restriction and induction allows for a finer description of the structure of the standard modules. Finally, a particular central element $F_n\in\tl{n}$ is studied; its action is shown to be non-diagonalisable on certain indecomposable modules and this leads to a proof that the radicals of the standard modules are irreducible. Moreover, the space of homomorphisms between standard modules is completely determined. The principal indecomposable modules are then computed concretely in terms of standard modules and their inductions. Examples are provided throughout and the delicate case $\beta=0$, that plays an important role in physical models, is studied systematically.
\end{abstract}

\maketitle

\onehalfspacing

%
%

\section{Introduction} \label{sec:Intro}

The Temperley-Lieb algebras are key objects both in mathematics and physics. Temperley and Lieb \cite{TemRel71} introduced them as complex associative algebras that arose in their study of transfer matrix approaches to (planar) lattice models.  This family of algebras, indexed by a positive integer $n$ and a complex number $\beta$, spread quickly through the physics community where it underlies the study of Potts models \cite{Martin}, ice-type models \cite{BaxExa82}, and the Andrews-Baxter-Forrester models \cite{AndEig84}.  That Temperley-Lieb algebras play a fundamental role in our modern understanding of phase transitions cannot be overstated.  These algebras were subsequently rediscovered by Jones \cite{JonInd83} who used them to define what is now known as the Jones polynomial in knot theory.  The Temperley-Lieb algebras are also intimately connected with the representation theory of the symmetric groups through their realisation as natural quotients of the (type A) Hecke algebras (see \cite{Mathas} for example).

As is usual in physical applications, it is the representation theory of the Temperley-Lieb algebra $\tl{n}$ which is the main focus of attention.  Indeed, Temperley and Lieb's original contribution takes place in a $2^n$-dimensional representation commonly used by physicists for the study of spin chains.  Such representations still form an active direction of research in mathematical physics.  There are, in addition, somewhat smaller representations that are perhaps more natural to consider including, in particular, those which we shall refer to in what follows as \emph{link representations}.  From these, one obtains quotients that have come to be described as being \emph{standard}.  It is well-known that these standard representations are irreducible for almost all values of the parameter $\beta$ and that for such $\beta$, the (finite-dimensional) representations of $\tl{n}$ are completely reducible.  However, it is a curious fact that the $\beta$ which are thereby excluded consist, to a large degree, of the parameter values that are of most interest to physicists.

This failure of complete reducibility is also well-known and there have been significant efforts by physicists and mathematicians to understand the representation theory for these exceptional values of $\beta$.  There is a continuing interest in this quest due to the current reinvigoration of the study of logarithmic conformal field theories.  To explain, the lattice models studied by physicists are believed to have limits (as $n \rightarrow \infty$ say) in which the model should be replaced by a field theory possessing conformal invariance.  The representations upon which the lattice model is founded are supposed to be replaced by representations of the quantised algebra of infinitesimal conformal transformations, the Virasoro algebra (see \cite{GaiCon11} for a recent attempt to formalise this).  Moreover, the corresponding conformal field theory is now believed not to be a minimal model, as the original paradigm suggested, but rather some logarithmic version thereof \cite{PeaLog06,ReaAss07,RidPer07}.  Here, ``logarithmic'' indicates that the Virasoro algebra representations involved also fail to be completely reducible.  Conjecturing results about these logarithmic theories through the study of their lattice counterparts remains a popular approach.

Perhaps the first to seriously address the structure of Temperley-Lieb representations for all $\beta \in \CC$ was Martin. In \cite{Martin}, he devotes two chapters to obtaining an explicit construction of the \emph{principal indecomposable} representations of the Temperley-Lieb algebra.  The arguments are rather involved and rely heavily upon intricate combinatorics and a detailed study of a collection of primitive idempotents introduced by Wenzl \cite{Wenzl88}.  Shortly thereafter, and independently, Goodman and Wenzl \cite{GoodWenzl93} applied a similarly detailed study of these idempotents to prove explicit results concerning the structure of the \emph{blocks} (two-sided ideals) of the Temperley-Lieb algebras.  The proofs require a long series of elementary but technical lemmata. Unfortunately, they exclude the case when $\beta = 0$ which is of considerable physical interest. Nevertheless, their methods lead to explicit and highly non-trivial descriptions for the radicals of the blocks.

We should also mention the well-known contribution of Westbury \cite{WesRep95}.  His article approaches the representation theory of the Temperley-Lieb algebras from a more algebraic, and less combinatorial, perspective.  First, a sufficient condition for complete reducibility is given whose form is very familiar to conformal field theorists.  This criterion is the non-degeneracy of an invariant bilinear form acting on the standard representations and Westbury computes the determinant of this form explicitly (the analogous result for the Virasoro algebra is, of course, the formula of Kac for the determinant of the form defined on Verma modules).  The technique employed involves recursion relations and was suggested by older work of James and Murphy \cite{JamDet79}. (The recursion Westbury employs contains a mistake, as is illustrated by the example he gives.)  The rest of Westbury's article addresses what happens when complete reducibility fails.  The method involves using induction and restriction to determine abstractly the spaces of homomorphisms between standard representations (although there is mention of certain explicit constructions along the lines of Martin).  Proofs are often minimal, incomplete or, in one case, referred to Martin's book.  Crucial points are therefore left without complete arguments: The exactness of the sequences satisfied by the restriction and induction of standard modules, the existence of non-trivial homomorphisms between certain standard modules when $q$ is a root of unity, and the construction and completeness of the set of principal indecomposable modules.

While the above makes our dissatisfaction with Westbury's article evident, his algebraic approach to the Temperley-Lieb algebra provides, in our opinion, an excellent road-map to learning this particular corner of representation theory.  Indeed, our motivation for writing this article derives in large part from a desire to have a pedagogical and (mostly) self-contained summary of Temperley-Lieb theory, including detailed proofs.  We believe that this will be of significant value to mathematical physicists working on lattice models and conformal field theories, as well as provide novice representation-theorists with an excellent worked example over $\mathbb C$ which explicitly illustrates some of the difficulties of non-semisimple associative algebras.

Of course, there are many possible approaches to Temperley-Lieb theory and we certainly do not claim that ours is in any way the best.  Indeed, we have tried to minimise the level of sophistication required wherever possible while still introducing the most basic tools that are indispensable for non-semisimple representation theory.  To illustrate our prejudices in this regard (and to borrow from John Cardy \cite{CarSLE05}), we mention that the word `quiver' has just made its only appearance.  Other elegant alternatives include deducing the results from their Hecke algebra analogues (see \cite{Mathas} for example) or using Schur-Weyl duality and quantum group representation theory (see \cite{GaiLat12} for a recent sketch in this direction).  We also mention a powerful category-theoretic approach due to Graham and Lehrer \cite{GL-Aff} in which structural results for $\tl{n}$ follow as special cases of their investigations into the \emph{affine} Temperley-Lieb algebras.

The article begins in \secref{sec:Diagrams} with a review of the diagrammatic and algebraic definitions of the Temperley-Lieb algebras, proving their equivalence using a standard combinatorial argument.  As far as we are aware, the diagrammatic approach to Temperley-Lieb algebras is due to Kauffman \cite{kauffman}, though the equivalence to the algebraic approach is only sketched there using an example.\footnote{We thank Fred Goodman for correspondence on this point.} Here, we mostly follow the seminal work of Jones \cite{JonInd83}.  We note, in particular, his result concerning a canonical form for monomials constructed from the standard Temperley-Lieb generators.  This turns out to be crucial for the analysis of inducing standard representations in \secref{sec:Ind}.  We remark that we make a ``heretical'' choice for Temperley-Lieb diagrams, orienting them at ninety degrees to that customarily found in the literature (see \cite{DiFMea97} for a precedent illustrating this heresy).  This choice facilitates the translation between the diagrammatic and algebraic depictions of multiplication: A diagram drawn on the left of another object corresponds to a left-action on this latter object.  We also find it notationally convenient for the following section.

\secref{sec:Reps} then introduces the link representations and their quotients, the standard representations.  Various basic, but essential, results are proven for the latter, roughly following the beautiful work of Graham and Lehrer on cellular algebras \cite{GraCel96} (the standard representations are cell representations in their formalism).  Particular attention is paid to the problematic case $\beta = 0$ in which there is a single standard representation upon which the usual invariant bilinear form vanishes identically.

The hard work begins in \secref{sec:Gram} in which we compute the determinant of an invariant bilinear form on the standard representations.  Here, we follow Westbury \cite{WesRep95} in using module restriction to block-diagonalise the Gram matrices and deduce recursion relations for the diagonalising matrices.  (One can also use idempotents to derive such recursion relations; see \cite{GL-Aff} for example.)  As mentioned above, the recursion relation that Westbury gives is incorrect and we discuss in detail the appropriate refinements that have to be made.  The determinant formula then leads to the well-known result concerning the generic semisimplicity of the Temperley-Lieb algebras.  \secref{sec:Explore} then uses the proof of the determinant formula to compute the dimension of the kernel of the Gram matrix and thence the dimensions of the radicals and irreducible quotients of the standard representations.  By utilising pictures known as Bratteli diagrams, we motivate the conjecture that these radicals are themselves irreducible (when non-trivial).

\secDref{sec:Ind}{sec:R=L} are devoted to proving this conjecture.  The first details what is obtained upon performing the induced module construction on a standard representation of $\tl{n}$ to obtain a representation of $\tl{n+1}$.  This result is stated in Westbury, but without proof.  We then use these induced representations, in the second, to demonstrate the existence of certain homomorphisms between standard representations.  This relies crucially upon a computation which shows that a particular element, which we call $F_n$, of the centre of $\tl{n}$, acts in a non-diagonalisable fashion upon appropriately chosen induced modules.  The existence of these homomorphisms is enough to prove the conjecture that the radicals of the standard modules are irreducible (or trivial).  To our knowledge, this irreducibility was first proven in \cite{GL-Aff}; our elementary proof appears to be quite different.

The analysis to this point leaves several questions unanswered, one of which is whether we have determined a complete set of mutually non-isomorphic irreducible representations.  Answering questions like these require a little more sophistication and so we turn, in \secref{sec:Proj}, to the consideration of the principal indecomposable representations.  For this we employ, once again, the induced module technique developed in \secref{sec:Ind} to concretely determine the structure of these representations in terms of that of the standard representations.  We remark that the proofs rely heavily on an extremely convenient property of the central element $F_n$ (and that this property is not shared by the central element that appears in Westbury).  After a brief summary reporting what has been proven, we conclude with two appendices.  The first defines $F_n$ and proves the properties that we require.  The second lists a selection of standard representation-theoretic results that are needed.  The text is liberally peppered throughout with examples chosen to highlight the theory being developed.  We hope that the reader will find them useful for understanding this beautiful corner of representation theory.

We end this introduction with a list of common symbols and terms.

\section*{Glossary of Terms and Symbols}

\begin{tabular}{ll}
$\LinkMod_{n}$ and $\LinkMod_{n,p}$: link modules      &  beginning of \secref{sec:Reps}\\
$\NatMod_{n,p}=\LinkMod_{n,p}/\LinkMod_{n,p+1}$: standard modules   & \eqnref{eqn:DefNatMod} \\
$\RadMod_{n,p}\subset\NatMod_{n,p}$: radical of $\langle\,\cdot, \cdot \rangle_{n,p}$ on $\NatMod_{n,p}$ & \eqnref{eq:DefRadMod} and \propref{prop:S/R}\\
$\IrrMod_{n,p}=\NatMod_{n,p}/\RadMod_{n,p}$: irreducible quotient of the standard modules & after \propref{prop:S/R} \\
$\ProjMod_{n,p}$: principal indecomposable module with quotient $\IrrMod_{n,p}$ & beginning of \secref{sec:Proj}\\
$\Res{\mathcal M}$: restriction of $\mathcal M$ from $\tl{n}$ to $\tl{n-1}$ & \propref{prop:Restrict} \\
$\Ind{\mathcal M}$: induction of $\mathcal M$ from $\tl{n}$ to $\tl{n+1}$ & beginning of \secref{sec:Ind} \\
basis $\mathcal S_{n,p}$ for the induced module $\Ind{\NatMod_{n,p}}$ & before \propref{prop:SisTheBasis}\\
$\langle\,\cdot, \cdot\rangle=\langle\,\cdot, \cdot\rangle_{n,p}$: bilinear form on $\NatMod_{n,p}$ & before \lemref{lem:Invariance} \\
$(n,p)$ is critical if $q^{2(n-2p+1)}=1$ & after the proof of \corref{cor:ResSplit} \\
critical lines, critical strips of the Bratteli diagram & after \corref{cor:DimL}\\
$d_{n,p}$: number of $(n,p)$-link states & before \eqnref{eqn:RecD}\\
$F_n$ and $f_{n,p}$: central element of $\tl n$ and its eigenvalues & \appref{app:Casimir} and \propref{lem:FEigs} \\
$G_{n,p}$: Gram matrices ($\langle\,\cdot, \cdot\rangle_{n,p}$ in the basis of link states) & beginning of \secref{sec:Gram}\\
Jones' normal form and reverse normal form & \propref{prop:JNO} \\
simple link in a diagram or link state & before \eqnref{eq:multByUi} and \lemref{lem:Basis} \\
admissible link state & before \lemref{lem:betterSpanningSet} \\
a symmetric pair $V_{n,p}$ and $V_{n,p'}$   & beginning of \secref{sec:R=L}\\
\end{tabular}

%
%

\section{Diagrams and Presentations} \label{sec:Diagrams}

Let us define an $n$-diagram algebra, for $n$ a positive integer, as follows.  First, draw two parallel vertical lines and mark $n$ points on each.  The $2n$ points obtained are then connected pairwise by $n$ links (arcs) which do not intersect and which lie entirely between the two vertical borders.  This gives what we will call an \emph{$n$-diagram}.  Second, form the complex vector space spanned formally by the set of all $n$-diagrams.  Third, endow this vector space with a multiplication given by concatenating two $n$-diagrams, replacing each interior loop by a factor $\beta \in \CC$, and identifying (and then removing) the two interior vertical borders.  For example, for $n = 3$,
\begin{equation}
\parbox[c]{12mm}{
\includegraphics[height=16mm]{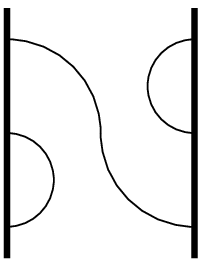}
}
\ \cdot \ 
\parbox[c]{12mm}{
\includegraphics[height=16mm]{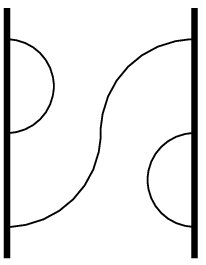}
}
\ = \ 
\parbox[c]{24mm}{
\includegraphics[height=16mm]{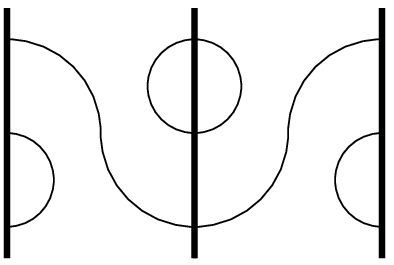}
}
\ = \beta \ 
\parbox[c]{12mm}{
\includegraphics[height=16mm]{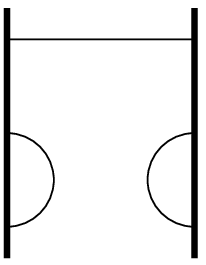}
}
\ .
\end{equation}
This multiplication defines, for each $\beta \in \CC$, an associative algebra called the $n$-diagram algebra.  It is easily checked that this algebra has a unit given by the diagram
\begin{center}
\parbox{12mm}{\begin{center}
\includegraphics[height=16mm]{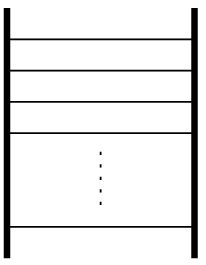}
\end{center}}
\ .
\end{center}
We note here for future reference that much of what follows is simplified algebraically upon making the identification $\beta = q + q^{-1}$
with $q\in\mathbb{C}^{\times}$.  Of course this implies that the resulting theory must be invariant under the exchange of $q$ with $q^{-1}$.

The $n$-diagram algebra is in fact isomorphic to the Temperley-Lieb algebra $\tl{n}$, as we shall see.  The latter algebra is abstractly defined as being generated by a unit $\wun$ and elements $u_i$, $i = 1, 2, \ldots , n-1$, satisfying
\begin{equation} \label{eqn:TLrels}
u_i^2 = \beta u_i, \qquad u_i u_{i \pm 1} u_i = u_i \qquad \text{and} \qquad u_i u_j = u_j u_i \quad \text{if $\abs{i-j} > 1$.}
\end{equation}
Indeed, if we make the identifications
\begin{equation} \label{eqn:DefTLGens}
\wun = \ 
\parbox{12mm}{\begin{center}
\includegraphics[height=16mm]{tl-id}
\end{center}}
\qquad \text{and} \qquad u_i = \ 
\parbox{16mm}{\begin{center}
\psfrag{1}[][]{$\scriptstyle 1$}
\psfrag{i-1}[][]{$\scriptstyle i-1$}
\psfrag{i}[][]{$\scriptstyle i$}
\psfrag{i+1}[][]{$\scriptstyle i+1$}
\psfrag{i+2}[][]{$\scriptstyle i+2$}
\psfrag{n}[][]{$\scriptstyle n$}
\includegraphics[height=24mm]{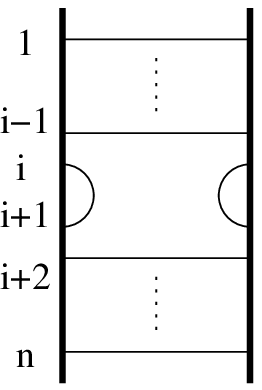}
\end{center}}
\ ,
\end{equation}
then we can verify explicitly the properties of the unit (which are clear) and the defining relations
\begin{subequations}
\begin{align}
u_i^2 &= \ 
\parbox{24mm}{\begin{center}
\includegraphics[height=24mm]{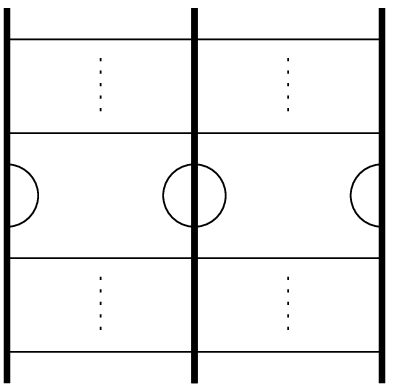}
\end{center}}
\ = \beta \ 
\parbox{12mm}{\begin{center}
\includegraphics[height=24mm]{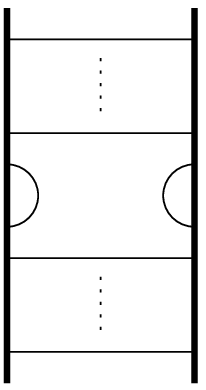}
\end{center}}
\ = \beta u_i, \\
u_i u_{i-1} u_i &= \ 
\parbox{36mm}{\begin{center}
\includegraphics[height=28mm]{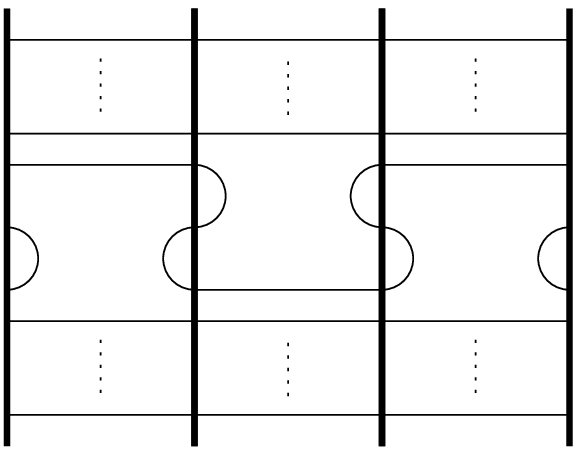}
\end{center}}
\ = \ 
\parbox{12mm}{\begin{center}
\includegraphics[height=28mm]{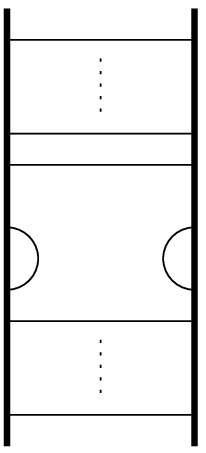}
\end{center}}
\ = u_i, \label{eqn:uuu=u} \\
\text{and} \qquad u_i u_j &= \ 
\parbox{24mm}{\begin{center}
\includegraphics[height=32mm]{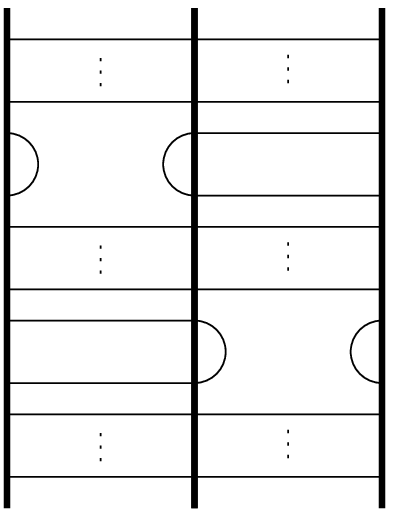}
\end{center}}
\ = 
\parbox{12mm}{\begin{center}
\includegraphics[height=32mm]{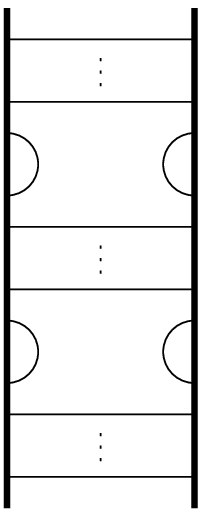}
\end{center}}
\ = u_j u_i \qquad \text{($\abs{i-j} > 1$).}
\end{align}
\end{subequations}
The verification of $u_i u_{i+1} u_i = u_i$ is just \eqref{eqn:uuu=u} viewed upside-down.  More formally, note that the Temperley-Lieb algebra $\tl{n}$ has an automorphism specified by $\wun \mapsto \wun$ and $u_i \mapsto u_{n-i}$.  This corresponds to reflecting our diagrams about a horizontal line. Finally, the map $u_i\mapsto -u_i$ defines an isomorphism between the two algebras $\tl{n}$ with parameters $\beta$ and $-\beta$.

It is clear that the identification \eqref{eqn:DefTLGens} defines a homomorphism from the algebra $\tl{n}$, defined abstractly through \eqref{eqn:TLrels}, to the $n$-diagram algebra. Let $W_n$ be the set of words in the letters $u_i$, $1\leqslant i\leqslant n-1$, and let $D_n$ be the set of $n$-diagrams. The following result notes that the algebra of $n$-diagrams is generated by the diagrams in \eqref{eqn:DefTLGens}.
\begin{lemma} \label{lem:surjectivity}
The map \eqref{eqn:DefTLGens} from $W_n$ to $D_n$ is surjective.
\end{lemma}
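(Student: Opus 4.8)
The plan is to prove surjectivity by induction on $n$, peeling off structure from the rightmost column of an arbitrary $n$-diagram. First I would dispose of the case $n=1$ (the unique $1$-diagram is the identity, which is the empty word). For the inductive step, take an arbitrary $n$-diagram $d\in D_n$ and examine the point labelled $n$ on the right border. Two cases arise: either $n$ is connected to $n-1$ on the same (right) border by a link, or $n$ is connected to some point on the left border (or $n$ is connected to some right-border point $j<n-1$, which I would want to rule out or reduce away).

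The key idea is that if the rightmost two points $n-1,n$ on the right border are joined by a small arc, then $d = d' \cdot u_{n-1}$ (or a similar product), where $d'$ is essentially an $n$-diagram whose right column is "simpler" — multiplying by $u_{n-1}$ on the right is exactly the diagrammatic move that caps off points $n-1,n$ and feeds a new cap in from the border, so composing $d'$ with $u_{n-1}$ reproduces $d$ up to the already-established relations. If instead $n$ is connected across to the left border, I would argue that by composing on the right with a suitable product of $u_i$'s one can "comb" the endpoint downward so as to reduce to the previous case, or alternatively observe that $d$ then restricts (after removing the through-line at position $n$) to an $(n-1)$-diagram to which the inductive hypothesis applies, and then reinsert the through-line, which corresponds diagrammatically to the identity strand and costs no generator. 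Carefully, the cleanest bookkeeping is: every $n$-diagram either equals $\wun$, or has at least one cap on one of the two borders; locate the topmost such cap, use an appropriate $u_i$ to account for it, and induct on the number of caps (equivalently on $n$ via the number of defect lines).

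Concretely, the steps in order: (1) base case $n=1$; (2) given $d$, if $d=\wun$ we are done; (3) otherwise $d$ has a cap — show it has a cap joining two \emph{adjacent} points on one border, which is the planar/non-crossing condition doing the work (a nested family of caps on a border must contain an innermost adjacent pair); (4) if the innermost adjacent cap is on the right border at positions $i,i+1$, write $d = \tilde d \, u_i$; if it is on the left border, write $d = u_i\, \tilde d$; here $\tilde d$ is the $n$-diagram obtained by "opening up" that cap, and the equality $d=\tilde d\,u_i$ (resp. $u_i\,\tilde d$) is verified by the diagram-concatenation rule together with the identities $u_i^2=\beta u_i$ and $u_iu_{i\pm1}u_i=u_i$ already checked above, so no new loops appear; (5) $\tilde d$ has strictly fewer caps (or: $\tilde d$ lies in a diagram algebra that the inductive hypothesis covers), so by induction $\tilde d$ is a word in the $u_j$, whence so is $d$.

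The main obstacle I expect is step (4): making the "opening up a cap" operation precise and checking that $d$ really factors as a product $\tilde d\,u_i$ with $\tilde d\in D_n$ and no spurious $\beta$ factors — one has to verify that the concatenation of $\tilde d$ with $u_i$ closes up exactly the right arcs and creates no internal loops. A secondary subtlety is establishing step (3), namely that a nontrivial $n$-diagram always has a cap joining \emph{adjacent} border points; this is an elementary but essential consequence of the non-crossing condition (consider a cap enclosing the fewest border points — its two endpoints must be adjacent), and it is exactly here that planarity, rather than mere pairing, is used.
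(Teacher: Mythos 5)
Your step (3) is sound, and the broad instinct (peel off a simple cap, induct) is reasonable; but the genuine gap is in step~(5), not step~(4) as you anticipated. The claim that the $n$-diagram $\tilde d$ obtained by factoring $d=\tilde d\,u_i$ has strictly fewer links does not hold. Take $d = u_1 u_3 u_2 \in D_4$: its left border carries simple links at $\{1,2\}$ and $\{3,4\}$, its right border the nested pair $\{1,4\}$ and $\{2,3\}$, and it has no through-lines. The innermost adjacent link on the right is at $\{2,3\}$, and the only non-crossing $\tilde d$ with $\tilde d\,u_2 = d$ (and no closed loop, hence no $\beta$) is $\tilde d = u_1u_3$, which still has two links on each side. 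Your fallback phrasing, that $\tilde d$ ``lies in a diagram algebra the inductive hypothesis covers,'' also fails here: $u_1u_3$ involves $u_3$ and so is not in the image of $D_3\hookrightarrow D_4$. The reason no smaller $\tilde d$ exists is exactly planarity: any candidate with fewer links would have to break a left-border link $\{L_a,L_b\}$ of $d$ into two through-lines terminating at $R_2$ and $R_3$, but those through-lines would then cross the enclosing arc $\{R_1,R_4\}$, which non-crossing forbids. So the single operation ``open one cap'' generically conserves, rather than decreases, the number of links on each side, and neither of your proposed induction quantities is monotone.

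This is also where your route diverges from the paper's, which is constructive rather than reductive and never needs a decreasing measure. After establishing two preparatory moves --- that multiplication by a suitable $u_i$ swaps a simple link with an adjacent through-line, and that multiplication by a product $u_{i+1}u_{i+3}\cdots u_{i+2p-3}$ wraps an outer link around a block of $p$ simple links --- the paper starts from the explicit word $u_1u_3\cdots u_{2p-1}$, relocates through-lines until the ``flattened'' diagram $d'$ (all nested islands replaced by consecutive simple links) is obtained, and then rebuilds the nesting of $d$ one outer arc at a time. If you wish to keep a reductive argument, you would need an induction quantity that does strictly decrease when a $u_i$ is factored off, e.g.\ the total number of points enclosed under arcs on both borders, together with a lemma guaranteeing the existence of a factorisation that decreases it; ``number of caps'' alone cannot serve.
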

\noindent The proof will be deferred until the end of the section.

This lemma demonstrates that the $n$-diagram algebra is a quotient of the abstract Temperley-Lieb algebra --- the diagram algebra might satisfy further independent relations.  To show that there are no further relations, hence that we have the isomorphism of algebras claimed above, we only need to show that the dimensions of the diagram algebra and the Temperley-Lieb algebra coincide.  For this, let us first consider the ``half-diagrams'' obtained from an $n$-diagram by cutting vertically down the middle.  Each half then has $n$ marked points, but only some of these will still be connected by ``links''.  If a half-diagram has $p$ links, then there will be $n-2p$ points which are not connected to anything.  The latter points will be referred to as \emph{defects} and a half-diagram with $n$ points and $p$ links will be called an $\brac{n,p}$-\emph{link state}.  An example with $n=7$ and $p=3$, hence $1$ defect, is given by
\begin{center}
\parbox{8mm}{\begin{center}
\includegraphics[height=28mm]{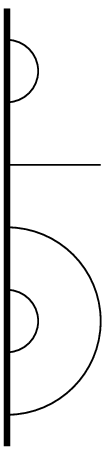}
\end{center}}
\ .
\end{center}
We will always orient our link states so that the links and defects face to the right.

An $n$-diagram can now be cut in half and reassembled into a $\brac{2n,n}$-link state by rotating the incorrectly oriented half so that it lies below the other, and then rejoining the defects of each half as they were joined in the original $n$-diagram:
\begin{center}
\parbox{12mm}{\begin{center}
\includegraphics[height=28mm]{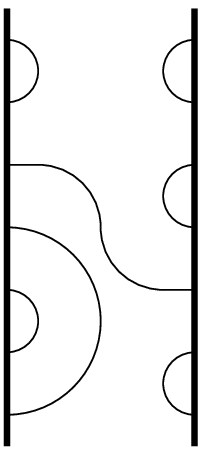}
\end{center}}
\hspace{5mm} $\longrightarrow$ \hspace{5mm}
\parbox{12mm}{\begin{center}
\includegraphics[height=28mm]{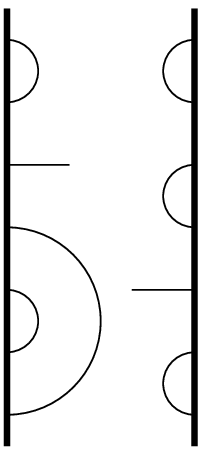}
\end{center}}
\hspace{5mm} $\longrightarrow$ \hspace{5mm}
\parbox{6mm}{\begin{center}
\includegraphics[height=56mm]{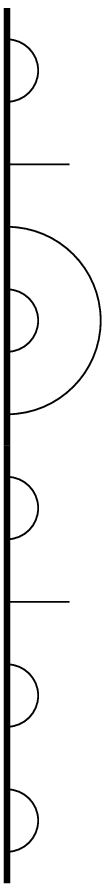}
\end{center}}
\hspace{5mm} $\longrightarrow$ \hspace{5mm}
\parbox{10mm}{\begin{center}
\includegraphics[height=56mm]{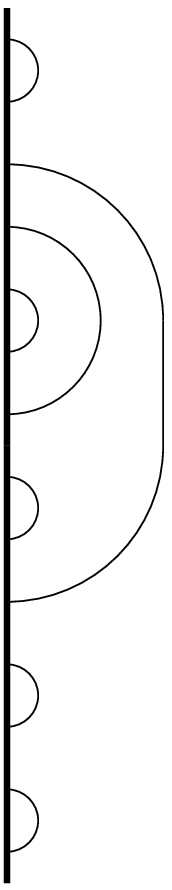}
\end{center}}
\ .
\end{center}
As this procedure is obviously reversible, this establishes a bijection between $n$-diagrams and $\brac{2n,n}$-link states.

The set of all $\brac{n,p}$-link states is in turn in bijection with the increasing walks on $\ZZ^2$ from $\brac{0,0}$ to $\brac{n-p,p}$ which avoid crossing the diagonal $\brac{m,m}$.  Here, ``increasing'' means that the walker may only move up or to the right at each step.  This bijection is easy to describe:  Reading an $\brac{n,p}$-link state from top to bottom, the walker moves up at the $k$-th step if the $k$-th marked point closes a link.  Otherwise, the walker moves right.  For the example above with $n=7$, $p=3$,
\begin{center}
\parbox{6mm}{\begin{center}
\includegraphics[height=28mm]{tl-link-ex}
\end{center}}
\hspace{5mm} $\longrightarrow$ \hspace{5mm}
\parbox{36mm}{\begin{center}
\psfrag{a}[][]{$\scriptstyle \brac{0,0}$}
\psfrag{b}[][]{$\scriptstyle \brac{4,3}$}
\includegraphics[height=24mm]{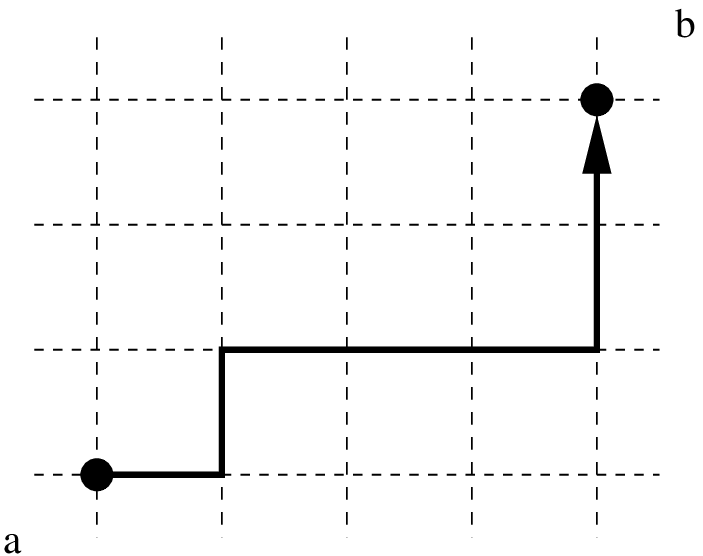}
\end{center}}
\ .
\end{center}
The walk will never cross the diagonal because we cannot close a link without first opening it.  Conversely, given such an increasing path, we follow it backwards starting from $\brac{n-p,p}$ and construct the corresponding $\brac{n,p}$-link state from bottom to top:  Every time we move down, we open a link.  Every time we move left, we close the newest open link, if one exists to be closed.  If one does not, then that marked point becomes a defect.  As the walk is never above the diagonal, there are never more down moves than left moves remaining, so we are never left with an open link that cannot be closed.  The bijection is now clear.

This proves that the number of $\brac{n,p}$-link states is equal to the number of increasing walks on $\ZZ^2$ from $\brac{0,0}$ to $\brac{n-p,p}$ which avoid crossing the diagonal.  In particular, the dimension of the $n$-diagram algebra is therefore the number of increasing walks on $\ZZ^2$ from $\brac{0,0}$ to $\brac{n,n}$ which avoid crossing the diagonal.  We will now establish the same result for the abstract Temperley-Lieb algebra $\tl{n}$.  Here we follow the seminal paper of Jones \cite{JonInd83}.

First, define a \emph{word} to be a monomial in the Temperley-Lieb generators, $u_{i_1} u_{i_2} \cdots u_{i_k}$ for example.  A word is said to be \emph{reduced} if we cannot use the relations \eqref{eqn:TLrels} to rewrite it with fewer generators.  The key insight into converting a given word into reduced form is contained in the following result.
\begin{lemma} \label{lem:Jones}
In any reduced Temperley-Lieb word $u_{i_1} u_{i_2} \cdots u_{i_k}$, the maximal index $m = \max \set{i_j \st j = 1 , \ldots , k}$ occurs only once.
\end{lemma}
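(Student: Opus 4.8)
The plan is to argue by induction on $k$, the length of the word, showing that if the maximal index $m$ appears at least twice then the word is not reduced. Suppose $u_{i_1} u_{i_2} \cdots u_{i_k}$ is a word in which $m = \max\{i_j\}$ occurs at positions $a < b$, chosen so that $b - a$ is minimal; thus none of $i_{a+1}, \ldots, i_{b-1}$ equals $m$. The heart of the matter is to analyse the subword $u_{i_a} u_{i_{a+1}} \cdots u_{i_b} = u_m u_{i_{a+1}} \cdots u_{i_{b-1}} u_m$ where every interior generator has index strictly less than $m$, hence index $\leqslant m-1$.

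The key step is a sub-induction (or direct case analysis) on this interior block. Since each interior index is $\leqslant m-1$: any generator with index $\leqslant m-2$ commutes with both $u_m$'s, so I can slide it out to the left past the leading $u_m$ (or to the right past the trailing $u_m$) without interacting. What remains trapped between the two $u_m$'s is a word in $u_{m-1}$ alone, which using $u_{m-1}^2 = \beta u_{m-1}$ collapses (up to a scalar $\beta$-power) to either the empty word or a single $u_{m-1}$. In the first case we get $u_m u_m = \beta u_m$, reducing the length; in the second case we get $u_m u_{m-1} u_m = u_m$ by the braid-like relation \eqref{eqn:TLrels}, again reducing the length. Either way the original word was not reduced, giving the contrapositive of the claim.

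The main obstacle is making the "slide out the commuting generators" step rigorous: one must be careful that moving a generator $u_j$ with $j \leqslant m-2$ out of the interior block does not create a new occurrence of the index $m$ elsewhere, and that after the collapse one genuinely has a word with strictly fewer generators (not merely a rearrangement). This is handled by observing that commutation relations $u_i u_j = u_j u_i$ preserve the multiset of indices, and only the relations $u_i^2 = \beta u_i$ and $u_i u_{i\pm 1} u_i = u_i$ actually shorten words — and it is precisely one of these latter two that we apply to the reduced interior block. A clean way to organise this is to prove the auxiliary statement: \emph{any word all of whose indices lie in $\{1, \ldots, m\}$ with $u_m$ occurring exactly twice and consecutively-in-index-terms can be rewritten with fewer generators}, then peel off the commuting prefix/suffix first and invoke it. I expect the write-up to spend most of its effort on this bookkeeping rather than on any single algebraic identity.
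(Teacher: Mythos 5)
The central step of your argument---sliding every interior generator of index $\leqslant m-2$ out past the nearest $u_m$---does not work as stated, and this is a genuine gap. A generator $u_j$ with $j \leqslant m-2$ certainly commutes with $u_m$, but to reach $u_m$ it has to pass \emph{everything between them}, which may include occurrences of $u_{m-1}$ (and of $u_{j\pm 1}$) with which $u_j$ does not commute. Concretely, take the interior block $u_{m-1}\,u_{m-2}\,u_{m-1}$ in the word $u_m\,u_{m-1}\,u_{m-2}\,u_{m-1}\,u_m$. The $u_{m-2}$ is sandwiched between two $u_{m-1}$'s and cannot be slid out in either direction; there is no rearrangement that frees it. The word \emph{does} reduce, but only because $u_{m-1}u_{m-2}u_{m-1} = u_{m-1}$, i.e.\ one must first apply a length-reducing relation \emph{inside} the interior block, which your plan treats as a block that can be passively reorganised by commutations. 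Your fallback ``sub-induction or case analysis'' is not spelt out, and your proposed auxiliary statement is essentially a restatement of the lemma itself, so it does not close the gap.

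The paper sidesteps exactly this difficulty by inducting on the \emph{maximal index} $m$ rather than the word length. It writes the word as $\cdots u_m\, U\, u_m \cdots$ where $U$ is the inner subword; since $U$ must itself be reduced and has maximal index $m' < m$, the induction hypothesis guarantees that $u_{m'}$ occurs \emph{exactly once} in $U$. If $m' < m-1$ all of $U$ commutes with $u_m$ and the two $u_m$'s meet, giving $u_m^2 = \beta u_m$; if $m' = m-1$ there is precisely one $u_{m-1}$ in $U$ and everything else in $U$ commutes with $u_m$, so $u_m\, u_{m-1}\, u_m = u_m$ applies. This ``only one $u_{m-1}$ in the interior'' fact is precisely the control your sliding argument lacks, and it is delivered for free by choosing the right induction variable. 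To repair your proof you would have to establish, before any sliding, that the interior block reduces to a word with at most one $u_{m-1}$---which is essentially the lemma at the next level down, and leads you back to the paper's induction on $m$.
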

\begin{proof}
We use induction on the maximal index $m$ of our reduced word, the case $m=1$ being obvious.  Suppose then that we have a reduced word in which $u_m$ appears twice or more, so our word has the form $\cdots u_m U u_m \cdots$, in which the maximal index $m'$ of $U$ is less than $m$.  $U$ must also be reduced (otherwise our word would not be), so we may assume that its maximal index appears only once.  Now if $m' < m-1$, then $u_m$ commutes with every generator appearing in $U$, so we can bring the two $u_m$'s together and use $u_m^2 = \beta u_m$ to reduce our word even further.  But, $m' = m-1$ means that $u_m$ commutes with every generator in $U$ except $u_{m-1}$, so this generator appearing only once in $U$ means that we can sandwich it between the two $u_m$'s and use $u_m u_{m-1} u_m = u_m$.  Both possibilities contradict the assumption that our word is reduced, so we conclude that $u_m$ can only appear once.
\end{proof}

This simple argument allows one to \emph{order} reduced words by pushing the maximal index as far as possible to the right.  More specifically, if $U$ is a reduced word and $m$ the maximal index appearing then we claim that it is possible to commute the $u_i$ so that we have the form
\begin{equation}
U = U' \brac{u_m u_{m-1} \cdots u_{m - \ell}},
\end{equation}
where $U'$ is a reduced word whose maximal index is less than $m$.  If there were a gap in the sequence of indices following $U'$, then the element after the gap would commute with the generators to the left of the gap (up to $u_m$), so could be relocated to the left of $u_m$ (and absorbed in $U'$).  If the sequence of indices following $U'$ did not decrease, then one could use $u_i^2 = \beta u_i$ or $u_i u_{i-1} u_i = u_i$ to further reduce the word (contradiction).  The claim is therefore established.

Since $U'$ is a shorter reduced word with a smaller maximal index, the same arguments apply, allowing us to write it as $U''$ times another such uniformly decreasing sequence of generators.  Induction then leads us to Jones' normal form for reduced Temperley-Lieb words:
\begin{proposition}[Jones' Normal Form] \label{prop:JNO}
Any reduced Temperley-Lieb word $U\in\tl{n}$ may be written in the form
\begin{equation} \label{eqn:TLRedWords}
U = \brac{u_{j_1} u_{j_1 - 1} \cdots u_{k_1}} \brac{u_{j_2} u_{j_2 - 1} \cdots u_{k_2}} \cdots \brac{u_{j_r} u_{j_r - 1} \cdots u_{k_r}},
\end{equation}
where $0 < j_1 < \cdots < j_{r-1} < j_r < n$ and $0 < k_1 < \cdots < k_{r-1} < k_r < n$. Similarly, any reduced word may also be written as
\begin{equation} \label{eqn:revTLRedWords}
U = \brac{u_{j_1} u_{j_1+1} \cdots u_{k_1}} \brac{u_{j_2} u_{j_2+1} \cdots u_{k_2}} \cdots \brac{u_{j_r} u_{j_r+1}\cdots u_{k_r}},
\end{equation}
with $n>j_1>j_2>\cdots>j_r>0$ and $n>k_1>k_2>\cdots>k_r>0$.
\end{proposition}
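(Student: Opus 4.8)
The plan is to establish the first normal form \eqref{eqn:TLRedWords} by induction on the length of the reduced word $U$, proving the sharper statement that \emph{every reduced word can be brought, using only the commutation relations $u_iu_j=u_ju_i$ for $\abs{i-j}>1$, into the shape \eqref{eqn:TLRedWords}}. The reverse form \eqref{eqn:revTLRedWords} will then follow by applying the reflection automorphism of $\tl{n}$ specified by $\wun\mapsto\wun$ and $u_i\mapsto u_{n-i}$. Restricting attention to commutations is convenient because they preserve both the underlying element and the word length, so any word reached in this way from a reduced word is again reduced; this is how contradictions will be detected below. The base cases are immediate: a word of length $0$ is $\wun$ (take $r=0$), and a word of length $1$ is some $u_i$ with $0<i<n$ (take $r=1$, $j_1=k_1=i$).

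For the inductive step, let $m$ be the maximal index appearing in $U$. By \lemref{lem:Jones} the generator $u_m$ occurs exactly once, and the discussion preceding the statement shows that, after commutations, $U=U'\bigl(u_mu_{m-1}\cdots u_{k_r}\bigr)$, where $k_r\leqslant m$, the displayed tail is a gapless strictly decreasing run, and $U'$ is a word not involving $u_m$. Then $U'$ is reduced (any shortening of $U'$ would shorten $U$) and its maximal index $m'$ satisfies $m'\leqslant m-1$. If $U'$ is empty we are done with $r=1$; otherwise the inductive hypothesis rewrites $U'$, via commutations, as $(u_{j_1}\cdots u_{k_1})\cdots(u_{j_{r-1}}\cdots u_{k_{r-1}})$ with $0<j_1<\cdots<j_{r-1}<n$, $0<k_1<\cdots<k_{r-1}<n$, each $j_a\geqslant k_a$, and $j_{r-1}=m'$. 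Setting $j_r=m$ we get $j_{r-1}\leqslant m-1<m=j_r<n$, so the only inequality still to verify is $k_{r-1}<k_r$.

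This is the crux. By construction $U$ has now been rewritten, using commutations only, as the word $W=(u_{j_1}\cdots u_{k_1})\cdots(u_{j_{r-1}}\cdots u_{k_{r-1}})(u_mu_{m-1}\cdots u_{k_r})$, which is therefore reduced and has the same length as $U$. Suppose, for a contradiction, that $k_{r-1}\geqslant k_r$. Since also $k_{r-1}\leqslant m'\leqslant m-1$, the run $R=u_mu_{m-1}\cdots u_{k_r}$ contains both $u_{k_{r-1}+1}$ and $u_{k_{r-1}}$, while the block immediately to its left in $W$ ends in $u_{k_{r-1}}$. That trailing $u_{k_{r-1}}$ commutes with $u_m,u_{m-1},\dots,u_{k_{r-1}+2}$, i.e.\ with every generator of $R$ preceding $u_{k_{r-1}+1}$; sliding it past them yields the sub-word $u_{k_{r-1}}u_{k_{r-1}+1}u_{k_{r-1}}$, which equals $u_{k_{r-1}}$ by \eqref{eqn:TLrels}. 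This shortens $W$, contradicting that $W$ is reduced. Hence $k_{r-1}<k_r$, and the induction is complete, proving \eqref{eqn:TLRedWords}.

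For the reverse form, the reflection automorphism $\phi$ sends length-$\ell$ words to length-$\ell$ words, hence preserves reducedness, and $\phi^2=\id$. So $\phi(U)$ is reduced and, by the part just proved, $\phi(U)=\prod_{a=1}^{r}(u_{j_a}u_{j_a-1}\cdots u_{k_a})$ with $0<j_1<\cdots<j_r<n$, $0<k_1<\cdots<k_r<n$ and $j_a\geqslant k_a$. Applying $\phi$ again gives $U=\prod_{a=1}^{r}(u_{n-j_a}u_{n-j_a+1}\cdots u_{n-k_a})$, and writing $j_a'=n-j_a$, $k_a'=n-k_a$ turns the two chains into $n>j_1'>\cdots>j_r'>0$ and $n>k_1'>\cdots>k_r'>0$, with $j_a'\leqslant k_a'$; this is exactly \eqref{eqn:revTLRedWords}. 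The one step that requires real work is the strict monotonicity of the run end-points $k_a$: the decreasing runs and the increasing top indices $j_a$ fall out of \lemref{lem:Jones} and the preceding discussion, but excluding $k_a\geqslant k_{a+1}$ genuinely uses the relation $u_iu_{i+1}u_i=u_i$, and one must peel the top run off $U$ before invoking the inductive hypothesis on $U'$ --- checking the two new inequalities only afterwards --- to keep the argument non-circular.
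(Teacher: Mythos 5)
Your proof is correct and follows essentially the same approach as the paper: the decomposition into decreasing runs with increasing maximal indices comes from \lemref{lem:Jones} and the discussion preceding the proposition, and the strict increase of the $k_i$ is forced by sliding the trailing $u_{k_{r-1}}$ rightward through commuting generators until $u_{k_{r-1}}u_{k_{r-1}+1}u_{k_{r-1}}$ appears, contradicting reducedness via \eqref{eqn:TLrels}. The only small departure is deriving the reverse form \eqref{eqn:revTLRedWords} by applying the reflection automorphism $u_i\mapsto u_{n-i}$ to the forward form, rather than by ``a similar argument'' as the paper does; this is a tidy alternative and equally valid.
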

\begin{proof}
The increasing nature of the $j_i$ follows from the fact that they are the maximal indices of the subwords $\ldots , U'' , U' , U$.  For the $k_i$, note that if $k_i \geqslant k_{i+1}$ for some $i$, then we could commute the $u_{k_i}$ appearing at the end of its decreasing sequence to the right until it bumped up against the $u_{k_i + 1} u_{k_i}$ appearing in the next decreasing sequence.  The defining relations \eqref{eqn:TLrels} again give us a contradiction to our word being reduced.  Thus, we must have $0 < k_1 < \cdots < k_{r-1} < k_r < n$ as well. The proof of the second form is similar.
\end{proof}
\noindent We shall refer to words satisfying \eqref{eqn:TLRedWords} as elements in Jones' normal form and those satisfying \eqref{eqn:revTLRedWords} as being in reverse Jones' normal form.

The point of this theory is to establish that every monomial in the Temperley-Lieb generators is (up to factors of $\beta$) equal to an ordered reduced word of the form \eqref{eqn:TLRedWords}, with strictly increasing $j_i$ and $k_i$.  The dimension of the algebra $\tl{n}$ is then bounded above by the number of such ordered reduced words.\footnote{When $\beta = 0$, we have the slight modification that every monomial in the Temperley-Lieb generators is either equal to a uniquely ordered reduced word of the form \eqref{eqn:TLRedWords}, with strictly increasing $j_i$ and $k_i$, or is identically zero.  The conclusion remains valid.}  Note that if these ordered reduced words were linearly independent, then this bound would become an equality.  Our next task is to show that this number coincides with the number of increasing walks on $\ZZ^2$ from $\brac{0,0}$ to $\brac{n,n}$ which avoid crossing the diagonal.  Then, we will have shown that $\tl{n}$ has a quotient (the $n$-diagram algebra) whose dimension is equal to our upper bound on the dimension of $\tl{n}$.  It then follows that the bound is an equality, so the set of ordered reduced words is in fact a basis.

Naturally, this is achieved by constructing a bijection.  This is done by stripping the indices $j_i$ and $k_i$ from the ordered reduced words \eqref{eqn:TLRedWords} of $\tl{n}$ and encoding them as a walk on $\ZZ^2$ as follows:
\begin{multline*}
\brac{u_{j_1} u_{j_1 - 1} \cdots u_{k_1}} \brac{u_{j_2} u_{j_2 - 1} \cdots u_{k_2}} \cdots \brac{u_{j_r} u_{j_r - 1} \cdots u_{k_r}} \\
\longrightarrow \Bigl\langle \brac{0,0} \rightarrow \brac{j_1,0} \rightarrow \brac{j_1,k_1} \rightarrow \brac{j_2,k_1} \rightarrow \brac{j_2,k_2} \rightarrow \cdots \rightarrow \brac{j_r,k_{r-1}} \rightarrow \brac{j_r,k_r} \rightarrow \brac{n,k_r} \rightarrow \brac{n,n} \Bigr\rangle.
\end{multline*}
(The empty word corresponding to the unit $\wun$ in $\tl n$ is encoded as $\langle (0,0)\rightarrow (n,0)\rightarrow (n,n)\rangle$.) For example, in $\tl{9}$ we have
\begin{center}
$\brac{u_1} \brac{u_4 u_3 u_2} \brac{u_6 u_5 u_4 u_3} \brac{u_8 u_7}$ \hspace{5mm} $\longrightarrow$ \hspace{5mm}
\parbox{40mm}{\begin{center}
\psfrag{a}[][]{$\scriptstyle \brac{0,0}$}
\psfrag{b}[][]{$\scriptstyle \brac{9,9}$}
\includegraphics[height=36mm]{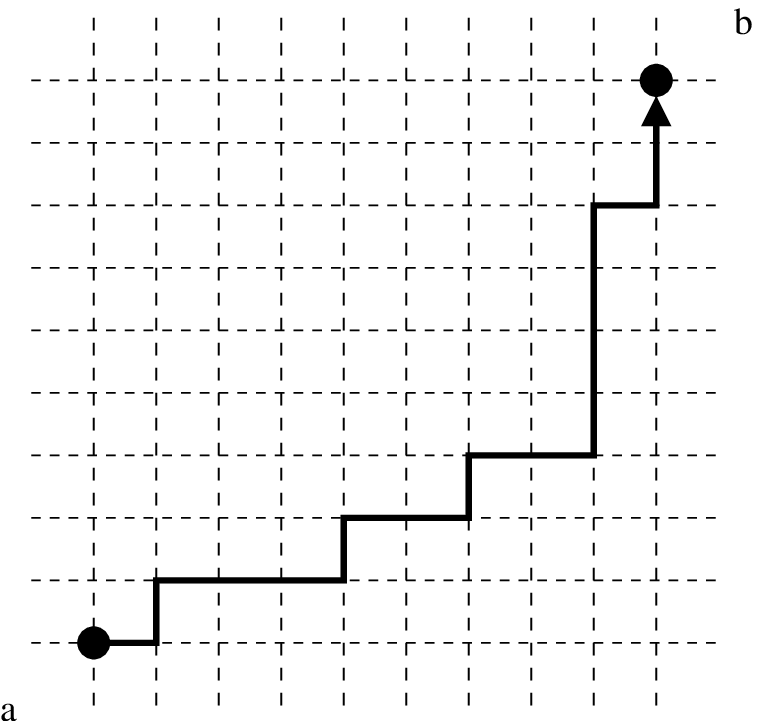}
\end{center}}
\ .
\end{center}
The walks so obtained cannot cross the diagonal because $j_{i+1} > j_i \geqslant k_i$, and they are increasing because $j_i < j_{i+1}$ and $k_i < k_{i+1}$.  Conversely, any increasing path from $\brac{0,0}$ to $\brac{n,n}$ is entirely determined by its corners.  The coordinates of these corners may be used to construct a reduced word of the form \eqref{eqn:TLRedWords} and it is easy to check that the $j_i$ and $k_i$ so defined satisfy $j_i < j_{i+1}$ and $k_i < k_{i+1}$.  Avoiding crossing the diagonal translates into $j_i \geqslant k_i$.

This bijection therefore completes the proof that the dimension of the $n$-diagram algebra coincides with that of the abstract Temperley-Lieb algebra $\tl{n}$ (this dimension is obviously finite).  As noted above, this is sufficient to conclude that these algebras are in fact isomorphic.  It is not difficult now to complete this circle of ideas and actually compute the dimensions of these algebras by counting the increasing paths from $\brac{0,0}$ to $\brac{n,n}$ which avoid crossing the diagonal.

In fact, we shall generalise slightly and count the increasing paths from $\brac{0,0}$ to $\brac{n-p,p}$ which avoid crossing the diagonal.  This more general computation then gives the number of $\brac{n,p}$-link states.  Let this number be $d_{n,p}$.  Since an increasing path ending at $\brac{n-p,p}$ must pass through either $\brac{n-p-1,p}$ or $\brac{n-p,p-1}$ (but not both!), we obtain the recursion relation
\begin{equation} \label{eqn:RecD}
d_{n,p} = d_{n-1,p} + d_{n-1,p-1}.
\end{equation}
With the ``boundary values'' $d_{n,0} = 1$ (the only increasing path which never goes up is the one which only moves to the right) and $d_{2p-1,p} = 0$ (the paths are not allowed to cross the diagonal), this completely determines the numbers $d_{n,p}$.  Its solution is easily verified to be
\begin{equation} \label{eqn:DimV}
d_{n,p} = \binom{n}{p} - \binom{n}{p-1},
\end{equation}
where we understand, as usual, that $\tbinom{n}{-1} = 0$.  It follows that the $n$-diagram algebra and the Temperley-Lieb algebra $\tl{n}$ have dimension
\begin{equation} \label{eqn:DimTL_n}
\dim \tl{n} = d_{2n,n} = \binom{2n}{n} - \binom{2n}{n-1} = \frac{1}{n+1} \binom{2n}{n}.
\end{equation}
This is of course the $n$-th Catalan number.

To summarise, we have proven the following result in this section.
\begin{theorem} \label{thm:diagrams=TL}
The abstract Temperley-Lieb algebra $\tl{n}$ and the algebra of $n$-diagrams are isomorphic (for given $n \in \ZZ_+$ and $\beta \in \CC$).  The dimensions of these algebras are given by \eqnref{eqn:DimTL_n}.
\end{theorem}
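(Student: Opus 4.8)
The plan is to assemble the pieces already in place into a dimension count. First I would promote Lemma~\ref{lem:surjectivity} to an algebra statement: the identification \eqref{eqn:DefTLGens} defines an algebra homomorphism $\varphi$ from $\tl{n}$ onto the $n$-diagram algebra $A_n$, and surjectivity of the map $W_n \to D_n$ forces $\varphi$ to be surjective. Hence $A_n$ is a quotient of $\tl{n}$, so $\dim A_n \leqslant \dim \tl{n}$, and the whole theorem reduces to showing that these two dimensions coincide: then $\ker\varphi = 0$, $\varphi$ is the claimed isomorphism, and the common dimension is whatever value we end up computing.

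Next I would pin down $\dim A_n$ exactly. The cut-and-rotate construction gives a bijection between $n$-diagrams and $\brac{2n,n}$-link states, and the up/right walk encoding gives a bijection between $\brac{n,p}$-link states and increasing lattice paths from $\brac{0,0}$ to $\brac{n-p,p}$ that never cross the diagonal. Composing these, the $n$-diagrams are in bijection with such paths ending at $\brac{n,n}$; since the $n$-diagrams are by construction a basis of $A_n$, this yields $\dim A_n = d_{2n,n}$, where $d_{n,p}$ denotes the number of such paths to $\brac{n-p,p}$.

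The matching upper bound for $\tl{n}$ comes from Jones' theory. By the reduction argument preceding Proposition~\ref{prop:JNO}, every word in the generators $u_i$ equals, up to a power of $\beta$ (and, with the $\beta = 0$ proviso noted above, possibly equal to $0$), an ordered reduced word of the form \eqref{eqn:TLRedWords} with strictly increasing $j_i$ and strictly increasing $k_i$; these words therefore span $\tl{n}$. Stripping off the indices $j_i$, $k_i$ and reading them as corner coordinates identifies such words with increasing lattice paths from $\brac{0,0}$ to $\brac{n,n}$, the inequalities $j_i < j_{i+1}$, $k_i < k_{i+1}$ encoding ``increasing'' and $j_i \geqslant k_i$ encoding ``never crossing the diagonal'', and this correspondence is a bijection onto that set of paths. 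Therefore $\dim \tl{n} \leqslant d_{2n,n} = \dim A_n \leqslant \dim \tl{n}$, so equality holds throughout; $\varphi$ is an isomorphism, and as a bonus the ordered reduced words form a basis of $\tl{n}$.

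It remains only to evaluate $d_{2n,n}$. Solving the recursion \eqref{eqn:RecD} with the boundary data $d_{n,0} = 1$ and $d_{2p-1,p} = 0$, one verifies directly that $d_{n,p} = \binom{n}{p} - \binom{n}{p-1}$ satisfies both the boundary conditions and the recursion, which gives the formula \eqref{eqn:DimTL_n} and identifies $\dim \tl{n}$ with the $n$-th Catalan number. The single point demanding real care is the two-sided squeeze: one must check that the spanning family of ordered reduced words really is indexed, without repetition, by the lattice paths, so that the a priori upper bound $\dim \tl{n} \leqslant d_{2n,n}$ is exactly the integer $\dim A_n$; the $\beta = 0$ case then needs only the harmless extra observation that certain words collapse to zero, which can only lower the bound.
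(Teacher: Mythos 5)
Your proposal is correct and follows essentially the same route as the paper: surjectivity of the word-to-diagram map gives $\dim A_n \leqslant \dim \tl{n}$, the cut-and-rotate plus walk bijections compute $\dim A_n = d_{2n,n}$, Jones' normal form gives a spanning set of $\tl{n}$ in bijection with the same lattice paths so that $\dim \tl{n} \leqslant d_{2n,n}$, and the squeeze forces equality and hence the isomorphism. The closing remark about the $\beta=0$ degeneration and the verification of \eqref{eqn:DimV} by the recursion \eqref{eqn:RecD} also matches the paper's treatment.
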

\noindent We mention that combining the result on dimensions with the ``cutting'' of $n$-diagrams into two $\brac{n,p}$-link states leads to the curious identity
\begin{equation} \label{eqn:Curious}
d_{2n,n} = \sum_{p=0}^{\flr{n/2}} d_{n,p}^2.
\end{equation}
This only requires observing that the two link states must have the same number of links $p$ so that each defect of the first may be joined to a defect of the second.

\medskip

We close this section with a proof of \lemref{lem:surjectivity} that is purely diagrammatic. An example of a similar construction is given by Kauffman (see Figure 16 and the proof of Theorem 4.3 in \cite{kauffman}). Before starting, we will introduce some new terms and make two preparatory observations concerning diagram multiplication. A \emph{simple} link (at position $i$) is a link that connects positions $i$ and $i+1$ on the same side of a diagram and a link connecting positions on opposite sides of a diagram is called a \emph{through-line}. Let $d$ be a diagram that has a simple link on the left at position $i-1$ and a through-line connecting $i+1$ on the left to $i+1$ on the right. Upon multiplying on the left by $u_i$, or rather by the diagram corresponding to $u_i$, this diagram becomes 
\begin{equation}\label{eq:multByUi}
d = 
\begin{tikzpicture}[baseline={(current bounding box.center)},scale=1/3]
	\draw[thick] (3,1) -- (3,7);
	\draw[thick] (0,1) -- (0,7);
	\draw (0,3) -- (3,3);
	\draw (0,4) arc (-90:90:0.5);
	\node at (-1,5) {$\scriptstyle{i-1}$};
	\node at (-1,4) {$\scriptstyle{i}$};
	\node at (-1,3) {$\scriptstyle{i+1}$};
	\node at (1.5,6.5) {$\vdots$};
	\node at (1.5,2) {$\vdots$};
\end{tikzpicture}
\quad \Rightarrow \quad u_i d = 
\begin{tikzpicture}[baseline={(current bounding box.center)},scale=1/3]
	\draw[thick] (6,1) -- (6,7);
	\draw[thick] (3,1) -- (3,7);
	\draw[thick] (0,1) -- (0,7);
	\draw (0,5) -- (3,5);
	\draw (0,3) arc (-90:90:0.5);
	\draw (3,4) arc (90:270:0.5);
	\node at (1.5,6.5) {$\vdots$};
	\node at (1.5,2) {$\vdots$};
	\draw (3,3) -- (6,3);
	\draw (3,4) arc (-90:90:0.5);
	\node at (-1,5) {$\scriptstyle{i-1}$};
	\node at (-1,4) {$\scriptstyle{i}$};
	\node at (-1,3) {$\scriptstyle{i+1}$};
	\node at (4.5,6.5) {$\vdots$};
	\node at (4.5,2) {$\vdots$};
\end{tikzpicture}
\ =
\begin{tikzpicture}[baseline={(current bounding box.center)},scale=1/3]
	\draw[thick] (3,1) -- (3,7);
	\draw[thick] (0,1) -- (0,7);
	\draw[out=0,in=180] (0,5) to (3,3);
	\draw (0,3) arc (-90:90:0.5);
	\node at (-1,5) {$\scriptstyle{i-1}$};
	\node at (-1,4) {$\scriptstyle{i}$};
	\node at (-1,3) {$\scriptstyle{i+1}$};
	\node at (1.5,6.5) {$\vdots$};
	\node at (1.5,2) {$\vdots$};
\end{tikzpicture}
\ .
\end{equation}
The vertical dots in $d$ indicate links and through-lines that are not explicitly drawn. Clearly, these remain unchanged upon multiplication by $u_i$. 
The result $u_i d$ is therefore identical to the original $d$ except that the simple link at position $i-1$ on the left has migrated to position $i$ and the through-line that started on the left at position $i+1$ now starts at $i-1$. A simple link on the left at $i-1$ and a through-line starting at $i+1$ may be exchanged by left-multiplying by $u_i$.

This observation generalises easily.  If $d$ also has a simple link at position $i-3$, then the product $u_{i-2} u_i d$ would have simple links at $i-2$ and $i$.  Moreover, the through-line would now start at $i-3$. Of course, right-multiplication by $u_i$ would result in similar exchanges to simple links and through-lines on the right side. The conclusion is that the generators $u_i$ can be used to exchange simple links and through-lines, without changing any other patterns in a given diagram.  This is the first preparatory observation.

For the second, we suppose that $d$ has, on its left side, $p$ consecutive simple links from positions $i$ to $i+2p-1$. Left-multiplication of $d$ by (the diagram representing the word) $u_{i+1}u_{i+3}\dots u_{i+2p-3}$ creates an overarching link from $i$ to $i+2p-1$, while shifting the remaining $p-1$ simple links so that they fit inside:
\begin{equation} \label{eq:multByManyUi}
d = \mspace{-10mu}
\begin{tikzpicture}[baseline={(current bounding box.center)},scale=1/3]
	\draw[thick] (3,1) -- (3,11);
	\draw[thick] (0,1) -- (0,11);
	\node at (2.25,10) {$\vdots$};
	\node at (2.25,3) {$\vdots$};
	\draw (0,2) arc (-90:90:0.5);
	\draw (0,4) arc (-90:90:0.5);
	\node at (0.5,6.3) {$\vdots$};
	\draw (0,7) arc (-90:90:0.5);
	\draw (0,9) arc (-90:90:0.5);
	\node at (-1.5,2) {$\scriptstyle{i+2p-1}$};
	\node at (-0.5,10) {$\scriptstyle{i}$};
	\node at (-0.85,8) {$\scriptstyle{i+2}$};
\end{tikzpicture}
\quad \Rightarrow \quad 
u_{i+1}u_{i+3}\cdots u_{i+2p-3}d = \mspace{-10mu}
\begin{tikzpicture}[baseline={(current bounding box.center)},scale=1/3]
	\draw[thick] (6,1) -- (6,11);
	\draw[thick] (3,1) -- (3,11);
	\draw[thick] (0,1) -- (0,11);
	\node at (5.25,10) {$\vdots$};
	\node at (5.25,3) {$\vdots$};
	\draw (0,10) -- (3,10);
	\draw (0,2) -- (3,2);
	\draw (3,2) arc (-90:90:0.5);
	\draw (3,4) arc (-90:90:0.5);
	\node at (3.5,6.3) {$\vdots$};
	\draw (3,7) arc (-90:90:0.5);
	\draw (3,9) arc (-90:90:0.5);
	\draw (0,3) arc (-90:90:0.5);
	\node at (0.5,5.3) {$\vdots$};
	\draw (0,6) arc (-90:90:0.5);
	\draw (0,8) arc (-90:90:0.5);
	\draw (3,4) arc (90:270:0.5);
	\node at (2.5,5.3) {$\vdots$};
	\draw (3,7) arc (90:270:0.5);
	\draw (3,9) arc (90:270:0.5);
	\node at (-0.85,9) {$\scriptstyle{i+1}$};
	\node at (-0.85,7) {$\scriptstyle{i+3}$};
	\node at (-1.5,4) {$\scriptstyle{i+2p-3}$};
\end{tikzpicture}
\ = \mspace{-10mu}
\begin{tikzpicture}[baseline={(current bounding box.center)},scale=1/3]
	\draw[thick] (3,1) -- (3,11);
	\draw[thick] (0,1) -- (0,11);
	\node at (2.25,10) {$\vdots$};
	\node at (2.25,3) {$\vdots$};
	\draw (0,3) arc (-90:90:0.5);
	\node at (0.5,5.3) {$\vdots$};
	\draw (0,6) arc (-90:90:0.5);
	\draw (0,8) arc (-90:90:0.5);
	\draw[rounded corners] (0,2) -- (1,2) -- (1,10) -- (0,10);
	\node at (-1.5,2) {$\scriptstyle{i+2p-1}$};
	\node at (-0.5,10) {$\scriptstyle{i}$};
	\node at (-0.85,8) {$\scriptstyle{i+2}$};
\end{tikzpicture}
\ \ .
\end{equation}
As in the previous example, the other patterns in $d$ remain unchanged.  This is the second preparatory observation.

Finally, we shall define a \emph{nested island} to be a pattern of $p$ links joining $2p$ consecutive points on one side of an $n$-diagram satisfying two conditions.  First, that the top point $i$ and the bottom $i+2p-1$ are linked (the island does not consist of two or more smaller ``\emph{sub-islands}''), and second, that no point above (less than) $i$ is linked to any point below (greater than) $i+2p-1$ (the island is not a sub-island of a greater island). To any $n$-diagram $d$, we may therefore associate another $n$-diagram $d'$ in which all the nested islands of $d$ are replaced by consecutive simple links. We illustrate this replacement with a diagram $d$ that has four nested islands, one on the left and three on the right, each being circumscribed for clarity by a dotted line.
\begin{equation} \label{eq:diagramD}
d =
\begin{tikzpicture}[baseline={(current bounding box.center)},scale=1/3]
	\draw[thick] (3,0) -- (3,11);
	\draw[thick] (0,0) -- (0,11);
	\draw (0,4) arc (-90:90:0.5);
	\draw[rounded corners] (0,2) -- (0.5,2) -- (1.2,3) -- (1.2,8) -- (0.5,9) -- (0,9);
	\draw (0,7) arc (-90:90:0.5);
	\draw[out=0,in=180] (0,1) to (3,3);
	\draw[out=0,in=180] (0,10) to (3,8);
	\draw[out=0,in=0] (0,3) to (0,6);
	\draw (3,10) arc (90:270:0.5);
	\draw[out=180,in=180] (3,4) to (3,7);
	\draw (3,6) arc (90:270:0.5);
	\draw (3,2) arc (90:270:0.5);
	\draw[dotted,thick,rounded corners] (3.5,9.5) -- (3.5,10.5) -- (2,10.5) -- (2,8.5) -- (3.5,8.5) -- (3.5,9.5);	
	\draw[dotted,thick,rounded corners] (3.5,1.5) -- (3.5,2.5) -- (2,2.5) -- (2,0.5) -- (3.5,0.5) -- (3.5,1.5);
	\draw[dotted,thick,rounded corners] (3.5,4.5) -- (3.5,7.5) -- (2,7.5) -- (2,3.5) -- (3.5,3.5) -- (3.5,4.5);	
	\draw[dotted,thick,rounded corners] (-0.5,4.5) -- (-0.5,1.5) -- (0.75,1.5) -- (1.5,3) -- (1.5,8) -- (0.75,9.5) -- (-0.5,9.5) -- (-0.5,4.5);
\end{tikzpicture}
\qquad \Rightarrow \qquad d' =
\begin{tikzpicture}[baseline={(current bounding box.center)},scale=1/3]
	\draw[thick] (3,0) -- (3,11);
	\draw[thick] (0,0) -- (0,11);
	\draw (0,2) arc (-90:90:0.5);
	\draw (0,4) arc (-90:90:0.5);
	\draw (0,6) arc (-90:90:0.5);
	\draw (0,8) arc (-90:90:0.5);
	\draw[out=0,in=180] (0,1) to (3,3);
	\draw[out=0,in=180] (0,10) to (3,8);
	\draw (3,10) arc (90:270:0.5);
	\draw (3,7) arc (90:270:0.5);
	\draw (3,5) arc (90:270:0.5);
	\draw (3,2) arc (90:270:0.5);
	\draw[dotted,thick,rounded corners] (3.5,9.5) -- (3.5,10.5) -- (2,10.5) -- (2,8.5) -- (3.5,8.5) -- (3.5,9.5);	
	\draw[dotted,thick,rounded corners] (3.5,1.5) -- (3.5,2.5) -- (2,2.5) -- (2,0.5) -- (3.5,0.5) -- (3.5,1.5);
	\draw[dotted,thick,rounded corners] (3.5,4.5) -- (3.5,7.5) -- (2,7.5) -- (2,3.5) -- (3.5,3.5) -- (3.5,4.5);	
	\draw[dotted,thick,rounded corners] (-0.5,4.5) -- (-0.5,1.5) -- (0.75,1.5) -- (1.5,3) -- (1.5,8) -- (0.75,9.5) -- (-0.5,9.5) -- (-0.5,4.5);
\end{tikzpicture}\ \ .
\end{equation}
With these preparatory observations and definitions, the following proof of \lemref{lem:surjectivity} proceeds in two elementary steps. We will illustrate these steps for the diagram $d$ after giving the proof.
\begin{proof}[Proof of \lemref{lem:surjectivity}] 
Given a diagram $d$, the proof constructs a word $w$ in the $u_i$, with $1 \leqslant i \leqslant n-1$, such that $w$ is identified with $d$ under \eqref{eqn:DefTLGens}. We first obtain a word $w'$ which is identified with $d'$, the $n$-diagram associated to $d$ in which its nested islands are replaced by simple links. Then, this word $w'$ is extended to the desired word $w$ by reconstructing the nested islands. Note that if all links starting from the left side of $d$ cross to the right side, the diagram is in fact the diagram representing the unit. In this case, the word $w$ is simply the empty word corresponding to $\wun\in\tl{n}$.

Suppose then that $2p>0$ is the number of points on the left side of $d$ that are linked pairwise. Clearly, there are also $2p$ linked points on the right side and $n-2p\geqslant 0$ is the number of through-lines in $d$. The diagram $d'$ thus contains $p$ simple links on each of its sides. To construct $w'$, we start from the word $u_1u_3\dots u_{2p-1}$. The corresponding diagram has, on each of its sides, $p$ consecutive simple links followed by $n-2p$ through-lines. Some of these through-lines might not be at the positions of those of $d'$ (and $d$), but the first preparatory observation above indicates how to move them to the correct positions using left- and right-multiplication by the $u_i$. The uppermost through-line in $u_1u_3\dots u_{2p-1}$ (that at $2p+1$) is first moved, using the observation, to the position of the uppermost through-line in $d'$. When this is accomplished, the positions under this through-line are consecutive simple links followed by the remaining $n-2p-1$ through-lines, if any. Again, if the positions of these through-lines do not match those of $d'$, the observation can be used again. Repeating this process at most $n-2p$ times gives a word $w'$ that corresponds precisely to the diagram $d'$.

Having constructed a word $w'$ that corresponds to $d'$, it remains to convert the simple links of $d'$ into nested islands as they appear in $d$. For each nested island that is not a simple link, we first draw the outermost link. More precisely, if a nested island lies between positions $i$ and $i+2j-1$ on the left (right) side of $d$, then the word $w'$ is multiplied from the left (right) by $u_{i+1}u_{i+3}\cdots u_{i+2j-3}$, as the second preliminary observation advises. Once this has been done for each nested island, the outermost links of all nested islands have been constructed and we can turn to the interior nested ``sub-islands''. If there are any which are not simple links, then the second observation can be used again to draw the outermost link of these, and so on. When there are no nested (sub-)islands left but simple links, the word $w$ thus obtained is the desired one: It corresponds to $d$ under the identification \eqref{eqn:DefTLGens}.
\end{proof}
A word $w$ for the diagram $d$ exhibited in \eqref{eq:diagramD} may be easily found following this proof. Since the diagram $d$ has $p=4$ links on each side, the proof of \lemref{lem:surjectivity} begins with the word $u_1u_3u_5u_7$. Then, the highest through-line (at position $9$) is moved to the position of the highest through-line in $d'$ by left-multiplying by $u_2u_4u_6u_8$ and right-multiplying by $u_4u_6u_8$. Moving the second highest, and only other, through-line to the correct position in $d'$ then requires right-multiplication by $u_9$. This gives a word $w'$ that corresponds to $d'$:
\begin{gather}
u_1u_3u_5u_7 = \ 
\begin{tikzpicture}[baseline={(current bounding box.center)},scale=1/3]
	\draw[thick] (3,0) -- (3,11);
	\draw[thick] (0,0) -- (0,11);
	\draw (0,3) arc (-90:90:0.5);
	\draw (0,5) arc (-90:90:0.5);
	\draw (0,7) arc (-90:90:0.5);
	\draw (0,9) arc (-90:90:0.5);
	\draw (3,10) arc (90:270:0.5);
	\draw (3,8) arc (90:270:0.5);
	\draw (3,6) arc (90:270:0.5);
	\draw (3,4) arc (90:270:0.5);
	\draw (0,1) -- (3,1);
	\draw (0,2) -- (3,2);
\end{tikzpicture}
\quad \Rightarrow \quad
(u_2u_4u_6u_8)(u_1u_3u_5u_7)(u_4u_6u_8) = \ 
\begin{tikzpicture}[baseline={(current bounding box.center)},scale=1/3]
	\draw[thick] (3,0) -- (3,11);
	\draw[thick] (0,0) -- (0,11);
	\draw (0,2) arc (-90:90:0.5);
	\draw (0,4) arc (-90:90:0.5);
	\draw (0,6) arc (-90:90:0.5);
	\draw (0,8) arc (-90:90:0.5);
	\draw (3,9) arc (90:270:0.5);
	\draw (3,7) arc (90:270:0.5);
	\draw (3,5) arc (90:270:0.5);
	\draw (3,3) arc (90:270:0.5);
	\draw (0,1) -- (3,1);
	\draw (0,10) -- (3,10);
	\draw[thick] (6,0) -- (6,11);
	\draw (3,3) arc (-90:90:0.5);
	\draw (3,5) arc (-90:90:0.5);
	\draw (3,7) arc (-90:90:0.5);
	\draw (3,9) arc (-90:90:0.5);
	\draw (6,10) arc (90:270:0.5);
	\draw (6,8) arc (90:270:0.5);
	\draw (6,6) arc (90:270:0.5);
	\draw (6,4) arc (90:270:0.5);
	\draw (3,1) -- (6,1);
	\draw (3,2) -- (6,2);
	\draw[thick] (9,0) -- (9,11);
	\draw (6,2) arc (-90:90:0.5);
	\draw (6,4) arc (-90:90:0.5);
	\draw (6,6) arc (-90:90:0.5);
	\draw (9,7) arc (90:270:0.5);
	\draw (9,5) arc (90:270:0.5);
	\draw (9,3) arc (90:270:0.5);
	\draw (6,1) -- (9,1);
	\draw (6,8) -- (9,8);
	\draw (6,9) -- (9,9);
	\draw (6,10) -- (9,10);
\end{tikzpicture}
\ = \ 
\begin{tikzpicture}[baseline={(current bounding box.center)},scale=1/3]
	\draw[thick] (3,0) -- (3,11);
	\draw[thick] (0,0) -- (0,11);
	\draw (0,2) arc (-90:90:0.5);
	\draw (0,4) arc (-90:90:0.5);
	\draw (0,6) arc (-90:90:0.5);
	\draw (0,8) arc (-90:90:0.5);
	\draw[out=0,in=180] (0,10) to (3,8);
	\draw (3,10) arc (90:270:0.5);
	\draw (3,7) arc (90:270:0.5);
	\draw (3,5) arc (90:270:0.5);
	\draw (3,3) arc (90:270:0.5);
	\draw (0,1) -- (3,1);
\end{tikzpicture}
\notag \\
\Rightarrow \quad w'=(u_2u_4u_6u_8)(u_1u_3u_5u_7)(u_4u_6u_8)(u_9) = \ 
\begin{tikzpicture}[baseline={(current bounding box.center)},scale=1/3]
	\draw[thick] (3,0) -- (3,11);
	\draw[thick] (0,0) -- (0,11);
	\draw (0,2) arc (-90:90:0.5);
	\draw (0,4) arc (-90:90:0.5);
	\draw (0,6) arc (-90:90:0.5);
	\draw (0,8) arc (-90:90:0.5);
	\draw[out=0,in=180] (0,10) to (3,8);
	\draw (3,10) arc (90:270:0.5);
	\draw (3,7) arc (90:270:0.5);
	\draw (3,5) arc (90:270:0.5);
	\draw (3,3) arc (90:270:0.5);
	\draw (0,1) -- (3,1);
	\draw[thick] (6,0) -- (6,11);
	\draw (3,1) arc (-90:90:0.5);
	\draw (6,2) arc (90:270:0.5);
	\draw (3,10) -- (6,10);
	\draw (3,9) -- (6,9);
	\draw (3,8) -- (6,8);
	\draw (3,7) -- (6,7);
	\draw (3,6) -- (6,6);
	\draw (3,5) -- (6,5);
	\draw (3,4) -- (6,4);
	\draw (3,3) -- (6,3);
\end{tikzpicture}
\ = \ 
\begin{tikzpicture}[baseline={(current bounding box.center)},scale=1/3]
	\draw[thick] (3,0) -- (3,11);
	\draw[thick] (0,0) -- (0,11);
	\draw (0,2) arc (-90:90:0.5);
	\draw (0,4) arc (-90:90:0.5);
	\draw (0,6) arc (-90:90:0.5);
	\draw (0,8) arc (-90:90:0.5);
	\draw[out=0,in=180] (0,1) to (3,3);
	\draw[out=0,in=180] (0,10) to (3,8);
	\draw (3,10) arc (90:270:0.5);
	\draw (3,7) arc (90:270:0.5);
	\draw (3,5) arc (90:270:0.5);
	\draw (3,2) arc (90:270:0.5);
\end{tikzpicture}
\ =d'.
\end{gather}
The second part of the proof now proceeds as follows.  There are two nested islands in $d$ that are not simple links, one on the left from positions $2$ to $9$ and another on the right from positions $4$ to $7$. Their outermost links are first closed by multiplying $w'$ by $u_3u_5u_7$ on the left and by $u_5$ on the right. Now, the resulting diagram differs from $d$ only in that the simple links on the left at positions $5$ and $7$ are, in $d$, replaced by a nested sub-island from positions $5$ to $8$. Left-multiplication by $u_6$ then brings us, finally, to $d$:
\begin{gather}
w'= d' \qquad \Rightarrow \qquad (u_3u_5u_7)w'(u_5) = \ 
\begin{tikzpicture}[baseline={(current bounding box.center)},scale=1/3]
	\draw[thick] (3,0) -- (3,11);
	\draw[thick] (0,0) -- (0,11);
	\draw (0,3) arc (-90:90:0.5);
	\draw (0,5) arc (-90:90:0.5);
	\draw (0,7) arc (-90:90:0.5);
	\draw (3,8) arc (90:270:0.5);
	\draw (3,6) arc (90:270:0.5);
	\draw (3,4) arc (90:270:0.5);
	\draw (0,1) -- (3,1);
	\draw (0,2) -- (3,2);
	\draw (0,9) -- (3,9);
	\draw (0,10) -- (3,10);
	\draw[thick] (6,0) -- (6,11);
	\draw (3,2) arc (-90:90:0.5);
	\draw (3,4) arc (-90:90:0.5);
	\draw (3,6) arc (-90:90:0.5);
	\draw (3,8) arc (-90:90:0.5);
	\draw[out=0,in=180] (3,1) to (6,3);
	\draw[out=0,in=180] (3,10) to (6,8);
	\draw (6,10) arc (90:270:0.5);
	\draw (6,7) arc (90:270:0.5);
	\draw (6,5) arc (90:270:0.5);
	\draw (6,2) arc (90:270:0.5);
	\draw[thick] (9,0) -- (9,11);
	\draw (6,5) arc (-90:90:0.5);
	\draw (9,6) arc (90:270:0.5);
	\draw (6,1) -- (9,1);
	\draw (6,2) -- (9,2);
	\draw (6,3) -- (9,3);
	\draw (6,4) -- (9,4);
	\draw (6,7) -- (9,7);
	\draw (6,8) -- (9,8);
	\draw (6,9) -- (9,9);
	\draw (6,10) -- (9,10);
\end{tikzpicture}
\ = \ 
\begin{tikzpicture}[baseline={(current bounding box.center)},scale=1/3]
	\draw[thick] (3,0) -- (3,11);
	\draw[thick] (0,0) -- (0,11);
	\draw (0,3) arc (-90:90:0.5);
	\draw (0,5) arc (-90:90:0.5);
	\draw[rounded corners] (0,2) -- (0.5,2) -- (1.2,3) -- (1.2,8) -- (0.5,9) -- (0,9);
	\draw (0,7) arc (-90:90:0.5);
	\draw[out=0,in=180] (0,1) to (3,3);
	\draw[out=0,in=180] (0,10) to (3,8);
	\draw (3,10) arc (90:270:0.5);
	\draw[out=180,in=180] (3,4) to (3,7);
	\draw (3,6) arc (90:270:0.5);
	\draw (3,2) arc (90:270:0.5);
\end{tikzpicture}
\notag \\
\Rightarrow \qquad (u_6)(u_3u_5u_7)w'(u_5) = \ 
\begin{tikzpicture}[baseline={(current bounding box.center)},scale=1/3]
	\draw[thick] (3,0) -- (3,11);
	\draw[thick] (0,0) -- (0,11);
	\draw (0,4) arc (-90:90:0.5);
	\draw (3,5) arc (90:270:0.5);
	\draw (0,1) -- (3,1);
	\draw (0,2) -- (3,2);
	\draw (0,3) -- (3,3);
	\draw (0,6) -- (3,6);
	\draw (0,7) -- (3,7);
	\draw (0,8) -- (3,8);
	\draw (0,9) -- (3,9);
	\draw (0,10) -- (3,10);
	\draw[thick] (6,0) -- (6,11);
	\draw (3,3) arc (-90:90:0.5);
	\draw (3,5) arc (-90:90:0.5);
	\draw[rounded corners] (3,2) -- (3.5,2) -- (4.2,3) -- (4.2,8) -- (3.5,9) -- (3,9);
	\draw (3,7) arc (-90:90:0.5);
	\draw[out=0,in=180] (3,1) to (6,3);
	\draw[out=0,in=180] (3,10) to (6,8);
	\draw[out=180,in=180] (6,4) to (6,7);
	\draw (6,10) arc (90:270:0.5);
	\draw (6,6) arc (90:270:0.5);
	\draw (6,2) arc (90:270:0.5);
\end{tikzpicture}
\ = \ 
\begin{tikzpicture}[baseline={(current bounding box.center)},scale=1/3]
	\draw[thick] (3,0) -- (3,11);
	\draw[thick] (0,0) -- (0,11);
	\draw (0,4) arc (-90:90:0.5);
	\draw[rounded corners] (0,2) -- (0.5,2) -- (1.2,3) -- (1.2,8) -- (0.5,9) -- (0,9);
	\draw (0,7) arc (-90:90:0.5);
	\draw[out=0,in=180] (0,1) to (3,3);
	\draw[out=0,in=180] (0,10) to (3,8);
	\draw[out=0,in=0] (0,3) to (0,6);
	\draw (3,10) arc (90:270:0.5);
	\draw[out=180,in=180] (3,4) to (3,7);
	\draw (3,6) arc (90:270:0.5);
	\draw (3,2) arc (90:270:0.5);
\end{tikzpicture}
\ =d.
\end{gather}
The desired word is therefore
\begin{equation}
w=(u_6)(u_3u_5u_7)w'(u_5)=(u_6)(u_3u_5u_7)(u_2u_4u_6u_8)(u_1u_3u_5u_7)(u_4u_6u_8)(u_9)(u_5) = d.
\end{equation}
We remark that this word $w$ is clearly not in Jones' normal form.

%
%

\section{Standard Representations} \label{sec:Reps}

The $\brac{n,p}$-link states are not just convenient for combinatorial arguments.  They in fact admit a very natural action of the Temperley-Lieb algebra $\tl{n}$.  More precisely, if we let $\LinkMod_n$ denote the complex span of the $\brac{n,p}$-link states (over all $p$), then $\LinkMod_n$ is naturally a left $\tl{n}$-module under the concatenation of diagram with link state.  We will refer to this module as the \emph{link module}.  An example should serve to make the action clear:
\begin{equation}
\parbox{13mm}{\begin{center}
\includegraphics[height=28mm]{tl-diag-ex}
\end{center}}
\ \cdot \ 
\parbox{7mm}{\begin{center}
\includegraphics[height=28mm]{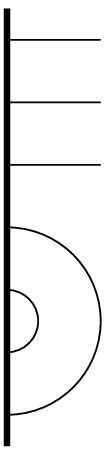}
\end{center}}
\ = \ 
\parbox{20mm}{\begin{center}
\includegraphics[height=28mm]{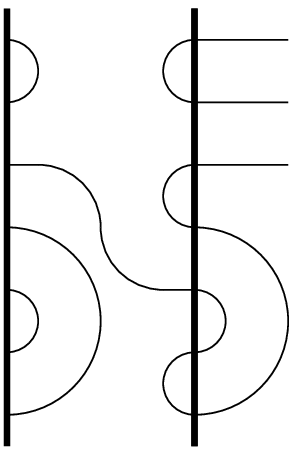}
\end{center}}
\ = \ 
\parbox{7mm}{\begin{center}
\includegraphics[height=28mm]{tl-link-ex}
\end{center}}
\ .
\end{equation}
\eqnref{eq:exampleBubble} provides an example in which a loop is closed upon concatenation.  We note that the number of defects need not be conserved under the $\tl{n}$-action.  However, this action can only close defects in pairs, so the number of defects never increases.  Equivalently, the number of links never decreases under the $\tl{n}$-action.

We can therefore identify $\tl{n}$-submodules $\LinkMod_{n,p} \subseteq \LinkMod_{n}$ which are spanned by the $\brac{n,p'}$-link states with $p' \geqslant p$.  This gives us an obvious filtration
\begin{equation} \label{eqn:Filtration}
0 \subset \LinkMod_{n , \flr{n/2}} \subset \cdots \subset \LinkMod_{n,1} \subset \LinkMod_{n,0} = \LinkMod_{n},
\end{equation}
whose consecutive quotients will be denoted by
\begin{equation} \label{eqn:DefNatMod}
\NatMod_{n,p} = \frac{\LinkMod_{n,p}}{\LinkMod_{n,p+1}}.
\end{equation}
The $\NatMod_{n,p}$ are therefore $\tl{n}$-modules in which the action is by concatenation when the number of links (and defects) is conserved and zero otherwise.  We shall refer to them as the \emph{standard modules}.  In the literature, they are also often called \emph{cell modules}, following the seminal work \cite{GraCel96} upon which a significant proportion of this section is based.  As vector spaces, they are spanned by the $\brac{n,p}$-link states (we shall often forget to distinguish an element of $\LinkMod_{n,p}$ from its coset in $\NatMod_{n,p}$ when it is not crucial), so their dimensions are the $d_{n,p}$ computed in \eqnref{eqn:DimV}.  For example, $\NatMod_{n,0}$ is always one-dimensional, spanned by the link state with no links (all defects).  The Temperley-Lieb generators all act trivially on $\NatMod_{n,0}$ (they would all close two defects), except the unit of course, which acts as the identity.

It is natural to ask at this point if the $\NatMod_{n,p}$ are irreducible as $\tl{n}$-modules.  We will see, eventually, that the answer is generically ``yes''.  The story is subtle however and will not be completed until the end of \secref{sec:Gram}.  We will approach this irreducibility in a manner familiar to physicists, by studying the non-degeneracy of an invariant bilinear form.  Let us therefore recall that we have oriented our link states so that the links and defects point to the right.  This facilitates a left action of the Temperley-Lieb algebra.  We could also introduce the reflections of the link states across a vertical line.  The links and defects would then point to the left, and we would have a natural right action of $\tl{n}$ on these reflected link states.  This suggests that we might be able to treat reflected link states as linear functionals acting on the link states, hence that combining a reflected link state with a standard link state may lead to an interesting bilinear pairing.

We therefore define, on each $\NatMod_{n,p}$, a form $\inner{\cdot}{\cdot} \equiv \inner{\cdot}{\cdot}_{n,p}$ as follows.  If $x$ and $y$ are two $\brac{n,p}$-link states, then $\inner{x}{y}$ is computed by reflecting $x$ across a vertical line and identifying its vertical border with that of $y$.  The value $\inner{x}{y}$ is then non-zero only if every defect of $x$ ends up being connected to one of $y$; when this is so, $\inner{x}{y}=\beta^m$ where $m$ is the number of closed loops so obtained.  This is then extended bilinearly\footnote{What follows can easily be adapted to the case where one extends sesquilinearly.  We have chosen bilinearity for simplicity.} to all of $\NatMod_{n,p}$.  As examples, we compute that in $\NatMod_{4,1}$,
\begin{equation}
\inner{\ 
\parbox{5mm}{\begin{center}
\includegraphics[height=16mm]{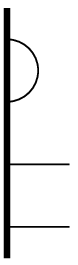}
\end{center}}
\ }{\ 
\parbox{5mm}{\begin{center}
\includegraphics[height=16mm]{tl-link-41a}
\end{center}}
\ } = \ 
\parbox{9mm}{\begin{center}
\includegraphics[height=16mm]{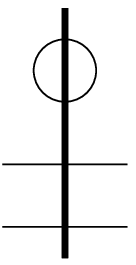}
\end{center}}
\ = \beta, \quad \inner{\ 
\parbox{5mm}{\begin{center}
\includegraphics[height=16mm]{tl-link-41a}
\end{center}}
\ }{\ 
\parbox{5mm}{\begin{center}
\includegraphics[height=16mm]{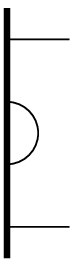}
\end{center}}
\ } = \ 
\parbox{9mm}{\begin{center}
\includegraphics[height=16mm]{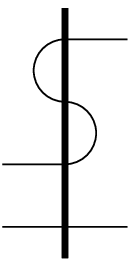}
\end{center}}
\ = 1 \quad \text{and} \quad \inner{\ 
\parbox{5mm}{\begin{center}
\includegraphics[height=16mm]{tl-link-41a}
\end{center}}
\ }{\ 
\parbox{5mm}{\begin{center}
\includegraphics[height=16mm]{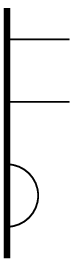}
\end{center}}
\ } = \ 
\parbox{9mm}{\begin{center}
\includegraphics[height=16mm]{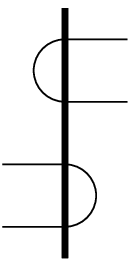}
\end{center}}
\ = 0.
\end{equation}
Note that this bilinear form is symmetric --- $\inner{x}{y}$ and $\inner{y}{x}$ are just reflections of one another in the pictorial interpretation.

Consider now the reflection of an $n$-diagram about a vertical line.  This will be another $n$-diagram, so reflection defines a linear map from $\tl{n}$ to itself.  This obviously preserves the multiplication except that the order will be reversed --- this map is an antiautomorphism of $\tl{n}$.\footnote{We will take this adjoint to be linear as we are considering $\inner{\cdot}{\cdot}$ as a bilinear form.  If one prefers a sesquilinear form, then the adjoint should be taken to be antilinear.  Note however, that this conflicts with the defining relations \eqref{eqn:TLrels} unless $\beta$ is real.}  We will therefore regard it as an adjoint, denoting it by $U \mapsto U^{\dag}$.  It is clear from \eqnref{eqn:DefTLGens} that $\wun^{\dag} = \wun$ and $u_i^{\dag} = u_i$.  This definition is natural and useful.  Indeed, it shows that the bilinear forms we have defined turn out to be \emph{invariant} with respect to the $\tl{n}$-action.
\begin{lemma} \label{lem:Invariance}
The bilinear form $\inner{\cdot}{\cdot}$ on $\NatMod_{n,p}$ satisfies
\begin{equation} \label{eqn:DefAdj}
\inner{x}{U y} = \inner{U^{\dag} x}{y} \qquad \text{for all $U \in \tl{n}$ and $x , y \in \NatMod_{n,p}$.}
\end{equation}
\end{lemma}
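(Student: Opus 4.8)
The plan is to reduce the statement to the single generators $u_i$ and then to a purely diagrammatic check. Since every $U\in\tl{n}$ is a linear combination of words in the $u_i$, and since $(\cdot)^{\dag}$ is linear and an antiautomorphism with $u_i^{\dag}=u_i$, it suffices to verify \eqref{eqn:DefAdj} when $U=u_i$ for a single generator $i\in\{1,\dots,n-1\}$: if $\inner{x}{u_iy}=\inner{u_ix}{y}$ holds for all $i$ and all link states $x,y$, then for a word $U=u_{i_1}\cdots u_{i_k}$ one peels generators off one at a time, $\inner{x}{u_{i_1}\cdots u_{i_k}y}=\inner{u_{i_1}x}{u_{i_2}\cdots u_{i_k}y}=\cdots=\inner{u_{i_k}^{\dag}\cdots u_{i_1}^{\dag}x}{y}=\inner{U^{\dag}x}{y}$, and bilinearity extends this to arbitrary $U$ and arbitrary $x,y\in\NatMod_{n,p}$.

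Next I would unwind both sides pictorially. By definition, $\inner{x}{u_iy}$ is obtained by reflecting $x$ across the vertical border and gluing it to the diagram obtained by placing the $u_i$-diagram to the left of $y$; likewise $\inner{u_ix}{y}$ reflects $u_ix$ (equivalently, glues the reflected $u_i$-diagram to the reflected $x$) and then glues to $y$. The key observation is that the $u_i$-diagram is symmetric under reflection about a vertical line, so $u_i^{\dag}=u_i$ as diagrams; hence in both computations one is forming exactly the same closed picture, namely reflected-$x$, then the $u_i$ cap-cup at positions $i,i+1$, then $y$, read along the horizontal axis. The two gluings differ only in \emph{where} one chooses to insert the identification of the middle border — between $x$ and $u_i$, or between $u_i$ and $y$ — and concatenation of diagrams is associative, so the resulting closed diagram, its number of internal loops $m$, and the condition ``every defect of the left half is matched to a defect of the right half'' are identical. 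Therefore $\inner{x}{u_iy}$ and $\inner{u_ix}{y}$ evaluate to the same $\beta^m$ (or are both zero), which is exactly what we need.

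The only genuinely delicate point is bookkeeping in the standard module rather than the link module: in $\NatMod_{n,p}$ the action of $u_i$ is concatenation \emph{when the number of links is preserved} and zero when it drops (i.e. when a pair of defects gets capped off). So I must check that the identity survives passing to the quotient. If $u_iy$ still has $p$ links, there is nothing new. If $u_iy$ has fewer than $p$ links, then $u_iy=0$ in $\NatMod_{n,p}$, so the left side is $0$; but in that case capping at $i,i+1$ joins two defects of $y$, and one checks that the glued picture for $\inner{u_ix}{y}$ then fails the defect-matching condition (the same two through-lines that would have been defects of $y$ are now tied off internally, leaving a defect of $x$ with no partner), so the right side is $0$ as well — and symmetrically for $u_ix$ versus $x$. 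Thus the main obstacle, and the one place requiring care, is matching the ``zero when defects are capped'' convention on the two sides; once that is dispensed with, the rest is the associativity of concatenation together with the self-reflection-symmetry of the $u_i$-diagrams.
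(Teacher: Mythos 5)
Your argument is correct and is essentially the paper's one-line observation that both sides, read as pictures, are the \emph{same} glued diagram (reflected $x$, then $U$, then $y$, concatenated); you simply unwind this into a reduction to the generators $u_i$ and an explicit check that the identity descends to the quotient $\NatMod_{n,p}$. One slip: where you say $u_iy$ ``has fewer than $p$ links,'' you mean it has \emph{more} than $p$ links, equivalently fewer than $n-2p$ defects --- the Temperley--Lieb action on link states can only close defects, never remove links. With that corrected, the rest of your quotient check is right: if the cap of $u_i$ closes two defects of $y$, then in $\inner{u_ix}{y}$ the simple link created in $u_ix$ joins those same two defect positions of $y$ to each other, so the defect-matching criterion in the definition of $\inner{\cdot}{\cdot}$ forces both sides to vanish together.
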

\noindent To see this, we merely note that the two sides of \eqref{eqn:DefAdj} are identical when expressed in terms of diagrams and link states.

We introduce a convenient notation $\wall{x}{y}$ for the unique $n$-diagram which, when cut in half vertically, decomposes into the $\brac{n,p}$-link state $x$ and the vertical reflection of the $\brac{n,p}$-link state $y$.  Extending linearly, we obtain a map $\wall{\cdot}{\cdot}$ from $\NatMod_{n,p} \times \NatMod_{n,p}$ into $\tl{n}$ (in fact we obtain such a map for each $p$, but we will not bother to distinguish them).  The utility of the bilinear form we have defined stems from the following relation.
\begin{lemma} \label{lem:Magic}
If $x,y,z \in \NatMod_{n,p}$, then
\begin{equation} \label{eqn:Magic}
\wall{x}{y} z = \inner{y}{z} x.
\end{equation}
\end{lemma}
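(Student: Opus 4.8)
The plan is to verify the identity $\wall{x}{y}\,z = \inner{y}{z}\,x$ purely pictorially, first for basis elements (i.e.\ when $x$, $y$, $z$ are $\brac{n,p}$-link states) and then extend by bilinearity on all three arguments. So suppose $x,y,z$ are link states. The diagram $\wall{x}{y}$ is obtained by placing the link state $x$ on the left and the vertical reflection of $y$ on the right, joining them along a common vertical border. Acting on the right by the link state $z$ means concatenating: we glue the right border of $\wall{x}{y}$ (which carries the reflected pattern of $y$) to the left border of $z$ (which carries the pattern of $y$'s would-be partner). Reading the picture, the left half of the resulting link state is just $x$ (its pattern is untouched by the concatenation on the far side), while the middle junction is exactly the diagrammatic computation of $\inner{y}{z}$: the reflected pattern of $y$ meets the pattern of $z$, producing some number $m$ of closed loops and, crucially, possibly connecting a defect of $y$ to a defect of $z$ or failing to do so.

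The key case analysis is then the following. If every defect of $y$ connects to a defect of $z$ under this gluing, then no through-line of $\wall{x}{y}\,z$ is destroyed: the defects of $x$ pass straight through the (now closed) middle, the closed loops contribute a factor $\beta^m = \inner{y}{z}$, and what remains is precisely the link state $x$ times that scalar, which is the right-hand side. If, on the other hand, some defect of $y$ is joined to a \emph{link} of $z$ rather than to a defect (equivalently, $\inner{y}{z}=0$), then that junction forces two of the defects of $x$ to be joined to each other through the middle region; the resulting object has strictly fewer than $n-2p$ defects, hence its class in $\NatMod_{n,p}$ is zero — matching $\inner{y}{z}\,x = 0$. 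These two possibilities are exhaustive because, $x$, $y$, $z$ all having exactly $p$ links, the reflected-$y$/$z$ junction is a pairing of $2(n-2p)$ defect-endpoints, and it is a perfect matching of $y$'s defects to $z$'s defects precisely when no defect is ``lost'' into a link.

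I expect the main obstacle to be purely expository rather than mathematical: making the bookkeeping of defects and loops in the concatenated picture rigorous without drawing every case. The honest content is the claim that the left half of $\wall{x}{y}\,z$ is $x$ unchanged, and that the defect count of the whole is $n-2p$ if and only if $\inner{y}{z}\neq 0$; both follow immediately from the definitions of $\wall{\cdot}{\cdot}$, of $\inner{\cdot}{\cdot}$, and of the $\tl{n}$-action on $\NatMod_{n,p}$ (where a monomial acts as zero as soon as it decreases the number of defects, cf.\ \eqnref{eqn:DefNatMod}). Finally, since $\wall{\cdot}{\cdot}$, $\inner{\cdot}{\cdot}$, and the module action are all linear (bilinear) in each slot, the identity for general $x,y,z\in\NatMod_{n,p}$ follows from the basis case by expanding in link states.
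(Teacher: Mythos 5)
Your proof is correct and follows essentially the same diagrammatic argument as the paper's: reduce to link states by bilinearity, then split into the case where a defect of $y$ meets a link of $z$ (forcing an extra link in the product, hence zero in the quotient $\NatMod_{n,p}$, matching $\inner{y}{z}=0$) and the case where the defects match up perfectly (giving $\beta^m x = \inner{y}{z}\,x$). The only difference is cosmetic — you treat the non-vanishing case first — so there is nothing further to add.
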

\begin{proof}
Observe first that linearity allows us to assume that $x$, $y$ and $z$ are in fact $\brac{n,p}$-link states.  If any of the defects of $y$ are closed by a link in $z$, then $\inner{y}{z} = 0$ by definition, so the right-hand side of \eqref{eqn:Magic} vanishes.  As the defects of $x$ join those of $y$, this will close two defects of $x$, leading to an additional link in $\wall{x}{y} z$.  But then, the resulting link state will vanish in $\NatMod_{n,p}$ due to its definition as a quotient, hence the left-hand side of \eqref{eqn:Magic} likewise vanishes.

It remains to check the case in which none of the defects of $y$ are closed by a link in $z$.  But then it is clear that the left-hand side of \eqref{eqn:Magic} will be proportional to $x$.  The proportionality constant is then given by $\beta$ to the power of the number of loops in the concatenation of $\wall{x}{y}$ and $z$.  Since $x$ contributes no loops, this constant is just $\inner{y}{z}$ as required.
\end{proof}

We illustrate this deceptively simple result with an example in $\NatMod_{7,3}$:  Compare
\begin{equation}\label{eq:exampleBubble}
\parbox{13mm}{\begin{center}
\includegraphics[height=28mm]{tl-diag-ex}
\end{center}}
\ \cdot \ 
\parbox{7mm}{\begin{center}
\includegraphics[height=28mm]{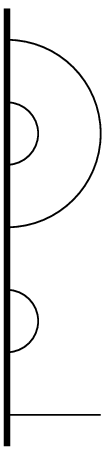}
\end{center}}
\ = \beta \ 
\parbox{7mm}{\begin{center}
\includegraphics[height=28mm]{tl-link-ex}
\end{center}}
\qquad \text{with} \qquad
\inner{\ 
\parbox{5mm}{\begin{center}
\includegraphics[height=28mm]{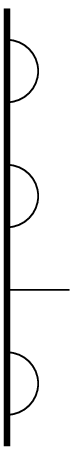}
\end{center}}
\ }{\ 
\parbox{7mm}{\begin{center}
\includegraphics[height=28mm]{tl-link-ex3}
\end{center}}
\ } = \beta.
\end{equation}
We remark that it is tempting (and sometimes useful) to think of $\wall{x}{y}$ as the combination $x y^{\dag}$, where the ``adjoint'' of the link state $y$ refers likewise to its vertical reflection.  Then, one could ``prove'' \lemref{lem:Magic} as follows:
\begin{equation} \label{eqn:Magic2}
\wall{x}{y} z = \brac{x y^{\dag}} z = x \brac{y^{\dag} z} = \inner{y}{z} x.
\end{equation}
However, caution should be exercised with such manipulations.  If $x$ and $y$ belong to $\NatMod_{n,p'}$ and $z$ is an $\brac{n,p}$-link state (with $p' \neq p$), then \eqref{eqn:Magic} and \eqref{eqn:Magic2} do not make sense --- the bilinear form is no longer defined.  Nevertheless, the product $\wall{x}{y} z$ may be non-zero (in $\LinkMod_{n}$) and not even proportional to $x$.  For example,
\begin{equation}
\parbox{8mm}{\begin{center}
\includegraphics[height=16mm]{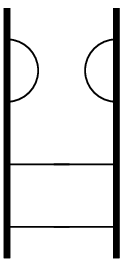}
\end{center}}
\ \cdot \ 
\parbox{7mm}{\begin{center}
\includegraphics[height=16mm]{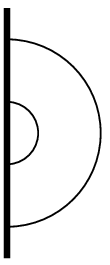}
\end{center}}
\ = \ 
\parbox{3mm}{\begin{center}
\includegraphics[height=16mm]{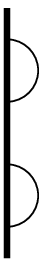}
\end{center}}
\ .
\end{equation}
We will therefore refrain from trying to define an adjoint of a link state (what one means is of course a linear functional in the dual vector space).

We can now understand how this bilinear form will help to decide the irreducibility of the $\NatMod_{n,p}$.  For this, we consider the \emph{radical} $\RadMod_{n,p}$ of the bilinear form on $\NatMod_{n,p}$:
\begin{equation}\label{eq:DefRadMod}
\RadMod_{n,p} = \set{x \in \NatMod_{n,p} \st \inner{x}{y} = 0 \text{ for all } y \in \NatMod_{n,p}}.
\end{equation}
The invariance (\lemref{lem:Invariance}) of the bilinear form implies that the radical $\RadMod_{n,p}$ is a submodule of $\NatMod_{n,p}$.
\begin{proposition} \label{prop:S/R}
If $\inner{\cdot}{\cdot}$ is not identically zero on $\NatMod_{n,p}$, then $\NatMod_{n,p}$ is cyclic and indecomposable.  Moreover, $\NatMod_{n,p} / \RadMod_{n,p}$ is then irreducible.  Equivalently, $\RadMod_{n,p}$ is the unique maximal proper submodule of $\NatMod_{n,p}$.
\end{proposition}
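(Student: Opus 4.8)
The plan is to exploit \lemref{lem:Magic} to show that $\NatMod_{n,p}$ is cyclic and that any proper submodule lies inside $\RadMod_{n,p}$. First I would note that, since $\inner{\cdot}{\cdot}$ is not identically zero, there exist $\brac{n,p}$-link states $x_0,y_0$ with $\inner{y_0}{x_0}\neq 0$. Then for an \emph{arbitrary} $\brac{n,p}$-link state $x$, \eqnref{eqn:Magic} gives $\wall{x}{y_0}\, x_0 = \inner{y_0}{x_0}\, x$, so $x = \inner{y_0}{x_0}^{-1}\,\wall{x}{y_0}\, x_0 \in \tl{n}\cdot x_0$. Since the $\brac{n,p}$-link states span $\NatMod_{n,p}$, this proves $\NatMod_{n,p} = \tl{n}\cdot x_0$, i.e.\ $\NatMod_{n,p}$ is cyclic, generated by any vector not in $\RadMod_{n,p}$.

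Next I would prove that every proper submodule $N \subsetneq \NatMod_{n,p}$ is contained in $\RadMod_{n,p}$. Suppose not: then there is $v \in N$ with $v \notin \RadMod_{n,p}$, so $\inner{v}{z}\neq 0$ for some $z \in \NatMod_{n,p}$; writing things in the link-state basis and using bilinearity one may in fact take $z$ to be a link state, and likewise expand $v$, but the cleanest route is simply to pick a link state $x$ with $\inner{v}{x'}\neq0$ for a suitable combination — more directly, apply \lemref{lem:Magic} with $y$ replaced by a link-state appearing with nonzero coefficient: for any link state $w$, $\wall{w}{v'}\,v$ is a nonzero multiple of $w$ once we choose $v'$ so that $\inner{v'}{v}\neq0$; but $\inner{\cdot}{\cdot}$ is symmetric, so $\inner{v'}{v}=\inner{v}{v'}$, and choosing $v'$ to realise the nonvanishing pairing of $v$ against some vector, we get that $N \ni \wall{w}{v'}\,v = \inner{v'}{v}\,w$ for every link state $w$. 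Hence $N$ contains all link states, so $N = \NatMod_{n,p}$, contradicting properness. Therefore $\RadMod_{n,p}$ contains every proper submodule; being itself a proper submodule (proper because $\inner{\cdot}{\cdot}\not\equiv 0$, and a submodule by the remark after \eqnref{eq:DefRadMod}), it is the unique maximal proper submodule.

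From the existence of a unique maximal proper submodule the remaining assertions follow formally. The quotient $\NatMod_{n,p}/\RadMod_{n,p}$ is irreducible: a nonzero submodule of it is $N/\RadMod_{n,p}$ for a submodule $N$ with $\RadMod_{n,p} \subsetneq N \subseteq \NatMod_{n,p}$, and since every \emph{proper} submodule is contained in $\RadMod_{n,p}$ we must have $N = \NatMod_{n,p}$, so the quotient has no proper nonzero submodule. Indecomposability of $\NatMod_{n,p}$ is also immediate: in a nontrivial direct-sum decomposition $\NatMod_{n,p} = A \oplus B$ with $A,B \neq 0$, both $A$ and $B$ would be proper submodules, hence both contained in $\RadMod_{n,p}$, forcing $\NatMod_{n,p} = A \oplus B \subseteq \RadMod_{n,p} \subsetneq \NatMod_{n,p}$, a contradiction.

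The only subtle point — the step I expect to be the main obstacle to state cleanly — is the reduction to link states when applying \lemref{lem:Magic} to a general vector $v \in N$: \eqnref{eqn:Magic} is stated for link states, but one needs $\wall{\cdot}{\cdot}$ extended bilinearly and the observation that if $\inner{v}{z}\neq 0$ for some $z$ then, expanding $z$ in the link-state basis, there is a link state $z_0$ with $\inner{v}{z_0}\neq 0$; then $\wall{w}{z_0}\,v = \inner{z_0}{v}\,w$ by bilinear extension of \eqnref{eqn:Magic} (linear in the $v$-slot), and $\inner{z_0}{v}=\inner{v}{z_0}\neq 0$ by symmetry of the form. Once this bookkeeping is in place the argument is purely formal.
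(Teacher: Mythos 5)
Your proposal is correct and rests on the same key ingredient as the paper's proof, namely \lemref{lem:Magic}: once one observes that every $z\notin\RadMod_{n,p}$ satisfies $\wall{x}{y}z=\inner{y}{z}x$ for suitable $y$ with $\inner{y}{z}\neq 0$, the cyclicity and the irreducibility of the quotient fall out exactly as in the paper. The one place you organise things differently is indecomposability: the paper picks a generator $z$ of $\NatMod_{n,p}=A\oplus B$, writes $z=z_A+z_B$, and argues that one of $z_A,z_B$ must lie outside $\RadMod_{n,p}$ and hence generates all of $\NatMod_{n,p}$ inside its summand; you instead first establish that every proper submodule is contained in $\RadMod_{n,p}$ and then note that a nontrivial decomposition $A\oplus B$ would force $\NatMod_{n,p}\subseteq\RadMod_{n,p}$. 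Both are valid and ultimately equivalent; your packaging is arguably cleaner, since the ``unique maximal submodule'' statement is proved once and then both irreducibility of the quotient and indecomposability follow formally, whereas the paper's indecomposability argument re-derives the relevant fact on the fly. One small stylistic remark: the worry in your last paragraph about reducing to link states is unnecessary, because \lemref{lem:Magic} as stated already holds for arbitrary $x,y,z\in\NatMod_{n,p}$ (the reduction to link states is internal to its proof), so one may apply it directly with $y$ any element pairing nontrivially against $v$.
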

\noindent (We shall discuss the case when $\inner{\cdot}{\cdot}$ is identically zero shortly.)
\begin{proof}
Since $\inner{\cdot}{\cdot}$ is not identically zero, there exist $y,z \in \NatMod_{n,p}$ such that $\inner{y}{z} = 1$.  Then, \eqnref{eqn:Magic} says that for every $x \in \NatMod_{n,p}$, we may form $\wall{x}{y} \in \tl{n}$ to obtain
\begin{equation}
\wall{x}{y} z = \inner{y}{z} x = x.
\end{equation}
Thus, $z$ generates $\NatMod_{n,p}$, proving that this module is cyclic.  Now, any $z \notin \RadMod_{n,p}$ has such a partner $y$, so every such $z$ is a generator of $\NatMod_{n,p}$.  It follows that every non-zero element of $\NatMod_{n,p} / \RadMod_{n,p}$ generates this quotient, hence that the quotient is irreducible.

Suppose then that $\NatMod_{n,p}$ can be written as $\NatMod_{n,p} = A \oplus B$. Let $z$ be a generator of $\NatMod_{n,p}$ and $z_A\in A, z_B\in B$ be such that $z=z_A+z_B$. If both $z_A$ and $z_B$ were in $\RadMod_{n,p}$, then they would generate a submodule $\tl n z_A\oplus \tl n z_B\subseteq \RadMod_{n,p}$ that includes $\tl n z$ and is distinct from $\NatMod_{n,p}$. So, at least one of $z_A$ or $z_B$ is not an element of $\RadMod_{n,p}$. If it is $z_A$, then $z_A$ generates $\NatMod_{n,p}$, so $\NatMod_{n,p} = \tl{n} z_A \subseteq A$, giving $B=0$. If $z_B \notin \RadMod_{n,p}$, then $A=0$ by the same argument. This proves that $\NatMod_{n,p}$ is indecomposable.
\end{proof}

\noindent We will find it convenient to denote the quotient $\NatMod_{n,p} / \RadMod_{n,p}$ by $\IrrMod_{n,p}$, even when $\inner{\cdot}{\cdot}$ vanishes identically.  In this latter case, $\IrrMod_{n,p}$ is the trivial module $\set{0}$; otherwise, $\IrrMod_{n,p}$ is irreducible. For this reason, we shall often refer to the $\IrrMod_{n,p}$ as the irreducibles, understanding that one should exclude any $\IrrMod_{n,p}$ that vanish.

To prove that $\NatMod_{n,p}$ is irreducible, it is therefore enough to show that the radical $\RadMod_{n,p}$ is zero.  Equivalently, we must show that the bilinear form we have defined on $\NatMod_{n,p}$ is non-degenerate.  This will be the strategy of \secref{sec:Gram}.  Note however that $\RadMod_{n,p} \neq \set{0}$ only implies that $\NatMod_{n,p}$ is reducible when $\inner{\cdot}{\cdot} \neq 0$ (that is, when $\RadMod_{n,p} \neq \NatMod_{n,p}$).  In case $\inner{\cdot}{\cdot}=0$, we cannot decide the irreducibility of $\NatMod_{n,p}$ using \propref{prop:S/R}.

However, when $\beta \neq 0$, $\inner{x}{x} = \beta^p \neq 0$ for all $\brac{n,p}$-link states $x$, so the bilinear form is non-zero.  Indeed, if $\beta = 0$ but $n \neq 2p$ (so there is at least one defect), then the form is non-zero because we may choose $x$ and $y$ so as to form a ``snake'':
\begin{equation} \label{eqn:Snake}
\inner{x}{y} = \ 
\parbox{13mm}{\begin{center}
\includegraphics[height=48mm]{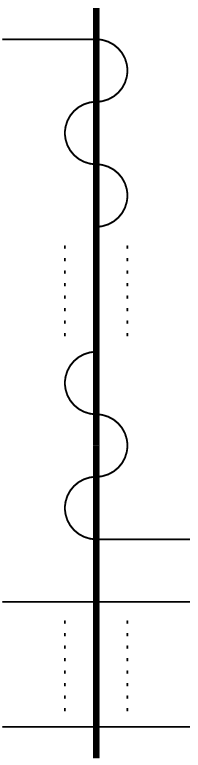}
\end{center}}
\ = 1.
\end{equation}
However, when $\beta = 0$ and there are no defects ($n = 2p$), the bilinear form vanishes identically.  It is therefore useful to renormalise it, defining
\begin{equation} \label{eqn:RenormalisedForm}
\inner{x}{y}' = \lim_{\beta \rightarrow 0} \frac{\inner{x}{y}}{\beta} \qquad \text{($x,y \in \NatMod_{2p,p}$),}
\end{equation}
where the forms appearing within the limit are those of the $\NatMod_{2p,p}$ with $\beta \neq 0$.  Since $\inner{x}{y}$ is a polynomial in $\beta$ with vanishing constant coefficient, $\inner{x}{y}'$ is defined --- in fact, when $x$ and $y$ are $\brac{2p,p}$-link states, it is $1$ when a single loop is formed and $0$ otherwise.  Moreover, $\inner{\cdot}{\cdot}'$ inherits bilinearity, symmetry and invariance from $\inner{\cdot}{\cdot}$.  Note that this renormalised bilinear form is \emph{not} identically zero because $x$ and $y$ may be chosen so as to obtain
\begin{equation}
\inner{x}{y}' = \lim_{\beta \rightarrow 0} \frac{1}{\beta} \ 
\parbox{9mm}{\begin{center}
\includegraphics[height=36mm]{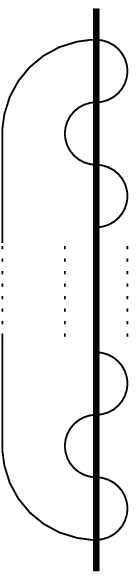}
\end{center}}
\ = \lim_{\beta \rightarrow 0} \frac{\beta}{\beta}= 1.
\end{equation}

We cannot immediately apply \propref{prop:S/R} to the radical of this renormalised form because the proof relied crucially upon \lemref{lem:Magic}, special to the form $\inner{\cdot}{\cdot}$.  However, we have the following replacement:
\begin{lemma} \label{lem:Magic2}
Let $x$ and $y$ be $\brac{2p,p-1}$-link states, so that they have precisely two defects.  Denote by $x'$ and $y'$ the respective $\brac{2p,p}$-link states formed by closing these defects.  Then for $\beta = 0$,
\begin{equation}
\wall{x}{y} z = \inner{y'}{z}' x', \qquad \text{for all $z \in \NatMod_{2p,p}$.}
\end{equation}
This extends linearly to all $x,y \in \NatMod_{2p,p-1}$.
\end{lemma}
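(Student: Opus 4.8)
The plan is to mimic the proof of \lemref{lem:Magic}, carrying out the diagrammatic trace over generic $\beta$ and only then specialising to $\beta=0$ through the renormalised form \eqref{eqn:RenormalisedForm}. Since $\wall{\cdot}{\cdot}$ and $\inner{\cdot}{\cdot}'$ are bilinear in their first two arguments and the $\tl{2p}$-action on $\NatMod_{2p,p}$ is linear in $z$, it suffices to treat the case in which $x$ and $y$ are $\brac{2p,p-1}$-link states and $z$ is a $\brac{2p,p}$-link state. I would first record that $x'$ genuinely is a $\brac{2p,p}$-link state: if the two defects of $x$ sit at positions $a_1<a_2$, then planarity forces every point strictly between them to be matched with another such point (an arc leaving the interval $[a_1,a_2]$ would have to cross one of the defect lines), so adjoining the arc $\set{a_1,a_2}$ to the $p-1$ arcs of $x$ yields a non-crossing matching.

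Now work over generic $\beta$. The $2p$-diagram $\wall{x}{y}$ carries the $p-1$ arcs of $x$ on its left border, the $p-1$ reflected arcs of $y$ on its right border, and two through-lines: the upper one joins $a_1$ to the upper defect of $y$ and the lower one joins $a_2$ to the lower defect of $y$. Form $\wall{x}{y}z$ by concatenation and trace the connected component containing $a_1$. It leaves the far left along the upper through-line and then alternately follows an arc of $z$ and an arc of $y$ inside $\wall{x}{y}$; since $z$ has no defects, every point on the gluing line meets exactly one arc of $z$ and one arc (or through-line) of $y$, so this path cannot stop in the middle and can only re-enter the far-left border along a through-line, hence it terminates at $a_2$. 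Thus the component of $a_1$ contributes exactly the arc $\set{a_1,a_2}$, the $p-1$ arcs of $x$ are untouched, and the remaining arcs of $y$ and of $z$ close up into some number $m\geqslant 0$ of loops; therefore $\wall{x}{y}z=\beta^{m}x'$. Computing $\inner{y'}{z}$ over the same generic $\beta$ involves precisely this picture with the two through-lines and their endpoints $a_1,a_2$ removed and the extra arc of $y'$ joining the two former defects of $y$ inserted: the open $a_1$-to-$a_2$ path is thereby closed into one additional loop while the other $m$ loops are unchanged, and since $y'$ has no defects this gives $\inner{y'}{z}=\beta^{m+1}$.

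Comparing the two computations yields $\wall{x}{y}z=\beta^{-1}\inner{y'}{z}\,x'$ for every $\beta\neq 0$; as both sides equal the polynomial $\beta^{m}x'$, the identity persists at $\beta=0$, where the left-hand side is the product computed in the $\beta=0$ diagram algebra and the right-hand side is, by \eqref{eqn:RenormalisedForm}, equal to $\inner{y'}{z}'x'$. This is the asserted identity for link states, and it extends to all $x,y\in\NatMod_{2p,p-1}$ by bilinearity. The only delicate point is the claim that the component of $a_1$ always exits at $a_2$, together with the bookkeeping showing that closing it off produces exactly one further loop; as in \lemref{lem:Magic} this rests on planarity and on the fact that $z$, having no defects, matches every point on the gluing line.
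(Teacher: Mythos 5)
Your proof is correct and rests on the same diagrammatic observations as the paper's: the two defect positions of $x$ become joined in $\wall{x}{y}z$ because the through-lines of $\wall{x}{y}$ must be continued by arcs of $z$ (as $z$ has no defects), and closing the defects of $y$ to obtain $y'$ adds exactly one loop when pairing with $z$, which is precisely compensated by the division by $\beta$ in the definition of $\inner{\cdot}{\cdot}'$. The only cosmetic difference is that you establish the polynomial identity $\beta\cdot\wall{x}{y}z = \inner{y'}{z}\,x'$ at generic $\beta$ and then specialise, whereas the paper argues directly at $\beta=0$; the underlying content is identical.
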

\begin{proof}
Note first that the defects of $y$ must be closed by a link in $z$ (assuming by linearity that the latter is a $\brac{2p,p}$-link state).  Thus, $\wall{x}{y} z$ will be proportional to the link state $x'$ obtained by closing the defects of $x$.  The constant of proportionality is given by $\beta$ to the power of the number of loops formed by $y$ and $z$.  This is $0$ (since $\beta = 0$) unless there are no such loops, in which case it is $1$.  However, this matches the value of $\inner{y'}{z}'$, as closing the defects of $y$ leads to an additional loop which is dealt with by dividing by $\beta$ in \eqref{eqn:RenormalisedForm}.
\end{proof}

If $\RadMod'_{2p,p}$ denotes the radical of the renormalised $\beta = 0$ form $\inner{\cdot}{\cdot}'$, we can now mimic the proof of \propref{prop:S/R} to obtain the analogous result.
\begin{proposition} \label{prop:S/R'}
The form $\inner{\cdot}{\cdot}$ is identically zero if and only if $\beta = 0$ and $n=2p$.  Then, $\NatMod_{2p,p}$ is cyclic, indecomposable and has an irreducible quotient $\NatMod_{2p,p} / \RadMod'_{2p,p}$.
\end{proposition}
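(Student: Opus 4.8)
The plan is to settle the ``if and only if'' from facts already on the table and then to mimic, step for step, the proof of \propref{prop:S/R}, with \lemref{lem:Magic2} substituting for \lemref{lem:Magic}. For the equivalence: if $\beta\neq0$ then $\inner{x}{x}=\beta^p\neq0$ for a link state $x$, and if $\beta=0$ but $n>2p$ then the snake of \eqnref{eqn:Snake} gives $x,y$ with $\inner{x}{y}=1$, so in both cases the form is not identically zero. Conversely, if $\beta=0$ and $n=2p$, then gluing the vertical reflection of a $\brac{2p,p}$-link state $x$ to a $\brac{2p,p}$-link state $y$ leaves no defects and hence produces only closed loops, and at least one loop is formed; thus $\inner{x}{y}=\beta^m$ with $m\geqslant1$, which vanishes at $\beta=0$, and bilinearity finishes the ``if'' direction.

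So assume from now on that $\beta=0$ and $n=2p$. The single new ingredient, beyond the template of \propref{prop:S/R}, is a remark about the defect-closing map $x\mapsto x'$ of \lemref{lem:Magic2}: every $\brac{2p,p}$-link state $v$ contains a \emph{simple} link joining two adjacent points (an innermost arc), and opening that link yields an honest $\brac{2p,p-1}$-link state $x$ whose closure $x'$ is again $v$. Consequently every $\brac{2p,p}$-link state --- in particular every basis vector in the link-state expansion of an arbitrary element of $\NatMod_{2p,p}$ --- can be written as $w'$ for some $\brac{2p,p-1}$-link state $w$, and the elements $x'$, as $x$ runs over $\brac{2p,p-1}$-link states, span $\NatMod_{2p,p}$.

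Now I run the \propref{prop:S/R} argument with the renormalised form. Its radical $\RadMod'_{2p,p}$ is a submodule by invariance of $\inner{\cdot}{\cdot}'$, and it is proper because $\inner{\cdot}{\cdot}'$ is not identically zero (the loop picture just before \lemref{lem:Magic2}); in particular $\NatMod_{2p,p}/\RadMod'_{2p,p}$ is nonzero. Let $z\notin\RadMod'_{2p,p}$. Since $\inner{\cdot}{\cdot}'$ restricted to link states takes only the values $0$ and $1$, there is a link state $w_1$ with $\inner{w_1}{z}'=1$; open a simple link of $w_1$ to write $w_1=w'$. Then, for any $\brac{2p,p}$-link state $v$, opening a simple link of $v$ gives a $\brac{2p,p-1}$-link state $x$ with $x'=v$, and \lemref{lem:Magic2} yields $\wall{x}{w}z=\inner{w'}{z}'\,x'=\inner{w_1}{z}'\,v=v$; hence $v\in\tl{2p}z$ for all such $v$, i.e.\ $\tl{2p}z=\NatMod_{2p,p}$. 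In particular (since $\RadMod'_{2p,p}$ is proper) $\NatMod_{2p,p}$ is cyclic, and since \emph{every} $z\notin\RadMod'_{2p,p}$ generates it, every nonzero vector of $\NatMod_{2p,p}/\RadMod'_{2p,p}$ generates that quotient, so it is irreducible.

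Indecomposability is then the verbatim argument of \propref{prop:S/R}: if $\NatMod_{2p,p}=A\oplus B$ and a generator $z$ splits as $z=z_A+z_B$, then $z_A$ and $z_B$ cannot both lie in the submodule $\RadMod'_{2p,p}$ (otherwise $\NatMod_{2p,p}=\tl{2p}z\subseteq\tl{2p}z_A+\tl{2p}z_B\subseteq\RadMod'_{2p,p}$, contradicting properness), and whichever of $z_A$, $z_B$ lies outside $\RadMod'_{2p,p}$ generates all of $\NatMod_{2p,p}$, forcing the other summand to be $0$. The only point requiring genuine care is the simple-link observation of the second paragraph: it is what lets us take the non-degeneracy partner of $z$ in the image of $x\mapsto x'$ and what makes the products $\wall{x}{w}z$ exhaust $\NatMod_{2p,p}$; everything else is \propref{prop:S/R} transported to the renormalised form.
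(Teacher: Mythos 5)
Your strategy --- deduce the ``if and only if'' from observations already on the page and then run the \propref{prop:S/R} argument with \lemref{lem:Magic2} in place of \lemref{lem:Magic} --- is exactly what the paper intends (it simply says ``we can now mimic the proof of \propref{prop:S/R}''), and you have correctly identified the extra ingredient the mimicry needs: every $(2p,p)$-link state must lie in the image of the closure map $x \mapsto x'$ of \lemref{lem:Magic2}. However, your justification of that ingredient contains a real error. You assert that opening an \emph{innermost} simple link of a $(2p,p)$-link state $v$ yields a valid $(2p,p-1)$-link state $x$ with $x'=v$. This fails whenever the simple link is nested. Take $v$ to be the $(4,2)$-link state with links $\{1,4\}$ and $\{2,3\}$: opening its only simple link, at $\{2,3\}$, would leave defects at positions $2$ and $3$ \emph{enclosed} by the arc from $1$ to $4$, and this is not a planar half-diagram --- the three $(4,1)$-link states counted by $d_{4,1}=3$ are those with defect pairs $\{1,2\}$, $\{1,4\}$, $\{3,4\}$, and none has both defects inside an arc. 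The correct and equally elementary choice is to open an \emph{outermost} link of $v$, one not nested inside any other arc (for instance the link whose left endpoint is at position $1$). Such a link always exists, and removing it always yields a valid $(2p,p-1)$-link state whose closure is $v$; this is exactly the device used in the paper's proof of \propref{prop:EndLinks} (``let $x \in \NatMod_{2p,p-1}$ be obtained by cutting an outermost link of $x'$''). With ``innermost simple link'' replaced by ``outermost link'' throughout your second and third paragraphs, the rest of the argument --- the normalised partner $w_1$, the identity $\wall{x}{w}z=\inner{w'}{z}'\,v$ from \lemref{lem:Magic2}, the cyclicity and irreducibility of the quotient, and the verbatim indecomposability step --- is sound and coincides with the paper's intended proof.
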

\noindent It remains only to remark that this quotient is never zero because we have shown that $\RadMod'_{2p,p} \neq \NatMod_{2p,p}$ (the form $\inner{\cdot}{\cdot}'$ is never identically zero).  

We conclude this section with a study of whether the $\NatMod_{n,p}$, and their irreducible quotients $\IrrMod_{n,p} = \NatMod_{n,p} / \RadMod_{n,p}$ are all mutually distinct as $\tl{n}$-modules (up to isomorphism of course).  The fact that $\NatMod_{n,p}$ and $\NatMod_{n,p'}$ involve different numbers of links, for $p \neq p'$, can be misleading.  For example, $\NatMod_{2,1}$ and $\NatMod_{2,0}$ are isomorphic one-dimensional $\tl{2}$-modules, when $\beta = 0$ ($\wun$ and $u_1$ are represented by $1$ and $0$, respectively, on both).  Nevertheless, this behaviour is rather exceptional.
\begin{proposition} \label{prop:HomLinks}
Let $\mathcal{N}$ and $\mathcal{N}'$ be submodules of $\NatMod_{n,p}$ and $\NatMod_{n,p'}$, respectively, where $p > p'$ and $\inner{\cdot}{\cdot}_{n,p} \neq 0$.  Then, the only module homomorphism $\theta \colon \NatMod_{n,p} / \mathcal{N} \rightarrow \NatMod_{n,p'} / \mathcal{N}'$ is the zero homomorphism.
\end{proposition}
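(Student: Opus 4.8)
\section*{Proof proposal}

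The plan is to produce a single element of $\tl{n}$ that acts as the identity on a generator of $\NatMod_{n,p}/\mathcal{N}$ while annihilating $\NatMod_{n,p'}/\mathcal{N}'$; any homomorphism between the two must then send the generator to zero, hence be zero. The element in question is $\wall{z}{y}$, manufactured from a well-chosen pair of $\brac{n,p}$-link states via \lemref{lem:Magic}.

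The geometric input doing the real work is the following. If $x$ and $y$ are $\brac{n,p}$-link states, then the $n$-diagram $\wall{x}{y}$ has exactly $n-2p$ through-lines. Now I would upgrade the familiar observation (made just after \eqnref{eqn:DefNatMod}) that the $\tl{n}$-action never increases the number of defects to the sharper statement that the number of defects of a product $d\cdot v$, with $d$ an $n$-diagram and $v$ a link state, is at most the number of through-lines of $d$: each defect of $d\cdot v$ is an endpoint of a path that, starting from a point on the far side of $d$, must first cross $d$ along a through-line, and distinct defects of $d\cdot v$ give vertex-disjoint paths, hence use distinct through-lines of $d$. Applied to $d=\wall{x}{y}$, this shows that $\wall{x}{y}$ sends any link state into the span of link states having at least $p$ links, i.e.\ into $\LinkMod_{n,p}\subseteq\LinkMod_{n,p'+1}$ since $p>p'$. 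Consequently $\wall{x}{y}$ annihilates $\NatMod_{n,p'}=\LinkMod_{n,p'}/\LinkMod_{n,p'+1}$, and by bilinearity the same holds for $\wall{x}{y}$ with arbitrary $x,y\in\NatMod_{n,p}$; it then annihilates the quotient $\NatMod_{n,p'}/\mathcal{N}'$ as well.

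With this in hand the proof closes quickly. Since $\inner{\cdot}{\cdot}_{n,p}\neq0$, pick $y,z\in\NatMod_{n,p}$ with $\inner{y}{z}=1$; by \propref{prop:S/R}, $z$ generates $\NatMod_{n,p}$, so its coset $\bar z$ generates $\NatMod_{n,p}/\mathcal{N}$. \lemref{lem:Magic} gives $\wall{z}{y}\,z=\inner{y}{z}\,z=z$, whence $\wall{z}{y}\,\bar z=\bar z$. For any homomorphism $\theta\colon\NatMod_{n,p}/\mathcal{N}\to\NatMod_{n,p'}/\mathcal{N}'$ we then obtain $\theta(\bar z)=\theta(\wall{z}{y}\,\bar z)=\wall{z}{y}\,\theta(\bar z)=0$ by the previous paragraph, and since $\bar z$ generates the domain, $\theta=0$. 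The step I expect to be the main obstacle is precisely the combinatorial claim in the middle paragraph: one must argue carefully that the number of defects produced by acting with a diagram is bounded by that diagram's through-line count (not merely that it does not increase), since it is this bound that forces the image of $\wall{x}{y}$ into a deep enough term of the filtration \eqref{eqn:Filtration}.
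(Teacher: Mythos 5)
Your proof is correct and follows essentially the same approach as the paper's: both produce the operator $\wall{\cdot}{y}$ via Lemma~\ref{lem:Magic}, observe that it acts as the identity on (a generator of) $\NatMod_{n,p}/\mathcal{N}$ but annihilates $\NatMod_{n,p'}/\mathcal{N}'$ because it maps everything into $\LinkMod_{n,p}$, and conclude by applying $\theta$. The paper phrases the annihilation step more tersely (``left-multiplying by $\wall{x}{y}$ leads to at least $p$ links'') and runs the argument for every $x$ rather than a single generator $\bar z$, but your through-line counting is precisely the combinatorial content behind that terse statement, and invoking cyclicity of the source in place of ranging over all $x$ is an immaterial repackaging.
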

\begin{proof}
Let $\gamma$ be the canonical homomorphism from $\NatMod_{n,p}$ onto $\NatMod_{n,p} / \mathcal{N}$.  Choose $y,z \in \NatMod_{n,p}$ such that $\inner{y}{z} = 1$.  Then, for all $x \in \NatMod_{n,p}$,
\begin{equation}
\wall{x}{y} \func{\theta}{\func{\gamma}{z}} = \func{\theta}{\func{\gamma}{\wall{x}{y} z}} = \func{\theta}{\func{\gamma}{x}}.
\end{equation}
But when $p > p'$, $\wall{x}{y} \func{\theta}{\func{\gamma}{z}} = 0$, since $\func{\theta}{\func{\gamma}{z}}$ has $p'$ links but left-multiplying by $\wall{x}{y}$ leads to at least $p$ links.  Thus, $\func{\theta}{\func{\gamma}{x}} = 0$, so $\theta = 0$ as $\gamma$ is surjective.
\end{proof}
\noindent Putting $\mathcal{N} = \mathcal{N}' = \set{0}$ or $\mathcal{N} = \RadMod_{n,p}$ and $\mathcal{N}' = \RadMod_{n,p'}$ in \propref{prop:HomLinks}, and using the basic fact that an isomorphism has an inverse, we obtain:
\begin{corollary} \label{cor:VLDistinct}
When $\inner{\cdot}{\cdot}_{n,p}$ and $\inner{\cdot}{\cdot}_{n,p'}$ are non-zero,
\begin{equation*}
\NatMod_{n,p} \cong \NatMod_{n,p'} \quad \Rightarrow \quad p = p' \qquad \text{and} \qquad \IrrMod_{n,p} \cong \IrrMod_{n,p'} \quad \Rightarrow \quad p = p'.
\end{equation*}
\end{corollary}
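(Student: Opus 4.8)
The plan is to deduce both implications directly from \propref{prop:HomLinks}, using only the elementary fact that an isomorphism between nonzero modules cannot be the zero homomorphism. For the first implication, suppose $\theta \colon \NatMod_{n,p} \to \NatMod_{n,p'}$ is an isomorphism and, toward a contradiction, that $p \neq p'$. After possibly replacing $\theta$ by $\theta^{-1}$ and interchanging the labels $p$ and $p'$ (which is exactly where ``an isomorphism has an inverse'' enters), we may assume $p > p'$; note the hypothesis that \emph{both} forms $\inner{\cdot}{\cdot}_{n,p}$ and $\inner{\cdot}{\cdot}_{n,p'}$ are nonzero is what makes this interchange harmless, since \propref{prop:HomLinks} only needs the form on the \emph{larger} index to be nonzero. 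Applying \propref{prop:HomLinks} with $\mathcal{N} = \mathcal{N}' = \set{0}$ then forces $\theta = 0$. But $\inner{\cdot}{\cdot}_{n,p} \neq 0$ implies $\NatMod_{n,p} \neq \set{0}$, so $\theta$ cannot be zero; this contradiction gives $p = p'$.

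For the second implication I run the identical argument with $\mathcal{N} = \RadMod_{n,p}$ and $\mathcal{N}' = \RadMod_{n,p'}$, which are submodules by the remark following \eqref{eq:DefRadMod}, so that $\NatMod_{n,p}/\mathcal{N} = \IrrMod_{n,p}$ and $\NatMod_{n,p'}/\mathcal{N}' = \IrrMod_{n,p'}$. An isomorphism $\IrrMod_{n,p} \cong \IrrMod_{n,p'}$ with $p \neq p'$ reduces, as before, to the case $p > p'$, whereupon \propref{prop:HomLinks} says the isomorphism is the zero map. Since $\inner{\cdot}{\cdot}_{n,p} \neq 0$, \propref{prop:S/R} tells us $\IrrMod_{n,p}$ is irreducible and in particular nonzero, so again we have a contradiction and conclude $p = p'$.

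There is essentially no hard step: all the substance is in \propref{prop:HomLinks}. The only point deserving a moment's care is that \propref{prop:HomLinks} is not symmetric in $p$ and $p'$ — it constrains maps \emph{out of} the higher-index standard module — so one must feed it the correct module as source, which is exactly why invertibility of an isomorphism is invoked. I would also spell out explicitly that the two non-vanishing hypotheses are used purely to guarantee that the relevant modules (resp.\ their irreducible quotients) are nonzero, so that the phrase ``isomorphism'' genuinely rules out ``zero homomorphism''.
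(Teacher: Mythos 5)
Your proof is correct and matches the paper's own (very terse) argument exactly: apply \propref{prop:HomLinks} with $\mathcal{N}=\mathcal{N}'=\set{0}$ for the standard modules and $\mathcal{N}=\RadMod_{n,p}$, $\mathcal{N}'=\RadMod_{n,p'}$ for the irreducibles, using invertibility of an isomorphism to reduce to $p>p'$. You have merely spelled out the details that the paper leaves implicit.
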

\noindent As we have seen, $\inner{\cdot}{\cdot}_{2p,p}$ vanishes identically when $\beta = 0$, so it follows that $\NatMod_{2p,p}$ ($\IrrMod_{2p,p}$) could coincide with one of the other $\NatMod_{2p,p'}$ ($\IrrMod_{2p,p'}$) in this case.  This is what allows the ($\beta = 0$) isomorphism $\NatMod_{2,1} \cong \NatMod_{2,0}$ which we remarked upon above.

We record a related result for future reference:
\begin{proposition} \label{prop:EndLinks}
Every module homomorphism $\theta \colon \NatMod_{n,p} \rightarrow \NatMod_{n,p}$ is a multiple of the identity.
\end{proposition}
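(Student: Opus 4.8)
The plan is to leverage the ``Magic'' identity $\wall{x}{y}z=\inner{y}{z}\,x$ of \lemref{lem:Magic} to force $\theta$ to act by a single scalar on every link state. Suppose first that $\inner{\cdot}{\cdot}_{n,p}$ is not identically zero. As in the proof of \propref{prop:S/R}, fix $y,z\in\NatMod_{n,p}$ with $\inner{y}{z}=1$; then \lemref{lem:Magic} gives $\wall{x}{y}z=x$ for every $x\in\NatMod_{n,p}$. Since $\theta$ is a homomorphism of left $\tl{n}$-modules and $\wall{x}{y}\in\tl{n}$,
\[
\theta(x)=\theta\bigl(\wall{x}{y}z\bigr)=\wall{x}{y}\,\theta(z)=\inner{y}{\theta(z)}\,x,
\]
the last step being \lemref{lem:Magic} applied with $\theta(z)\in\NatMod_{n,p}$ as the third argument. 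The scalar $\lambda=\inner{y}{\theta(z)}$ is independent of $x$, so $\theta=\lambda\,\id$.

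The one remaining case is $\beta=0$ with $n=2p$, where $\inner{\cdot}{\cdot}_{2p,p}$ is identically zero but (as noted after \propref{prop:S/R'}) the renormalised form $\inner{\cdot}{\cdot}'$ is not. Here I would repeat the argument verbatim with \lemref{lem:Magic2} in place of \lemref{lem:Magic}: fix $\brac{2p,p}$-link states $y',z$ with $\inner{y'}{z}'=1$ and realise $y'$ as the closure of a $\brac{2p,p-1}$-link state $y$; then for any $\brac{2p,p}$-link state $x'$, realise $x'$ as the closure of a $\brac{2p,p-1}$-link state $x$, so that \lemref{lem:Magic2} yields $\wall{x}{y}z=x'$ and
\[
\theta(x')=\wall{x}{y}\,\theta(z)=\inner{y'}{\theta(z)}'\,x'.
\]
Since $x'$ runs over the basis of $\brac{2p,p}$-link states and $\lambda=\inner{y'}{\theta(z)}'$ does not depend on $x'$, again $\theta=\lambda\,\id$.

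The argument is short, and the only point that needs care lies in this degenerate case: one must check that \lemref{lem:Magic2} actually applies to a spanning set, i.e. that \emph{every} $\brac{2p,p}$-link state arises as the closure of a $\brac{2p,p-1}$-link state. I would justify this by opening an outermost link of the given $\brac{2p,p}$-link state: the result is a bona fide $\brac{2p,p-1}$-link state (no remaining arc can enclose either of the two new defects, precisely because the opened link was outermost), and joining its two defects recovers the original. I do not anticipate any deeper obstacle; in essence the proposition says that the cyclic generator $z$ of \propref{prop:S/R} (or \propref{prop:S/R'}) can be chosen so that $\wall{x}{y}z$ reproduces $x$ for \emph{all} $x$, and then replacing $z$ by $\theta(z)$ finishes the job.
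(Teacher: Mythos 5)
Your proposal is correct and follows essentially the same route as the paper: in the non-degenerate case apply \lemref{lem:Magic} with a pair $\inner{y}{z}=1$ to express $\theta(x)=\inner{y}{\theta(z)}x$, and in the exceptional case $\beta=0$, $n=2p$ substitute \lemref{lem:Magic2} and the renormalised form, realising each $\brac{2p,p}$-link state as the closure of a $\brac{2p,p-1}$-link state. The only addition you make is to flag explicitly that one must open an \emph{outermost} link so that the two new defects are not enclosed by a remaining arc; the paper makes the same choice (``cutting an outermost link of $x'$ $\ldots$ for definiteness'') but without dwelling on why it matters, so this is a small clarification rather than a different argument.
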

\begin{proof}
When $\inner{\cdot}{\cdot}_{n,p} \neq 0$, this follows readily by choosing $y,z \in \NatMod_{n,p}$ such that $\inner{y}{z} = 1$.  Then, for all $x \in \NatMod_{n,p}$,
\begin{equation}
\func{\theta}{x} = \func{\theta}{\wall{x}{y} z} = \wall{x}{y} \func{\theta}{z} = \inner{y}{\func{\theta}{z}} x.
\end{equation}
For the remaining case, when $\inner{\cdot}{\cdot}_{n,p} = 0$, we must have $n=2p$ and $\beta = 0$.  However, the renormalised form $\inner{\cdot}{\cdot}'$ of \eqnref{eqn:RenormalisedForm} does not vanish identically, so there exist $y',z \in \NatMod_{2p,p}$ such that $\inner{y'}{z}' = 1$.  For every $x' \in \NatMod_{2p,p}$, let $x \in \NatMod_{2p,p-1}$ be obtained by cutting an outermost link of $x'$ (say the one closing at $n = 2p$ for definiteness).  Form $y \in \NatMod_{2p,p-1}$ from $y'$ in the same fashion.  Then, \lemref{lem:Magic2} gives
\begin{equation}
\func{\theta}{x'} = \func{\theta}{\wall{x}{y} z} = \wall{x}{y} \func{\theta}{z} = \inner{y'}{\func{\theta}{z}}' x',
\end{equation}
completing the proof.
\end{proof}

%
%

\section{Gram Matrices} \label{sec:Gram}

We can now turn to a study of the irreducibility of the $\NatMod_{n,p}$, based on the non-degeneracy of the invariant bilinear forms $\inner{\cdot}{\cdot}_{n,p}$.  Recall that each $\NatMod_{n,p}$ has a canonical basis given by the $\brac{n,p}$-link states.  With respect to this basis, the corresponding form is represented by a symmetric $d_{n,p} \times d_{n,p}$ matrix which we shall denote by $G_{n,p}$.  Such matrices are called \emph{Gram matrices}.  For example,
\begin{align}
&G_{4,0} = 
\bigl( 1 \bigr), & &G_{4,1} = 
\begin{pmatrix}
\beta & 1 & 0 \\
1 & \beta & 1 \\
0 & 1 & \beta
\end{pmatrix}
& &\text{and} & &G_{4,2} = 
\begin{pmatrix}
\beta^2 & \beta \\
\beta & \beta^2
\end{pmatrix}
,
\intertext{when we adopt the (respective) ordered bases}
&\set{\ 
\parbox{5mm}{\begin{center}
\includegraphics[height=16mm]{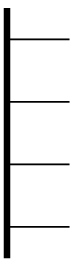}
\end{center}}
\ }, & &\set{\ 
\parbox{5mm}{\begin{center}
\includegraphics[height=16mm]{tl-link-41a}
\end{center}}
\ , \ 
\parbox{5mm}{\begin{center}
\includegraphics[height=16mm]{tl-link-41b}
\end{center}}
\ , \ 
\parbox{5mm}{\begin{center}
\includegraphics[height=16mm]{tl-link-41c}
\end{center}}
\ } & &\text{and} & &\set{\ 
\parbox{3mm}{\begin{center}
\includegraphics[height=16mm]{tl-link-42b}
\end{center}}
\ , \ 
\parbox{7mm}{\begin{center}
\includegraphics[height=16mm]{tl-link-42a}
\end{center}}
\ }. \notag
\end{align}
As the radical $\RadMod_{n,p}$ is represented by the kernel of the Gram matrix $G_{n,p}$, we see that the irreducibility of the $\NatMod_{n,p}$ is equivalent to $\func{\det}{G_{n,p}} \neq 0$.\footnote{We shall defer addressing the exceptional case of $\NatMod_{2p,p}$ with $\beta = 0$ until the end of the section.}  Our aim in this section is to compute $\func{\det}{G_{n,p}}$ explicitly.  The strategy is to use restriction to derive a recursion relation for $\func{\det}{G_{n,p}}$ when the $\RadMod_{n,p}$ all vanish.  This turns out to occur for generic $\beta \in \CC$, excluding only a countable set.  Continuity therefore takes care of the outstanding cases.

\begin{proposition} \label{prop:Restrict}
Consider the inclusion of $\tl{n-1}$ in $\tl{n}$ (for fixed $\beta$) given by sending the unit to the unit and the $u_i$ with $i < n-1$ to their counterparts in $\tl{n}$.  Denote the corresponding restriction of $\NatMod_{n,p}$ to a $\tl{n-1}$-module by $\Res{\NatMod_{n,p}}$.  Then, we have an exact sequence of $\tl{n-1}$-modules,
\begin{equation} \label{SES:Res}
\dses{\NatMod_{n-1,p}}{}{\Res{\NatMod_{n,p}}}{}{\NatMod_{n-1,p-1}},
\end{equation}
meaning that $\NatMod_{n-1,p}$ is a submodule of $\Res{\NatMod_{n,p}}$ and $\Res{\NatMod_{n,p}} / \NatMod_{n-1,p} \cong \NatMod_{n-1,p-1}$.
\end{proposition}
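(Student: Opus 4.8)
The plan is to construct explicit $\tl{n-1}$-module maps realising the two ends of \eqref{SES:Res} and then to close the argument with a dimension count. The basic observation that makes everything work is that $\tl{n-1}\subseteq\tl{n}$ is generated by $u_1,\dots,u_{n-2}$, whose diagrams all carry a through-line at position $n$; hence, when any $U\in\tl{n-1}$ acts on an $\brac{n,p}$-link state by concatenation, the bottom point $n$ is never touched directly. With this in mind I would first partition the basis of $\Res{\NatMod_{n,p}}$ by the fate of the point $n$: in an $\brac{n,p}$-link state $v$, either (a) the point $n$ is a defect, or (b) it is joined by a link to some point $j<n$. In case (b), the non-crossing condition forces every defect of $v$ to lie strictly above $j$ and the points $j{+}1,\dots,n{-}1$ to be paired among themselves. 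Deleting $n$ from a type-(a) state yields an $\brac{n-1,p}$-link state; deleting $n$ from a type-(b) state and recolouring $j$ as a defect yields an $\brac{n-1,p-1}$-link state. Both assignments are bijections (a type-(b) state is recovered from its image by re-attaching $n$ to the lowest defect), so the two classes have sizes $d_{n-1,p}$ and $d_{n-1,p-1}$, consistently with~\eqref{eqn:RecD}.

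Next I would define $\iota\colon\NatMod_{n-1,p}\to\Res{\NatMod_{n,p}}$ on basis elements by appending a defect at position $n$, and $\pi\colon\Res{\NatMod_{n,p}}\to\NatMod_{n-1,p-1}$ by killing every type-(a) state and sending a type-(b) state to the $\brac{n-1,p-1}$-link state just described. From the previous paragraph $\iota$ is injective, $\pi$ is surjective, and $\im\iota\subseteq\ker\pi$ since the image of $\iota$ consists of type-(a) states. That $\iota$ is a $\tl{n-1}$-homomorphism is essentially automatic: for $i\le n-2$ the diagrammatic computation of $u_i\cdot\iota(w)$ agrees with that of $u_i\cdot w$ at all positions $1,\dots,n-1$, the appended defect at $n$ being a spectator, and the two sides produce states with equal link counts, so both vanish, or neither does, in passing to $\NatMod_{n,p}$ respectively $\NatMod_{n-1,p}$.

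The substantive step is to verify that $\pi$ is a $\tl{n-1}$-homomorphism, that is, $\pi(u_i v)=u_i\,\pi(v)$ for all $i\le n-2$. For a type-(a) state $v$ this is clear, since $u_i v$ stays of type (a) (or already vanishes) and $\pi$ annihilates it. For a type-(b) state $v$ with $n$ linked to $j$, one traces the concatenation $u_i v$ according to the location of $\{i,i+1\}$ relative to $j$ and to the paired block $j{+}1,\dots,n{-}1$. In every configuration in which $u_i$ acts ``above'' the link $n$--$j$ or entirely ``inside'' the block $j{+}1,\dots,n{-}1$, the result is again of type (b) and the rearrangement of links is literally the same calculation whether one performs it in $\NatMod_{n,p}$ (with $n$ a spectator) or, after applying $\pi$, in $\NatMod_{n-1,p-1}$; so $\pi(u_iv)=u_i\,\pi(v)$ there. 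The one delicate configuration is when the cap of $u_i$, routed through the link $n$--$j$, reaches a defect (or a point whose partner lies above $j$) in such a way that the bottom point of $u_iv$ becomes a defect: then $u_iv$ is of type (a) and $\pi(u_iv)=0$. Here one must check that $u_i\,\pi(v)$ likewise vanishes in $\NatMod_{n-1,p-1}$, and indeed it does --- recolouring $j$ as a defect when forming $\pi(v)$ produces exactly the pair of adjacent defects that $u_i$ fuses into a new link, raising the link count above $p-1$. Matching ``the link at $n$ is destroyed'' with ``the link count of $\pi(v)$ increases'' is, I expect, the main obstacle: it is not conceptually deep, but it does require patient diagrammatic case-checking.

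Finally, having $\iota$ and $\pi$ as $\tl{n-1}$-module maps with $\iota$ injective, $\pi$ surjective and $\im\iota\subseteq\ker\pi$, the dimension count $\dim\ker\pi = d_{n,p}-d_{n-1,p-1}=d_{n-1,p}=\dim\im\iota$, using~\eqref{eqn:RecD}, forces $\im\iota=\ker\pi$; hence \eqref{SES:Res} is exact, and this is precisely the assertion that $\NatMod_{n-1,p}$ is a submodule of $\Res{\NatMod_{n,p}}$ with quotient $\NatMod_{n-1,p-1}$. (In the degenerate case $p=0$ the point $n$ is always a defect, $\NatMod_{n-1,p-1}$ is the zero module, and the statement collapses to the obvious $\Res{\NatMod_{n,0}}\cong\NatMod_{n-1,0}$.)
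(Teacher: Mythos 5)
Your argument is correct and follows essentially the same route as the paper's proof: the inclusion by appending a defect at $n$, the identification of the complementary basis with $\brac{n-1,p-1}$-link states by cutting the link that closes at $n$, and the same exceptional case analysis (namely $u_i$ with $i=j-1$ and a defect at $j-1$, where $n$ of $u_iv$ becomes a defect while $u_i\,\pi(v)$ picks up an extra link). The only cosmetic difference is that you define the surjection $\pi$ on all of $\Res{\NatMod_{n,p}}$ and close exactness via the dimension count from \eqref{eqn:RecD}, whereas the paper defines the map directly on the quotient $\Res{\NatMod_{n,p}}/\NatMod_{n-1,p}$ and verifies it is an isomorphism there.
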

\begin{proof}
The inclusion $\NatMod_{n-1,p} \hookrightarrow \Res{\NatMod_{n,p}}$ is defined to extend an $\brac{n-1,p}$-link state to an $\brac{n,p}$-link state by adding a defect at position $n$.  This is clearly an injective homomorphism of $\tl{n-1}$-modules as the inclusion of $\tl{n-1}$ in $\tl{n}$ will preserve the defect at $n$.  The quotient $\Res{\NatMod_{n,p}} / \NatMod_{n-1,p}$ is then a $\tl{n-1}$-module with a basis of cosets which are represented by the $\brac{n,p}$-link states in which $n$ is part of a link.

There is an obvious vector space isomorphism $\Psi$ from $\Res{\NatMod_{n,p}} / \NatMod_{n-1,p}$ to $\NatMod_{n-1,p-1}$ obtained by cutting the link which closes at $n$ and then removing the newly-created defect at $n$.  We wish to show that $\Psi$ is in fact an isomorphism of $\tl{n-1}$-modules, thereby completing the proof.  So for a given basis element $z$ of $\Res{\NatMod_{n,p}} / \NatMod_{n-1,p}$, let $m$ denote the opening point of the link which closes at $n$.  Applying any $u_i$ ($i < n-1$) to $z$ usually then gives another such basis element in which $n$ is linked to some $m'$.  It is easy to see that $\func{\Psi}{u_i z} = u_i \func{\Psi}{z}$ in this case.  The only exception occurs if $i = m-1$ and $m-1$ is a defect in $z$.  Then, applying $u_{m-1}$ to $z$ leads to $n$ becoming a defect:
\begin{equation}
\parbox{19mm}{\begin{center}
\includegraphics[height=26mm]{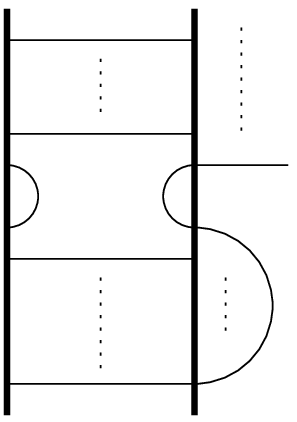}
\end{center}}
\ = \ 
\parbox{5mm}{\begin{center}
\includegraphics[height=26mm]{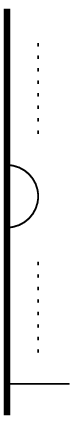}
\end{center}}
\ .
\end{equation}
In this case, $\func{\Psi}{u_{m-1} z} = \func{\Psi}{0} = 0$.  But, $\func{\Psi}{z}$ will have defects at both $m-1$ and $m$, so applying $u_{m-1}$ will close the defect leading to an extra link.  Thus, $u_{m-1} \func{\Psi}{z} = 0$ too and $\Psi$ is a homomorphism.
\end{proof}
\begin{corollary} \label{cor:ResSplit}
When $q^{2 \brac{n-2p+1}} \neq 1$ (recall that we write $\beta = q + q^{-1}$), the exact sequence \eqref{SES:Res} splits, so we have
\begin{equation}
\Res{\NatMod_{n,p}} \cong \NatMod_{n-1,p} \oplus \NatMod_{n-1,p-1} \qquad \text{(as $\tl{n-1}$-modules).}
\end{equation}
\end{corollary}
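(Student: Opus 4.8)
The goal is to produce, when $q^{2\brac{n-2p+1}} \neq 1$, a $\tl{n-1}$-submodule $\mathcal{C}$ of $\Res{\NatMod_{n,p}}$ complementary to the submodule $\NatMod_{n-1,p}$ of \eqref{SES:Res}; equivalently, a $\tl{n-1}$-equivariant section of the quotient map onto $\NatMod_{n-1,p-1}$. Merely block-diagonalising the Gram matrix $G_{n,p}$ along the submodule $\NatMod_{n-1,p}$ would produce such a complement, but only when the form on $\NatMod_{n-1,p}$ is itself non-degenerate, and this fails at many non-critical values of $q$; so a genuinely module-theoretic argument is needed.

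The cleanest route uses the central element $F_{n-1} \in \tl{n-1}$ of \appref{app:Casimir}, which by \propref{lem:FEigs} acts on each standard module $\NatMod_{n-1,p'}$ as the scalar $f_{n-1,p'}$. Since $F_{n-1}$ is central it acts $\tl{n-1}$-linearly on $\Res{\NatMod_{n,p}}$, and reading off its action on the sub and quotient of \eqref{SES:Res} shows that $\brac{F_{n-1} - f_{n-1,p}}\brac{F_{n-1} - f_{n-1,p-1}}$ annihilates $\Res{\NatMod_{n,p}}$: the right factor carries the whole module into $\NatMod_{n-1,p}$, on which the left factor then vanishes. A short computation with the formula of \propref{lem:FEigs} gives $f_{n-1,p} = f_{n-1,p-1}$ if and only if $q^{2\brac{n-2p+1}} = 1$. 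Hence, under our hypothesis, the minimal polynomial of $F_{n-1}$ on $\Res{\NatMod_{n,p}}$ has two distinct roots, so $F_{n-1}$ acts diagonalisably; its $f_{n-1,p-1}$-eigenspace $\mathcal{C}$ is a $\tl{n-1}$-submodule, meets the submodule $\NatMod_{n-1,p}$ trivially (the latter being the $f_{n-1,p}$-eigenspace), and is carried isomorphically onto $\NatMod_{n-1,p-1}$ by the quotient map. This gives the splitting, and I expect the eigenvalue identity to be the only step requiring genuine (if routine) work.

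A more hands-on alternative avoids \appref{app:Casimir}. One would build the section $\NatMod_{n-1,p-1} \to \Res{\NatMod_{n,p}}$ directly: on an $\brac{n-1,p-1}$-link state $w$ with its $n-2p+1$ defects, the obvious lifts are the $\brac{n,p}$-link states obtained by joining position $n$ to one of the defects of $w$, and one looks for a $q$-weighted combination of these (plus correction terms that also redistribute the existing links of $w$) that respects every $u_i$-relation. Forcing $\tl{n-1}$-equivariance, via the diagrammatic case analysis already used in the proof of \propref{prop:Restrict}, reduces to a linear recursion for the weights which is consistent exactly when $q^{n-2p+1} \neq q^{-(n-2p+1)}$. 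Here the combinatorial bookkeeping of how each $u_i$ moves defects and links through the added arc at $n$ is the real obstacle; in the central-element argument that obstacle is replaced by the single identity for $f_{n-1,p} - f_{n-1,p-1}$. In either approach one sees that when $q^{2\brac{n-2p+1}} = 1$ the construction breaks down and \eqref{SES:Res} need not split.
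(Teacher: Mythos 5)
Your main argument is correct and is essentially the paper's proof: both rest on the central element $F_{n-1}$, its scalar action $f_{n-1,p'}$ on standard modules (\propref{lem:FEigs}), and the fact that $f_{n-1,p} = f_{n-1,p-1}$ iff $q^{2(n-2p+1)} = 1$, deducing the splitting from the resulting eigenspace decomposition. Your formulation via the factored annihilating polynomial $\brac{F_{n-1} - f_{n-1,p}\wun}\brac{F_{n-1} - f_{n-1,p-1}\wun} = 0$ is a slightly more direct way of concluding diagonalisability than the paper's appeal to indecomposability of the sub and quotient, but the two arguments are variants of the same idea.
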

\begin{proof}
This is easy to see by using the eigenvalues of the central element $F_{n-1}$ on $\tl{n-1}$ (see \appref{app:Casimir}).  Since $F_{n-1}$ is central, left-multiplication by $F_{n-1} - \lambda \wun$ is a homomorphism from any module $\mathcal{N}$ to itself, for any $\lambda \in \CC$.  Thus, the (generalised) eigenspaces of $F_{n-1}$ on $\mathcal{N}$ are submodules.  Applying this to $\mathcal{N} = \Res{\NatMod_{n,p}}$ and recalling that $\NatMod_{n-1,p}$ and $\NatMod_{n-1,p-1}$ are indecomposable, we see that $F_{n-1}$ has at most two eigenvalues: $f_{n-1,p}$ with eigenspace $\NatMod_{n-1,p}$ and $f_{n-1,p-1}$ with eigenspace (isomorphic to) $\NatMod_{n-1,p-1}$.  If these eigenvalues are distinct, both eigenspaces are submodules and \eqref{SES:Res} splits.  It remains to determine when $f_{n-1,p} = f_{n-1,p-1}$.  But, \propref{lem:FEigs} gives
\begin{equation}
f_{n-1,p-1}-f_{n-1,p} = \brac{q-q^{-1}} \brac{q^{n-2p+1} - q^{-\brac{n-2p+1}}}.
\end{equation}
If either factor is zero, then $q^{2 \brac{n-2p+1}}$ will be $1$, contradicting the hypothesis.
\end{proof}

We introduce a useful definition:  Say that the pair $\brac{n,p}$ is \emph{critical} for a given $q \in \CC^{\times}$ if $q^{2 \brac{n-2p+1}} = 1$.  Similarly, any quantity indexed by $n$ and $p$ will be said to be critical if $\brac{n,p}$ is critical.  A rephrasing of \corref{cor:ResSplit} is therefore that a non-critical restricted module $\Res{\NatMod_{n,p}}$ always splits as $\NatMod_{n-1,p} \oplus \NatMod_{n-1,p-1}$.

Suppose now that $\Res{\NatMod_{n,p}}$ is non-critical so that there exists a splitting $\psi \colon \Res{\NatMod_{n,p}} \longrightarrow \NatMod_{n-1,p} \oplus \NatMod_{n-1,p-1}$.  Then, if we order the canonical basis of link states of $\NatMod_{n,p}$ so that those of $\NatMod_{n-1,p}$ (which have $n$ as a defect) come first, $\psi$ may be chosen so that it is represented by a matrix of the form
\begin{equation} \label{eqn:UDec}
U_{n,p} = 
\begin{pmatrix}
\id & V_{n,p} \\
0 & \id
\end{pmatrix}
.
\end{equation}
Here, the submatrix $V_{n,p}$ encodes the non-trivial part of the splitting (the embedding of $\NatMod_{n-1,p-1}$ in $\Res{\NatMod_{n,p}}$).

At this point, we make an inductive assumption:  We suppose that when $q$ is not a root of unity (so \eqref{SES:Res} always splits), the $\RadMod_{n',p'}$ \emph{vanish} for all $n' < n$ and all $p'$ --- this is certainly true for $n' \leqslant 2$.  With this assumption, the $\NatMod_{n',p'}$ are all irreducible and mutually distinct by \propref{prop:S/R} and \corref{cor:VLDistinct}.
\begin{lemma} \label{lem:Schur}
If the splitting $\psi$ exists, we may define a bilinear form on $\NatMod_{n-1,p} \oplus \NatMod_{n-1,p-1}$ by
\begin{equation}
\left\langle \mspace{-5mu} \inner{x+x'}{y+y'} \mspace{-5mu} \right\rangle = \inner{\func{\psi^{-1}}{x+x'}}{\func{\psi^{-1}}{y+y'}}_{n,p},\qquad \textrm{for\ }x,y\in\NatMod_{n-1,p}\textrm{\ and\ } x',y'\in\NatMod_{n-1,p-1}.
\end{equation}
This form is symmetric and invariant, hence
\begin{equation}
\left\langle \mspace{-5mu} \inner{x + x'}{y + y'} \mspace{-5mu} \right\rangle = \inner{x}{y}_{n-1,p} + \alpha_{n,p} \inner{x'}{y'}_{n-1,p-1},
\end{equation}
for some $\alpha_{n,p} \in \CC$.
\end{lemma}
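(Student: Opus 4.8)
The plan is to dispose of symmetry and invariance first (both purely formal), and then to lean on the inductive assumption that $\NatMod_{n-1,p}$ and $\NatMod_{n-1,p-1}$ are irreducible and mutually non-isomorphic in order to force the two off-diagonal blocks of the pulled-back form (call it $\langle\!\langle\cdot,\cdot\rangle\!\rangle$) to vanish and to identify its diagonal blocks.

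Symmetry of $\langle\!\langle\cdot,\cdot\rangle\!\rangle$ is immediate from that of $\inner{\cdot}{\cdot}_{n,p}$ since $\psi^{-1}$ is linear. For invariance I would first note that the adjoint antiautomorphism $U\mapsto U^{\dag}$ fixes each generator $u_i$ and $\wun$, hence restricts to an antiautomorphism of $\tl{n-1}$; since $\psi$ is an isomorphism of $\tl{n-1}$-modules, for any $U\in\tl{n-1}$ one has $\langle\!\langle a,Ub\rangle\!\rangle=\inner{\psi^{-1}a}{U\psi^{-1}b}_{n,p}=\inner{U^{\dag}\psi^{-1}a}{\psi^{-1}b}_{n,p}=\langle\!\langle U^{\dag}a,b\rangle\!\rangle$, so $\langle\!\langle\cdot,\cdot\rangle\!\rangle$ is invariant with respect to the restricted adjoint. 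Next, for the diagonal block on $\NatMod_{n-1,p}$: the particular shape \eqref{eqn:UDec} of $U_{n,p}$ says precisely that $\psi$ restricts, on the submodule $\NatMod_{n-1,p}\subseteq\Res{\NatMod_{n,p}}$ of \propref{prop:Restrict}, to the canonical identification with the first summand; equivalently, $\psi^{-1}$ carries the first summand back onto that submodule, whose elements are the $\brac{n,p}$-link states carrying a defect at position $n$. Since reflecting such a link state and gluing it to another of the same kind merely joins the two terminal defects without producing a loop, $\inner{\cdot}{\cdot}_{n,p}$ restricted to this submodule is exactly $\inner{\cdot}{\cdot}_{n-1,p}$, so $\langle\!\langle x,y\rangle\!\rangle=\inner{x}{y}_{n-1,p}$ for $x,y\in\NatMod_{n-1,p}$.

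It remains to handle the cross terms and the block on $\NatMod_{n-1,p-1}$, and here the inductive assumption is essential. Since $q$ is not a root of unity we have $\beta=q+q^{-1}\neq 0$, so each form $\inner{\cdot}{\cdot}_{n-1,p'}$ is non-zero and, by the inductive hypothesis $\RadMod_{n-1,p'}=\set{0}$, non-degenerate; moreover $\NatMod_{n-1,p}$ and $\NatMod_{n-1,p-1}$ are irreducible and non-isomorphic by \propref{prop:S/R} and \corref{cor:VLDistinct}. Using the non-degeneracy of $\inner{\cdot}{\cdot}_{n-1,p-1}$, define a linear map $\phi\colon\NatMod_{n-1,p}\to\NatMod_{n-1,p-1}$ by the requirement $\inner{\phi(x)}{y'}_{n-1,p-1}=\langle\!\langle x,y'\rangle\!\rangle$ for all $y'\in\NatMod_{n-1,p-1}$; invariance of the two forms (with respect to the common adjoint) makes $\phi$ a homomorphism of $\tl{n-1}$-modules, so $\phi=0$ by Schur's lemma, whence $\langle\!\langle x,y'\rangle\!\rangle=0$ and, by symmetry, $\langle\!\langle x',y\rangle\!\rangle=0$. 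In the same way, define $\eta\colon\NatMod_{n-1,p-1}\to\NatMod_{n-1,p-1}$ by $\inner{\eta(x')}{y'}_{n-1,p-1}=\langle\!\langle x',y'\rangle\!\rangle$; then $\eta$ is a module endomorphism, hence a multiple $\alpha_{n,p}\,\id$ of the identity by \propref{prop:EndLinks}, giving $\langle\!\langle x',y'\rangle\!\rangle=\alpha_{n,p}\inner{x'}{y'}_{n-1,p-1}$. Expanding $\langle\!\langle x+x',y+y'\rangle\!\rangle$ into its four bilinear pieces now yields the stated formula; the degenerate cases where one summand is absent ($n=2p$ or $p=0$) are trivial.

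The only step that is not bookkeeping is the vanishing of the cross terms, and the obstacle there is entirely representation-theoretic: it rests on $\NatMod_{n-1,p}\not\cong\NatMod_{n-1,p-1}$ together with the irreducibility of both, which is exactly why the inductive assumption (hence $q$ not a root of unity) must be in force here. A secondary point to keep straight is that the adjoint used to transport invariance to $\langle\!\langle\cdot,\cdot\rangle\!\rangle$ and the adjoints used in the two constructions of a module map out of a bilinear form are all the same restriction of the ambient $\dag$, so that the Schur and endomorphism arguments genuinely apply.
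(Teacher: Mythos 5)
Your proposal is correct and follows essentially the same route as the paper's proof: both argue that invariance plus Schur's lemma (applied to the two irreducible, mutually non-isomorphic summands under the standing inductive hypothesis) forces the invariant bilinear forms on $\NatMod_{n-1,p}\oplus\NatMod_{n-1,p-1}$ to live in a two-dimensional space spanned by the two diagonal blocks, with the cross-blocks vanishing. The paper states this abstractly as ``the space of invariant bilinear forms on a direct sum of two non-isomorphic irreducibles is two-dimensional,'' while you spell it out block by block; you also take more care than the paper does in justifying why the upper-left diagonal block has coefficient exactly $1$, by observing that the shape of $U_{n,p}$ forces $\psi^{-1}$ to act canonically on the first summand and that attaching a defect at position $n$ does not change the value of the form (a point the paper only makes explicit later, in the proof of \propref{prop:Alpha}). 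This is a helpful filling-in of detail rather than a different argument.
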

\begin{proof}
This is a well-known argument:
A bilinear form induces a map from a module $V$ to its dual by $v \mapsto \inner{v}{\cdot}$.  The invariance of the form translates into this induced map being an intertwiner.  When $V$ is irreducible, so is its dual, so Schur's lemma tells us that the induced maps form a one-dimensional vector space, hence so too do the invariant bilinear forms.  In the application at hand, $V$ is the direct sum of two non-isomorphic irreducible modules, so there is a two-dimensional space of bilinear forms.  Comparing the form on $\NatMod_{n-1,p}$ with that on the direct sum fixes one of the latter's degrees of freedom to be unity.
\end{proof}
\noindent In matrix form, this becomes
\begin{gather}
G_{n-1,p} \oplus \alpha_{n,p} G_{n-1,p-1} = \brac{U_{n,p}^{-1}}^T G_{n,p} U_{n,p}^{-1} \notag \\
\Rightarrow \qquad G_{n,p} = U_{n,p}^T 
\begin{pmatrix}
G_{n-1,p} & 0 \\
0 & \alpha_{n,p} G_{n-1,p-1}
\end{pmatrix}
U_{n,p}. \label{eqn:Master}
\end{gather}
This is the recurrence relation which we shall use to compute $\func{\det}{G_{n,p}}$.  For this, it is useful to introduce the familiar notation $\qnum{m}$ for the $q$-number
\begin{equation}
\qnum{m} = \frac{q^m - q^{-m}}{q - q^{-1}},
\end{equation}
with the limiting case for $q=\pm 1$ being $[m]_q=mq^{m-1}$. Note that $\brac{n,p}$ is critical if and only if $\qnum{n-2p+1} = 0$ or $q = \pm 1$. 

\begin{proposition} \label{prop:Alpha}
When $\qnum{n-2p+1} \neq 0$ and $p>0$, $\alpha_{n,p}$ is finite and is given by
\begin{equation} \label{eqn:Alpha}
\alpha_{n,p} = \frac{\qnum{n-2p+2}}{\qnum{n-2p+1}}.
\end{equation}
\end{proposition}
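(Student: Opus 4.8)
The plan is to read off $\alpha_{n,p}$ by evaluating $\inner{\cdot}{\cdot}_{n,p}$ on a single, explicitly built element of the $\NatMod_{n-1,p-1}$-summand of the splitting. Recall from the proof of \corref{cor:ResSplit} that, $\brac{n,p}$ being non-critical, this summand is the $F_{n-1}$-eigenspace for the eigenvalue $f_{n-1,p-1}$ while $\NatMod_{n-1,p}$ is the $f_{n-1,p}$-eigenspace, and that $F_{n-1}$ acts as an honest scalar on each of these standard modules (\propref{lem:FEigs}). Consequently
\[ \Pi = \frac{F_{n-1} - f_{n-1,p}\wun}{f_{n-1,p-1} - f_{n-1,p}} \]
is the projection of $\Res{\NatMod_{n,p}}$ onto the $\NatMod_{n-1,p-1}$-summand along the $\NatMod_{n-1,p}$-summand. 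For an $\brac{n-1,p-1}$-link state $w$, write $\widehat{w}$ for the $\brac{n,p}$-link state obtained by adjoining the point $n$ and linking it to the last defect of $w$; this $\widehat{w}$ represents the coset of $\Res{\NatMod_{n,p}}/\NatMod_{n-1,p}$ carried to $w$ by the isomorphism $\Psi$ of \propref{prop:Restrict}, so $b := \Pi\widehat{w}$ is the corresponding element of the $\NatMod_{n-1,p-1}$-summand. Since the two summands are orthogonal for $\inner{\cdot}{\cdot}_{n,p}$ by \lemref{lem:Schur}, and $\widehat{w} - b = (\wun - \Pi)\widehat{w}$ lies in the $\NatMod_{n-1,p}$-summand, we obtain
\[ \alpha_{n,p}\inner{w}{w}_{n-1,p-1} = \inner{b}{b}_{n,p} = \inner{\widehat{w}}{b}_{n,p} = \inner{\widehat{w}}{\Pi\widehat{w}}_{n,p}. \]

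Expanding $\Pi$ turns this into
\[ \alpha_{n,p}\inner{w}{w}_{n-1,p-1} = \frac{\inner{\widehat{w}}{F_{n-1}\widehat{w}}_{n,p} - f_{n-1,p}\inner{\widehat{w}}{\widehat{w}}_{n,p}}{f_{n-1,p-1} - f_{n-1,p}}, \]
so everything reduces to evaluating the two inner products on the right. Taking first $\beta \neq 0$ and letting $w$ be any $\brac{n-1,p-1}$-link state, we have $\inner{w}{w}_{n-1,p-1} = \beta^{p-1}$ and $\inner{\widehat{w}}{\widehat{w}}_{n,p} = \beta^{p}$, as each is a single link state. The remaining quantity $\inner{\widehat{w}}{F_{n-1}\widehat{w}}_{n,p}$ is then computed directly from the explicit formula for $F_{n-1}$ in \appref{app:Casimir}: choosing $w$ to be a single nested family of links with its defects grouped at one end makes $F_{n-1}\widehat{w}$ a short combination of link states, of which only a handful survive the pairing against $\widehat{w}$. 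Substituting the value of $f_{n-1,p}$ from \propref{lem:FEigs} together with the difference $f_{n-1,p-1} - f_{n-1,p} = \brac{q-q^{-1}}\brac{q^{n-2p+1}-q^{-(n-2p+1)}}$ recorded in the proof of \corref{cor:ResSplit}, and simplifying with elementary $q$-number identities, the right-hand side collapses to $\qnum{n-2p+2}/\qnum{n-2p+1}$, which is finite precisely when $\qnum{n-2p+1}\neq 0$.

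I expect the crux to be this last explicit evaluation of $\inner{\widehat{w}}{F_{n-1}\widehat{w}}_{n,p}$: it needs the action of $F_{n-1}$ on link states together with a careful (if elementary) accounting of the closed loops and defect matchings that survive the bilinear pairing, with the choice of $w$ made precisely to keep this bookkeeping minimal. To finish, I would note that, wherever the splitting exists, $\alpha_{n,p}$ is a rational function of $q$ (through the projector $\Pi$ and the matrices $G_{n,p}$, $G_{n-1,p-1}$), as is the right-hand side of \eqnref{eqn:Alpha}; since these agree on the set $\beta \neq 0$, they agree at every non-critical $q$ with $\qnum{n-2p+1}\neq 0$, in particular at $\beta = 0$. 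The genuinely exceptional case $\beta = 0$ with $n = 2p$, where $\inner{\cdot}{\cdot}_{n,p}$ vanishes identically, is set aside and dealt with using the renormalised form at the end of \secref{sec:Gram}.
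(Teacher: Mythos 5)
Your proposal takes a genuinely different route from the paper's. The paper block-decomposes $G_{n,p}$ recursively, derives the relation $\alpha_{n,p} = \beta - 1/\alpha_{n-1,p}$ from a hierarchy of linear constraints among the sub-blocks, and solves it with the boundary value $\alpha_{2p,p}=\beta$. You instead observe that the $\NatMod_{n-1,p-1}$-summand is the image of the explicit idempotent $\Pi = (F_{n-1}-f_{n-1,p}\wun)/(f_{n-1,p-1}-f_{n-1,p})$, so that $\alpha_{n,p}$ can be read off from a single pairing: $\alpha_{n,p}\inner{w}{w}_{n-1,p-1} = \inner{\widehat{w}}{\Pi\widehat{w}}_{n,p}$. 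This framework is correct --- the projection is indeed self-adjoint (as $F_{n-1}$ is), the orthogonality of the two eigenspaces for $\inner{\cdot}{\cdot}_{n,p}$ is exactly \lemref{lem:Schur}, and your choice of $\widehat{w}$ (linking $n$ to the \emph{last} defect of $w$) is the unique representative mapping to $w$ under $\Psi$. Conceptually, this gives a closed-form computation of $\alpha_{n,p}$ rather than a recursion, which some readers will find more transparent. It is also worth noting that the paper's proof needs an inductive hypothesis (invertibility of $G_{n-2,p}$, $G_{n-2,p-1}$ for generic $q$) that your route avoids.

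However, the heart of your argument is the assertion that $\inner{\widehat{w}}{F_{n-1}\widehat{w}}_{n,p}$ "collapses" to the right quantity, and this is left entirely to the reader. This is the non-trivial part and cannot be waved through. (I checked $(n,p)=(4,1)$: there $F_3\widehat{w}$ has three terms, the pairings are $\inner{\widehat{w}}{\widehat{w}}=\beta$, $\inner{\widehat{w}}{u_2\widehat{w}}=1$, $\inner{\widehat{w}}{u_1u_2\widehat{w}}=0$, and the quotient does simplify to $\qnum{4}/\qnum{3}$, so the method is viable.) To complete the proof you need a general computation of $\inner{\widehat{w}}{F_{n-1}\widehat{w}}_{n,p}$, presumably by peeling off the $p-1$ simple links of $w$ exactly as Lemma~\ref{lem:FnNonDiagonal} peels off the simple links of $z_p$, reducing to the action of $F_{n-2p+1}$ on the defect block, and then tracking which resulting link states survive the pairing with $\widehat{w}$. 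Without that computation this is a proof sketch, not a proof. A secondary point: the projector $\Pi$ requires $q^{2(n-2p+1)}\neq 1$, which excludes $q=\pm 1$ even though $\qnum{n-2p+1}\neq 0$ there; your concluding rationality argument should be stated so as to cover that case too (it can, since $\alpha_{n,p}$ is determined by a rational formula in $q$ wherever the splitting exists, and at $q=\pm1$ the algebra is semisimple so it does).
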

\begin{proof}
We will prove this first under the assumption that $q$ is not a root of unity.  The general case then follows from continuity.

We begin by writing the Gram matrix in block form.  We again choose to order the canonical basis of link states so that those with $n$ a defect come first.  Thus,
\begin{equation} \label{eqn:GDec}
G_{n,p} = 
\begin{pmatrix}
G_{n,p}^{1,1} & G_{n,p}^{1,2} \\
G_{n,p}^{2,1} & G_{n,p}^{2,2}
\end{pmatrix}
,
\end{equation}
where $G_{n,p}^{1,1}$ is in fact just $G_{n-1,p}$ (removing the defect at $n$ has no effect on the values taken by the bilinear form).  Substituting \eqnDref{eqn:UDec}{eqn:GDec} into \eqnref{eqn:Master} now gives two independent (non-trivial) constraints:
\begin{subequations}
\begin{align}
G_{n,p}^{1,2} &= G_{n-1,p} V_{n,p} \label{eqn:G12} \\
\text{and} \qquad G_{n,p}^{2,2} &= \brac{V_{n,p}}^T G_{n-1,p} V_{n,p} + \alpha_{n,p} G_{n-1,p-1} \label{eqn:G22'} \\
&= \brac{V_{n,p}}^T G_{n,p}^{1,2} + \alpha_{n,p} G_{n-1,p-1}, \label{eqn:G22}
\end{align}
\end{subequations}
where the last equality uses \eqref{eqn:G12}.

Just as we chose the basis of $\NatMod_{n,p}$ so that the link states with $n$ a defect came first, we now refine it so that when $n$ is a defect, the link states with $n-1$ a defect come before those with $n-1$ part of a link.  Similarly, when $n$ is part of a link, put those link states with $n-1$ linked to $n$ before the rest.  Pictorially, the order is:
\begin{equation*}
\parbox{5mm}{\begin{center}
\includegraphics[height=12mm]{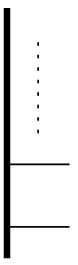}
\end{center}}
\quad , \quad
\parbox{6mm}{\begin{center}
\includegraphics[height=12mm]{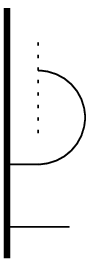}
\end{center}}
\quad , \quad 
\parbox{5mm}{\begin{center}
\includegraphics[height=12mm]{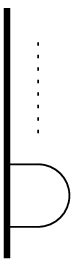}
\end{center}}
\quad , \quad 
\parbox{8mm}{\begin{center}
\includegraphics[height=12mm]{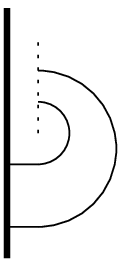}
\end{center}}
\quad .
\end{equation*}
This results in $G_{n,p}^{1,2}$, $G_{n,p}^{2,2}$ and $V_{n,p}$ having the (refined) block forms
\begin{equation} \label{eqn:BlockForms}
G_{n,p}^{1,2} = 
\begin{pmatrix}
0 & * \\
G_{n-2,p-1} & *
\end{pmatrix}
, \quad G_{n,p}^{2,2} = 
\begin{pmatrix}
\beta G_{n-2,p-1} & * \\
* & *
\end{pmatrix}
\quad \text{and} \quad V_{n,p} = 
\begin{pmatrix}
V_{n,p}^{1,1} & * \\
V_{n,p}^{2,1} & *
\end{pmatrix}
.
\end{equation}
To explain, the top-left block of $G_{n,p}^{1,2}$ corresponds to the scalar product of link states of types $1$ and $3$ (in the above order), which clearly gives zero.  Similarly, the bottom-left block of $G_{n,p}^{1,2}$ corresponds to the scalar product of link states of types $2$ and $3$, and it is easy to see that the value will not change if we remove $n-1$ and $n$ from both (cutting one link in each).  Finally, the top-left block of $G_{n,p}^{2,2}$ represents the scalar product of two type $3$ link states, hence the factor of $\beta$.

We substitute the block forms \eqref{eqn:BlockForms} and \eqref{eqn:GDec} (with $n \rightarrow n-1$) into \eqnref{eqn:G12}.  The first column then yields two equations:
\begin{subequations}
\begin{align}
0 &= G_{n-2,p} V_{n,p}^{1,1} + G_{n-1,p}^{1,2} V_{n,p}^{2,1} \label{eqn:G1211} \\
\text{and} \qquad G_{n-2,p-1} &= G_{n-1,p}^{2,1} V_{n,p}^{1,1} + G_{n-1,p}^{2,2} V_{n,p}^{2,1}. \label{eqn:G1212}
\end{align}
\end{subequations}
Using \eqnref{eqn:G12} (again with $n \rightarrow n-1$) on \eqref{eqn:G1211} gives
\begin{equation} \label{eqn:V11}
V_{n,p}^{1,1} + V_{n-1,p} V_{n,p}^{2,1} = 0,
\end{equation}
since $G_{n-2,p}$ is invertible by assumption ($\RadMod_{n-2,p} = \set{0}$).  To simplify \eqnref{eqn:G1212}, we note that $G_{n-1,p}^{2,1}$ is the transpose of $G_{n-1,p}^{1,2}$ and apply \eqref{eqn:GDec}, \eqref{eqn:G22'} (both with $n \rightarrow n-1$) and then \eqref{eqn:V11} to get
\begin{equation} \label{eqn:V21}
V_{n,p}^{2,1} = \frac{1}{\alpha_{n-1,p}} \id.
\end{equation}
Here we have used the invertibility of $G_{n-2,p-1}$.

Finally, we look at the top-left block of \eqnref{eqn:G22}.  Substituting the block forms \eqref{eqn:BlockForms}, applying \eqref{eqn:V21} and the invertibility of $G_{n-2,p-1}$ once more, we arrive at a recursion relation for the $\alpha_{n,p}$:
\begin{equation} \label{eqn:RecAlpha}
\alpha_{n,p} = \beta - \frac{1}{\alpha_{n-1,p}}.
\end{equation}
This simple relation allows us to reduce the computation of any $\alpha_{n,p}$ to that with $n$ smaller.  In particular, if $n = 2p$, we have $G_{n,p} = \beta G_{n-1,p-1}$, since $\NatMod_{n-1,p}$ is not defined and $\brac{n-1,p-1}$-link states are lifted to $\NatMod_{n,p}$ by adding a link.  This gives the starting point for the recursion, $\alpha_{2p,p} = \beta$.  It is easy to check that \eqnref{eqn:Alpha} is the unique solution.
\end{proof}

Since $\func{\det}{U_{n,p}} = 1$, \eqnref{eqn:Master} now yields an explicit recursion relation for the determinants of the Gram matrices:
\begin{subequations} \label{eqn:RecG}
\begin{equation} \label{eqn:RecG1}
\func{\det}{G_{n,p}} = \alpha_{n,p}^{d_{n-1,p-1}} \func{\det}{G_{n-1,p}} \func{\det}{G_{n-1,p-1}}.
\end{equation}
Here, $d_{n-1,p-1}$ is the dimension of $\NatMod_{n-1,p-1}$ as given in \eqnref{eqn:DimV}.  The starting points for this recursion are the results
\begin{equation} \label{eqn:RecG2}
\func{\det}{G_{n,0}} = 1 \qquad \text{and} \qquad \func{\det}{G_{2p-1,p}} = 1.
\end{equation}
\end{subequations}
The first follows because $\NatMod_{n,0}$ is spanned by a single link state consisting entirely of defects.  The second follows from putting $G_{2p,p} = \beta G_{2p-1,p-1}$ into \eqref{eqn:RecG1}.

It is clear from \eqnDref{eqn:Alpha}{eqn:RecG} that the determinant of $G_{n,p}$ can only vanish when $q$ is a root of unity.  This gives the following fundamental result:
\begin{proposition} \label{prop:AllIrr}
When $q$ is not a root of unity, the determinant of the Gram matrices of the $\NatMod_{n,p}$ are all non-zero, hence the $\NatMod_{n,p}$ are irreducible $\tl{n}$-modules.
\end{proposition}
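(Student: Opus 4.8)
The plan is to obtain the statement simply by assembling the determinant recursion of this section into an induction on $n$; essentially all the substantive work has already been done. First I would dispose of the degenerate case: since $q$ is not a root of unity, $\beta = q + q^{-1}$ cannot vanish (that would force $q^{2} = -1$), so the hypotheses of Proposition~\ref{prop:S/R'} are never met. Hence, by Proposition~\ref{prop:S/R} together with the observation at the start of this section that $\RadMod_{n,p}$ is represented by $\ker G_{n,p}$, the module $\NatMod_{n,p}$ is irreducible precisely when $\func{\det}{G_{n,p}} \neq 0$, and it suffices to establish the latter for every $\NatMod_{n,p}$.

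Next comes the induction on $n$. The base cases $n \leqslant 2$ are immediate, each relevant Gram matrix being $\brac{1}$ or $\brac{\beta}$ with $\beta \neq 0$. For the inductive step, assume $\func{\det}{G_{n',p'}} \neq 0$ for all $n' < n$; equivalently $\RadMod_{n',p'} = \set{0}$ for all $n' < n$, which is exactly the standing hypothesis under which the machinery culminating in \eqnref{eqn:RecG1} was set up. Since $q$ is not a root of unity, no pair $\brac{n,p}$ is critical, so Corollary~\ref{cor:ResSplit} supplies the splitting $\psi$, Lemma~\ref{lem:Schur} and Proposition~\ref{prop:Alpha} apply, and \eqnref{eqn:RecG1} reads $\func{\det}{G_{n,p}} = \alpha_{n,p}^{d_{n-1,p-1}} \func{\det}{G_{n-1,p}} \func{\det}{G_{n-1,p-1}}$ for $p > 0$, while $\func{\det}{G_{n,0}} = 1$ directly; the boundary values \eqref{eqn:RecG2}, together with $\alpha_{2p,p} = \beta$, also handle the cases where $\NatMod_{n-1,p}$ or $\NatMod_{n-1,p-1}$ fails to be defined.

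It then remains only to check that $\alpha_{n,p}$ is a nonzero complex number. Because $0 \leqslant p \leqslant \flr{n/2}$ we have $n - 2p + 1 \geqslant 1$, so for $q$ not a root of unity both $\qnum{n-2p+1}$ and $\qnum{n-2p+2}$ are nonzero; hence \eqnref{eqn:Alpha} gives $\alpha_{n,p} = \qnum{n-2p+2} / \qnum{n-2p+1}$, which is finite and nonzero. Feeding this, together with the nonzero inductive values $\func{\det}{G_{n-1,p}}$ and $\func{\det}{G_{n-1,p-1}}$, into the recursion yields $\func{\det}{G_{n,p}} \neq 0$, completing the induction; irreducibility of every $\NatMod_{n,p}$ then follows from Proposition~\ref{prop:S/R}.

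I do not anticipate a real obstacle here, as the genuine content --- deriving the recursion and evaluating $\alpha_{n,p}$ --- is already in place. The one point demanding care is the logical order of the argument: the recursion \eqnref{eqn:RecG1} was itself obtained under the assumption that the lower radicals vanish, so the induction must be run on $n$ in such a way that this assumption is in force before the recursion is invoked at each stage, rather than attempting to argue about a single $G_{n,p}$ in isolation.
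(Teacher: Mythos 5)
Your proposal is correct and follows essentially the same route as the paper: the paper sets up precisely this induction on $n$ (the inductive assumption that $\RadMod_{n',p'} = \set{0}$ for $n' < n$ is declared right after \eqnref{eqn:UDec}), and then simply asserts the conclusion as ``clear from \eqnDref{eqn:Alpha}{eqn:RecG}'' rather than writing out the induction step. You have made explicit what the paper leaves implicit — in particular that $\alpha_{n,p}$ is nonzero when $q$ is not a root of unity and that the recursion's hypotheses are in force at each stage — which is exactly the intended argument.
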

\begin{corollary} \label{cor:RootOfUnity}
When $q$ is not a root of unity, $\tl{n}$ is a semisimple algebra and the $\NatMod_{n,p}$, $0 \leqslant p \leqslant \flr{n/2}$, form a complete set of non-isomorphic irreducible modules.
\end{corollary}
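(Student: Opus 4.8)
The strategy is a dimension count resting on the determinant formula just obtained. By \propref{prop:AllIrr}, each $\NatMod_{n,p}$ with $0 \leqslant p \leqslant \flr{n/2}$ is irreducible, and by \corref{cor:VLDistinct} these modules are pairwise non-isomorphic. Here one uses that $q$ not being a root of unity forces $\beta = q + q^{-1} \neq 0$ (since $\beta = 0$ would give $q^2 = -1$), so the bilinear forms $\inner{\cdot}{\cdot}_{n,p}$ are all non-zero and the hypotheses of those two results hold with no exceptions. Thus $\tl{n}$ admits at least $\flr{n/2}+1$ pairwise non-isomorphic irreducible modules.

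Next I would invoke the standard consequence of the Artin--Wedderburn theorem over $\CC$: for a finite-dimensional complex algebra $A$, one has $A / J \cong \prod_i \mathrm{Mat}_{\dim S_i}(\CC)$, where $J$ is the Jacobson radical and the $S_i$ run over a full set of pairwise non-isomorphic irreducible $A$-modules, whence $\dim A = \dim J + \sum_i \brac{\dim S_i}^2 \geqslant \sum_i \brac{\dim S_i}^2$ with equality precisely when $A$ is semisimple. Applying this with $A = \tl{n}$ and using $\dim \NatMod_{n,p} = d_{n,p}$ from \eqnref{eqn:DimV}, the curious identity \eqnref{eqn:Curious}, and \eqnref{eqn:DimTL_n}, one computes
\[
\sum_{p=0}^{\flr{n/2}} \brac{\dim \NatMod_{n,p}}^2 = \sum_{p=0}^{\flr{n/2}} d_{n,p}^2 = d_{2n,n} = \dim \tl{n}.
\]

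Since the squared dimensions of the $\NatMod_{n,p}$ already sum to $\dim \tl{n}$, and the corresponding sum over a complete set of non-isomorphic irreducibles cannot exceed $\dim \tl{n}$, there can be no further irreducible $\tl{n}$-module: the $\NatMod_{n,p}$, $0 \leqslant p \leqslant \flr{n/2}$, form a complete set. Equality is thereby achieved in the inequality $\dim \tl{n} \geqslant \sum_i \brac{\dim S_i}^2$, which forces $J = 0$, i.e.\ $\tl{n}$ is semisimple. There is essentially no obstacle here once \secref{sec:Gram} is in hand, the argument being purely formal; the only point needing a moment's care is the observation that $\beta \neq 0$ under the hypothesis, so that the $\beta = 0$ caveats attached to \corref{cor:VLDistinct} (and to the exceptional module $\NatMod_{2p,p}$) are vacuous.
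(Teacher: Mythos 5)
Your proof is correct and follows essentially the same route as the paper: irreducibility from \propref{prop:AllIrr}, pairwise non-isomorphism from \corref{cor:VLDistinct}, and the dimension identity \eqref{eqn:Curious} to conclude both completeness and semisimplicity. The only cosmetic difference is that you spell out the Jacobson-radical inequality $\dim A \geqslant \sum_i (\dim S_i)^2$ (with equality iff $A$ is semisimple), whereas the paper delegates the final step to \propref{prop:WeddConverse}; these are equivalent formulations of the same Wedderburn-type argument, and your observation that $q$ not being a root of unity forces $\beta \neq 0$ correctly dispatches the $\beta = 0$ caveats that the paper leaves implicit.
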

\begin{proof}
The $\NatMod_{n,p}$ are irreducible by \propref{prop:AllIrr}, distinct by \corref{cor:VLDistinct}, and form a complete set (meaning that there are no other non-isomorphic irreducibles) because the sum of the squares of their dimensions gives that of $\tl{n}$ (\eqnref{eqn:Curious}).  The result now follows from \propref{prop:WeddConverse}.
\end{proof}

We remark that the irreducibility of the $\NatMod_{n,p}$ for all $p$ is not sufficient to conclude that $\tl{n}$ is semisimple.  A counterexample is $\tl{2}$ with $\beta = 0$ for which the left regular representation is reducible but indecomposable, despite both $\NatMod_{2,1}$ and $\NatMod_{2,0}$ being irreducible (they both have dimension $1$).  The issue here is that, as we saw in \secref{sec:Reps}, these two standard modules are in fact isomorphic.  The sum of the squares of the dimensions of the \emph{distinct} irreducibles is thus $1 < \dim \tl{2} = 2$, so we cannot deduce semisimplicity.  For this, it is sufficient to have both the irreducibility and the inequivalence of the $\NatMod_{n,p}$.

Let us return now to the recursion relation \eqref{eqn:RecG} for the $\func{\det}{G_{n,p}}$.  Its solution gives us the main result of this section.
\begin{theorem} \label{thm:DetG}
For all $n,p$ and all $\beta = q + q^{-1} \in \CC$, the Gram matrix of $\NatMod_{n,p}$ has determinant
\begin{equation} \label{eqn:DetG}
\func{\det}{G_{n,p}} = \prod_{j=1}^p \brac{\frac{\qnum{n-2p+1+j}}{\qnum{j}}}^{d_{n,p-j}}.
\end{equation}
\end{theorem}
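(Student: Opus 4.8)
The plan is to prove \eqref{eqn:DetG} by induction on $n$, by checking that its right-hand side satisfies the same recursion \eqref{eqn:RecG} which, together with its boundary values, determines $\func{\det}{G_{n,p}}$ uniquely. Write $D_{n,p}$ for the claimed product $\prod_{j=1}^{p} \brac{\qnum{n-2p+1+j}/\qnum{j}}^{d_{n,p-j}}$. First I would dispose of the boundary cases \eqref{eqn:RecG2}: $D_{n,0}$ is an empty product, hence equal to $1$, while for $n=2p-1$ one has $n-2p+1+j=j$ for every $j$, so $D_{2p-1,p}=\prod_{j=1}^{p}\brac{\qnum{j}/\qnum{j}}^{d_{2p-1,p-j}}=1$ as well.

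The substance is the identity $D_{n,p}=\alpha_{n,p}^{d_{n-1,p-1}}\,D_{n-1,p}\,D_{n-1,p-1}$, with $\alpha_{n,p}=\qnum{n-2p+2}/\qnum{n-2p+1}$ as in \eqref{eqn:Alpha}. I would establish it as follows. Apply the Pascal-type relation $d_{n,p-j}=d_{n-1,p-j}+d_{n-1,p-1-j}$ coming from \eqref{eqn:RecD} to split $D_{n,p}$ into two subproducts, one carrying the exponents $d_{n-1,p-j}$ and the other the exponents $d_{n-1,p-1-j}$; the shift $j\mapsto j+1$ in the second subproduct recasts its exponents as $d_{n-1,p-j}$ as well (the new top term is trivial, since $d_{n-1,-1}=0$). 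Dividing the first subproduct by $D_{n-1,p}$ and the shifted second by $D_{n-1,p-1}$ --- both of which are read directly off the definition of $D$ after one further harmless re-indexing of $D_{n-1,p-1}$ --- leaves, in each case, a product of ratios of consecutive $q$-numbers. These two products are termwise reciprocal for the indices $j=2,\dots,p$, so all of those factors cancel; the sole survivor is the $j=1$ term of the first, namely $\brac{\qnum{n-2p+2}/\qnum{n-2p+1}}^{d_{n-1,p-1}}=\alpha_{n,p}^{d_{n-1,p-1}}$. This bookkeeping --- keeping the shifts in the $q$-number arguments synchronised with those in the $d$-subscripts through the successive re-indexings --- is the one genuinely computational step, and I expect it to be the main (if mild) obstacle. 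With it in hand, induction on $n$ using \eqref{eqn:RecG} proves \eqref{eqn:DetG} whenever $q$ is not a root of unity, this being the regime in which the recursion was derived.

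It remains to drop the hypothesis that $q$ is not a root of unity. Each entry of $G_{n,p}$ is a power of $\beta$, so $\func{\det}{G_{n,p}}$ is a polynomial in $\beta$; on the other hand each $\qnum{m}$ is invariant under $q\leftrightarrow q^{-1}$, so $D_{n,p}$ is a rational function of $\beta=q+q^{-1}$. These two agree on the infinite set of $\beta$ for which $q$ is not a root of unity, hence they agree identically as rational functions; in particular $D_{n,p}$ is in fact a polynomial, and \eqref{eqn:DetG} holds for every $\beta\in\CC$. (The exceptional case of $\NatMod_{2p,p}$ at $\beta=0$ flagged earlier concerns the renormalised form \eqref{eqn:RenormalisedForm} and will be treated separately.)
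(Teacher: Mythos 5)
Your proposal is correct and follows essentially the same route as the paper: verify that the right-hand side satisfies the recursion~\eqref{eqn:RecG} for generic $q$ (where the paper merely says to ``consider the numerators and denominators separately and apply~\eqnref{eqn:RecD} several times,'' you carry that bookkeeping out explicitly, and the split/re-index/cancel computation is right), then extend to all $\beta$. Where the paper invokes ``continuity'' for the last step, you instead observe that a rational function of $\beta$ agreeing with a polynomial on an infinite set must coincide with it identically --- a cleaner formulation of the same idea, so no genuine divergence in method.
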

\begin{proof}
As in the proof of \propref{prop:Alpha}, we may first assume that $q$ is not a root of unity so that the recursion \eqref{eqn:RecG} is well-defined for all $n$ and $p$ ($q$ general then follows from continuity).  Checking \eqnref{eqn:RecG2} is easy, and \eqref{eqn:RecG1} follows from considering the numerators and denominators separately and applying \eqnref{eqn:RecD} several times.
\end{proof}
\noindent Note that \eqnref{eqn:DetG} is manifestly invariant under $q \leftrightarrow q^{-1}$ because $q$-numbers are unchanged under this transformation.  We remark that it is not entirely obvious from this formula that $\func{\det}{G_{n,p}}$ is actually finite for all $q \neq 0$, though this is clear from the definition (it is polynomial in $\beta$).  Here is the most useful consequence of the determinant formula \eqref{eqn:DetG}:

\begin{corollary} \label{cor:RadCrit}
If $\brac{n,p}$ is critical, then $\RadMod_{n,p} = \set{0}$, so $\NatMod_{n,p}$ is irreducible.
\end{corollary}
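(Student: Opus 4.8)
The plan is to reduce the claim to the non-vanishing of $\func{\det}{G_{n,p}}$ and then read this off from the determinant formula of \thmref{thm:DetG}. First I would check that, for a critical pair, the form $\inner{\cdot}{\cdot}_{n,p}$ is not identically zero: by \propref{prop:S/R'} it vanishes identically only when $\beta = 0$ and $n = 2p$, and such a pair is never critical (there $n - 2p + 1 = 1$, while $\beta = 0$ forces $q^2 = -1$, so $q^{2(n-2p+1)} = q^2 = -1 \neq 1$). Hence \propref{prop:S/R} applies, and since $\RadMod_{n,p}$ is represented by the kernel of $G_{n,p}$, it suffices to prove $\func{\det}{G_{n,p}} \neq 0$.

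Next I would split along the characterisation of criticality recorded after \propref{prop:Alpha}. If $q = \pm 1$, then $\beta = \pm 2$ and $\qnum{m} = m q^{m-1} \neq 0$ for every $m \geq 1$; since $0 \leq p \leq \flr{n/2}$ forces $n - 2p + 1 + j \geq 2$, and $j \geq 1$, every factor in \eqref{eqn:DetG} is a ratio of non-zero numbers raised to a positive power, so the product is non-zero. The substantive case is $q \neq \pm 1$ with $\qnum{n-2p+1} = 0$. Here I would let $\ell \geq 2$ be the least positive integer with $q^{2\ell} = 1$ --- equivalently, the least positive integer with $\qnum{\ell} = 0$ --- so that $\qnum{m} = 0$ if and only if $\ell \mid m$, and the hypothesis gives $\ell \mid (n - 2p + 1)$. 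The key point is that for each $j$ with $1 \leq j \leq p$, the numerator $\qnum{n-2p+1+j}$ of the $j$-th factor vanishes at $q$ exactly when $\ell \mid j$, which is exactly when its denominator $\qnum{j}$ vanishes there; and from $q^{m-1} \qnum{m} = (q^{2m} - 1)/(q^2 - 1)$ one sees that $\qnum{m}$, as a function of $q$, has only simple zeros, so numerator and denominator vanish to the same (first) order at $q$. Thus each factor $\qnum{n-2p+1+j}/\qnum{j}$, viewed as a rational function of $q$, takes a finite non-zero value at $q$, hence so does the whole product, giving $\func{\det}{G_{n,p}} \neq 0$.

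The only genuine delicacy --- and the main pitfall to avoid --- is that \eqref{eqn:DetG} is literally of the form $0/0$ at a critical $q$ as soon as $p \geq \ell$, so one cannot simply substitute. The clean route is to treat \eqref{eqn:DetG} as an identity of rational functions of $q$ (legitimate since both sides are rational, $\func{\det}{G_{n,p}}$ being polynomial in $\beta$) and to argue by orders of vanishing at $q$ as above, rather than evaluating any limits explicitly; one must also note that the exponents $d_{n,p-j}$ are strictly positive (since $0 \leq p - j \leq \flr{n/2}$), so they neither create nor destroy zeros, and dispose of the $q = \pm 1$ subcase separately since there the relevant $q$-numbers are the limiting values $m q^{m-1}$. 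Once $\func{\det}{G_{n,p}} \neq 0$ is established, $\RadMod_{n,p} = \set{0}$ and the irreducibility of $\NatMod_{n,p}$ follow at once from \propref{prop:S/R}.
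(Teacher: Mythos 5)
Your proposal is correct and follows essentially the same path as the paper: reduce to $\det(G_{n,p}) \neq 0$ via \propref{prop:S/R}, observe that zeros of $\qnum{n-2p+1+j}$ and $\qnum{j}$ occur at exactly the same indices $j$ (multiples of $\ell$), and show both zeros are simple so that they cancel. The only mechanical difference is in how simplicity is established: you use $q^{m-1}\qnum{m} = (q^{2m}-1)/(q^2-1)$, whereas the paper factors $\qnum{m\ell} = \qnum{\ell}\qnumber{m}{q^{\ell}}$ and notes $\qnumber{m}{q^{\ell}} \neq 0$ since $q^{\ell} = \pm 1$; both routes reach the same cancellation. Your explicit check via \propref{prop:S/R'} that the form is not identically zero for critical $(n,p)$, and your splitting off of $q = \pm 1$, are careful additions that the paper handles implicitly.
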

\begin{proof}
Recall that $\brac{n,p}$ critical means that $q^{2 \brac{n-2p+1}} = 1$.  We let $\ell$ be the smallest positive integer for which $q^{2 \ell} = 1$, so that $n-2p+1 = m \ell$ for some positive integer $m$, and
\begin{equation} \label{eqn:DetGCrit}
\func{\det}{G_{n,p}} = \prod_{j=1}^p \brac{\frac{\qnum{m \ell + j}}{\qnum{j}}}^{d_{n,p-j}}.
\end{equation}
The numerator only vanishes when $j = m' \ell$ for some positive integer $m'$ (this also requires $\ell > 1$), but then the denominator vanishes likewise.  Now note that
\begin{equation}
\qnum{m \ell} = \frac{q^{\ell} - q^{-\ell}}{q - q^{-1}} \frac{q^{m \ell} - q^{-m \ell}}{q^{\ell} - q^{-\ell}} = \qnum{\ell} \qnumber{m}{q^{\ell}},
\end{equation}
so the zero of $\qnum{m \ell}$ is always first order (because $q^{\ell} = \pm 1$, so $\qnumber{m}{q^{\ell}} \neq 0$ for $m \neq 0$).  Thus, the zeroes in the numerator of \eqref{eqn:DetGCrit} precisely cancel those in the denominator, hence $\func{\det}{G_{n,p}} \neq 0$.
\end{proof}
\noindent This corollary implies the semisimplicity of $\tl{n}$ when $\beta = \pm 2$ ($q = \pm 1$) because all $\brac{n,p}$ are then critical. We similarly obtain the semisimplicity of the $\tl{n}$ with $n$ odd at $\beta=0$ ($q = \pm \ii$).  Indeed, $\ell = 2$ divides $n-2p+1$ in this case, so the $\brac{n,p}$ with $n$ odd are all critical.  The inequivalence of the $\NatMod_{n,p}$ follows from \corref{cor:VLDistinct} as $n$ odd means, in particular, that $n \neq 2p$ for any $p$. Even though the $(n,p)$ are not critical when $\beta=0$ and $n$ is even, the tools developed up to now allow for the proof that $\NatMod_{2p,p}$, which then coincides with $\RadMod_{2p,p}$, is irreducible.
\begin{proposition} \label{prop:V=R=L}
When $\beta = 0$, the radical $\RadMod_{2p,p}$ is irreducible.
\end{proposition}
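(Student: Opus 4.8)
The plan is to restrict $\NatMod_{2p,p}$ to $\tl{2p-1}$ and recognise the result as an irreducible standard module that the critical-line analysis has already resolved. First I would record that, since the form $\inner{\cdot}{\cdot}_{2p,p}$ vanishes identically when $\beta=0$ and $n=2p$, its radical is all of $\NatMod_{2p,p}$; hence the assertion is equivalent to the statement that $\NatMod_{2p,p}$ is irreducible as a $\tl{2p}$-module.

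Next I would apply \propref{prop:Restrict} with $n=2p$. A $(2p-1,p)$-link state would require $-1$ defects, so $\NatMod_{2p-1,p}=\set{0}$ and the exact sequence \eqref{SES:Res} degenerates to an isomorphism $\Res{\NatMod_{2p,p}}\cong\NatMod_{2p-1,p-1}$ of $\tl{2p-1}$-modules. (It is worth checking explicitly that the map $\Psi$ constructed in the proof of \propref{prop:Restrict} --- cut the link closing at $2p$ and delete the defect thereby created --- is genuinely a $\tl{2p-1}$-homomorphism in this situation; this is immediate because a $(2p,p)$-link state has no defects, so the exceptional case in that proof cannot arise.) I would then observe that $2p-1$ is odd, so at $\beta=0$, where $q=\pm\ii$, the pair $(2p-1,p-1)$ is critical, since $q^{2((2p-1)-2(p-1)+1)}=q^{4}=1$. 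By \corref{cor:RadCrit} --- which applies because the ordinary form on $\NatMod_{2p-1,p-1}$ is not identically zero, as $2p-1\neq 2(p-1)$ --- the module $\NatMod_{2p-1,p-1}$ is irreducible over $\tl{2p-1}$. Therefore $\Res{\NatMod_{2p,p}}$ is an irreducible $\tl{2p-1}$-module.

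Finally I would conclude via the elementary observation that a module whose restriction is irreducible is itself irreducible: any nonzero proper $\tl{2p}$-submodule of $\NatMod_{2p,p}$ would in particular be a nonzero proper $\tl{2p-1}$-submodule of $\Res{\NatMod_{2p,p}}$, which is impossible. Hence $\RadMod_{2p,p}=\NatMod_{2p,p}$ is irreducible. I do not anticipate a genuine obstacle; the one point requiring care is confirming that \propref{prop:Restrict} collapses cleanly when $n=2p$ --- namely that the submodule factor $\NatMod_{2p-1,p}$ is absent and that $\Psi$ persists as a module isomorphism --- after which the argument is just bookkeeping with the criticality condition.
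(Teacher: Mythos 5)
Your argument is correct, and it is a genuinely different route from the paper's. The paper proceeds computationally: it observes that the Gram matrix $G_{2p,p}$ of the unnormalised form equals $\beta G_{2p-1,p-1}$, so the Gram matrix of the renormalised form $\inner{\cdot}{\cdot}'$ from \eqnref{eqn:RenormalisedForm} is $G'_{2p,p} = G_{2p-1,p-1}|_{\beta=0}$; it then applies the explicit determinant formula of \thmref{thm:DetG} and an analysis of $q$-numbers at $q=\pm\ii$ to conclude $\det(G'_{2p,p})\neq 0$, hence $\RadMod'_{2p,p}=\set{0}$ and \propref{prop:S/R'} finishes. You instead restrict: $\NatMod_{2p-1,p}=\set{0}$ collapses the sequence of \propref{prop:Restrict} to an isomorphism $\Res{\NatMod_{2p,p}}\cong\NatMod_{2p-1,p-1}$, you note that $(2p-1,p-1)$ is critical when $q=\pm\ii$, invoke \corref{cor:RadCrit} for irreducibility of the restriction, and then use the elementary fact that a module with irreducible restriction is irreducible. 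Your version is shorter and more structural, avoiding the renormalised form and Gram determinant entirely; the price is that it sits on top of \corref{cor:RadCrit}, which is itself a consequence of \thmref{thm:DetG}, so the underlying machinery is not really lighter. The paper's version has the side benefit of directly establishing non-degeneracy of $\inner{\cdot}{\cdot}'$, a fact it reuses elsewhere (for instance in the proof of \propref{prop:EndLinks}). One small remark: your parenthetical caution about $\Psi$ is unneeded --- the proof of \propref{prop:Restrict} already shows $\Psi$ is a homomorphism in the exceptional case too, and the exact sequence holds for all $n,p$ without extra hypotheses.
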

\begin{proof}
Since $\RadMod_{2p,p} = \NatMod_{2p,p}$ when $\beta = 0$, we must show that $\RadMod'_{2p,p} = \set{0}$ (\propref{prop:S/R'}).  Let $G'_{2p,p}$ denote the Gram matrix of the renormalised form $\inner{\cdot}{\cdot}'$ defined in \eqnref{eqn:RenormalisedForm}.  As noted before \propref{prop:AllIrr}, $G_{2p,p} = \beta G_{2p-1,p-1}$ and $G'_{2p,p} = \lim_{\beta\rightarrow 0} G_{2p,p}/\beta = \left. G_{2p-1,p-1} \right|_{\beta=0}$.  For $\beta=0$, the values of $q$ are $\pm \ii$ and $\qnum{m}$ is $0$ when $m$ is even and $\pm 1$ when it is odd.  Therefore,
\begin{equation}
\func{\det}{G'_{2p,p}} = \left. \func{\det}{G_{2p-1,p-1}} \right|_{\beta=0} = \Biggl. \prod_{j=1}^{p-1} \left( \frac{\qnum{j+2}}{\qnum{j}} \right)^{d_{2p-1,p-1-j}} \Biggr|_{q = \pm \ii}.
\end{equation}
Each factor of this product with $j$ odd is a power of $\pm 1$. When $j$ is even, we use $\qnum{2m} = \qnum{2} \qnumber{m}{q^2}$ to see that the factor's numerator and denominator are both polynomials in $\beta$ with a simple root at $\beta=0$.  Their quotient is therefore non-zero at $\beta = 0$. It follows that $\det \bigl( G'_{2p,p} \bigr) \neq 0$, whence the result.
\end{proof}
%

%
%

\section{Explorations at Roots of Unity} \label{sec:Explore}

The results of \secref{sec:Gram} not only tell us when the radical $\RadMod_{n,p}$ is trivial, but they also tell us its dimension when it is not.  Recall that \eqnref{eqn:RecD} amounts to a simple recursion relation for the dimensions of the standard modules:
\begin{equation} \label{eqn:RecDimV}
\dim \NatMod_{n,p} = \dim \NatMod_{n-1,p} + \dim \NatMod_{n-1,p-1}.
\end{equation}
The analogue for the $\RadMod_{n,p}$ is given as follows.  Let $q$ be a root of unity and let $\ell$ be the minimal positive integer satisfying $q^{2 \ell} = 1$.  For given $\brac{n,p}$ then, set
\begin{equation}\label{eq:kANDr}
n-2p+1 = \func{k}{n,p} \ell + \func{r}{n,p},
\end{equation}
where $\func{k}{n,p} \in \NN$ and $\func{r}{n,p} \in \set{1,\ldots,\ell-1,\ell}$.  The criticality of $\brac{n,p}$ is therefore equivalent to $\func{r}{n,p} = \ell$ when $q$ is a root of unity.
\begin{proposition} \label{prop:DimR}
The dimensions of the radicals $\RadMod_{n,p}$ satisfy the recursion relation
\begin{equation} \label{eqn:RecDimR}
\dim \RadMod_{n,p} = 
\begin{cases}
0 & \text{if $\func{r}{n,p} = \ell$,} \\
\dim \RadMod_{n-1,p} + \dim \NatMod_{n-1,p-1} & \text{if $\func{r}{n,p} = \ell - 1$,} \\
\dim \RadMod_{n-1,p} + \dim \RadMod_{n-1,p-1} & \text{otherwise,}
\end{cases}
\end{equation}
with initial conditions $\dim \RadMod_{n,0} = 0$ and $\dim \RadMod_{2p-1,p} = 0$.
\end{proposition}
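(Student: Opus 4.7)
The plan is to exploit the matrix recursion \eqref{eqn:Master} derived in the proof of \propref{prop:Alpha}, since the dimension of $\RadMod_{n,p}$ equals the corank of the Gram matrix $G_{n,p}$. Because the transition matrix $U_{n,p}$ is upper-unitriangular and therefore invertible, \eqnref{eqn:Master} gives
\begin{equation*}
\rank G_{n,p} = \rank \begin{pmatrix} G_{n-1,p} & 0 \\ 0 & \alpha_{n,p} G_{n-1,p-1} \end{pmatrix},
\end{equation*}
whenever the splitting of $\Res{\NatMod_{n,p}}$ exists, i.e.\ whenever $(n,p)$ is non-critical (\corref{cor:ResSplit}). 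Using $\dim \RadMod_{n,p} = d_{n,p} - \rank G_{n,p}$ and \eqnref{eqn:RecD} will convert this rank identity into the desired recursion on radicals.

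The case analysis is then driven by $\alpha_{n,p} = \qnum{n-2p+2}/\qnum{n-2p+1}$. First, if $\func{r}{n,p} = \ell$, then $(n,p)$ is critical and \corref{cor:RadCrit} yields $\dim \RadMod_{n,p} = 0$, giving the first line. Second, if $\func{r}{n,p} = \ell-1$, then $n-2p+2 \equiv 0 \pmod{\ell}$ while $n-2p+1 \not\equiv 0$, so $\alpha_{n,p} = 0$ and the second block drops out; hence
\begin{equation*}
\rank G_{n,p} = \rank G_{n-1,p},
\end{equation*}
and a straight subtraction using $d_{n,p} = d_{n-1,p} + d_{n-1,p-1}$ gives
\begin{equation*}
\dim \RadMod_{n,p} = \dim \RadMod_{n-1,p} + \dim \NatMod_{n-1,p-1},
\end{equation*}
which is the second line. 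Third, in all other non-critical cases, $\alpha_{n,p}$ is a finite non-zero scalar, so the block matrix has rank $\rank G_{n-1,p} + \rank G_{n-1,p-1}$, and the same subtraction yields the third line.

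The initial conditions are immediate: $\NatMod_{n,0}$ is one-dimensional with $\inner{x}{x} = 1$ for its sole basis vector, so $\RadMod_{n,0} = \set{0}$; and from \eqnref{eqn:RecG2} we have $\func{\det}{G_{2p-1,p}} = 1$, so $\RadMod_{2p-1,p} = \set{0}$ as well. The only potential obstacle is verifying that the case $\alpha_{n,p} = 0$ is genuinely covered by \eqnref{eqn:Master}: one must check that the splitting of \secref{sec:Gram} requires only the non-vanishing of $\qnum{n-2p+1}$ (so that $\alpha_{n,p}$ is finite) and not that of $\qnum{n-2p+2}$. Inspection of \corref{cor:ResSplit} and \propref{prop:Alpha} confirms this, since the criticality condition $q^{2(n-2p+1)}=1$ is exactly $\func{r}{n,p} = \ell$, and the derivation of \eqnref{eqn:Master} only invokes the invertibility of the lower-index Gram matrices $G_{n-2,p}$ and $G_{n-2,p-1}$, which holds inductively under the working hypothesis. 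Thus the three-way recursion is established and the induction on $n$ closes.
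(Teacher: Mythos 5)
Your proof is correct and takes essentially the same route as the paper's: both rely on the block-diagonalisation \eqref{eqn:Master} of $G_{n,p}$ valid for non-critical $(n,p)$, and then split into cases according to whether $\alpha_{n,p}$ vanishes, with \corref{cor:RadCrit} handling the critical case. The only cosmetic difference is that you compute $\rank G_{n,p}$ and pass to corank via $\dim\RadMod_{n,p} = d_{n,p} - \rank G_{n,p}$ together with \eqnref{eqn:RecD}, whereas the paper directly computes $\dim\ker G_{n,p}$ from the block form — these are the same computation stated in complementary terms.
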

\begin{proof}
The initial conditions are clear as $G_{n,0} = \bigl( 1 \bigr)$ for all $n$ and $\NatMod_{2p-1,p}$ is not defined for all $p$.  The recurrence when $\func{r}{n,p} = \ell$, which is when $\brac{n,p}$ is critical, also follows directly from \corref{cor:RadCrit}.  We may therefore assume that $\func{r}{n,p} \neq \ell$.  Then, $q^{2 \brac{n-2p+1}} = q^{2 \func{r}{n,p}} \neq 1$ by minimality of $\ell$, so \corref{cor:ResSplit} applies.  We can therefore make a change of bases to bring $G_{n,p}$ to the block diagonal form of \eqnref{eqn:Master}:
\begin{equation}
G_{n,p} = U_{n,p}^T 
\begin{pmatrix}
G_{n-1,p} & 0 \\
0 & \alpha_{n,p} G_{n-1,p-1}
\end{pmatrix}
U_{n,p}.
\end{equation}
We want the dimension of the kernel of $G_{n,p}$.  But, as $U_{n,p}$ is invertible,
\begin{equation}
G_{n,p} v = 0 \qquad \iff \qquad
\begin{pmatrix}
G_{n-1,p} w_1 \\
\alpha_{n,p} G_{n-1,p-1} w_2
\end{pmatrix}
 = 0, \qquad \text{where }
\begin{pmatrix}
w_1 \\
w_2
\end{pmatrix}
 = U_{n,p} v.
\end{equation}
Thus when $\alpha_{n,p} \neq 0$, $\ker (G_{n,p}) = \ker(G_{n-1,p}) \oplus \ker(G_{n-1,p-1})$ as vector spaces, whereas when  $\alpha_{n,p} = 0$, $\ker(G_{n,p}) = \ker(G_{n-1,p}) \oplus \NatMod_{n-1,p-1}$ as vector spaces.  The result now follows from \propref{prop:Alpha}.
\end{proof}
\begin{corollary} \label{cor:DimL}
The dimensions of the irreducible quotients $\IrrMod_{n,p} = \NatMod_{n,p} / \RadMod_{n,p}$ satisfy the recursion relation
\begin{equation} \label{eqn:RecDimL}
\dim \IrrMod_{n,p} = 
\begin{cases}
\dim \NatMod_{n,p} & \text{if $\func{r}{n,p} = \ell$,} \\
\dim \IrrMod_{n-1,p} & \text{if $\func{r}{n,p} = \ell - 1$,} \\
\dim \IrrMod_{n-1,p} + \dim \IrrMod_{n-1,p-1} & \text{otherwise,}
\end{cases}
\end{equation}
with initial conditions $\dim \IrrMod_{n,0} = 1$ and $\dim \IrrMod_{2p-1,p} = 0$.
\end{corollary}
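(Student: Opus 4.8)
The plan is to read off the statement from the definition $\IrrMod_{n,p} = \NatMod_{n,p} / \RadMod_{n,p}$, which gives $\dim \IrrMod_{n,p} = \dim \NatMod_{n,p} - \dim \RadMod_{n,p}$, and then to subtract the recursion of \propref{prop:DimR} for $\dim \RadMod_{n,p}$ from the recursion \eqref{eqn:RecDimV} for $\dim \NatMod_{n,p}$, case by case according to the value of $\func{r}{n,p}$. First I would record the initial conditions: for $p = 0$ one has $\RadMod_{n,0} = \set{0}$ and $\dim \NatMod_{n,0} = 1$, so $\dim \IrrMod_{n,0} = 1$; and for $n = 2p - 1$ the module $\NatMod_{2p-1,p}$ is not defined, so $\dim \IrrMod_{2p-1,p} = 0$, as claimed.

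For the recursion itself, suppose $\func{r}{n,p} = \ell$, so that $\brac{n,p}$ is critical. Then \corref{cor:RadCrit} (equivalently the first case of \propref{prop:DimR}) gives $\RadMod_{n,p} = \set{0}$, whence $\dim \IrrMod_{n,p} = \dim \NatMod_{n,p}$. If instead $\func{r}{n,p} = \ell - 1$, subtracting $\dim \RadMod_{n,p} = \dim \RadMod_{n-1,p} + \dim \NatMod_{n-1,p-1}$ from $\dim \NatMod_{n,p} = \dim \NatMod_{n-1,p} + \dim \NatMod_{n-1,p-1}$ cancels the $\NatMod_{n-1,p-1}$ terms, leaving $\dim \IrrMod_{n,p} = \dim \NatMod_{n-1,p} - \dim \RadMod_{n-1,p} = \dim \IrrMod_{n-1,p}$. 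In the remaining case, subtracting $\dim \RadMod_{n,p} = \dim \RadMod_{n-1,p} + \dim \RadMod_{n-1,p-1}$ from $\dim \NatMod_{n,p} = \dim \NatMod_{n-1,p} + \dim \NatMod_{n-1,p-1}$ gives $\dim \IrrMod_{n,p} = \dim \IrrMod_{n-1,p} + \dim \IrrMod_{n-1,p-1}$ directly. This exhausts the three cases.

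The only point requiring a little care — and hence the main, albeit minor, obstacle — is the bookkeeping at the boundary where one of $\NatMod_{n-1,p}$, $\NatMod_{n-1,p-1}$, $\RadMod_{n-1,p}$, $\RadMod_{n-1,p-1}$ fails to exist (for instance $n = 2p$, where $\NatMod_{n-1,p}$ is undefined). This is harmless once one adopts the convention — already built into \eqref{eqn:DimV}, since $d_{2p-1,p} = 0$ — that an undefined standard module, its radical and its irreducible quotient all have dimension zero; with this convention both \eqref{eqn:RecDimV} and the recursion of \propref{prop:DimR} hold universally and the subtractions above go through verbatim. I would also note that no interpretation of $\func{r}{n-1,p}$ or $\func{r}{n-1,p-1}$ is needed, since the asserted recursion refers only to the already-defined quantities $\dim \IrrMod_{n-1,p}$ and $\dim \IrrMod_{n-1,p-1}$; the claim then follows by induction on $n$.
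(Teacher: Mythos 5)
Your argument is correct and is precisely the intended derivation: the paper states this as an immediate corollary of \propref{prop:DimR} and \eqnref{eqn:RecDimV} without a separate proof, and subtracting the radical recursion from the standard-module recursion case by case is exactly that inference. The remark on the boundary convention ($d_{2p-1,p}=0$, so undefined modules count as zero-dimensional) is a sensible bit of bookkeeping and matches the conventions already in force.
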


We can use \propref{prop:DimR} and \corref{cor:DimL} to build up tables of dimensions of radicals and irreducibles.  It proves convenient to arrange these tables so that $n$ increases as we go down and $p$ increases in a south-westerly direction (staying constant to the south-east).  The important quantity $n-2p+1$ is therefore constant along vertical lines and increasing from left to right.  This corresponds to the standard arrangement of the Bratteli diagram of the family of Temperley-Lieb algebras.  To illustrate this, we present the first few rows of this diagram in \figref{fig:Bratteli}.  It is useful to mark the critical $\brac{n,p}$ --- these form vertical lines on the diagram which we will refer to as \emph{critical lines}. The regions bounded by consecutive critical lines will be referred to as \emph{critical strips}.

\begin{figure}
\begin{center}
\begin{tabular}{*{9}{@{}C@{\hspace{1mm}}}}
          &         |&          &          &          &          &          &          &          \\
          &\brac{1,0}&          &          &          &          &          &          &          \\
\brac{2,1}&         |&\brac{2,0}&         |&          &          &          &          &          \\
          &\brac{3,1}&          &\brac{3,0}&          &          &          &          &          \\
\brac{4,2}&         |&\brac{4,1}&         |&\brac{4,0}&         |&          &          &          \\
          &\brac{5,2}&          &\brac{5,1}&          &\brac{5,0}&          &          &          \\
\brac{6,3}&         |&\brac{6,2}&         |&\brac{6,1}&         |&\brac{6,0}&         |&          \\
          &\brac{7,3}&          &\brac{7,2}&          &\brac{7,1}&          &\brac{7,0}&          \\
\brac{8,4}&         |&\brac{8,3}&         |&\brac{8,2}&         |&\brac{8,1}&         |&\brac{8,0}\\
          &    \vdots&          &    \vdots&          &    \vdots&          &   
\vdots&          \\
      \end{tabular}
\caption{An example of the arrangement of pairs called a Bratteli diagram.  The critical lines in this picture correspond to $\beta = 0$, hence $\ell = 2$.} \label{fig:Bratteli}
\end{center}
\end{figure}

When we replace the pairs in the Bratteli diagram by some quantity indexed by $n$ and $p$, we shall also refer to the result as a Bratteli diagram.  In \figref{fig:IsingDims}, we show the first few rows of the Bratteli diagrams obtained by replacing $\brac{n,p}$ by $\dim \RadMod_{n,p}$ and $\dim \IrrMod_{n,p}$ when $\beta = \pm \sqrt{2}$ ($q = \pm e^{\ii \pi / 4}, \pm e^{3 \pi \ii / 4}$).  The features of these tables are similar for other roots of unity.  Notice that the radicals vanish on the critical lines in accordance with \corref{cor:RadCrit}.

\begin{figure}
\begin{center}
\begin{tabular}{*{13}{@{}C@{\hspace{1mm}}}@{}C@{\hspace{5mm}}*{13}{@{}C@{\hspace{1mm}}}}
   &  0&   &\phantom{265}&   &   &   &\phantom{265}&   &   &   &\phantom{265}&   &   &   &  1&   &\phantom{265}&   &   &   &\phantom{265}&   &   &   &\phantom{265}&    \\
  0&   &  0&  |&   &   &   &   &   &   &   &   &   &   &  1&   &  1&  |&   &   &   &   &   &   &   &   &    \\
   &  0&   &  0&   &   &   &   &   &   &   &   &   &   &   &  2&   &  1&   &   &   &   &   &   &   &   &    \\
  0&   &  1&  |&  0&   &   &   &   &   &   &   &   &   &  2&   &  2&  |&  1&   &   &   &   &   &   &   &    \\
   &  1&   &  0&   &  0&   &   &   &   &   &   &   &   &   &  4&   &  4&   &  1&   &   &   &   &   &   &    \\
  1&   &  5&  |&  0&   &  0&  |&   &   &   &   &   &   &  4&   &  4&  |&  5&   &  1&  |&   &   &   &   &    \\
   &  6&   &  0&   &  0&   &  0&   &   &   &   &   &   &   &  8&   & 14&   &  6&   &  1&   &   &   &   &    \\
  6&   & 20&  |&  0&   &  1&  |&  0&   &   &   &   &   &  8&   &  8&  |& 20&   &  6&  |&  1&   &   &   &    \\
   & 26&   &  0&   &  1&   &  0&   &  0&   &   &   &   &   & 16&   & 48&   & 26&   &  8&   &  1&   &   &    \\
 26&   & 74&  |&  1&   &  9&  |&  0&   &  0&  |&   &   & 16&   & 16&  |& 74&   & 26&  |&  9&   &  1&  |&    \\
   &100&   &  0&   & 10&   &  0&   &  0&   &  0&   &   &   & 32&   &165&   &100&   & 44&   & 10&   &  1&    \\
100&   &265&  |& 10&   & 54&  |&  0&   &  1&  |&  0&   & 32&   & 32&  |&265&   &100&  |& 54&   & 10&  |&  1 \\
\phantom{265}&\phantom{265}&\phantom{265}&\vdots&\phantom{265}&\phantom{265}&\phantom{265}&\vdots&\phantom{265}&\phantom{265}&\phantom{265}&\vdots&\phantom{265}&&\phantom{265}&\phantom{265}&\phantom{265}&\vdots&\phantom{265}&\phantom{265}&\phantom{265}&\vdots&\phantom{265}&\phantom{265}&\phantom{265}&\vdots&\phantom{265} \\
\multicolumn{13}{C}{\dim \RadMod_{n,p}} & \multicolumn{13}{C}{\dim \IrrMod_{n,p}} \\
\end{tabular}
\caption{The Bratteli diagrams for the dimensions of the radicals $\RadMod_{n,p}$ and the irreducibles $\IrrMod_{n,p}$ when $\beta = \pm \sqrt{2}$, hence $\ell = 4$.} \label{fig:IsingDims}
\end{center}
\end{figure}

The feature which is of most interest to us here is that the dimensions of the non-critical $\RadMod_{n,p}$ appear to coincide with those of the $\IrrMod_{n,p'}$, where $p'$ is obtained from $p$ by reflecting about the critical line immediately to the right.  This suggests that when the radical $\RadMod_{n,p}$ does not vanish, it is in fact \emph{irreducible}.  Establishing this irreducibility is somewhat difficult and will be the focus of \secref{sec:R=L}.  For now, we content ourselves with demonstrating this observed coincidence of dimensions.
\begin{proposition} \label{prop:DimR=DimL}
Let $q$ be a root of unity and $\brac{n,p}$ be non-critical. Then, $\dim \RadMod_{n,p}$ is equal to $\dim \IrrMod_{n,p + \func{r}{n,p} - \ell}$ if $p + \func{r}{n,p} - \ell\geqslant 0$ and $0$ otherwise.
\end{proposition}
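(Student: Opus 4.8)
The statement is purely about dimensions, so the plan is to show that the two sides obey compatible recursions and then induct on $n$, feeding in \propref{prop:DimR} for $\dim\RadMod_{n,p}$ and \corref{cor:DimL} for $\dim\IrrMod_{n,p}$. Fix the root of unity $q$ and the minimal $\ell$ with $q^{2\ell}=1$, and for a non-critical pair $(n,p)$ write $p^{\star}=p+\func{r}{n,p}-\ell$: this is the column obtained from $p$ by reflecting across the critical line immediately to its right. One computes $n-2p^{\star}+1=\brac{\func{k}{n,p}+2}\ell-\func{r}{n,p}$, so $(n,p^{\star})$ is again non-critical, with $\func{k}{n,p^{\star}}=\func{k}{n,p}+1$ and $\func{r}{n,p^{\star}}=\ell-\func{r}{n,p}$; note also $p^{\star}<p$. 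I would adopt the convention that any symbol carrying a second index outside $\set{0,\ldots,\flr{n/2}}$ denotes the zero module, so that the identity $\dim\NatMod_{m,j}=\dim\RadMod_{m,j}+\dim\IrrMod_{m,j}$ holds for all $m,j$: it is the definition of $\IrrMod_{m,j}$ for in-range $(m,j)$ (and when $(m,j)$ is critical this reads $\dim\NatMod_{m,j}=0+\dim\NatMod_{m,j}$ by \corref{cor:RadCrit}), and is trivial otherwise. The assertion to prove by induction on $n$ is then precisely $\dim\RadMod_{n,p}=\dim\IrrMod_{n,p^{\star}}$ for every non-critical $(n,p)$, with the right side read as $0$ when $p^{\star}<0$.

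\textbf{Boundary cases.} If $p=0$ both sides vanish ($\RadMod_{n,0}=\set 0$, and $p^{\star}=\func{r}{n,0}-\ell<0$ since $\func{r}{n,0}<\ell$). If $n=2p$ the pair $(n-1,p)$ is out of range and the identity follows directly from the initial conditions $\dim\RadMod_{2p-1,p}=0$ and $\dim\IrrMod_{2p-1,p}=0$ of \propref{prop:DimR} and \corref{cor:DimL}, together with the induction hypothesis applied to the single in-range pair $(2p-1,p-1)$. Finally, if $p^{\star}<0$ one checks $\dim\RadMod_{n,p}=0$ by a short auxiliary induction using only the $\RadMod$-recursion of \propref{prop:DimR} (the two terms on its right are a radical at a strictly smaller $n$ with a still-negative reflected index, hence $0$ by that auxiliary induction, plus possibly a $\NatMod$ or $\RadMod$ term that is forced to vanish by the initial conditions). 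This lets me assume henceforth $0<p<n/2$ and $p^{\star}\geqslant 0$.

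\textbf{Inductive step.} Set $r=\func{r}{n,p}$ and record that $\func{r}{n-1,p}=r-1$ unless $r=1$, in which case $(n-1,p)$ is critical, and that $\func{r}{n-1,p-1}=r+1$ unless $r=\ell-1$, in which case $(n-1,p-1)$ is critical. I split on $r$. (i) If $r=\ell-1$, \propref{prop:DimR} gives $\dim\RadMod_{n,p}=\dim\RadMod_{n-1,p}+\dim\NatMod_{n-1,p-1}$; here $(n-1,p-1)$ is critical so $\dim\NatMod_{n-1,p-1}=\dim\IrrMod_{n-1,p-1}$, while $\dim\RadMod_{n-1,p}$ equals $\dim\IrrMod_{n-1,p-2}$ by the induction hypothesis (or $0$ when $\ell=2$, as then $(n-1,p)$ is critical). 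Since $p^{\star}=p-1$ and $\func{r}{n,p-1}=1$, the \corref{cor:DimL} expansion of $\dim\IrrMod_{n,p^{\star}}$ is $\dim\IrrMod_{n-1,p-1}+\dim\IrrMod_{n-1,p-2}$ (or just $\dim\IrrMod_{n-1,p-1}$ when $\ell=2$), matching the left side. (ii) If $r=1<\ell-1$, \propref{prop:DimR} gives $\dim\RadMod_{n,p}=\dim\RadMod_{n-1,p}+\dim\RadMod_{n-1,p-1}$ with $(n-1,p)$ critical (first term $0$); the induction hypothesis applied to $(n-1,p-1)$, which has $\func{r}{n-1,p-1}=2$ and reflected index $p^{\star}$, identifies the second term with $\dim\IrrMod_{n-1,p^{\star}}$, which equals $\dim\IrrMod_{n,p^{\star}}$ because $\func{r}{n,p^{\star}}=\ell-1$. (iii) If $2\leqslant r\leqslant\ell-2$, \propref{prop:DimR} again gives $\dim\RadMod_{n,p}=\dim\RadMod_{n-1,p}+\dim\RadMod_{n-1,p-1}$, and the induction hypothesis identifies these with $\dim\IrrMod_{n-1,p^{\star}-1}$ and $\dim\IrrMod_{n-1,p^{\star}}$ (the reflected indices of $(n-1,p)$ and $(n-1,p-1)$ are $p^{\star}-1$ and $p^{\star}$); their sum is the \corref{cor:DimL} expansion of $\dim\IrrMod_{n,p^{\star}}$, valid since $2\leqslant\func{r}{n,p^{\star}}=\ell-r\leqslant\ell-2$.

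\textbf{Main obstacle.} Each of the three cases is a mechanical verification once the $r$-values at the neighbouring pairs are tabulated, so the genuine work is organisational: choosing the reflection $p\mapsto p^{\star}$ so that it interchanges the ``$r=\ell-1$'' and ``$r=1$'' behaviours of the two recursions, and handling uniformly the degeneracies at $\ell=2$ and the out-of-range indices through the zero-module convention and the identity $\dim\NatMod=\dim\RadMod+\dim\IrrMod$. I expect the case $r=\ell-1$ to be the delicate one: it is the only place where a $\NatMod$ term appears on the $\RadMod$-side, forcing one to split it into radical plus irreducible and to trace the vanishing of the radical part back to the criticality of $(n-1,p-1)$, and it is also where the boundary $n=2p$ interacts non-trivially with the induction.
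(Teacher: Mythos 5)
Your proof is correct and takes essentially the same route as the paper: induction on $n$ fed by the recursions of Proposition~\ref{prop:DimR} and Corollary~\ref{cor:DimL}, split into cases according to whether $\brac{n-1,p}$ and $\brac{n-1,p-1}$ are critical. The paper's four cases based on the criticality of the two neighbours correspond to your three cases on $r = \func{r}{n,p}$ (your case (i) absorbs the paper's cases (2) and (4), the latter being the $\ell = 2$ degeneracy), and you are somewhat more explicit about the boundary conventions for out-of-range indices, but the substance is the same.
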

\noindent We remark that the indices $(n,p)$ of $\RadMod$ and $(n,p + \func{r}{n,p} - \ell)$ of $\IrrMod$ form a \emph{symmetric pair}, a concept introduced in \secref{sec:R=L}.
\begin{proof}
This is clear for $n=1$ as then, $\ell>2$ and the index $p+r(n,p)-\ell=2-\ell$ is negative, so that $\dim \RadMod_{n,p}=0$ as needed. So assume that the proposition is true for $n-1$ and all $p$.  We have to consider four cases for the pairs $\brac{n,p}$, corresponding to whether $\brac{n-1,p}$ and/or $\brac{n-1,p-1}$ lie on critical lines.  The position of the pairs $(n-1,p)$, $(n,p)$ and $(n-1,p-1)$, relative to the critical lines, is pictured below in each case.
\begin{center}
\begin{tabular}{ccc}
$
\begin{smallmatrix}
|\ \ \dots  &         & \dots &        &  \dots\ \ |  \\
            &  (n-1,p) &   &  (n-1,p-1)&    \\ 
|\ \ \dots  &         & (n,p) &        &  \dots\ \ | \\
 & (n+1,p+1) &   &  (n+1,p)&    \\
|\ \  \dots  &         & \dots &        &  \dots\ \ |
\end{smallmatrix}
$
&
\hspace{20mm}
&
$
\begin{smallmatrix}
\dots  &         & \dots &    |    &  \dots   \\
  &  (n-1,p) &   &  (n-1,p-1)&    \\ 
\dots  &         & (n,p) &    |    &  \dots  \\
 & (n+1,p+1) &   &  (n+1,p)&    \\
 \dots  &        & \dots &    |    &  \dots
\end{smallmatrix}
$
\\[8mm]
Case (1) & \hspace{20mm} & Case (2)
\end{tabular}
\end{center}
\begin{center}
\begin{tabular}{ccc}
$
\begin{smallmatrix}
\dots  &    |    & \dots &            &   \dots \\
       & (n-1,p) &       &  (n-1,p-1) & \\
\dots  &    |    & (n,p) &            &   \dots \\
       &(n+1,p+1)&       &  (n+1,p)   & \\
\dots  &    |    & \dots &            &   \dots 
\end{smallmatrix}
$
&
\hspace{20mm}
&
$
\begin{smallmatrix}
\dots & |       &   \dots &  | &\dots    \\
      & (n-1,p) &         &  (n-1,p-1) & \\ 
\dots & |       &  (n,p)  &  | &\dots    \\
      & (n+1,p+1)&        &  (n+1,p)   & \\
\dots & |      &   \dots  &  | &\dots  
\end{smallmatrix}
$
\\[8mm]
Case (3) & \hspace{20mm} & Case (4)
\end{tabular}
\end{center}
\begin{enumerate}
\item Both $\brac{n-1,p}$ and $\brac{n-1,p-1}$ are non-critical, so $\func{r}{n,p} \notin \set{1, \ell - 1, \ell}$.  Then,
\begin{subequations}
\begin{align}
\dim \RadMod_{n-1,p} &= \dim \IrrMod_{n-1,p + \func{r}{n-1,p} - \ell} = \dim \IrrMod_{n-1,p - 1 + \func{r}{n,p} - \ell} \label{eqn:Dim1} \\
\text{and} \qquad \dim \RadMod_{n-1,p-1} &= \dim \IrrMod_{n-1,p-1 + \func{r}{n-1,p-1} - \ell} = \dim \IrrMod_{n-1,p + \func{r}{n,p} - \ell}. \label{eqn:Dim2}
\end{align}
\propref{prop:DimR} now gives $\dim \RadMod_{n,p}$ as the sum of the left-hand sides.  To compute the sum of the right-hand sides, we note that $\func{r}{n,p+\func{r}{n,p}-\ell} = -\func{r}{n,p} \pmod{\ell}$, hence
\corref{cor:DimL} gives $\dim \IrrMod_{n,p + \func{r}{n,p} - \ell}$ for this sum.
\item Only $\brac{n-1,p-1}$ is critical.  Then, $\func{r}{n,p} = \ell - 1$ and \eqref{eqn:Dim1} is valid, but instead of \eqref{eqn:Dim2}, we have
\begin{equation} \label{eqn:Dim3}
\dim \NatMod_{n-1,p-1} = \dim \IrrMod_{n-1,p-1} = \dim \IrrMod_{n-1,p + \func{r}{n,p} - \ell}.
\end{equation}
Now apply \propref{prop:DimR} and \corref{cor:DimL} to the sum of \eqref{eqn:Dim1} and \eqref{eqn:Dim3}.
\item Only $\brac{n-1,p}$ is critical.  Now, $\func{r}{n,p} = 1$, \eqref{eqn:Dim2} is valid, and we use
\begin{equation} \label{eqn:Dim4}
\dim \RadMod_{n-1,p} = 0.
\end{equation}
Adding \eqref{eqn:Dim2} and \eqref{eqn:Dim4} then gives the result.
\item Both $\brac{n-1,p}$ and $\brac{n-1,p-1}$ are critical.  Thus, $\func{r}{n,p} = 1 = \ell - 1$, that is, $\ell = 2$ (and so $\beta = 0$).  The result now follows from adding \eqref{eqn:Dim3} and \eqref{eqn:Dim4}. \qedhere
\end{subequations}
\end{enumerate}
\end{proof}
We remark that there is one (non-trivial) case in which we already know that the radical is irreducible.  This is the content of \propref{prop:V=R=L} which asserts that $\RadMod_{2p,p}$ is irreducible when $\beta = 0$.  Of course, this does not tell us that $\RadMod_{2p,p}$ is isomorphic to $\IrrMod_{2p,p-1}$ (since $\ell=2$ when $\beta = 0$) as \propref{prop:DimR=DimL} would have us believe.

%
%

\section{Induced Modules} \label{sec:Ind}

Recall that in \propref{prop:Restrict}, we studied the result of restricting the action on a $\tl{n}$-module to the action of the subalgebra $\tl{n-1}$ spanned by the $n$-diagrams in which the two $n$-th points are joined.  Restriction has a close relative which constructs instead a $\tl{n+1}$-module from a given $\tl{n}$-module (subject to the analogous inclusion of $\tl{n}$ in $\tl{n+1}$).  This is, of course, the \emph{induced module} construction.  We therefore consider the induced modules
\begin{equation}
\Ind{\NatMod_{n,p}} = \tl{n+1} \otimes_{\tl{n}} \NatMod_{n,p}.
\end{equation}
These are $\tl{n+1}$-modules in which the action is given by left-multiplication:  $a \brac{b \otimes z} = \brac{ab} \otimes z$ for all $a,b \in \tl{n+1}$ and $z \in \NatMod_{n,p}$.  The subtlety of this definition lies in the subscript on the ``$\otimes$'' which informs us that the tensor product is ``permeable'' to elements of $\tl{n}$.  More precisely, this means that $ab \otimes z = a \otimes bz$ for all $a \in \tl{n+1}$, $b \in \tl{n}$ and $z \in \NatMod_{n,p}$ (the elements of $\tl{n}$ are scalars as far as the tensor product is concerned).  Indeed, $\Ind{\NatMod_{n,p}}$ may alternatively be characterised as the quotient of the $\CC$-tensor product $\tl{n+1} \otimes_{\CC} \NatMod_{n,p}$ (with the $\tl{n+1}$-action $a \brac{a' \otimes_{\CC} z} = \brac{aa'} \otimes_{\CC} z$ as before) by the submodule generated by the elements of the form $(ab \otimes_{\CC} z - a \otimes_{\CC} bz)$.  This characterisation will shortly prove useful.

By \propref{prop:JNO}, any monomial in the $\tl{n+1}$-generators $u_i$ can be written in reverse Jones' normal form. In this form, the generator $u_n$ appears at most once to the far left and may only be preceded by a string of the form $u_ru_{r+1}\cdots u_{n-1}$. We conclude that $\tl{n+1}$ is spanned by monomials of the forms $u_r \cdots u_{n-1} u_n U$ and $U'$, where $U$ and $U'$ are monomials in the $\tl{n}$-subalgebra.  Thus, if $D$ denotes a basis of $\NatMod_{n,p}$, then $\Ind{\NatMod_{n,p}}$ is \emph{spanned} by the set
\begin{equation}\label{eq:badSpanningSet}
\set{\ \wun \otimes d \ , \ u_n \otimes d \ , \ u_{n-1} u_n \otimes d \ , \dots, \ u_1 \cdots u_{n-1} u_n \otimes d \st d \in D \ }.
\end{equation}
Note that this spanning set is not usually a basis of $\Ind{\NatMod_{n,p}}$.  For example, if $d$ can be written as $u_{n-1} d'$ for some $d' \in \NatMod_{n,p}$, then
\begin{equation}
u_{n-1} u_n \otimes d = u_{n-1} u_n \otimes u_{n-1} d' = u_{n-1} u_n u_{n-1} \otimes d' = u_{n-1} \otimes d' = \wun \otimes u_{n-1} d' = \wun \otimes d.
\end{equation}
To obtain a basis, we generalise this computation.  As for simple links in an element of $\tl n$, we will say that a link in a link state $d$ is {\em simple} if it joins neighbouring positions $i$ and $i+1$. We then say that $d$ has a simple link at $i$.  Obviously, if $d = u_i d'$, where $d$ and $d'$ are link states, then $d$ will have a simple link at $i$.  The converse is the following:
\begin{lemma} \label{lem:Basis}
If $d$ is an $\brac{n,p}$-link state with $n \geqslant 3$ which has a simple link at $i$, then $d$ can always be expressed as $u_i d'$ with $d'$ another $\brac{n,p}$-link state.  When $n = 2$, such an expression is valid if and only if $\beta \neq 0$.
\end{lemma}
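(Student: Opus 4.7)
The plan is to construct $d'$ explicitly, case by case, from $d$, and then verify both that $d'$ is a planar $\brac{n,p}$-link state and that $u_i d' = d$. Let $\tilde d$ denote $d$ with the simple link at $\brac{i, i+1}$ removed, so that $i$ and $i+1$ become defects of $\tilde d$. Since the cap of $u_i$ automatically supplies the simple link at $\brac{i, i+1}$ in $u_i d'$, and the through-lines pass all other data through, the only freedom is in choosing one extra arc to add to $\tilde d$ so that the cup of $u_i$ reassembles the remainder of $d$.

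I would distinguish three cases based on the local structure of $d$. \emph{Case A}: If $\brac{i, i+1}$ is nested in $d$, let $\brac{c,e}$ be the innermost enclosing link and define $d'$ by replacing the two links $\brac{c,e}$ and $\brac{i, i+1}$ by $\brac{c,i}$ and $\brac{i+1, e}$. \emph{Case B}: If $\brac{i, i+1}$ is not nested but $d$ has a defect at some $k \neq i, i+1$, choose $k$ closest to the simple link (the smallest $k > i+1$, or failing that the largest $k < i$) and add the arc $\brac{i+1, k}$ or $\brac{k, i}$ to $\tilde d$. \emph{Case C}: If $\brac{i, i+1}$ is not nested and $d$ has no defects at all, then $n = 2p \geqslant 4$; for $i \geqslant 2$ the non-nesting forces the partner $l$ of $i-1$ in $d$ to lie in $\brac{1, \ldots, i-2}$, and we replace $\brac{l, i-1}$ and $\brac{i, i+1}$ by $\brac{l, i+1}$ and $\brac{i-1, i}$, while for $i = 1$ the symmetric construction uses the partner of position $3$.

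The equation $u_i d' = d$ is an immediate diagrammatic verification in each case: the cap produces the simple link, and the cup either fuses the two new arcs of $d'$ at $i, i+1$ into the removed arc of $d$ (Cases A and C) or joins a new arc with a defect of $d'$ to restore a defect of $d$ at $k$ (Case B). Planarity of $d'$ follows from three standard facts: defects of $d$ cannot be enclosed by any link of $d$; the innermost choice of $\brac{c,e}$ in Case A prevents any link of $d$ from having one endpoint in $\brac{c+1, \ldots, i-1}$ and the other in $\brac{i+2, \ldots, e-1}$; and in Cases B and C the non-nesting hypothesis combined with the minimality of $k$ (respectively the choice of $l$ as the partner of $i-1$) precludes links of $d$ straddling the new arcs.

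For $n = 2$, the module $\NatMod_{2,1}$ is one-dimensional, spanned by the unique $\brac{2,1}$-link state $d$ (a simple link on $\brac{1,2}$). The direct calculation $u_1 d = \beta d$ shows that $u_1 d' = d$ forces $d' = \beta^{-1} d$, which lies in $\NatMod_{2,1}$ exactly when $\beta \neq 0$. The main obstacle in the proof for $n \geqslant 3$ is Case B with $k > i+2$, where the planarity of $\brac{i+1, k}$ in $d'$ requires ruling out any link of $d$ straddling the interval $\brac{i+1, k}$; this relies essentially on both the minimality of $k$ (which prevents an enclosed defect) and the non-nesting hypothesis (which prevents an enclosing link from the far side).
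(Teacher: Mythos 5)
Your proof is correct, but it takes a considerably more elaborate route than the paper. The paper's proof of this lemma is a single uniform swap construction: $d'$ is obtained from $d$ by exchanging the simple link at positions $i,i+1$ with whatever appears at the adjacent position $i-1$ (or, when $i=1$, at $i+2$). That is, $d'$ has the simple link at $i-1,i$ and has position $i+1$ connected to whatever $i-1$ was connected to in $d$ (or a defect, if $i-1$ was a defect). One immediately checks $u_i d' = d$ and that $d'$ is planar, and the requirement that a neighbouring position exists is precisely $n \geqslant 3$. By contrast, you split into three cases (nested, non-nested with a defect, non-nested without defects) and choose $d'$ by different, non-local rules --- in Case A you use the innermost enclosing link (which need not be adjacent to the simple link), and in Case B you reach for the nearest defect. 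Your planarity verifications, which require combining the non-nesting hypothesis with minimality of $k$ or the innermost choice of $\brac{c,e}$, are more delicate than needed; with the adjacent-swap the planarity of $d'$ follows essentially for free, since the rearrangement is local. That said, all your constructions do produce valid $d'$ with $u_i d' = d$, your planarity arguments are sound (including the observation that defects are never enclosed by links), your case split is exhaustive for $n \geqslant 3$, and your $n=2$ calculation agrees with the paper's. So the proposal is a correct but less economical proof; the main thing worth internalising from the paper's version is that a single local swap with the neighbouring position always works, making the case analysis unnecessary.
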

\begin{proof}
We construct $d'$ explicitly to be the $\brac{n,p}$-link state which is identical to $d$ except that the simple link joining $i$ and $i+1$ is swapped with whatever appears at $i-1$ (or $i+1$).  Schematically,
\begin{equation}
d = \ 
\parbox{10mm}{\begin{center}
\psfrag{i}[][]{$\scriptstyle i$}
\psfrag{im}[][]{$\scriptstyle i-1$}
\psfrag{ip}[][]{$\scriptstyle i+1$}
\includegraphics[height=21mm]{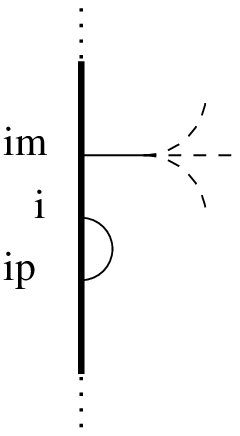}
\end{center}}
\qquad \Rightarrow \qquad d' = \ 
\parbox{11mm}{\begin{center}
\psfrag{i}[][]{$\scriptstyle i$}
\psfrag{ip}[][]{$\scriptstyle i+1$}
\psfrag{im}[][]{$\scriptstyle i-1$}
\includegraphics[height=21mm]{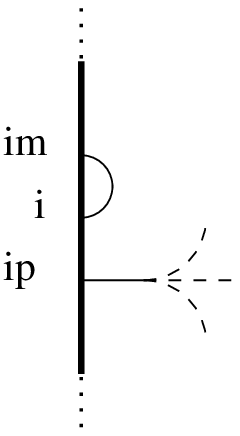}
\end{center}}
\ .
\end{equation}
It is clear that $d'$ satisfies $u_i d' = d$ and likewise clear that this construction requires $n \geqslant 3$.  The only remaining possibility then concerns the unique $\brac{2,1}$-link state $d = \ 
\parbox{2mm}{\begin{center}
\includegraphics[height=4mm]{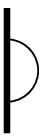}
\end{center}}
\ $.  This satisfies $d = \beta^{-1} u_1 d$ when $\beta \neq 0$, but when $\beta = 0$, $u_1$ acts as the zero operator on $\NatMod_{2,1}$.
\end{proof}

This lemma allows us to reduce the number of elements in the set \eqref{eq:badSpanningSet} while preserving its spanning property.  But before describing this reduction, we introduce some more vocabulary.  The $(n,p)$-link state $d$ will be called {\em $r$-admissible}, $1\leqslant r\leqslant n$, if $d$ has no simple link at $r$ or below, meaning at any $i$ with $i\geqslant r$. It follows that all $(n,p)$-link states are $n$-admissible. Moreover, an element $u_r u_{r+1} \cdots u_n \otimes d \in \Ind{\NatMod_{n,p}}$ of \eqref{eq:badSpanningSet} will be called {\em admissible} if $d$ is $r$-admissible. 

\begin{lemma} \label{lem:betterSpanningSet}
Let $n\geqslant 3$ and $0\leqslant p\leqslant \lfloor n/2\rfloor$. Let $u\in\tl{n+1}$ be a word in the generators and $d$ be an $(n,p)$-link state. Then, there exists an integer $s$, $1\leqslant s\leqslant n+1$, and an element $e\in \NatMod_{n,p}$ that is either $0$ or an $s$-admissible $(n,p)$-link state, such that $u\otimes d=u_su_{s+1}\cdots u_n\otimes e$ in $\Ind{\NatMod_{n,p}}$ (if $s=n+1$, then $u\otimes d=\wun\otimes e$).
\end{lemma}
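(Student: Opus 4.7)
My plan is to combine reverse Jones normal form with an iterated application of \lemref{lem:Basis}. First I would rewrite $u \in \tl{n+1}$ in reverse Jones normal form via \propref{prop:JNO}. The maximal generator $u_n$ of $\tl{n+1}$ appears at most once in such a form and, when present, only at the end of the leftmost factor (the one of maximal ending index). Consequently, up to a power of $\beta$ coming from the reductions $u_i^2 = \beta u_i$, $u$ equals either a word $V \in \tl n$ or a product $u_{s_0} u_{s_0+1} \cdots u_n \cdot V$ with $V \in \tl n$ and $1 \leqslant s_0 \leqslant n$. Since $V \in \tl n$ passes across the tensor, this gives
\[
u \otimes d \;=\; \lambda\, u_{s_0} u_{s_0+1} \cdots u_n \otimes V d,
\]
for some scalar $\lambda$ that is a power of $\beta$ (with the case $V\in\tl n$ encoded by $s_0=n+1$ and empty leading product). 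The element $Vd \in \NatMod_{n,p}$ is either $0$ or a scalar multiple of a single $(n,p)$-link state, since the action of a monomial of $\tl n$ on a basis link state produces such an element.

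Next I would iterate. If $Vd = 0$ or its underlying link state is already $s_0$-admissible, take $s = s_0$ and $e = \lambda Vd$, and we are done. Otherwise $Vd$ is proportional to a link state $d^*$ having a simple link at some $i\in\{s_0, \ldots, n-1\}$; apply \lemref{lem:Basis} to write $d^* = u_i d'$ for a link state $d'$. Substituting and pulling $u_i \in \tl n$ back across the tensor,
\[
u_{s_0} \cdots u_n \otimes d^* \;=\; u_{s_0} u_{s_0+1} \cdots u_n \cdot u_i \otimes d'.
\]
A short calculation in $\tl{n+1}$ using $u_i u_{i+1} u_i = u_i$ (after commuting the rightmost $u_i$ past $u_{i+2}, \ldots, u_n$) together with the commutativity $u_j u_k = u_k u_j$ for $|j-k|\geqslant 2$ gives
\[
u_{s_0} u_{s_0+1} \cdots u_n \cdot u_i \;=\; u_{s_0} u_{s_0+1} \cdots u_i \cdot u_{i+2} u_{i+3}\cdots u_n \;=\; u_{i+2} u_{i+3} \cdots u_n \cdot u_{s_0} u_{s_0+1} \cdots u_i.
\]
Pulling the $\tl n$-factor $u_{s_0} \cdots u_i$ back across the tensor produces an expression of the desired form, but with $s_0$ replaced by $s_1 := i+2 > s_0$ and a new element $e_1 := u_{s_0} u_{s_0+1} \cdots u_i\, d' \in \NatMod_{n,p}$ that is again $0$ or a scalar multiple of a single link state.

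Since $s$ strictly increases by at least $2$ at each step and is bounded above by $n+1$, the iteration terminates in finitely many stages; when it does, the resulting $e$ has no simple link at any position $\geqslant s$, i.e., is $s$-admissible (or is $0$), and all accumulated scalars of $\beta$ are absorbed into $e$ as a final element of $\NatMod_{n,p}$. The main obstacle is bookkeeping: tracking the scalar factors that accumulate through the reductions and the successive actions on link states, together with the combinatorial identity for $u_{s_0}\cdots u_n \cdot u_i$. There is also one delicate boundary case, namely the exception of \lemref{lem:Basis} at $n=2$ and $\beta = 0$, where the decomposition $d = u_1 d'$ may fail; but in that case $u_1$ annihilates $\NatMod_{2,1}$, so any tensor whose link-state factor has a simple link already vanishes and the procedure terminates immediately with $e = 0$.
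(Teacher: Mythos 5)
Your proof is correct. It shares the paper's main ingredients --- reverse Jones' normal form (\propref{prop:JNO}) to extract a leading flight $u_{s_0}\cdots u_n$ from $u$, \lemref{lem:Basis} to absorb a simple link into a left-acting $u_i$, and the identity $u_{s_0}\cdots u_n\cdot u_i = u_{i+2}\cdots u_n\cdot u_{s_0}\cdots u_i$ in $\tl{n+1}$ --- but the two proofs finish differently. The paper chooses $t$ to be the \emph{largest} simple-link position in $u'd$ and argues pictorially that a \emph{single} application of the identity already yields a $(t+2)$-admissible state: since $t$ is maximal, the positions $t+2,\ldots,n$ contain no closed arcs, and left-multiplying $u'd$ by $u_r\cdots u_{t-1}$ merely shifts the block $r,\ldots,t-1$ down two places without creating any simple link at or beyond $t+2$. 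You instead allow any simple-link position $i\geqslant s_0$ and iterate; the leading index $s$ rises by at least $2$ at each step and is bounded above by $n+1$, so the process terminates. Your route replaces the paper's careful choice of $t$ and the accompanying pictorial argument with an elementary termination argument for which any choice of simple link works; both are valid, and yours is arguably more robust since no maximality is needed. Two small remarks: after the reductions, $e$ emerges as a power of $\beta$ times an admissible link state rather than a bare link state --- a looseness the paper's proof shares and which is harmless for the spanning argument where the lemma is used --- and your caveat about the $n=2$, $\beta=0$ exception to \lemref{lem:Basis} is unnecessary here, since the lemma hypothesis already requires $n\geqslant 3$.
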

\begin{proof}
We suppose that the word $u$ is written in reverse Jones' normal form. If $u$ does not contain $u_n$, then $u\otimes d=\wun\otimes ud$ and the form suggested is obtained by choosing $s=n+1$ and $e=ud$. Suppose now that $u$ is $u_ru_{r+1}\cdots u_nu'$ where $u'$ is a word of $\tl{n}$ and therefore $u\otimes d=u_ru_{r+1}\cdots u_n\otimes u'd$. If the element $u'd$ is zero in $\NatMod_{n,p}$, then again $u_ru_{r+1}\cdots u_n\otimes e$ is of the desired form with $e=0$. If $u'd$ is $r$-admissible, take $e=u'd$.

The only remaining case is when $u'd$ is not $r$-admissible. This means that there is a simple link at $r$ or below. Let $t$ be the position of the lowest simple link. Clearly $r\leqslant t\leqslant n-1$ and thus $u_t\in\tl{n}$. Because $n\geqslant 3$, \lemref{lem:Basis} ensures the existence of an $(n,p)$-link state $f$ such that $u_tf=u'd$. Then,
\begin{align}
u_ru_{r+1}\cdots u_n\otimes u'd&=u_r\cdots u_n\otimes u_tf\notag\\
&=u_r\cdots u_{t-1}u_tu_{t+1}u_tu_{t+2}\cdots u_n\otimes f\notag\\
&=u_r\cdots u_{t-1}u_tu_{t+2}\cdots u_n\otimes f\notag\\
&=u_{t+2}\cdots u_n\otimes u_r\cdots u_t f\notag\\
&=u_{t+2}\cdots u_n\otimes u_r\cdots u_{t-1} u'd.
\end{align}
(The two extreme values of $t$ lead to $u_{r+2}\cdots u_n\otimes u'd$ for $t=r$ and $\wun\otimes u_r\cdots u_{n-2}u'd$ for $t=n-1$.) The configurations of $u'd$ and of $u_r\cdots u_{t-1} u'd$ are shown in the following diagrams:
\begin{equation*}
\text{If\ \ }u'd\ =\parbox{9mm}{\begin{center}
\psfrag{r}[][]{$\scriptstyle r$}
\psfrag{t}[][]{$\scriptstyle t$}
\includegraphics[height=28mm]{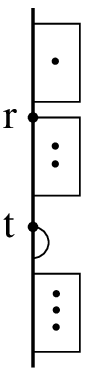}
\end{center}}\ ,\quad \text{then\ \ }u_r\cdots u_{t-1}u'd\ =
\parbox{13mm}{\begin{center}
\psfrag{r}[][]{$\scriptstyle r$}
\psfrag{t}[][]{$\scriptstyle t$}
\includegraphics[height=28mm]{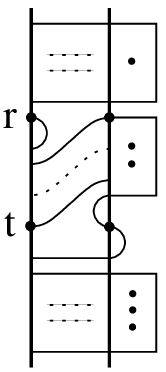}
\end{center}}\ =
\parbox{13mm}{\begin{center}
\psfrag{r}[][]{$\scriptstyle r$}
\psfrag{t+2}[][]{$\scriptstyle t+2$}
\includegraphics[height=28mm]{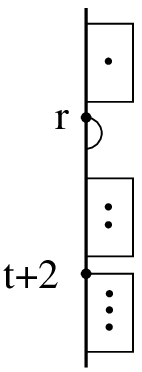}
\end{center}}\ .
\end{equation*}
Note that in the lower box of $u'd$, marked by three dots, all positions are either defects or linked with positions in the two top boxes, since the lowest simple link is at $t$. (The two top boxes may have defects, links within and between them and, as just said, links with the bottom box.) The action of $u_r\cdots u_{t-1}$ is seen to exchange the simple link at $t$ and the middle box. Since, after this exchange, the first position in the lower box is $t+2$, then the resulting link state is $(t+2)$-admissible. Clearly this is of the desired form if we set $s=t+2$ (and then $s\leqslant n+1$) and $e$ is taken to be $u_r\cdots u_{t-1} u'd$.
\end{proof}

Let $\mathcal S_{n,p}$ be the set consisting of all elements $\wun\otimes d$ for any $(n,p)$-link state $d$ and of elements $u_ru_{r+1}\cdots u_n\otimes d$ for all $1\leqslant r\leqslant n$ and all $r$-admissible $(n,p)$-link states $d$. Due to the previous lemma, $\mathcal S_{n,p}$ is a spanning set of $\Ind{\NatMod_{n,p}}$ for $n\geqslant 3$. (One can check that $\mathcal S_{2,p}$ spans $\Ind{\NatMod_{2,p}}$ except when $p=1$ and $\beta=0$.) As an example,
\begin{equation}
\mathcal S_{3,1}=\big\{\wun\otimes \parbox{2mm}{\begin{center}
\includegraphics[height=5mm]{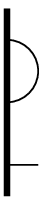}
\end{center}}, \quad
\wun\otimes \parbox{2mm}{\begin{center}
\includegraphics[height=5mm]{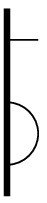}
\end{center}},\quad
u_3\otimes \parbox{2mm}{\begin{center}
\includegraphics[height=5mm]{tl-link-31a}
\end{center}},\quad
u_3\otimes \parbox{2mm}{\begin{center}
\includegraphics[height=5mm]{tl-link-31b}
\end{center}},\quad
u_2u_3\otimes\parbox{2mm}{\begin{center}
\includegraphics[height=5mm]{tl-link-31a}
\end{center}}
\big\}
\end{equation}
and $\abs{\mathcal S_{3,1}}=5$. Our aim is to show that $\mathcal S_{n,p}$ is actually a basis of $\Ind{\NatMod_{n,p}}$. For this, we first introduce a linear map $d \mapsto \underline{d}$ between the vector spaces underlying $\NatMod_{n,p}$ and $\NatMod_{n+2,p+1}$.  This is defined on the basis of $\brac{n,p}$-link states by adding two points to each basis element and a simple link from $n+1$ to $n+2$:
\begin{equation}
\underline{d}=
\parbox{13mm}{\begin{center}
\psfrag{z}[][]{$\textstyle d$}
\psfrag{r}[][]{$\scriptstyle n+1$}
\includegraphics[height=20mm]{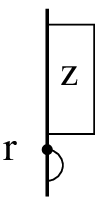}
\end{center}}.
\end{equation}
We now define
\begin{equation}
\Phi \colon \Ind{\NatMod_{n,p}} \rightarrow \Res{\NatMod_{n+2,p+1}}, \qquad \Phi(u\otimes d)=u\underline{d} \qquad \text{($u\in\tl{n+1}$, $d\in\NatMod_{n,p}$).}
\end{equation}

Our first concern is to check whether $\Phi$ is well-defined, that is, whether the images $\Phi(u\otimes d)$ and $\Phi(v\otimes e)$ are equal when $u\otimes d=v\otimes e$ for $u,v\in\tl{n+1}$ and $d,e\in\NatMod_{n,p}$.  From the characterisation of induced modules as quotients of the $\CC$-tensor product, it follows that $\Phi$ will be well-defined if $\Phi(uu'\otimes d) = \Phi(u\otimes u'd)$ for all $u\in\tl{n+1}$, $u'\in\tl{n}$ and $d\in\NatMod_{n,p}$.  But, this follows immediately from the associativity of the $\tl{n+1}$-action on $\Res{\NatMod_{n+2,p+1}}$:
\begin{equation}
\Phi(uu'\otimes d)=(uu')\underline{d}=u(u'\underline{d})=u(\underline{u'd})=\Phi(u\otimes u'd).
\end{equation}
We remark that the $u'$ in $(u'\underline{d})$ should be interpreted as the image of $u' \in \tl{n}$ in $\tl{n+2}$ under the inclusion which adds two defects at the bottom.

The associativity of the $\tl{n+1}$-action also makes $\Phi$ a homomorphism of $\tl{n+1}$-modules (here is where the restriction is necessary):
\begin{equation}
u'\Phi(u\otimes d)=u'(u\underline{d})=(u'u)\underline{d}=\Phi(u'u\otimes d) \qquad \text{for all $u,u' \in \tl{n+1}$, $d \in \NatMod_{n,p}$.}
\end{equation}
To prove that it is actually an isomorphism does not require much more effort. 
\begin{proposition}\label{prop:SisTheBasis}\ 

\begin{enumerate}[label=\textup{(\roman*)}]
\item For all $n\geqslant 1$, $0\leqslant p\leqslant \lfloor n/2\rfloor$ and $\beta\in\mathbb C$, the set $\mathcal S_{n,p}$ is a basis of $\Ind{\NatMod_{n,p}}$, except for $\Ind{\NatMod_{2,1}}$ at $\beta=0$ which has instead $\bigl\{ \wun \otimes 
\parbox{2mm}{\begin{center}
\includegraphics[height=6mm]{tl-link-21}
\end{center}}
\ , u_2 \otimes 
\parbox{2mm}{\begin{center}
\includegraphics[height=6mm]{tl-link-21}
\end{center}}
\ , u_1 u_2 \otimes 
\parbox{2mm}{\begin{center}
\includegraphics[height=6mm]{tl-link-21}
\end{center}}
\ \bigr\}$ as a basis. \label{it:Basis1}
\item When $(n,p)\neq (2,1)$ or $\beta\neq 0$, the $\tl{n+1}$-modules $\Ind{\NatMod_{n,p}}$ and $\Res{\NatMod_{n+2,p+1}}$ are isomorphic. \label{it:Basis2}
\end{enumerate}
\end{proposition}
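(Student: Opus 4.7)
The plan is to prove (i) and (ii) jointly by showing that the homomorphism $\Phi$ restricts to a bijection from $\mathcal{S}_{n,p}$ onto the canonical basis of $(n+2,p+1)$-link states in $\Res{\NatMod_{n+2,p+1}}$. Since $\mathcal{S}_{n,p}$ already spans $\Ind{\NatMod_{n,p}}$ by \lemref{lem:betterSpanningSet}, establishing such a bijection will simultaneously force $\mathcal{S}_{n,p}$ to be linearly independent (hence a basis) and $\Phi$ to be an isomorphism.

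First I would compute $\Phi$ explicitly on each element of $\mathcal{S}_{n,p}$. The image $\Phi(\wun \otimes d) = \underline{d}$ has a simple link at $(n+1,n+2)$ (in addition to whatever simple links are inherited from $d$ at positions $\leqslant n-1$). For the remaining elements, iterated application of the diagrammatic exchange recorded in \eqref{eq:multByUi} shows that $u_r u_{r+1} \cdots u_n \underline{d}$ is obtained from $\underline{d}$ by sliding the simple link stepwise from $(n+1,n+2)$ down to $(r,r+1)$, while the content of $d$ at positions $r, r+1, \ldots, n$ gets pushed to positions $r+2, r+3, \ldots, n+2$ of the image; positions $1, \ldots, r-1$ of $d$ remain undisturbed. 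Since $d$ is $r$-admissible (no simple link at any position $\geqslant r$), the resulting $(n+2,p+1)$-link state has no simple link at positions $\geqslant r+2$, so its lowest simple link sits exactly at $(r, r+1)$.

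Conversely, every $(n+2,p+1)$-link state $e$ possesses at least one simple link, because a non-crossing matching with at least one link always contains an innermost pair of adjacent linked points. Let $(r,r+1)$ denote the lowest such simple link of $e$, with $r \leqslant n+1$. If $r = n+1$, erasing the last two positions gives a unique $(n,p)$-link state $d$ with $e = \underline{d}$. If $r \leqslant n$, removing the simple link at $(r,r+1)$ and shifting positions $r+2, \ldots, n+2$ of $e$ down by two yields a unique $r$-admissible $(n,p)$-link state $d$ satisfying $u_r u_{r+1} \cdots u_n \underline{d} = e$. This inverse construction proves that $\Phi$ maps $\mathcal{S}_{n,p}$ bijectively onto the link-state basis of $\Res{\NatMod_{n+2,p+1}}$, completing the proof of both (i) and (ii) away from the exceptional case.

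For the exceptional pair $(n,p) = (2,1)$ at $\beta = 0$, \lemref{lem:betterSpanningSet} does not apply, and in fact $\Ind{\NatMod_{2,1}}$ and $\Res{\NatMod_{4,2}}$ turn out to have different dimensions, so no such isomorphism exists. A direct check in $\tl{3}$ suffices to verify that $\wun \otimes d$, $u_2 \otimes d$ and $u_1 u_2 \otimes d$ are linearly independent and span $\Ind{\NatMod_{2,1}}$; once one notes that $u_1 \otimes d = \wun \otimes u_1 d = 0$ and similarly $u_2 u_1 \otimes d = 0$, the enumeration is straightforward. The principal obstacle in the general case is the careful diagrammatic bookkeeping for the action of $u_r u_{r+1} \cdots u_n$ on $\underline{d}$; once the ``lowest simple link'' invariant is correctly identified, linear independence of $\mathcal{S}_{n,p}$ and the isomorphism statement follow essentially formally.
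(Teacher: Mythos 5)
Your proposal follows the same strategy as the paper's proof: both use the explicit homomorphism $\Phi$, both identify the position of the lowest simple link of an $(n+2,p+1)$-link state as the key invariant that matches $\mathcal{S}_{n,p}$ bijectively with the link-state basis of $\Res{\NatMod_{n+2,p+1}}$, and both handle $(2,1)$ at $\beta=0$ by a direct computation. The argument is correct and essentially the paper's.
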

\begin{proof} The basis for $\Ind{\NatMod_{2,1}}$ is obtained by direct computation. For the general case, we show that the homomorphism $\Phi$ is both surjective and injective. The two statements \ref{it:Basis1} and \ref{it:Basis2} are clear consequences.

Since the $(n+2,p+1)$-link states define a basis of $\Res{\NatMod_{n+2,p+1}}$, the surjectivity of $\Phi$ will follow if every $\brac{n+2,p+1}$-link state has a preimage in $\mathcal{S}_{n,p}$. Suppose that $e$ is such a link state and note that $e$ has at least one simple link. To construct $u\otimes d\in \mathcal S_{n,p}$ such that $\Phi(u\otimes d)=e$, let $r$ be the position of the lowest simple link in $e$. Denote by $d$ the $(n,p)$-link state obtained from $e$ by deleting this lowest simple link. Since there are no simple links of $e$ below $r$, the link $d$ is $r$-admissible and therefore $u_r u_{r+1} \cdots u_n\otimes d=u\otimes d\in\mathcal{S}_{n,p}$. (Note that if $r=n+1$, then $u_r u_{r+1} \cdots u_n$ is understood to be $\wun$.) It is now easy to verify that $\Phi(u_ru_{r+1}\cdots u_n\otimes d)=e$:
\begin{gather}
\text{If} \qquad 
e = \ 
\parbox{11mm}{\begin{center}
\psfrag{r}[][]{$\scriptstyle r$}
\psfrag{s}[][]{$\scriptstyle r+2$}
\psfrag{t}[][]{$\scriptstyle n+2$}
\includegraphics[height=24mm]{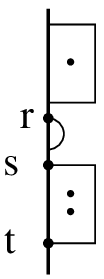}
\end{center}}
, \qquad \text{then} \qquad d = \ 
\parbox{11mm}{\begin{center}
\psfrag{r}[][]{$\scriptstyle r$}
\psfrag{t}[][]{$\scriptstyle n$}
\includegraphics[height=18mm]{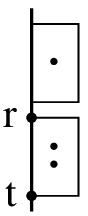}
\end{center}}
\notag \\
\text{and} \qquad \Phi (u_r u_{r+1} \cdots u_n\otimes d) = u_r u_{r+1} \cdots u_n\underline{d} = \ 
\parbox{15mm}{\begin{center}
\psfrag{r}[][]{$\scriptstyle r$}
\psfrag{s}[][]{$\scriptstyle n$}
\psfrag{t}[][]{$\scriptstyle n+2$}
\includegraphics[height=24mm]{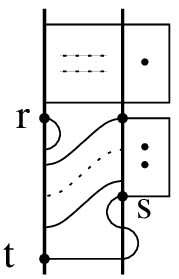}
\end{center}}
\ = \ 
\parbox{11mm}{\begin{center}
\psfrag{r}[][]{$\scriptstyle r$}
\psfrag{s}[][]{$\scriptstyle r+2$}
\psfrag{t}[][]{$\scriptstyle n+2$}
\includegraphics[height=24mm]{tl-F62}
\end{center}}
\ = e.
\end{gather}
Therefore, every $(n+2,p+1)$-link state has a preimage in $\mathcal S_{n,p}$ and $\Phi$ is surjective.

The map $\Phi$ sends every element of $\mathcal S_{n,p}$ to an $\brac{n+2,p+1}$-link state. Since the latter are linearly independent, the injectivity of $\Phi$ will be established if distinct elements of $\mathcal S_{n,p}$ have distinct images. Suppose then that $w_1=u_r u_{r+1}\cdots u_n\otimes d$ and $w_2=u_s u_{s+1}\cdots u_n\otimes d'$ are elements of $\mathcal S_{n,p}$ with the same image. The link state $d$, being $r$-admissible, can be represented as above by two blocks, that marked by two dots being devoid of simple links. The image $\Phi(w_1)$ is then represented by the above diagram for $e$ whose lowest simple link is at $r$. Similarly, the lowest simple link of $\Phi(w_2)$ must be at $s$. If $\Phi(w_1)=\Phi(w_2)$, then $r$ and $s$ must be equal. But then, the two link states $\Phi(w_1)$ and $\Phi(w_2)$ will coincide if and only if the upper boxes (marked by one dot) of $d$ and $d'$ coincide and similarly for their lower boxes (marked by two dots). This forces the original $d$ and $e$ to be equal and thus $w_1=w_2$.
\end{proof}
\noindent For completeness, we mention that as $\NatMod_{2,1} \cong \NatMod_{2,0}$ when $\beta = 0$, it follows that $\Ind{\NatMod_{2,1}} \cong \NatMod_{3,1} \oplus \NatMod_{3,0}$.  However, \propref{prop:Restrict} gives $\Res{\NatMod_{4,2}} \cong \NatMod_{3,1}$.

Because restricting a module does not change its dimension, the dimension of $\Ind{\NatMod_{n,p}}$ is that of $\NatMod_{n+2,p+1}$.  We therefore obtain:
\begin{corollary}\label{cor:inducedDimension}
The dimension of the induced module $\Ind{\NatMod_{n,p}}$ is
\begin{equation}
\dim \Ind{\NatMod_{n,p}} = 
\begin{cases}
3 & \text{if $\brac{n,p} = \brac{2,1}$ and $\beta = 0$,} \\
d_{n+2,p+1} & \text{otherwise.}
\end{cases}
\end{equation}
\end{corollary}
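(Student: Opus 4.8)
The plan is to deduce the dimension directly from \propref{prop:SisTheBasis}, so there is essentially no new work to do. In the generic case, when $\brac{n,p} \neq \brac{2,1}$ or $\beta \neq 0$, part \ref{it:Basis2} of that proposition supplies an isomorphism of $\tl{n+1}$-modules $\Ind{\NatMod_{n,p}} \cong \Res{\NatMod_{n+2,p+1}}$. Since restricting a module to a subalgebra leaves the underlying vector space untouched, I would conclude
\[
\dim \Ind{\NatMod_{n,p}} = \dim \Res{\NatMod_{n+2,p+1}} = \dim \NatMod_{n+2,p+1} = d_{n+2,p+1},
\]
where the last equality is the fact (recorded shortly after \eqnref{eqn:DefNatMod}) that $\NatMod_{n+2,p+1}$ has the $\brac{n+2,p+1}$-link states as a basis, of which there are $d_{n+2,p+1}$ by \eqnref{eqn:DimV}. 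An equivalent route is to count directly: part \ref{it:Basis1} of \propref{prop:SisTheBasis} asserts that $\mathcal S_{n,p}$ is a basis of $\Ind{\NatMod_{n,p}}$, and the proof of that proposition exhibits $\Phi$ as a bijection from $\mathcal S_{n,p}$ onto the set of $\brac{n+2,p+1}$-link states, whence $\abs{\mathcal S_{n,p}} = d_{n+2,p+1}$.

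For the exceptional pair $\brac{n,p} = \brac{2,1}$ with $\beta = 0$, part \ref{it:Basis1} of \propref{prop:SisTheBasis} instead gives an explicit three-element basis of $\Ind{\NatMod_{2,1}}$ — namely $\set{\wun \otimes d,\ u_2 \otimes d,\ u_1 u_2 \otimes d}$ with $d$ the unique $\brac{2,1}$-link state — so that $\dim \Ind{\NatMod_{2,1}} = 3$ here.

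There is no real obstacle, since all the substance resides in \propref{prop:SisTheBasis} (ultimately in the surjectivity and injectivity of $\Phi$). The only point requiring a moment's care is not to overlook that the exceptional case changes the dimension and not merely the choice of basis: here $d_{n+2,p+1} = d_{4,2} = 2$ whereas $\dim \Ind{\NatMod_{2,1}} = 3$, in agreement with the remark that $\Ind{\NatMod_{2,1}} \cong \NatMod_{3,1} \oplus \NatMod_{3,0}$ when $\beta = 0$.
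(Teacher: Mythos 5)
Your argument is exactly the paper's: the dimension count follows immediately from \propref{prop:SisTheBasis}\ref{it:Basis2} together with the observation that restriction preserves dimension, with the exceptional $\brac{2,1}$, $\beta=0$ case read off from the explicit basis in \ref{it:Basis1}. Nothing to add.
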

\noindent Moreover, the structure of induced modules now follows, using the isomorphism $\Phi$, from \propref{prop:Restrict} and \corref{cor:ResSplit} which describe the structure of restricted ones.
\begin{corollary} \label{cor:Induct}
When $\brac{n,p} \neq \brac{2,1}$ or $\beta \neq 0$, the sequence
\begin{equation} \label{SES:Ind}
\dses{\NatMod_{n+1,p+1}}{}{\Ind{\NatMod_{n,p}}}{}{\NatMod_{n+1,p}}
\end{equation}
is exact.
\end{corollary}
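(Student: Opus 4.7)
The plan is to transport the short exact sequence of \propref{prop:Restrict} across the isomorphism $\Phi$ established in \propref{prop:SisTheBasis}. Explicitly, since the hypothesis $(n,p)\neq(2,1)$ or $\beta\neq 0$ is precisely the situation in which \propref{prop:SisTheBasis}\ref{it:Basis2} holds, we have an isomorphism of $\tl{n+1}$-modules
\begin{equation}
\Phi \colon \Ind{\NatMod_{n,p}} \longrightarrow \Res{\NatMod_{n+2,p+1}},
\end{equation}
where the restriction on the right is from $\tl{n+2}$ to its subalgebra $\tl{n+1}$ (the one that fixes the two bottom-most points added by $d\mapsto\underline{d}$).

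Next, I would invoke \propref{prop:Restrict} with the shift $n\mapsto n+2$, $p\mapsto p+1$, applied to the chain $\tl{n+1}\subset\tl{n+2}$. This yields the short exact sequence of $\tl{n+1}$-modules
\begin{equation}
\dses{\NatMod_{n+1,p+1}}{}{\Res{\NatMod_{n+2,p+1}}}{}{\NatMod_{n+1,p}}.
\end{equation}
Composing the inclusion $\NatMod_{n+1,p+1}\hookrightarrow \Res{\NatMod_{n+2,p+1}}$ with $\Phi^{-1}$, and the surjection $\Res{\NatMod_{n+2,p+1}}\twoheadrightarrow \NatMod_{n+1,p}$ with $\Phi$, produces the desired sequence
\begin{equation}
\dses{\NatMod_{n+1,p+1}}{}{\Ind{\NatMod_{n,p}}}{}{\NatMod_{n+1,p}},
\end{equation}
whose exactness is automatic because $\Phi$ is an isomorphism.

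There is nothing hard here once the isomorphism $\Phi$ is in hand; the only subtlety is to verify that the algebra $\tl{n+1}$ with respect to which $\Phi$ intertwines the actions is indeed the same $\tl{n+1}\subset\tl{n+2}$ used to restrict $\NatMod_{n+2,p+1}$. This is built into the definition of $\Phi$, since $d\mapsto\underline{d}$ adjoins two new points at positions $n+1,n+2$ joined by a simple link, and the generators $u_1,\dots,u_n$ of $\tl{n+1}$ act only on the top $n+1$ points, commuting past this simple link. The only point requiring care is the excluded case $(n,p)=(2,1)$, $\beta=0$: there $\Phi$ is not an isomorphism (one checks $\dim\Ind{\NatMod_{2,1}}=3$ whereas $\dim\Res{\NatMod_{4,2}}=2$), and the conclusion of the corollary genuinely fails, consistent with the remark preceding \corref{cor:inducedDimension}.
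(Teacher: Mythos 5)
Your proof is correct and takes the same route the paper intends: transport the short exact sequence of \propref{prop:Restrict} (with $n\mapsto n+2$, $p\mapsto p+1$) across the $\tl{n+1}$-isomorphism $\Phi$ of \propref{prop:SisTheBasis}\ref{it:Basis2}. The paper records this only as a one-line remark after \propref{prop:SisTheBasis}; you have spelled out the details, including the compatibility of the two copies of $\tl{n+1}\subset\tl{n+2}$ and the failure of the excluded case, all accurately.
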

\begin{corollary} \label{cor:IndSplit}
When $\brac{n,p}$ is non-critical, the exact sequence \eqref{SES:Ind} splits, so we have (for $\brac{n,p} \neq \brac{2,1}$ or $\beta \neq 0$)
\begin{equation}
\Ind{\NatMod_{n,p}} \cong \NatMod_{n+1,p+1} \oplus \NatMod_{n+1,p} \qquad \text{(as $\tl{n+1}$-modules).}
\end{equation}
\end{corollary}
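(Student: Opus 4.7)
My approach is to deduce this splitting directly from the analogous result for restriction, Corollary \ref{cor:ResSplit}, by transporting the problem across the isomorphism $\Phi$ established in Proposition \ref{prop:SisTheBasis}. Since the hypothesis of the corollary explicitly excludes the exceptional case $(n,p) = (2,1)$ with $\beta = 0$, Proposition \ref{prop:SisTheBasis}\,\ref{it:Basis2} furnishes a $\tl{n+1}$-module isomorphism $\Phi \colon \Ind{\NatMod_{n,p}} \longrightarrow \Res{\NatMod_{n+2,p+1}}$. The plan is to apply the restriction splitting to the right-hand side and pull the decomposition back through $\Phi$.

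The key observation that lets us invoke Corollary \ref{cor:ResSplit} on $\Res{\NatMod_{n+2,p+1}}$ is that
\begin{equation}
(n+2) - 2(p+1) + 1 = n - 2p + 1,
\end{equation}
so the criticality parameter of the pair $(n+2,p+1)$ coincides with that of $(n,p)$; in particular, $(n,p)$ being non-critical is exactly the hypothesis needed for the restriction of $\NatMod_{n+2,p+1}$ to $\tl{n+1}$ to split. Corollary \ref{cor:ResSplit} therefore yields
\begin{equation}
\Res{\NatMod_{n+2,p+1}} \cong \NatMod_{n+1,p+1} \oplus \NatMod_{n+1,p}
\end{equation}
as $\tl{n+1}$-modules, and $\Phi^{-1}$ transports this to the claimed decomposition of $\Ind{\NatMod_{n,p}}$.

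To upgrade this to the assertion that the specific exact sequence \eqref{SES:Ind} splits, I would verify that $\Phi$ intertwines \eqref{SES:Ind} with the restriction sequence of Proposition \ref{prop:Restrict} applied to $\NatMod_{n+2,p+1}$. This is essentially automatic since Corollary \ref{cor:Induct} was itself obtained by pulling the restriction sequence back through $\Phi$; nevertheless one should confirm the bookkeeping by tracing an $(n+1,p+1)$-link state $e$ through the explicit recipe $\Phi(u \otimes d) = u \underline{d}$. Concretely, the submodule $\NatMod_{n+1,p+1} \hookrightarrow \Ind{\NatMod_{n,p}}$ appearing in \eqref{SES:Ind} should correspond under $\Phi$ to the defect-appending embedding $\NatMod_{n+1,p+1} \hookrightarrow \Res{\NatMod_{n+2,p+1}}$ of Proposition \ref{prop:Restrict}, and the quotient identification by $\NatMod_{n+1,p}$ is compatible in the same sense. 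Granted this compatibility, the splitting of the restriction sequence descends directly to a splitting of \eqref{SES:Ind}. The main obstacle is thus not algebraic but purely notational: making the two exact sequences match up on the nose; the only genuinely new input beyond Propositions~\ref{prop:Restrict} and \ref{prop:SisTheBasis} is the simple arithmetic identity relating criticality of $(n,p)$ and $(n+2,p+1)$.
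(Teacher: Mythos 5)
Your proposal is correct and matches the paper's own (very terse) reasoning: immediately before stating Corollaries~\ref{cor:Induct} and \ref{cor:IndSplit}, the paper remarks that the structure of induced modules follows from Proposition~\ref{prop:Restrict} and Corollary~\ref{cor:ResSplit} via the isomorphism $\Phi$, and the arithmetic $(n+2)-2(p+1)+1 = n-2p+1$ you invoke is exactly what makes the criticality hypotheses line up. Your extra care about intertwining the two exact sequences is sound (and the paper itself traces the embedding $\alpha$ through $\Phi^{-1}$ in the paragraph following the corollary), though one could alternatively appeal to the indecomposability and non-isomorphism of $\NatMod_{n+1,p+1}$ and $\NatMod_{n+1,p}$, together with the Krull--Schmidt theorem or the distinct $F_{n+1}$-eigenvalues, to see that the abstract decomposition forces the splitting of the sequence.
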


Whether or not the exact sequence \eqref{SES:Ind} splits, a submodule isomorphic to $\NatMod_{n+1,p+1}$ is easily identified in $\Ind{\NatMod_{n,p}}$. In $\Res{\NatMod_{n+2,p+1}}$, such a submodule is spanned by those $\brac{n+2,p+1}$-link states which have a defect at position $n+2$ (see the proof of \propref{prop:Restrict}). Their images under $\Phi^{-1}$ have the form $u_r u_{r+1}\cdots u_n\otimes d$, with $r\leqslant n$ and $d$ an $\brac{n,p}$-link state with a defect at $n$. Then the injection $\NatMod_{n+1,p+1} \rightarrow \Ind{\NatMod_{n,p}}$, call it $\alpha$, of the exact sequence \eqref{SES:Ind} sends $\brac{n+1,p+1}$-link states to elements of $\mathcal S_{n,p}$ as follows: If $d$ is an $\brac{n+1,p+1}$-link state, then $\alpha(d) = u_r u_{r+1}\cdots u_n\otimes d'$, where the lowest simple link in $d$ is at $r$ and $d'$ is the $(n,p)$-link state obtained from $d$ by removing this simple link and adding a defect at position $n$.

The exact sequences in \eqnDref{SES:Res}{SES:Ind} can be read off the Bratteli diagram quite easily.  For restriction, the $\tl{n-1}$-modules $\NatMod_{n-1,p}$ and $\NatMod_{n-1,p-1}$ appearing in the exact sequence \eqref{SES:Res} for $\Res{\NatMod_{n,p}}$ correspond to the entries immediately above, and to the left and right, respectively, of the entry corresponding to $\NatMod_{n,p}$.  For induction, we must instead look immediately below, and again to the left and right, to find the entries indicating the constituents of the exact sequence \eqref{SES:Ind}.  We have learned that these exact sequences split when the module being restricted or induced is non-critical.  The question of whether the sequences split when the module is critical will not be resolved until \secref{sec:Proj}.

We conclude with an example showing that the induced module $\Ind{\NatMod_{n,p}}$ may be different to the direct sum $\NatMod_{n+1,p+1} \oplus \NatMod_{n+1,p}$ and even to the quotient $\LinkMod_{n+1,p} / \LinkMod_{n+1,p+2}$.  These three $\tl{n+1}$-modules share the same exact sequence \eqref{SES:Ind}, the latter because \eqnDref{eqn:Filtration}{eqn:DefNatMod} give
\begin{equation}
\NatMod_{n+1,p+1} = \frac{\LinkMod_{n+1,p+1}}{\LinkMod_{n+1,p+2}} \subseteq \frac{\LinkMod_{n+1,p}}{\LinkMod_{n+1,p+2}} \qquad \text{and} \qquad \frac{\LinkMod_{n+1,p} / \LinkMod_{n+1,p+2}}{\LinkMod_{n+1,p+1} / \LinkMod_{n+1,p+2}} \cong \frac{\LinkMod_{n+1,p}}{\LinkMod_{n+1,p+1}} = \NatMod_{n+1,p}.
\end{equation}
For our example, we take $q = e^{\ii \pi / 3}$, so that $\beta=1$ and $\brac{n,p} = \brac{2,0}$ is critical.  Then, one can check that the central element $F_3\in\tl{3}$, introduced in \appref{app:Casimir}, is represented 
on $\Ind{\NatMod_{2,0}}$ and $\LinkMod_3 = \LinkMod_{3,0} / \LinkMod_{3,2}$, with respective (ordered) bases $\set{ u_2 \otimes 
\parbox{2mm}{\begin{center}
\includegraphics[height=4mm]{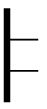}
\end{center}}
\ , u_1 u_2 \otimes 
\parbox{2mm}{\begin{center}
\includegraphics[height=4mm]{tl-link-20}
\end{center}}\ ,
\wun \otimes 
\parbox{2mm}{\begin{center}
\includegraphics[height=4mm]{tl-link-20}
\end{center}}
}$ and $\set{
\parbox{2mm}{\begin{center}
\includegraphics[height=5mm]{tl-link-31b}
\end{center}}
\ , 
\parbox{2mm}{\begin{center}
\includegraphics[height=5mm]{tl-link-31a}
\end{center}}
\ ,
\parbox{2mm}{\begin{center}
\includegraphics[height=5mm]{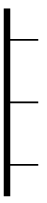}
\end{center}}
}$, by
\begin{equation}\label{eq:nonDiagF3}
\begin{pmatrix}
-1 & 0 & 3 \\
0 & -1 & -3 \\
0 & 0 & -1
\end{pmatrix}
\qquad \text{and} \qquad
\begin{pmatrix}
-1 & 0 & 0 \\
0 & -1 & 0 \\
0 & 0 & -1
\end{pmatrix}
.
\end{equation}
Thus, this central element can be diagonalised on $\LinkMod_3$ (and on $\NatMod_{3,1} \oplus \NatMod_{3,0}$ by \propref{prop:EndLinks}), but not on $\Ind{\NatMod_{2,0}}$.

%
%

\section{The Irreducibility of the Radicals} \label{sec:R=L}

The present section carries on the exploration, launched in \secref{sec:Explore}, of the standard modules at roots of unity. The goal here is to prove that the radicals $\RadMod_{n,p}$ of the standard modules $\NatMod_{n,p}$ are either trivial (meaning $\RadMod_{n,p} = \set{0}$) or irreducible. One can actually be more precise: Let $q$ be a root of unity with $\ell$ the smallest positive integer such that $q^{2\ell}=1$. We shall say that two pairs $(n,p')$ and $(n,p)$ with $0 < \abs{p'-p} < \ell$ form a \emph{symmetric pair} if $(n,p')$ and $(n,p)$ are non-critical and are located symmetrically on either side of the (single) critical line between them. The adjective \emph{symmetric} will also be used when two objects, for example $\NatMod_{n,p'}$ and $\NatMod_{n,p}$, are labelled by a symmetric pair $(n,p')$ and $(n,p)$. The relevance of this concept is already apparent from previous sections. In particular, \propref{prop:DimR=DimL} equates the dimensions of the symmetric pair (with $p' > p$) $\RadMod_{n,p'}$ and $\IrrMod_{n,p}$. In this section, we shall prove that the modules $\RadMod_{n,p'}$ and $\IrrMod_{n,p}$ of such a symmetric pair are \emph{isomorphic}. We remark that, as noted at the end of \secref{sec:Gram}, every $(n,p)$ is critical when $\beta = \pm 2$ or when $\beta = 0$ with $n$ odd.  It follows that there are no symmetric pairs in these cases.  However, for all other $\beta$ and $n$, such pairs exist whenever $n \geqslant \ell$.

A crucial role will be played in this section by the central element $F_n$ of $\tl{n}$ whose detailed properties may be found in \appref{app:Casimir}. The first part of this section shows that, even though $F_n$ is central, its action is non-diagonalisable on certain indecomposable modules. The second part then uses this fact to construct an isomorphism between the symmetric pair $\RadMod_{n,p'}$ and $\IrrMod_{n,p}$ (with $p' > p$).  The inspiration here comes from the theory of staggered modules in logarithmic conformal field theory, see \cite{RidSta09} for example.  Though the irreducibility of the radical is well-known to experts, the only proof that we are aware of \cite{GL-Aff} relies upon some rather abstract category-theoretic analysis.\footnote{Our proof is based on the existence of a non-zero homomorphism $\theta:\NatMod_{n,p}\rightarrow \NatMod_{n,p'}$ when the two modules form a symmetric pair with $p'>p$. Martin \cite{Martin} has proved that $\ker (\theta)$ and $\coker(\theta)$ are irreducible. This follows from, but is weaker than, \thmref{thm:RisL} below.} The last part describes completely the space of homomorphisms between standard modules.

\propref{lem:FEigs} states that $F_n$ acts as a multiple of the identity on the standard modules $\NatMod_{n,p}$. The next lemma provides a simple example where this action is non-diagonal. It will turn out to be key for what follows.
\begin{lemma} \label{lem:FnNonDiagonal}
Let $q$ be a root of unity other than $\pm 1$ and let $(n,p)$ be critical for this $q$ (so $n \neq 2p$). If $z_p$ denotes the $\brac{n,p}$-link state which has $p$ simple links at $1$, $3$, \ldots , $2p-1$, then when $F_{n+1}(\wun\otimes z_p)$ is expanded in the basis $\mathcal S_{n,p}$ of $\Ind{\NatMod_{n,p}}$, the coefficient of $u_1u_2\cdots u_n\otimes z_{p}=u_{2p+1}u_{2p+2}\dots u_n\otimes z_p$ does not vanish.  
\end{lemma}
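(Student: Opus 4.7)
The strategy is to transport the computation into $\Res{\NatMod_{n+2,p+1}}$ via the isomorphism $\Phi$ of \propref{prop:SisTheBasis} and to exploit the criticality hypothesis in order to locate the image of the relevant vector inside the submodule $\NatMod_{n+1,p+1}$ of this restricted module. First I would compute the images under $\Phi$ of the two distinguished vectors in the lemma. By definition $\Phi(\wun\otimes z_p)=\underline{z_p}$, the $(n+2,p+1)$-link state with simple links at $(1,2),(3,4),\ldots,(2p-1,2p),(n+1,n+2)$ and defects at $2p+1,\ldots,n$. Applying $u_n,u_{n-1},\ldots,u_{2p+1}$ to $\underline{z_p}$ in succession, tracking the rule by which each $u_i$ modifies a link state (the positions $i,i+1$ become a simple link and the former partners of $i$ and $i+1$ are reconnected), one verifies that $\Phi(u_{2p+1}u_{2p+2}\cdots u_n\otimes z_p)$ is the link state $\zeta$ with $p+1$ consecutive simple links at $(1,2),(3,4),\ldots,(2p+1,2p+2)$ and $n-2p$ defects at $2p+3,\ldots,n+2$. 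Since $n+2$ is a defect in $\zeta$, this link state lies in the submodule $\NatMod_{n+1,p+1}\subseteq\Res{\NatMod_{n+2,p+1}}$. Continuing the diagrammatic calculation with $u_{2p},u_{2p-1},\ldots,u_1$, the intermediate states oscillate but the final application restores $\zeta$ after each pair of steps, which establishes the identity $u_1u_2\cdots u_n\otimes z_p=u_{2p+1}u_{2p+2}\cdots u_n\otimes z_p$ asserted in the statement.

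Next I would use the criticality hypothesis. Shifting indices in the formula of \propref{lem:FEigs}, one has $f_{n+1,p}-f_{n+1,p+1}=(q-q^{-1})(q^{n-2p+1}-q^{-(n-2p+1)})$, and this vanishes since $q^{2(n-2p+1)}=1$. Hence $F_{n+1}$ acts as the same scalar $f_{n+1,p}$ on both the submodule $\NatMod_{n+1,p+1}$ and the quotient $\NatMod_{n+1,p}$ appearing in the exact sequence \eqref{SES:Res} for $\Res{\NatMod_{n+2,p+1}}$. Since $F_{n+1}\in\tl{n+1}$ is central, $F_{n+1}-f_{n+1,p}\wun$ is a $\tl{n+1}$-endomorphism of $\Res{\NatMod_{n+2,p+1}}$ that kills the submodule and descends to zero on the quotient, so $(F_{n+1}-f_{n+1,p}\wun)\underline{z_p}$ lies entirely in the submodule $\NatMod_{n+1,p+1}$. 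Because $\underline{z_p}\neq\zeta$, the coefficient of $\zeta$ in $(F_{n+1}-f_{n+1,p}\wun)\underline{z_p}$ coincides with the coefficient of $\zeta$ in $F_{n+1}\underline{z_p}$, and by $\Phi$-equivariance this is the coefficient of $u_{2p+1}\cdots u_n\otimes z_p$ in the $\mathcal S_{n,p}$-expansion of $F_{n+1}(\wun\otimes z_p)$. So the lemma reduces to showing that one specific coefficient in $F_{n+1}\underline{z_p}$ is non-zero.

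The main obstacle is now this explicit coefficient computation, for which I would invoke the concrete formula for $F_{n+1}$ from \appref{app:Casimir}. The link state $\zeta$ is the unique $(n+2,p+1)$-link state with all $p+1$ links concentrated as consecutive simple links at the top and all defects at the bottom; this rigidity severely restricts which diagrammatic summands of $F_{n+1}$ can contribute when acting on $\underline{z_p}$. A contributing summand must simultaneously absorb the isolated simple link $(n+1,n+2)$ of $\underline{z_p}$ and gather all of the defects at positions $2p+1,\ldots,n$ into the lower block, while producing the prescribed nested pattern at the top. A careful diagrammatic bookkeeping, keeping track of the $\beta$-powers from closed loops and of the possible through-line configurations in each summand of $F_{n+1}$, should isolate this contribution as a specific non-vanishing $q$-polynomial whose non-vanishing is not enforced away by the critical relation $q^{2(n-2p+1)}=1$. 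The illustrative example at the end of \secref{sec:Ind}, where for $(n,p)=(2,0)$ at $\beta=1$ the corresponding coefficient evaluates to $-3$, provides a sanity check for this general calculation, and the non-vanishing thus obtained is the content of the lemma.
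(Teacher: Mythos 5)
Your proposal correctly identifies the combinatorial translation via $\Phi$ — the coefficient of $u_{2p+1}\cdots u_n\otimes z_p$ in $F_{n+1}(\wun\otimes z_p)$ is the coefficient of the nested link state $\zeta$ in $F_{n+1}\underline{z_p}$ — and the observation that $F_{n+1}-f_{n+1,p}\wun$ maps $\Res{\NatMod_{n+2,p+1}}$ into its submodule $\NatMod_{n+1,p+1}$ is accurate. But these are reformulations, not a proof: the entire burden of the lemma is the explicit verification that this one coefficient is non-zero, and your final paragraph leaves precisely that step as a promissory note (``a careful diagrammatic bookkeeping \ldots should isolate this contribution''). This is where the paper does all its work: it shows by a term-by-term expansion of the $2(n+1)$ crossings defining $F_{n+1}$ that, for $p=0$, only the reverse-Jones word $u_1u_2\cdots u_n$ in $F_{n+1}$ can contribute, and its coefficient is $(-1)^n(\beta^2-4)$; it then reduces $p\geqslant 1$ to $p=0$ via the identity $v=\beta^{-1}u_1v$ (valid for $\beta\neq 0$, which absorbs simple links into an $F_{n-2p+1}$), and finally handles $\beta=0$ by a polynomiality-in-$q$ argument. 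Without carrying out some version of this calculation, the lemma is not established.

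There is also a conceptual slip in your last sentence. You suggest that one must check the non-vanishing of the coefficient is ``not enforced away by the critical relation'' — but the criticality hypothesis plays no role at all in the sign of the coefficient. The paper's computation yields $(-1)^n(\beta^2-4)$, which vanishes exactly when $q=\pm 1$ and otherwise does not, critical or non-critical. Criticality is needed so that $F_{n+1}$ has the same eigenvalue $f_{n+1,p}=f_{n+1,p+1}$ on the two composition factors (exactly the point you raise in your middle paragraph), which is what makes the off-diagonal term you have isolated meaningful as a Jordan block; it is not an obstacle to the coefficient's non-vanishing. Separating these two roles — eigenvalue coincidence versus coefficient non-vanishing — would sharpen the argument, but the computational core remains to be supplied.
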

\begin{proof}
We first study the case $p=0$. Since $F_{n+1}\in\tl{n+1}$, it can be written as a sum of words in reverse Jones' normal form, as in \eqnref{eqn:revTLRedWords}. For such a word to act non-trivially on $\wun\otimes z_0$, the rightmost $u_j$ must be $u_n$. So only words with a single ``flight'' $u_i u_{i+1} \cdots u_n$ will contribute, and among these, only $u_1 u_2 \cdots u_n$ will lead to the term $u_1 u_2 \cdots u_n \otimes z_0$ that we are seeking.  We therefore need to compute the coefficient of $u_1u_2\cdots u_n$ in $F_{n+1}$.

We first expand the two crossings of the top row of the diagram \eqref{eq:leFn} defining $F_{n+1}$. The points marked by dots in the following diagrams need to be linked if they are to lead to the word $u_1 u_2 \cdots u_n$.
\begin{equation}
\parbox{11mm}{\begin{center}
\includegraphics[height=14mm]{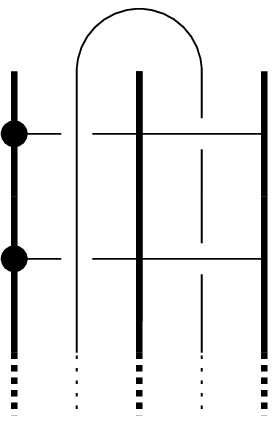}
\end{center}}
\ = \ 
- \ \parbox{11mm}{\begin{center}
\includegraphics[height=14mm]{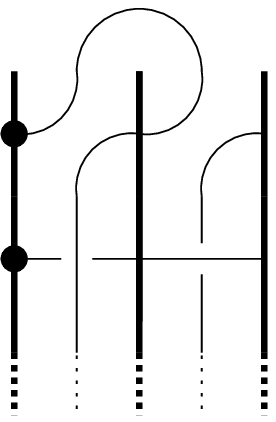}
\end{center}}
\ +q^{-1}\ 
\parbox{11mm}{\begin{center}
\includegraphics[height=14mm]{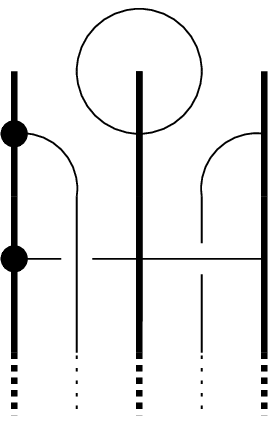}
\end{center}}
\ - \ 
\parbox{11mm}{\begin{center}
\includegraphics[height=14mm]{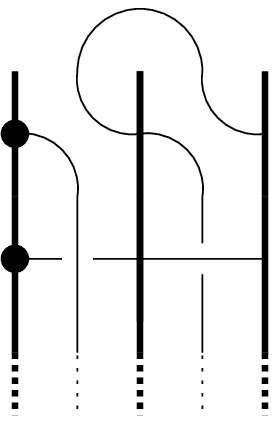}
\end{center}}
\ + q\ 
\parbox{11mm}{\begin{center}
\includegraphics[height=14mm]{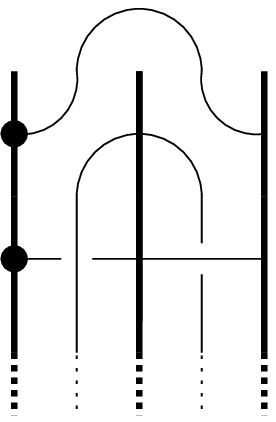}
\end{center}}
\ .
\end{equation}
The last term will not contribute to $u_1 u_2 \cdots u_n$, so we obtain
\begin{equation}
\parbox{11mm}{\begin{center}
\includegraphics[height=14mm]{tl-F71}
\end{center}}
\ \overset{\circ}{=} (-2 + \beta q^{-1}) \ 
\parbox{11mm}{\begin{center}
\includegraphics[height=14mm]{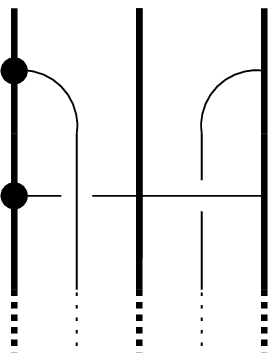}
\end{center}}
\ ,
\end{equation}
where the sign ``$\overset{\circ}{=}$'' indicates that the equality is restricted to the coefficient of $u_1 u_2 \cdots u_n$.  We now expand the second row. Only one term of the crossing on the right may contribute, as the other term can be seen to close a link on the right vertical line. But then only one term of the crossing on the left contributes to $u_1u_2\cdots u_n$. A similar argument can be repeated for all rows but the last one and the expansion of all crossings but those of the bottom row gives
\begin{equation}
\parbox{11mm}{\begin{center}
\includegraphics[height=14mm]{tl-F71}
\end{center}}
\ \overset{\circ}{=} \ (-2+\beta q^{-1})(-1)^{n-1} \ 
\parbox{11mm}{\begin{center}
\includegraphics[height=26mm]{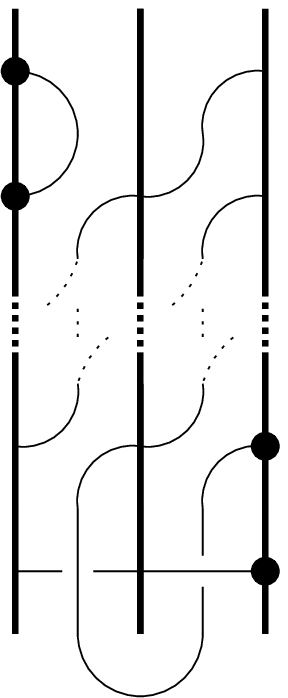}
\end{center}}
\ .
\end{equation}
Note that the top points on the left are now linked; those on the bottom right, marked by dots, still remain to be linked. Expanding the last row, we finally get that the coefficient of $u_1u_2\cdots u_n$ in $F_{n+1}$ is
\begin{equation}
(-1)^{n-1}(-2+\beta q^{-1})(-2+\beta q)=(-1)^n(\beta^2-4).
\end{equation}
Therefore, the coefficient of $u_1 u_2 \cdots u_n \otimes z_0$ in $F_{n+1}(\wun \otimes z_0)$ vanishes only when $\beta=\pm 2$ ($q=\pm 1$).

We shall now assume that $p \geqslant 1$ and consider elements in $\Ind{\NatMod_{n,p}}$ of the form $v = \wun \otimes v'$, where $v'$ is constrained to be an $\brac{n,p}$-link state with a simple link at $1$. In other words,
\begin{equation}
v = \wun \otimes \ 
\parbox{5mm}{\begin{center}
\includegraphics[height=12mm]{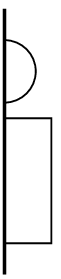}
\end{center}}
\ ,
\end{equation}
where the box stands for any $\brac{n,p-1}$-link state. We shall also make the hypothesis that $\beta \neq 0$ which allows us to use the identity
\begin{equation}\label{eq:betaNonZero}
v=\beta^{-1}u_1v
\end{equation}
(the case $\beta=0$ will be deferred until the end of the proof). We now expand the two crossings on the top right of $F_{n+1}$, extracting these crossings from the diagram representing $F_{n+1}$ and using dots to mark the points at which the subdiagrams we have extracted are to be connected:
\begin{align}
F_{n+1}v &= \ 
\parbox{11mm}{\begin{center}
\includegraphics[height=14.5mm]{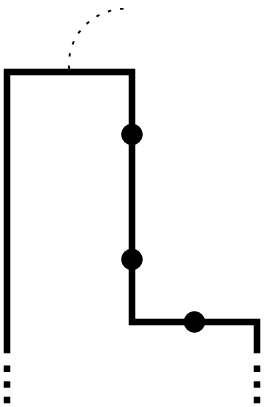}
\end{center}}
\ \times \Bigg( q \ 
\parbox{11mm}{\begin{center}
\includegraphics[height=13mm]{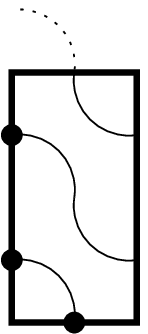}
\end{center}}
\ - \ 
\parbox{11mm}{\begin{center}
\includegraphics[height=13mm]{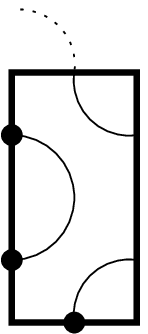}
\end{center}}
\ - \ 
\parbox{11mm}{\begin{center}
\includegraphics[height=13mm]{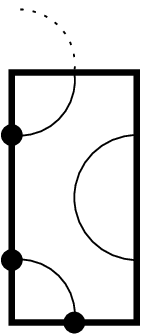}
\end{center}}
\ + q^{-1} \ 
\parbox{11mm}{\begin{center}
\includegraphics[height=13mm]{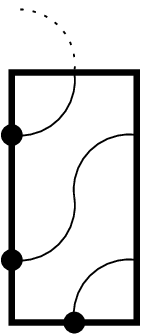}
\end{center}}
\ \Bigg)\  v \notag \\
&= \ 
\parbox{11mm}{\begin{center}
\includegraphics[height=14.5mm]{tl-F80}
\end{center}}
\ \times \Bigg( q \beta^{-1} \ 
\parbox{11mm}{\begin{center}
\includegraphics[height=13mm]{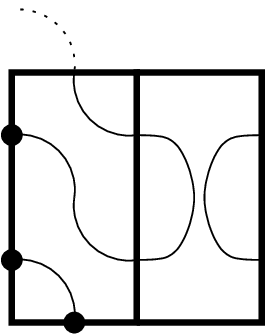}
\end{center}}
\ - \ 
\parbox{11mm}{\begin{center}
\includegraphics[height=13mm]{tl-F82}
\end{center}}
\ - \ 
\parbox{11mm}{\begin{center}
\includegraphics[height=13mm]{tl-F83}
\end{center}}
\ + q^{-1}\beta^{-1} \ 
\parbox{11mm}{\begin{center}
\includegraphics[height=13mm]{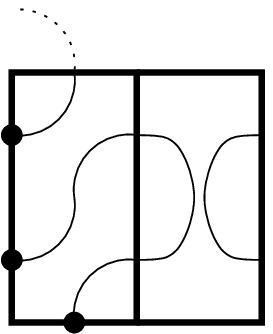}
\end{center}}
\ \Bigg)\ v \notag
\intertext{(where we have used the identity \eqref{eq:betaNonZero} on the first and last terms)}
&= \ 
\parbox{11mm}{\begin{center}
\includegraphics[height=14.5mm]{tl-F80}
\end{center}}
\ \times \Bigg( \left( (q+q^{-1}) \beta^{-1} - 1 \right) \ 
\parbox{11mm}{\begin{center}
\includegraphics[height=13mm]{tl-F83}
\end{center}}
\ - \ 
\parbox{11mm}{\begin{center}
\includegraphics[height=13mm]{tl-F82}
\end{center}}
\Bigg)\   v
= -\ 
\parbox{11mm}{\begin{center}
\includegraphics[height=13mm]{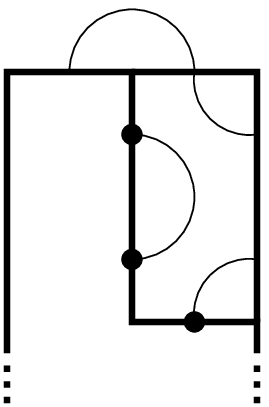}
\end{center}}\ v.
\end{align}
We now carry out the expansion of the two crossings on the left:
\begin{equation}
F_{n+1}v = \ 
\parbox{11mm}{\begin{center}
\includegraphics[height=14.5mm]{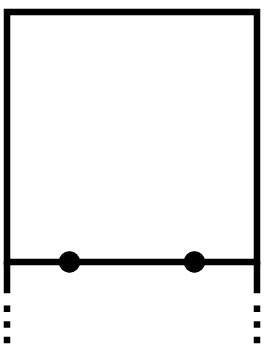}
\end{center}}
\ \times \Bigg( -q^{-1} \ 
\parbox{11mm}{\begin{center}
\includegraphics[height=13mm]{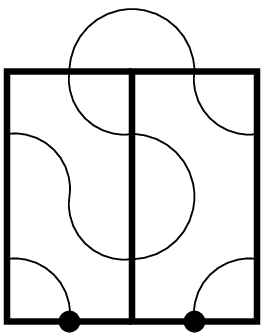}
\end{center}}
\ + \ 
\parbox{11mm}{\begin{center}
\includegraphics[height=13mm]{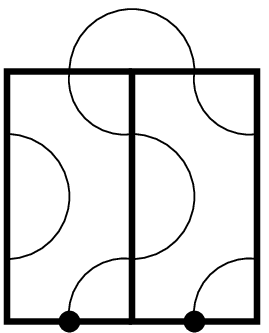}
\end{center}}
\ + \ 
\parbox{11mm}{\begin{center}
\includegraphics[height=13mm]{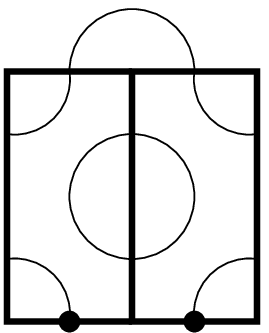}
\end{center}}
\ - q \ 
\parbox{11mm}{\begin{center}
\includegraphics[height=13mm]{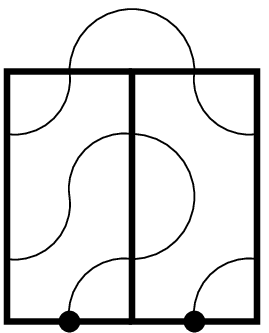}
\end{center}}
\ \Bigg) \ v.
\end{equation}
The first, third and last terms cancel and, by \eqref{eq:betaNonZero}, the second becomes
\begin{equation}
F_{n+1} v = \ 
\parbox{11mm}{\begin{center}
\includegraphics[height=14.5mm]{tl-F90}
\end{center}}
\ \times \Bigg( \ \beta^{-1} \ 
\parbox{12mm}{\begin{center}
\includegraphics[height=11mm]{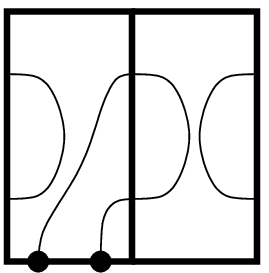}
\end{center}}
\ v \Bigg) = \beta^{-1} \ 
\parbox{11mm}{\begin{center}
\psfrag{Fn-1}[][]{$\scriptstyle{F_{n-1}}$}\includegraphics[height=13mm]{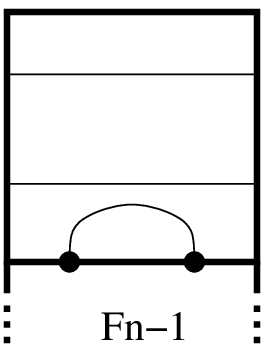}
\end{center}}
\ u_1v = \ 
\parbox{11mm}{\begin{center}
\psfrag{Fn-1}[][]{$\scriptstyle{F_{n-1}}$}\includegraphics[height=13mm]{tl-F96}
\end{center}}
\ v.
\end{equation}

Suppose now that the link state $v'$ in $v=\wun\otimes v'$ contains several consecutive simple links at the top. Then the above computation can be repeated for each of them with the result that
\begin{equation}
F_{n+1}(\wun\otimes z_p) = \ 
\parbox{15mm}{\begin{center}
\psfrag{Fn-1}[][]{$\scriptstyle{F_{n-2p+1}}$}
\psfrag{2p}[][]{$\scriptstyle{\ \ \ 2p}$}
\includegraphics[height=14mm]{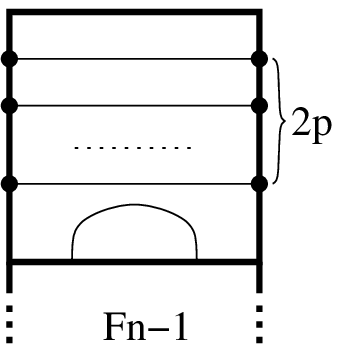}
\end{center}}
\ \ \ (\wun\otimes z_p), \qquad \text{in $\Ind{\NatMod_{n,p}}$.}
\end{equation}
The expansion of the $F_{n-2p+1}$ indicated at the bottom of this diagram will contain a term $u_{2p+1}u_{2p+2}\cdots u_n$ only if $n>2p$. In that case, we may use the previous $p=0$ computation to conclude that the coefficient of $u_1u_2\cdots u_n\otimes z_p$ in $F_{n+1}(\wun\otimes z_p)$ is $(-1)^n(\beta^2-4)$ as before. Again this coefficient does not vanish under the hypotheses of the lemma.

The case $\beta=0$, corresponding to $q=\pm i$, remains. In the link basis of the modules $\NatMod_{n,p}$, the elements of the matrices representing the generators $u_i$ are polynomials in $\beta$ or, equivalently, in $q$ and $q^{-1}$. The matrices representing the $u_i$ in the induced modules $\Ind{\NatMod_{n,p}}$ inherit this property if the basis $\mathcal S_{n,p}$ is used. Indeed, the action of $u_i$ on an arbitrary member $u_r u_{r+1} \cdots u_n \otimes d$ of $\mathcal S_{n,p}$ is easily computed:
\begin{equation}\label{eq:allCases}
u_i\brac{u_r\cdots u_n\otimes d} = 
\begin{cases}
u_r \cdots u_n \otimes u_i d, & i<r-1\\
u_{r-1} \cdots u_n \otimes d, & i=r-1\\
\beta u_r \cdots u_n \otimes d, & i=r\\
u_i \cdots u_n \otimes u_r \cdots u_{i-2} d, & r+1\leqslant i\leqslant n.
\end{cases}
\end{equation}
The right-hand sides are either elements of $\mathcal S_{n,p}$ or are equivalent to scalar multiples of such elements (using \lemref{lem:betterSpanningSet}). The central element $F_n$, defined diagrammatically by \eqref{eqn:Shorthand} and \eqref{eq:leFn}, is a linear combination of words in the generators with weights that are also polynomials in $q$ and $q^{-1}$. Therefore, the coefficients of $F_{n+1}(\wun\otimes z_p)$ in the basis $\mathcal S_{n,p}$ of $\Ind{\NatMod_{n,p}}$ are polynomials in $q$ and $q^{-1}$. Their actual values for a given $\beta$ can be obtained by evaluation of these polynomials at the corresponding value of $q$. The coefficient of $u_1 u_2 \cdots u_n \otimes z_p$ in $F_{n+1}(\wun \otimes z_p)$ when $\beta=0$ is therefore $(-1)^n(\beta^2-4)=4(-1)^{n+1}$ and is non-zero.
\end{proof}
\noindent \eqnref{eq:nonDiagF3} provides a simple example of this off-diagonal action --- there, we considered $F_3$ acting on the indecomposable module $\Ind{V_{2,0}}$ at $q=e^{\ii \pi/3}$. 

\propref{prop:HomLinks} has shown that the only homomorphism $\NatMod_{n,p'}\rightarrow\NatMod_{n,p}$ with $p'>p$ and $\langle\cdot,\cdot\rangle_{n,p}$ and $\langle\cdot,\cdot\rangle_{n,p'}\neq0$ is the zero homomorphism. The above properties of the central element $F_n$ allow us to construct a non-zero homomorphism $\NatMod_{n,p}\rightarrow\NatMod_{n,p'}$, $p'>p$, when the modules form a symmetric pair, and thereby reveal the structure of the radical $\RadMod_{n,p}$.
\begin{theorem} \label{thm:RisL}
The radical $\RadMod_{n,p}$ is zero or irreducible and, if the pair $\NatMod_{n,p}$ and $\NatMod_{n,p'}$, with $p'>p$, is symmetric, then $\RadMod_{n,p'}\cong\IrrMod_{n,p}$.
\end{theorem}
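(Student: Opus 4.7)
The plan is to construct, for every symmetric pair $(n,p)$ and $(n,p')$ with $p<p'$, a non-zero $\tl n$-homomorphism $\theta\colon\NatMod_{n,p}\to\NatMod_{n,p'}$, and then to read off the desired isomorphism from a short analysis of $\ker\theta$ and $\im\theta$. Before building $\theta$ I first dispose of the degenerate branches: if $(n,p')$ is critical then $\RadMod_{n,p'}=\set{0}$ by \corref{cor:RadCrit}, and if $\beta=0$ and $n=2p'$ then $\RadMod_{n,p'}=\NatMod_{n,p'}$ is irreducible by \propref{prop:V=R=L}. In the remaining cases $\inner{\cdot}{\cdot}_{n,p'}\neq 0$, so by \propref{prop:S/R} the module $\NatMod_{n,p'}$ admits $\RadMod_{n,p'}$ as its unique maximal proper submodule.

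The guiding observation for $\theta$ is that $f_{n,p}=f_{n,p'}$ on a symmetric pair, as follows from \propref{lem:FEigs} together with the definition of ``symmetric''. The base case $p'=p+1$ uses $\Ind{\NatMod_{n-1,p}}$ with $(n-1,p)$ critical, which by a quick check on $r$-values is exactly the condition that $(n,p)$ and $(n,p+1)$ form a symmetric pair. In the short exact sequence $0\to\NatMod_{n,p+1}\to\Ind{\NatMod_{n-1,p}}\to\NatMod_{n,p}\to 0$ of \corref{cor:Induct}, the two ends carry the common $F_n$-eigenvalue $\lambda$, while \lemref{lem:FnNonDiagonal} certifies that $F_n-\lambda\wun$ is not the zero operator on $\Ind{\NatMod_{n-1,p}}$. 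Since $F_n-\lambda\wun$ annihilates the submodule and lands in it, it factors through a non-zero homomorphism $\NatMod_{n,p}\to\NatMod_{n,p+1}$. For $k=p'-p\geqslant 2$ I proceed by induction on $k$: one verifies that $(n-1,p)$ and $(n-1,p+k-1)$ form a symmetric pair of difference $k-1$ with both entries non-critical, so the inductive hypothesis produces a non-zero $\theta'\colon\NatMod_{n-1,p}\to\NatMod_{n-1,p+k-1}$, and inducing gives a $\tl n$-homomorphism $\Ind\theta'$. By \corref{cor:IndSplit} both source and target split into two standard summands on which $F_n$ has two distinct eigenvalues: $\lambda=f_{n,p}=f_{n,p+k}$ on the ``outer'' summands and $\mu=f_{n,p+1}=f_{n,p+k-1}$ on the ``inner'' ones. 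Since $F_n$ commutes with $\Ind\theta'$, the latter splits along these eigenspaces into a component $\NatMod_{n,p}\to\NatMod_{n,p+k}$ and a component $\NatMod_{n,p+1}\to\NatMod_{n,p+k-1}$, and $\theta$ is taken to be the outer $\lambda$-component.

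With $\theta\neq 0$ the rest is structural. By \propref{prop:S/R} the kernel of $\theta$ lies in $\set{0,\RadMod_{n,p}}$, and its image either lies in $\RadMod_{n,p'}$ or equals $\NatMod_{n,p'}$. Surjectivity is ruled out: it would induce an isomorphism $\IrrMod_{n,p}\cong\IrrMod_{n,p'}$, contradicting \corref{cor:VLDistinct} since $p\neq p'$. Thus $\im\theta\subseteq\RadMod_{n,p'}$, and since $\ker\theta\subseteq\RadMod_{n,p}$ we obtain
\begin{equation*}
\dim\IrrMod_{n,p}\;\leqslant\;\dim\im\theta\;\leqslant\;\dim\RadMod_{n,p'}\;=\;\dim\IrrMod_{n,p},
\end{equation*}
using \propref{prop:DimR=DimL} for the final equality. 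Both inequalities are therefore equalities, which forces $\ker\theta=\RadMod_{n,p}$ and $\im\theta=\RadMod_{n,p'}$. Consequently $\RadMod_{n,p'}\cong\NatMod_{n,p}/\RadMod_{n,p}=\IrrMod_{n,p}$, which is irreducible by construction.

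The principal obstacle is the inductive step for $k\geqslant 2$: once $\Ind\theta'$ has been split into its $\lambda$- and $\mu$-eigencomponents, one must exclude the possibility that the entire non-vanishing of $\Ind\theta'$ is carried by the inner $\mu$-component, which by the inductive hypothesis at difference $k-2$ could in principle be non-zero on its own. Ruling this out requires exhibiting a concrete test element --- in the spirit of the $\wun\otimes z_p$ used in \lemref{lem:FnNonDiagonal} --- whose image under $\Ind\theta'$ survives projection onto the outer $\lambda$-summand. Everything else, including the various index and criticality verifications along the induction, is routine combinatorial bookkeeping.
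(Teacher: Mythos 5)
Your overall blueprint coincides with the paper's: establish the base case $p'=p+1$ by letting $F_n-f_{n,p}\wun$ act on the induced module $\Ind{\NatMod_{n-1,p}}$ (where $(n-1,p)$ is critical), invoke \lemref{lem:FnNonDiagonal} to certify non-vanishing, factor through the quotient to get a non-zero map $\NatMod_{n,p}\to\NatMod_{n,p+1}$, and then run the same $\ker/\im$ dimension count via \propref{prop:DimR=DimL} at the end. That closing analysis is correct and essentially identical to the paper's.

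Where you deviate — the inductive step for $k=p'-p\geqslant 2$ — is where the genuine gap lies, and you have correctly diagnosed it yourself. Applying $\Ind{}$ to $\theta'\colon\NatMod_{n-1,p}\to\NatMod_{n-1,p+k-1}$ and projecting onto $F_n$-eigenspaces does split $\Ind\theta'$ into a ``$\lambda$-component'' $\NatMod_{n,p}\to\NatMod_{n,p+k}$ and a ``$\mu$-component'' $\NatMod_{n,p+1}\to\NatMod_{n,p+k-1}$; but nothing in this construction guarantees that the $\lambda$-component is non-zero, and the hard case is precisely when $\Ind\theta'\neq 0$ yet all of that non-vanishing lands in the $\mu$-block. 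Without a concrete test element or an additional structural argument, the induction simply does not close. (Note also that the $\mu$-component lives on the pair $(n,p+1)$, $(n,p+k-1)$, which when $k\geqslant 3$ is not a symmetric pair — they differ by $k-2<r(n,p+1)=k-1$ — so you cannot even invoke your inductive hypothesis to control it.)

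The paper avoids this problem entirely by not constructing the map by hand. Instead of inducing $\theta'$, it uses Frobenius reciprocity (\propref{prop:Frob}) together with \corref{cor:IndSplit} and \corref{cor:ResSplit} to set up a chain of natural isomorphisms
\begin{equation*}
\Hom_{\tl{n+j}}\bigl(\NatMod_{n+j,p},\NatMod_{n+j,p+j}\bigr)\cong
\Hom_{\tl{n+j-1}}\bigl(\NatMod_{n+j-1,p},\NatMod_{n+j-1,p+j-1}\bigr)\cong\cdots\cong
\Hom_{\tl{n+1}}\bigl(\NatMod_{n+1,p},\NatMod_{n+1,p+1}\bigr),
\end{equation*}
and the last space is non-zero by the base case. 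This delivers existence of a non-zero homomorphism for every symmetric pair without ever needing to track where an explicit test element goes. Your route is salvageable — one could try to exhibit the required element — but the Frobenius argument is cleaner, requires no new combinatorics, and is the step you are missing.
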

\begin{proof}
If $q$ is not a root of unity, these statements have already been proven. Indeed, the $\NatMod_{n,p}$ are then irreducible (\propref{prop:AllIrr}) and their radicals are trivial.  The same is true when $q = \pm 1$ and when $q = \pm \ii$ with $n$ odd (\corref{cor:RadCrit}). The theorem is therefore non-trivial only if $q$ is a root of unity with either $\ell \geqslant 3$ or $\ell = 2$ and $n$ even. We will therefore assume these conditions in what follows.

We first construct a non-zero homomorphism between the two standard modules $\NatMod_{n+1,p}$ and $\NatMod_{n+1,p+1}$, assuming that $(n,p)$ is critical. The symmetric pair therefore consists of $(n+1,p)$ and $(n+1,p+1)$ and we note that $f_{n+1,p}$ and $f_{n+1,p+1}$ are equal. Let $\varphi \colon \Ind{\NatMod_{n,p}} \rightarrow \Ind{\NatMod_{n,p}}$ be the map obtained from left-multiplication by $F_{n+1}-f_{n+1,p}\wun\in\tl{n+1}$. Because this element is central, $\varphi$ is a homomorphism.  Moreover, it is non-zero by \lemref{lem:FnNonDiagonal}.

Consider now the exact sequence \eqref{SES:Ind}
\begin{equation}
\dses{\NatMod_{n+1,p+1}}{\alpha}{\Ind{\NatMod_{n,p}}}{\gamma}{\NatMod_{n+1,p}}
\end{equation}
(the homomorphism $\alpha$ was constructed explicitly after \corref{cor:IndSplit}).  Since $\varphi$ acts as zero on $\NatMod_{n+1,p+1}$ (\lemref{lem:FEigs}), we have $\func{\im}{\alpha} \subseteq \func{\ker}{\varphi}$.  Similarly, $\gamma\circ\varphi = 0$ and therefore $\func{\im}{\varphi} \subseteq \func{\ker}{\gamma} = \func{\im}{\alpha}$.  Now, for any $w \in \NatMod_{n+1,p}$, we can find $v \in \Ind{\NatMod_{n,p}}$ such that $\gamma(v)=w$. Note that any other $v' \in \Ind{\NatMod_{n,p}}$ with $\gamma(v')=w$ satisfies $v-v' \in \func{\ker}{\gamma} = \func{\im}{\alpha} \subseteq \func{\ker}{\varphi}$. It follows that the map $w \mapsto \varphi(v)$ is independent of the choice of preimage $v$. This map, in turn, is a $\tl{n+1}$-homomorphism from $\NatMod_{n+1,p}$ into $\func{\im}{\alpha} \subset \Ind{\NatMod_{n,p}}$ since given any $u\in\tl{n+1}$, we may choose $uv$ as a preimage of $uw$ and $uw$ is mapped to $\varphi(uv)=u\varphi(v)$. Finally, let $i \colon \NatMod_{n+1,p} \rightarrow \NatMod_{n+1,p+1}$ be defined by $w \mapsto \brac{\alpha^{-1} \circ \varphi}(v)$. This is the non-zero $\tl{n+1}$-homomorphism that we set out to construct. The conclusion is therefore that $\dim \Hom_{\tl{n+1}}(\NatMod_{n+1,p},\NatMod_{n+1,p+1}) \geqslant 1$, when $(n,p)$ is critical.

Frobenius reciprocity (\propref{prop:Frob}) allows us to extend this result to arbitrary symmetric pairs.  Assume again that $(n,p)$ is critical. Then, the pairs $\NatMod_{n+j,p}$ and $\NatMod_{n+j,p+j}$ are symmetric for $1 \leqslant j < \ell$, where $\ell$ is the smallest positive integer such that $q^{2\ell}=1$, and any symmetric pair is of this form for some $j$ and critical $(n,p)$. Suppose now that $j\in\{2,\cdots, \ell-1\}$. We will justify each step of the following computation:
\begin{align}\label{eq:soNice}
\Hom_{\tl{n+j}}(\NatMod_{n+j,p},\NatMod_{n+j,p+j})
   &= \Hom_{\tl{n+j}}(\NatMod_{n+j,p}\oplus \NatMod_{n+j,p+1},\NatMod_{n+j,p+j})\notag\\
   &= \Hom_{\tl{n+j}}(\Ind{\NatMod_{n+j-1,p}},\NatMod_{n+j,p+j})\notag\\
   &= \Hom_{\tl{n+j-1}}(\NatMod_{n+j-1,p},\Res{\NatMod_{n+j,p+j}})\notag\\
   &= \Hom_{\tl{n+j-1}}(\NatMod_{n+j-1,p},\NatMod_{n+j-1,p+j}\oplus\NatMod_{n+j-1,p+j-1})\notag\\
   &= \Hom_{\tl{n+j-1}}(\NatMod_{n+j-1,p},\NatMod_{n+j-1,p+j-1}).
\end{align}
Recall that $f_{n,p}$ takes distinct values between any two adjacent critical lines and that $f_{n,p} \neq f_{n,p'}$ implies that $\Hom(\NatMod_{n,p}, \NatMod_{n,p'}) = \set{0}$. The first and last equalities in \eqref{eq:soNice} follow from this observation and the usual properties of $\Hom$. Since $(n,p)$ is critical and $2\leqslant j<\ell$, neither $(n+j-1,p)$ nor $(n+j,p+j)$ can be critical, so \corref{cor:IndSplit} explains the second equality and \corref{cor:ResSplit} the fourth. The third equality is Frobenius reciprocity. We therefore conclude that
\begin{equation}
\Hom_{\tl{n+j}}(\NatMod_{n+j,p},\NatMod_{n+j,p+j}) = \Hom_{\tl{n+1}}(\NatMod_{n+1,p},\NatMod_{n+1,p+1}) \neq \set{0}.
\end{equation}
In other words, there exists a non-zero homomorphism $\NatMod_{n,p}\rightarrow\NatMod_{n,p'}$ between an arbitrary symmetric pair with $p'>p$.

We are now ready to prove the statement of the theorem. Let $f\in\Hom(\NatMod_{n,p},\NatMod_{n,p'})$ be a non-zero homomorphism between a symmetric pair with $p'>p$. Then, $\func{\ker}{f}$ is a proper subset of $\NatMod_{n,p}$ and must therefore be a subset of $\RadMod_{n,p}$ by maximality of the radical. If $f$ were surjective, then $\NatMod_{n,p}/\func{\ker}f\cong \NatMod_{n,p'}$ which would contradict \propref{prop:HomLinks}. (Here, we must temporarily assume that $\beta \neq 0$ or $p' \neq n/2$ --- see the end of the present paragraph.) Thus, $\func{\im}{f}$ is a subset of $\RadMod_{n,p'}$, again by maximality. If either of $\func{\ker}{f}$ or $\func{\im}{f}$ is a proper subset of the corresponding radical, then we would have
\begin{equation}
\dim \func{\ker}{f}<\dim \RadMod_{n,p} \qquad \text{or} \qquad
\dim \func{\im}{f}<\dim \RadMod_{n,p'}
\end{equation}
which gives
\begin{equation}
\dim\NatMod_{n,p}=\dim\func{\ker}{f}+\dim\func{\im}{f}<\dim\RadMod_{n,p}+\dim\RadMod_{n,p'}.
\end{equation}
But then, the dimension of the irreducible quotient $\IrrMod_{n,p}=\NatMod_{n,p}/\RadMod_{n,p}$ would satisfy
\begin{equation}
\dim\IrrMod_{n,p}=\dim\NatMod_{n,p}-\dim\RadMod_{n,p}<\dim\RadMod_{n,p'},
\end{equation}
contradicting \propref{prop:DimR=DimL}. This proves that $\func{\ker}{f}=\RadMod_{n,p}$ and $\func{\im}{f}=\RadMod_{n,p'}$. The first isomorphism theorem now says that $\IrrMod_{n,p}=\NatMod_{n,p}/\func{\ker}f$ is isomorphic to $\func{\im}f=\RadMod_{n,p'}$ if the pair $\brac{n,p}$ and $\brac{n,p'}$, $p'>p$, is symmetric. The case $\beta=0$ and $2p'=n$ was set aside in this argument, but is in fact easier because in this case \corref{prop:V=R=L} has already established that $\RadMod_{2p',p'} = \NatMod_{2p',p'}$ is irreducible. Since $f$ is non-zero, $\func{\im}f=\RadMod_{2p',p'}$ (and $\dim\func{\im}f=\dim \RadMod_{n,p'}$). The rest of the proof is identical.

This argument proves the irreducibility of every $\RadMod_{n,p'}$ for which the image $(n,p)$ of $(n,p')$ by a reflection with respect to the critical line immediately to its right is well-defined (meaning $p\geqslant 0$). This test fails for some $(n,p')$ on the right of the second-rightmost critical line. For example, \figref{fig:IsingDims}, drawn for $\ell=4$, shows that $(4,0)$ and $(8,2)$ have no reflection with respect to the critical line on their right. For all pairs $(n,p')$ without such reflection, \propref{eqn:RecDimR} shows that the radicals attached to them are $\set{0}$.
\end{proof}

Since the only proper submodule of $\NatMod_{n,p}$ is the radical $\RadMod_{n,p}$ (and when $\beta = 0$, the exceptional case $\NatMod_{2p,p}$ is irreducible), the only quotients obtained from $\NatMod_{n,p}$ are the modules $\{0\}$, $\NatMod_{n,p}/\RadMod_{n,p}\cong\IrrMod_{n,p}$ and $\NatMod_{n,p}$ itself. Of course some of these may coincide. Therefore:
\begin{corollary} \label{cor:indecoOfQuotient}
Every quotient of a standard module is indecomposable.
\end{corollary}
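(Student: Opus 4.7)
The plan is to reduce the corollary to an enumeration of all submodules of $\NatMod_{n,p}$, using what has already been proven about the radical. First I would invoke \propref{prop:S/R} (together with \propref{prop:S/R'} for the exceptional pair $\brac{n,p}=\brac{2p,p}$ at $\beta=0$) to recall that $\RadMod_{n,p}$ is the unique maximal proper submodule of $\NatMod_{n,p}$. Then \thmref{thm:RisL} tells us that $\RadMod_{n,p}$ is either zero or irreducible; in the latter case it has no proper non-zero submodules. Combining these two facts, the complete list of submodules of $\NatMod_{n,p}$ is $\set{0} \subseteq \RadMod_{n,p} \subseteq \NatMod_{n,p}$ (with a possible coincidence of the first two, or the second and third in the exceptional case of \propref{prop:V=R=L}).

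Having listed the submodules, the quotients of $\NatMod_{n,p}$ are accordingly limited to three candidates: $\NatMod_{n,p}$ itself, the irreducible quotient $\IrrMod_{n,p}=\NatMod_{n,p}/\RadMod_{n,p}$, and the zero module $\NatMod_{n,p}/\NatMod_{n,p}=\set{0}$. I would then check indecomposability case by case. The module $\NatMod_{n,p}$ is cyclic and indecomposable by \propref{prop:S/R} (or \propref{prop:S/R'}); the quotient $\IrrMod_{n,p}$ is either irreducible, hence indecomposable, or zero; and the zero module is vacuously on the list. This completes the enumeration.

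There is really no obstacle here: the work was done in \thmref{thm:RisL} (irreducibility of the radical) and in \propref{prop:S/R}/\propref{prop:S/R'} (uniqueness of the maximal submodule and indecomposability of $\NatMod_{n,p}$). The only minor care required is to handle the exceptional case $\brac{n,p}=\brac{2p,p}$ at $\beta=0$, where $\RadMod_{n,p}=\NatMod_{n,p}$ and the renormalised form of \eqnref{eqn:RenormalisedForm} must be used, but \propref{prop:V=R=L} has already shown that this $\NatMod_{2p,p}$ is itself irreducible, so its only proper submodule is $\set{0}$ and the conclusion still follows trivially.
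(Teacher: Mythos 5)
Your proof is correct and follows essentially the same route as the paper: the paper also observes that $\RadMod_{n,p}$ is the only proper submodule (by \propref{prop:S/R} and \thmref{thm:RisL}, with the $\beta=0$, $n=2p$ exception handled via \propref{prop:V=R=L}), so the possible quotients are $\set{0}$, $\IrrMod_{n,p}$ and $\NatMod_{n,p}$, each indecomposable. The only difference is that you spell out the enumeration of submodules and the case-by-case indecomposability check more explicitly than the paper does.
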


The previous theorem also reveals the structure of the standard modules in terms of its composition factors.

\begin{corollary}\label{cor:2013}
With exception of the (irreducible) $\NatMod_{2p,p}$ when $\beta=0$, the standard module
$\NatMod_{n,p}$ is reducible (but indecomposable) if and only if $(n,p)$ forms a symmetric pair with $(n,p')$, where $p>p'$, and then the sequence
\begin{equation}\label{eq:2013}
0\longrightarrow \IrrMod_{n,p'}\longrightarrow \NatMod_{n,p}\longrightarrow \IrrMod_{n,p}\longrightarrow 0
\end{equation}
is exact and non-split. The reducible standard module $\NatMod_{n,p}$ therefore has two composition factors, $\IrrMod_{n,p'}$ and $\IrrMod_{n,p}$, and its (unique) composition series is $0\subset \IrrMod_{n,p'}\subset \NatMod_{n,p}$.
\end{corollary}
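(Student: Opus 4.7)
The plan is to extract the corollary essentially as a repackaging of \thmref{thm:RisL} and \propref{prop:S/R}, with only some combinatorial bookkeeping. Outside the stated exception, $\inner{\cdot}{\cdot}_{n,p}$ is not identically zero (\propref{prop:S/R'}), so \propref{prop:S/R} asserts that $\RadMod_{n,p}$ is the unique maximal proper submodule of $\NatMod_{n,p}$, and in particular the latter is indecomposable. Consequently, $\NatMod_{n,p}$ is reducible if and only if $\RadMod_{n,p}\neq\set{0}$. Combining \corref{cor:RadCrit}, \propref{prop:AllIrr} and \propref{prop:DimR=DimL}, I would then argue that this non-vanishing is equivalent to $q$ being a root of unity, $\brac{n,p}$ being non-critical, and $p':=p+\func{r}{n,p}-\ell$ being a non-negative integer. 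A short computation shows that $\func{r}{n,p'}=\ell-\func{r}{n,p}$, so $\brac{n,p'}$ is itself non-critical and lies in the same critical strip as $\brac{n,p}$, symmetric about the unique critical line between them; the exceptional case $\brac{n,p'}=\brac{2p',p'}$ at $\beta=0$ is ruled out because it would force $p\leqslant n/2=p'<p$. This establishes the ``reducible if and only if a symmetric partner exists'' half of the claim.

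For the remaining statements, I would feed the isomorphism $\RadMod_{n,p}\cong\IrrMod_{n,p'}$ of \thmref{thm:RisL} (applied with the roles of $p$ and $p'$ relabelled) into the tautological exact sequence $0\to\RadMod_{n,p}\to\NatMod_{n,p}\to\IrrMod_{n,p}\to 0$ to obtain \eqref{eq:2013}. Non-splitness is immediate: $\NatMod_{n,p}$ is indecomposable by \propref{prop:S/R}, whereas a splitting would present it as the direct sum of two non-trivial modules $\IrrMod_{n,p'}\oplus\IrrMod_{n,p}$. Finally, the filtration $0\subset\RadMod_{n,p}\subset\NatMod_{n,p}$ is already a composition series with successive quotients $\IrrMod_{n,p'}$ and $\IrrMod_{n,p}$, and uniqueness follows because every proper non-trivial submodule lies in $\RadMod_{n,p}$ by maximality (\propref{prop:S/R}), while $\RadMod_{n,p}$ is itself irreducible by \thmref{thm:RisL}, leaving no room to insert an additional term in the chain.

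The main obstacle is not deductive but combinatorial: one must carefully confirm that the $p'$ delivered by the dimension count of \propref{prop:DimR=DimL} coincides with the symmetric partner in the sense defined at the start of \secref{sec:R=L}, and that this partner never lands on the exceptional $\beta=0$, $n=2p'$ configuration where $\IrrMod_{n,p'}$ would vanish. Once this identification is pinned down, all that remains is to transcribe \thmref{thm:RisL} and \propref{prop:S/R} into the statement of the corollary.
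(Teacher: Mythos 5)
Your proof is correct and follows essentially the same route as the paper's: reducibility is equivalent to $\RadMod_{n,p}\neq\set{0}$, \thmref{thm:RisL} identifies the radical with $\IrrMod_{n,p'}$ for the symmetric partner, indecomposability from \propref{prop:S/R} gives non-splitness, and the irreducibility of $\RadMod_{n,p}$ together with its maximality pins down the unique composition series. The only difference is that you spell out the combinatorics (that the $p'$ from \propref{prop:DimR=DimL} is the symmetric partner, and that the $\beta=0$, $n=2p'$ degeneracy cannot occur) which the paper leaves implicit in its reference to \thmref{thm:RisL}.
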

\begin{proof} The standard module $\NatMod_{n,p}$ is reducible if and only if it has a non-trivial submodule. This submodule is then its radical $\RadMod_{n,p}$ which, by \thmref{thm:RisL}, is isomorphic to $\IrrMod_{n,p'}$, the pair $(n,p)$ and $(n,p')$, with $p>p'$, being symmetric. The indecomposability of $\NatMod_{n,p}$ was proved in \propref{prop:S/R}, so the short exact sequence \eqref{eq:2013} cannot split. The rest of the statement follows from the definition of composition series (see \appref{app:Review}).
\end{proof}

\noindent We remark that the sequence \eqref{eq:2013} is still exact in the exceptional case $\beta = 0$ and $n=2p$ because we have defined $\IrrMod_{2p,p}$ to be $0$ in this case (see the discussion after \propref{prop:S/R}).

It is natural to decompose the set of allowed $p$ ($0\leqslant p\leqslant \lfloor n/2\rfloor$) into {\em orbits} under reflection about a critical line.\footnote{
These orbits were introduced in \cite{GoodWenzl93} and play a central role in the construction of the blocks of $\tl{n}$.
} Let $\brac{n,p_1}$ lie to the left of the first critical line of the Bratteli diagram and let $p_1 > p_2 > \cdots > p_m \geqslant 0$ be the indices obtained from $p_1$ by reflecting across the critical lines (so there is precisely one $p_i$ between each pair of consecutive critical lines).  The set $\set{\brac{n,p_i} \st 1 \leqslant i \leqslant m}$ is called the \emph{orbit} of $\brac{n,p_1}$ under these reflections and every non-critical $\brac{n,p}$ belongs to a unique orbit. (One may complete this definition by adding that a critical $(n,p)$ is alone in its orbit.) The notation $k(n,p)$ and $r(n,p)$ introduced in \eqref{eq:kANDr} allows for recursive expressions for the $p_i$ of a given non-critical orbit. Since, on line $n$ of the Bratteli diagram, the two $p_i$ and $p_{i+1}$ are separated by a single critical line and lie symmetrically on each of its side, their labels are related by
\begin{equation}\label{eq:krOrbit}
k(n,p_{i+1})=k(n,p_i)+1\qquad\textrm{and}\qquad r(n,p_{i+1})=\ell-r(n,p_i).
\end{equation}
Since $n-2p+1=k(n,p)\ell+r(n,p)$, one finds that
\begin{equation}\label{eq:LackOfImagination}
p_{i+1}=p_i+r(n,p_i)-\ell\qquad\textrm{and}\qquad p_{i-1}=p_i+r(n,p_i).
\end{equation}
\appref{app:Casimir} has shown that the central element $F_n$ takes distinct eigenvalues on (distinct) standard modules whose labels fall between two consecutive critical lines. This observation leads to another definition of non-critical orbits: Two labels $p,p'$ belong to the same (non-critical) orbit if and only if $f_{n,p} = f_{n,p'}$. With the relations above, the equivalence between the two definitions is easily established using $q^{2\ell}=1$:
\begin{align} \label{eqn:FOrbitEigs}
f_{n,p_{i-1}}
&=q^{n-2p_{i-1}+1} + q^{-(n-2p_{i-1}+1)}
=q^{k(n,p_{i-1}) \ell + r(n,p_{i-1})} + q^{-k(n,p_{i-1}) \ell - r(n,p_{i-1})} \notag \\
&=q^{(k(n,p_i)-1) \ell + \ell - r(n,p_i)} + q^{-(k(n,p_i)-1) \ell - \ell + r(n,p_i)}
=q^{k(n,p_i) \ell - r(n,p_i)} + q^{-k(n,p_i) \ell + r(n,p_i)} \notag \\
&=q^{2k(n,p_i)\ell} q^{-(k(n,p_i)\ell + r(n,p_i))}+q^{-2k(n,p_i)\ell} q^{k(n,p_i)\ell+r(n,p_i)} \notag \\
&=f_{n,p_i}.
\end{align}

\medskip

Frobenius reciprocity was a key ingredient in the proof of \thmref{thm:RisL}. It allowed us to deduce the existence of non-trivial homomorphisms between certain standard modules $\NatMod_{n,p}$ from the explicit construction of such a homomorphism in the simplest case. This underlines also the usefulness of a central element, in the present case $F_n$, that acts non-diagonalisably. A natural question which remains is whether these non-trivial homomorphisms are the only ones.  We conclude the section by settling this in the affirmative.

\begin{theorem} \label{thm:dimHom}
The dimension $\dim\Hom(\NatMod_{n,p},\NatMod_{n,p'})$ is $1$ if $p=p'$ or if the two standard modules form a symmetric pair with $p'>p$. Moreover, there is a single exceptional case: $\dim\Hom(\NatMod_{2,1},\NatMod_{2,0})=1$ when $\beta=0$. Otherwise, $\dim\Hom(\NatMod_{n,p},\NatMod_{n,p'}) = 0$.
\end{theorem}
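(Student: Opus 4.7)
My plan is to split the proof by cases on the sign of $p'-p$. The case $p=p'$ is immediate from \propref{prop:EndLinks}, which identifies every endomorphism of $\NatMod_{n,p}$ with a scalar multiple of the identity.

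For $p>p'$, I invoke \propref{prop:HomLinks} with $\mathcal N=\mathcal N'=\{0\}$ to conclude $\Hom=0$ whenever $\inner{\cdot}{\cdot}_{n,p}\neq 0$. The only remaining situation is $\beta=0$ with $n=2p$, where \propref{prop:V=R=L} makes $\NatMod_{2p,p}$ irreducible and \thmref{thm:RisL} applied to the symmetric pair $\{(2p,p-1),(2p,p)\}$ yields $\NatMod_{2p,p}\cong\IrrMod_{2p,p-1}$. Any non-zero $\theta\colon\NatMod_{2p,p}\to\NatMod_{2p,p'}$ is then an embedding, so by \corref{cor:2013} its image is either $\RadMod_{2p,p'}$ or all of $\NatMod_{2p,p'}$. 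The first option gives $\IrrMod_{2p,p-1}\cong\IrrMod_{2p,p'-1}$ via \thmref{thm:RisL}, contradicting $p>p'$ by \corref{cor:VLDistinct}; the second requires $\NatMod_{2p,p'}$ itself to be irreducible, which at $\ell=2$ forces $p'=0$ (else $p'-1$ is a smaller symmetric partner), and then $d_{2p,p}=d_{2p,0}=1$ singles out $n=2$, $p=1$. For this triple a direct computation (both modules are one-dimensional with $u_1$ acting as zero) establishes the exceptional isomorphism and $\dim\Hom=1$.

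For $p<p'$, let $f\colon\NatMod_{n,p}\to\NatMod_{n,p'}$ be non-zero. By \corref{cor:2013}, $\im f$ is either $\RadMod_{n,p'}$ or $\NatMod_{n,p'}$. In the first case, the radical is non-zero, so \thmref{thm:RisL} supplies a smaller symmetric partner $(n,\tilde p)$ of $(n,p')$ with $\RadMod_{n,p'}\cong\IrrMod_{n,\tilde p}$; the induced quotient $\NatMod_{n,p}/\ker f\cong\IrrMod_{n,\tilde p}$ is irreducible, forcing $\ker f=\RadMod_{n,p}$ and $\IrrMod_{n,p}\cong\IrrMod_{n,\tilde p}$, and \corref{cor:VLDistinct} (applicable since $p,\tilde p<p'\leqslant n/2$ ensures both forms are non-vanishing) yields $p=\tilde p$. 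In the second case, $\NatMod_{n,p}/\ker f\cong\NatMod_{n,p'}$; the possible quotients of $\NatMod_{n,p}$ listed by \corref{cor:indecoOfQuotient} are $\NatMod_{n,p}$, $\IrrMod_{n,p}$ or $0$, and \corref{cor:VLDistinct}---augmented, when $\beta=0$ and $n=2p'$, by the identification $\NatMod_{n,n/2}\cong\IrrMod_{n,n/2-1}$---again forces $(n,p),(n,p')$ to be a symmetric pair. Either way, $\Hom\neq 0$ requires a symmetric pair.

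Finally, existence of a non-zero homomorphism for a symmetric pair with $p<p'$ is furnished by \thmref{thm:RisL}. For uniqueness up to scalar, two non-zero $f,g$ have kernel $\RadMod_{n,p}$ and image $\RadMod_{n,p'}$, inducing isomorphisms $\bar f,\bar g\colon\IrrMod_{n,p}\xrightarrow{\sim}\RadMod_{n,p'}$ between these two irreducibles; Schur's lemma applied to $\bar g^{-1}\circ\bar f$ forces $\bar f=c\bar g$, hence $f=cg$, giving $\dim\Hom=1$. The main obstacle will be the degenerate $\beta=0$, $n=2p$ subcase of $p>p'$ where \propref{prop:HomLinks} is unavailable, and one must isolate the exceptional triple $(2,1,0)$ from all larger $n=2p$ by combining orbit structure, \thmref{thm:RisL}, and a Catalan-dimension check at $\ell=2$.
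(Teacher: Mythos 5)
Your proof is correct, and it covers all the cases including the delicate $\beta=0$ situations. The handling of $p=p'$ and $p>p'$ is essentially identical to the paper's. The noticeable difference lies in the $p<p'$ branch: the paper first uses the $F_n$-eigenvalue criterion (a homomorphism must vanish when $f_{n,p}\neq f_{n,p'}$) to pare the discussion down to three explicit residual cases, whereas you bypass the eigenvalue separation entirely and analyse $\im f$ and $\ker f$ directly via the uniserial structure of the standard modules (\corref{cor:2013}, \thmref{thm:RisL}, \corref{cor:VLDistinct}). Your route is slightly more uniform — the paper's ``case (3)'' argument actually proceeds in the same submodule/quotient spirit as yours, so your approach essentially extends that argument to absorb ``cases (1) and (2)'' as well — at the price of having to treat the $n=2p'$, $\beta=0$ target more carefully inside the dichotomy, which you do. Two tiny citation-level remarks: the dichotomy $\im f\in\set{\RadMod_{n,p'},\NatMod_{n,p'}}$ is really in the discussion preceding \corref{cor:indecoOfQuotient} rather than in the corollary's statement (and for the $n=2p'$, $\beta=0$ target it relies on \propref{prop:V=R=L} instead); and when invoking \thmref{thm:RisL} to assert $\ker f=\RadMod_{n,p}$, $\im f=\RadMod_{n,p'}$ for a symmetric pair, it is cleaner to note, as you effectively do, that these equalities follow from the dichotomy you already established, rather than from the theorem's statement alone. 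Neither affects the correctness of the argument.
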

\begin{proof}
We first omit, for $\beta=0$ and $n$ even, the study of $\Hom(\NatMod_{2p,p},\NatMod_{2p,p'})$ and $\Hom(\NatMod_{2p,p'},\NatMod_{2p,p})$. This omission makes available all results from \secref{sec:Reps} that call for the bilinear form $\inner{\cdot}{\cdot}$ to be non-zero. Then the homomorphisms $\NatMod_{n,p}\rightarrow \NatMod_{n,p'}$ have been determined by \propref{prop:HomLinks}, in the case $p>p'$, and \propref{prop:EndLinks}, in the case $p=p'$. We also know that $\Hom(\NatMod_{n,p},\NatMod_{n,p'}) = \set{0}$ when $f_{n,p} \neq f_{n,p'}$.  The outstanding cases are therefore:
\begin{enumerate}
\item $\brac{n,p}$ and $\brac{n,p'}$ are both critical with $p'>p$ and $f_{n,p} = f_{n,p'}$.
\item $\brac{n,p}$ and $\brac{n,p'}$ form a symmetric pair with $p'>p$. \label{Case2}
\item $\brac{n,p}$ and $\brac{n,p'}$ are non-critical with $p'-p \geqslant \ell$ and $f_{n,p} = f_{n,p'}$. \label{Case3}
\end{enumerate}

Case (1) is easily dealt with:  If two distinct $(n,p)$ and $(n,p')$ are critical, then $\NatMod_{n,p}$ and $\NatMod_{n,p'}$ are irreducible (\corref{cor:RadCrit}) and it follows from \corref{cor:VLDistinct} that $\Hom(\NatMod_{n,p},\NatMod_{n,p'})=0$.  In case (2), we know from \thmref{thm:RisL} that $\RadMod_{n,p'} \cong \IrrMod_{n,p}$.  Schur's lemma then implies that any two homomorphisms from $\IrrMod_{n,p}$ to $\RadMod_{n,p'}$ are equal to one another up to a multiplicative constant. Choose two non-zero homomorphisms $f,g\in\Hom(\NatMod_{n,p},\NatMod_{n,p'})$. The argument used to prove \thmref{thm:RisL} implies that we must necessarily have $\func{\ker}{f} = \func{\ker}{g}= \RadMod_{n,p}$ and $\func{\im}{f} = \func{\im}{g} = \RadMod_{n,p'}$. Both $f$ and $g$ then define homomorphisms $\hat{f}, \hat{g} \colon \IrrMod_{n,p} \rightarrow \RadMod_{n,p'}$ and so we learn that $\hat{f} = \mu \hat{g}$ for some $\mu \in \CC$. As
\begin{equation}
f(v) = \brac{\hat{f}\circ\pi}(v) = \mu \brac{\hat{g}\circ\pi}(v) = \mu g(v) \qquad \text{for all $v\in\NatMod_{n,p}$,}
\end{equation}
where $\pi$ is the projection $\NatMod_{n,p}\rightarrow \IrrMod_{n,p}$, we conclude that $f=\mu g$, hence that $\dim\Hom(\NatMod_{n,p},\NatMod_{n,p'})=1$.

We therefore turn to case (3). Let $(n,p'')$ be the reflection of $(n,p')$ in the critical line immediately to its right, so that $\NatMod_{n,p'}$ and $\NatMod_{n,p''}$ form a symmetric pair with $p'>p''>p$. Let $f \colon \NatMod_{n,p}\rightarrow\NatMod_{n,p'}$ be a non-zero homomorphism, so $\func{\ker}{f} \subseteq \RadMod_{n,p}$. But, $\RadMod_{n,p}$ is irreducible and therefore $\func{\ker}{f}$ is either $\set{0}$ or $\RadMod_{n,p}$. Similarly, $\func{\im}{f}$ must be either $\RadMod_{n,p'}$ or $\NatMod_{n,p'}$. The first isomorphism theorem, together with \corref{cor:VLDistinct}, now leads to a contradiction in each of the four possible combinations. For example:
\begin{itemize}[leftmargin=*]
\item If $\func{\ker}{f}=\set{0}$ and $\func{\im}{f}=\NatMod_{n,p'}$, then $\NatMod_{n,p} \cong \func{\im}{f} = \NatMod_{n,p'}$ with $p\neq p'$, contradicting \corref{cor:VLDistinct}.
\item If $\func{\ker}{f}=\set{0}$ and $\func{\im}{f}=\RadMod_{n,p'}$, then $\NatMod_{n,p} \cong \func{\im}{f} = \RadMod_{n,p'} \cong \IrrMod_{n,p''}$. But, $\NatMod_{n,p}$ being irreducible implies that $\IrrMod_{n,p} \cong \IrrMod_{n,p''}$ with $p\neq p''$, again contradicting \corref{cor:VLDistinct}.
\end{itemize}
The last two cases $\func{\ker}{f} = \RadMod_{n,p}$, $\func{\im}{f} = \NatMod_{n,p'}$ and $\func{\ker}{f} = \RadMod_{n,p}$, $\func{\im}{f} = \RadMod_{n,p'}$ are similar and are left to the reader. We conclude that $\Hom(\NatMod_{n,p},\NatMod_{n,p'})=\set{0}$ in all these cases.

It only remains to consider the omitted case involving $\NatMod_{2p,p}=\RadMod_{2p,p}$ at $\beta=0$. In this case, every $(2p,p')$ is non-critical (see \figref{fig:Bratteli}) and shares the same $F_{2p}$-eigenvalue:  $f_{2p,p'}=0$. $\NatMod_{2p,p'}$ and $\NatMod_{2p,p''}$ therefore form a symmetric pair if and only if $\abs{p'-p''}=1$. \corref{cor:VLDistinct} can be used to conclude that the $\IrrMod_{2p,p'}$ with $0 \leqslant p' \leqslant p-1$ are mutually non-isomorphic. Because $\IrrMod_{2p,p'}\cong\RadMod_{2p,p'+1}$, the radicals $\RadMod_{2p,p'}$ with $1\leqslant p'\leqslant p$ are likewise mutually non-isomorphic. Finally, the $\RadMod_{2p,p'}=\IrrMod_{2p,p'-1}$ are non-zero for $1\leqslant p'< p$, so the only irreducible modules among the $\NatMod_{2p,p'}$ are therefore $\NatMod_{2p,p}$ (by \propref{prop:V=R=L}) and $\NatMod_{2p,0}$.

We now look for a non-zero homomorphism $f \colon \NatMod_{2p,p}\rightarrow \NatMod_{2p,p'}$ with $p'<p$. If such a homomorphism exists, then $\func{\ker}{f}=\set{0}$ and $\func{\im}{f}\cong \NatMod_{2p,p}$ (because the latter is irreducible). It follows that either $\NatMod_{2p,p}\cong\RadMod_{2p,p'}$ or $\NatMod_{2p,p}\cong\NatMod_{2p,p'}$. The first possibility is excluded by the previous observation that the radicals $\RadMod_{2p,p'}$, $1\leqslant p'\leqslant p$, are mutually distinct as well as recalling that $\RadMod_{2p,p} = \NatMod_{2p,p} \neq \set{0} = \RadMod_{2p,0}$. We are therefore left with the second possibility, that $\NatMod_{2p,p'}$ is itself irreducible, which can only happen when $p'=0$. But, $\dim \NatMod_{2p,0}=1$ only coincides with $\dim\NatMod_{2p,p}$ when $p=1$. So we arrive at the exceptional case $\Hom(\NatMod_{2,1},\NatMod_{2,0})\cong\CC$ (the existence of such a non-zero homomorphism was already established after \corref{cor:VLDistinct}). 

Finally, the pair $\NatMod_{2p,p-1}$ and $\NatMod_{2p,p}$ is symmetric and the same arguments that proved (\ref{Case2}) above lead to $\dim \Hom(\NatMod_{2p,p-1},\NatMod_{2p,p}) = 1$. The proof that $\dim \Hom(\NatMod_{2p,p'},\NatMod_{2p,p})=0$ for $p'\leqslant p-2$ mimics that of (\ref{Case3}) with the simplification that now $\func{\im}{f} = \NatMod_{2p,p}$ since $\NatMod_{2p,p}$ is irreducible, hence there are only two cases to consider instead of four.
\end{proof}

%
%

\section{Principal Indecomposable Modules} \label{sec:Proj}

In this section, we describe a concrete construction, for $q$ a root of unity, of the \emph{principal indecomposable} modules as submodules of certain induced modules.  These are the indecomposable direct summands of the Temperley-Lieb algebra when one treats it as a module by letting it act on itself by left-multiplication (this is just the regular representation).  It follows immediately that principal indecomposables are canonical examples of \emph{projective} modules.  Indeed, they are precisely the indecomposable projective modules.  Moreover, a standard fact about them \cite{CurRep62} is that there is a bijective correspondence between principal indecomposables and irreducibles given by quotienting the former by its (unique) maximal proper submodule (its radical).  We will therefore denote a principal indecomposable module by $\ProjMod_{n,p}$, understanding that its irreducible quotient is $\IrrMod_{n,p}$.\footnote{In the exceptional case, $\beta = 0$ and $n=2p$, we recall that $\IrrMod_{2p,p} = \set{0}$ and the irreducible quotient of $\NatMod_{2p,p}$ is in fact $\NatMod_{2p,p} = \RadMod_{2p,p} \cong \IrrMod_{2p,p-1}$.  There is therefore no need to define a non-trivial principal indecomposable $\ProjMod_{2p,p}$ when $\beta = 0$.} The properties of projective modules needed for this section are reviewed in \appref{app:Review}.

Note that when the algebra is semisimple, indecomposability implies irreducibility, so
\begin{equation} \label{eqn:PNI}
\ProjMod_{n,p} = \NatMod_{n,p} = \IrrMod_{n,p} \qquad \text{($\tl{n}$ semisimple).}
\end{equation}
We remark that because $\tl{n}$ is semisimple for generic $\beta$ (and its dimension is obviously independent of $\beta$), it follows from \thmDref{thm:W}{thm:GW} that
\begin{equation} \label{eqn:LP=VV}
\sum_i \dim \IrrMod_i \: \dim \ProjMod_i = \dim \tl{n} = \sum_{p=0}^{\flr{n/2}} \dim \NatMod_{n,p} \: \dim \NatMod_{n,p},
\end{equation}
where the sum on the left-hand side is over a {\em complete set} of pairwise non-isomorphic irreducibles $\IrrMod_i$ and the corresponding principal indecomposables $\ProjMod_i$.  We recall from \corref{cor:RootOfUnity} that the sum on the right-hand side corresponds to a complete set of non-isomorphic irreducibles, when $q$ is not a root of unity.  At this point, we know that the $\IrrMod_{n,p}$ are pairwise non-isomorphic, but we are not assured that they form a complete set when $q$ is a root of unity, excluding $\IrrMod_{2p,p} = \set{0}$ from the set when $\beta = 0$.  This completeness will be deduced as a corollary of the principal indecomposable analysis.

We will first show how one can concretely construct the principal indecomposables with the aid of a detailed example.  This serves to nicely illustrate the salient features of the general discussion (which is obscured somewhat by the necessary induction arguments).  So let us take $q = e^{\ii \pi / 4}$ ($\beta = \sqrt{2}$), noting that criticality corresponds to $n-2p+1$ being a multiple of $\ell = 4$.  It may be helpful to recall that the dimensions of the irreducible modules for this $\beta$ and small $n$ were organised in a Bratteli diagram in \figref{fig:IsingDims}.

Consider first the rather trivial algebra $\tl{1}$.  Since it is one-dimensional and spanned by the unit $\wun$, its left regular representation is isomorphic to its only standard module $\NatMod_{1,0}$.  Thus,
\begin{equation}
\tl{1} = \ProjMod_{1,0} \cong \NatMod_{1,0}.
\end{equation}
Now apply the induced module construction to get $\tl{2} \otimes_{\tl{1}} \tl{1} \cong \Ind{\NatMod_{1,0}}$ as $\tl{2}$-modules.  Since $\tl{n+1} \otimes_{\tl{n}} \tl{n}$ is obviously isomorphic to $\tl{n+1}$ (just send $a \otimes b = ab \otimes \wun$ to $ab$), we obtain
\begin{equation} \label{eqn:LRR2}
\tl{2} \cong \Ind{\NatMod_{1,0}} \cong \NatMod_{2,1} \oplus \NatMod_{2,0}.
\end{equation}
Here, we have applied \corref{cor:IndSplit}, using the fact that $\brac{1,0}$ is not critical, because $\beta \neq 0$.  Note that both $\NatMod_{2,1}$ and $\NatMod_{2,0}$ are indecomposable, by \propref{prop:S/R}, and are direct summands of the left regular representation.  They are therefore principal indecomposable modules:  $\ProjMod_{2,1} \cong \NatMod_{2,1}$ and $\ProjMod_{2,0} \cong \NatMod_{2,0}$.

For $n=3$, we similarly apply the induced module construction to \eqnref{eqn:LRR2}, obtaining
\begin{equation}
\tl{3} \cong \brac{\tl{3} \otimes_{\tl{2}} \NatMod_{2,1}} \oplus \brac{\tl{3} \otimes_{\tl{2}} \NatMod_{2,0}} \cong \Ind{\NatMod_{2,1}} \oplus \Ind{\NatMod_{2,0}},
\end{equation}
since tensor product distributes over direct sums.  Now, $\brac{2,0}$ is only critical for $\beta = \pm 1$, hence \corref{cor:IndSplit} applies, giving $\Ind{\NatMod_{2,0}} \cong \NatMod_{3,1} \oplus \NatMod_{3,0}$.  Since $\Ind{\NatMod_{2,1}} \cong \NatMod_{3,1}$, we conclude that $\ProjMod_{3,1} \cong \NatMod_{3,1}$ and $\ProjMod_{3,0} \cong \NatMod_{3,0}$.  Note that this implies that $\tl{3} = 2 \: \ProjMod_{3,1} \oplus \ProjMod_{3,0}$, consistent with \thmref{thm:GW} as $\dim \IrrMod_{3,1} = 2$ and $\dim \IrrMod_{3,0} = 1$.

We pause to remark that for all $n$ considered thus far, we have demonstrated the decomposition 
\begin{equation}
\tl{n} \cong \bigoplus_{p=0}^{\flr{n/2}} \brac{\dim \IrrMod_{n,p}} \: \ProjMod_{n,p}.
\end{equation}
We may therefore deduce from \thmref{thm:GW} that the set of $\IrrMod_{n,p}$ with $0 \leqslant p \leqslant \flr{n/2}$ constitutes a complete set of pairwise non-isomorphic irreducibles, at least for $n \leqslant 3$ and $\beta=\sqrt2$.

At some point, criticality must enter the fray.  For $\beta = \sqrt{2}$, this first occurs for $n=4$.  Inducing our decomposition of $\tl{3}$ and analysing as above, we easily conclude that
\begin{equation}
\tl{4} \cong 2 \: \NatMod_{4,2} \oplus 2 \: \NatMod_{4,1} \oplus \Ind{\NatMod_{3,0}} ,
\end{equation}
hence that $\ProjMod_{4,2} \cong \NatMod_{4,2}$ and $\ProjMod_{4,1} \cong \NatMod_{4,1}$.  As $\dim \IrrMod_{4,2} = \dim \IrrMod_{4,1} = 2$ and $\dim \IrrMod_{4,0} = 1$ (see \figref{fig:IsingDims}), \thmref{thm:GW} requires that
\begin{equation}
\tl{4} \cong 2 \: \ProjMod_{4,2} \oplus 2 \: \ProjMod_{4,1} \oplus \ProjMod_{4,0} \oplus \cdots ,
\end{equation}
where the ``$\oplus \cdots$'' admits that there may be additional unknown principal indecomposables beyond the $\ProjMod_{4,p}$.\footnote{Such additional principal indecomposables would correspond to additional irreducibles.  The existence of these would therefore amount to a negative answer to the question of the completeness of the set of $\IrrMod_{4,p}$.}  Comparing gives $\Ind{\NatMod_{3,0}} \cong \ProjMod_{4,0} \oplus \cdots.$  Now, \corref{cor:IndSplit} does not help us to simplify $\Ind{\NatMod_{3,0}}$ because $\brac{3,0}$ is critical.  Instead, \corref{cor:Induct} gives the exact sequence
\begin{equation} \label{es:P40}
\dses{\NatMod_{4,1}}{}{\Ind{\NatMod_{3,0}}}{}{\NatMod_{4,0}}.
\end{equation}
We note that the composition factors of $\NatMod_{4,0}$ and $\NatMod_{4,1}$ are $\IrrMod_{4,1}$, $\IrrMod_{4,0}$ and $\IrrMod_{4,0}$, and that these are also the composition factors of $\Ind{\NatMod_{3,0}}$ by the exactness of \eqref{es:P40}.  If $\Ind{\NatMod_{3,0}} \ncong \ProjMod_{4,0}$, meaning that there exist additional unknown principal indecomposables (the ``$\oplus \cdots$''), then the additional principal indecomposables, and hence $\Ind{\NatMod_{3,0}}$, will have quotients, hence composition factors, which are not irreducibles of the form $\IrrMod_{4,p}$.  This contradicts our conclusion that the composition factors of $\Ind{\NatMod_{3,0}}$ all have the form $\IrrMod_{4,p}$, so it follows that these additional principal indecomposables do not exist and that we may identify $\ProjMod_{4,0}$ with $\Ind{\NatMod_{3,0}}$.  Consequently, the $\IrrMod_{4,p}$ furnish a complete set of pairwise non-isomorphic irreducibles.

Continuing, we induce our $n=4$ decomposition and argue as above to get
\begin{equation} \label{eqn:LRR5}
\tl{5} \cong 4 \: \ProjMod_{5,2} \oplus 2 \: \ProjMod_{5,1} \oplus \Ind{\ProjMod_{4,0}} ,
\end{equation}
with $\ProjMod_{5,2} \cong \NatMod_{5,2}$ and $\ProjMod_{5,1} \cong \NatMod_{5,1}$.  As $\dim \IrrMod_{5,2} = \dim \IrrMod_{5,1} = 4$ and $\dim \IrrMod_{5,0} = 1$, \thmref{thm:GW} yields the decomposition 
\begin{equation}\label{eq:p40}
\Ind{\ProjMod_{4,0}} \cong 2 \: \ProjMod_{5,1} \oplus \ProjMod_{5,0} \oplus \cdots,
\end{equation}
where we again allow the possibility of additional unknown principal indecomposables.  We therefore need to analyse the structure of $\Ind{\ProjMod_{4,0}}$.  For this, we induce each module of the exact sequence \eqref{es:P40}.  Because induction is right-exact (\propref{prop:IndRExact}), we obtain the exact sequence
\begin{equation} \label{es:IndP40}
\rdses{\Ind{\NatMod_{4,1}} \cong \NatMod_{5,2} \oplus \NatMod_{5,1}}{}{\Ind{\ProjMod_{4,0}}}{}{\Ind{\NatMod_{4,0}} \cong \NatMod_{5,1} \oplus \NatMod_{5,0}}
\end{equation}
which we have simplified using \corref{cor:IndSplit}.  Note that the central element $F_5$ has eigenvalue $0$ on $\NatMod_{5,2}$ and $\NatMod_{5,0}$, but eigenvalue $-2$ on $\NatMod_{5,1}$. Any homomorphism $A\rightarrow B$ will map generalised eigenspaces of $A$ into generalised eigenspaces of $B$ with the same eigenvalue $f$ of $F_n$. It follows that $\Ind{\ProjMod_{4,0}}$ decomposes into projectives as $\ProjMod \oplus \ProjMod'$ (see \propref{prop:ProjSum}) along the generalised eigenspaces of $F_5$ and that the exact sequence \eqref{es:IndP40} is equivalent to the exactness of
\begin{equation} \label{ess:P5p}
\rdses{\NatMod_{5,1}}{\iota}{\ProjMod}{}{\NatMod_{5,1}} \qquad \text{and} \qquad \rdses{\NatMod_{5,2}}{\iota'}{\ProjMod'}{}{\NatMod_{5,0}}.
\end{equation}

Since $\ProjMod_{5,1} \cong \NatMod_{5,1}$ is projective, the first exact sequence splits, hence $\ProjMod \cong \ProjMod_{5,1} \oplus \brac{\ProjMod_{5,1} / \ker \iota}$.  \eqnref{eq:p40} shows that $\Ind{\ProjMod_{4,0}}$ must contain two copies of $\ProjMod_{5,1}$ and these can come only from $\ProjMod$, since $F_n$ has distinct eigenvalues on $\ProjMod_{5,1}$ and $\ProjMod'$. This forces $\ker \iota = 0$, $\ProjMod \cong 2 \: \ProjMod_{5,1}$ and, thus, $\ProjMod' = \ProjMod_{5,0} \oplus \cdots$.  Once again, we rule out the existence of additional principal indecomposables by noting that the composition factors of both $\NatMod_{5,0}$ and $\mathcal{W} = \NatMod_{5,2} / \ker \iota'$ have the form $\IrrMod_{5,p}$.  The exactness of
\begin{equation} \label{es:W5}
\dses{\mathcal{W}}{}{\ProjMod'}{}{\NatMod_{5,0}}
\end{equation}
then finishes the job:  $\ProjMod' \cong \ProjMod_{5,0}$, so $\Ind{\ProjMod_{4,0}} \cong 2 \: \ProjMod_{5,1} \oplus \ProjMod_{5,0}$ and the $\IrrMod_{5,p}$ form a complete set of irreducibles.  It only remains to determine the module $\mathcal{W}$.  This follows quickly from \eqnref{eqn:LP=VV} by comparing dimensions:
\begin{subequations}
\begin{gather}
4 \dim \ProjMod_{5,2} + 4 \dim \ProjMod_{5,1} + \dim \ProjMod_{5,0} = 5 \dim \NatMod_{5,2} + 4 \dim \NatMod_{5,1} + \dim \NatMod_{5,0} \\
\Rightarrow \qquad \dim \ProjMod_{5,0} = \dim \NatMod_{5,2} + \dim \NatMod_{5,0} \qquad \Rightarrow \qquad \mathcal{W} = \NatMod_{5,2}.
\end{gather}
\end{subequations}
We therefore see that $\ker \iota' = 0$ and so we obtain the non-split exact sequence $\ses{\NatMod_{5,2}}{\ProjMod_{5,0}}{\NatMod_{5,0}}$.  This completes our analysis for $n=5$.

One can continue to analyse this example for higher $n$ with few further difficulties.  However, we have seen enough tricks by now to understand the principal indecomposables in general.  
\begin{theorem} \label{thm:IdentP}
Let $q$ be a root of unity and let $\ell$ be the minimal positive integer satisfying $q^{2 \ell} = 1$.  Define $\func{k}{n,p} \in \NN$ and $\func{r}{n,p} \in \set{1,\ldots,\ell-1,\ell}$ by $n-2p+1 = \func{k}{n,p} \ell + \func{r}{n,p}$.  Then, the principal indecomposables $\ProjMod_{n,p}$ of $\tl{n}$ are identified as follows:
\begin{itemize}[leftmargin=*]
\item If $\func{r}{n,p} = \ell$ (so $\brac{n,p}$ is critical), then $\ProjMod_{n,p} \cong \NatMod_{n,p}$.
\item If $\func{k}{n,p} = 0$ (so $\brac{n,p}$ lies to the left of the first critical line) and $\beta \neq 0$, then $\ProjMod_{n,p} \cong \NatMod_{n,p}$.
\item If $\func{k}{n,p} > 0$ and $\func{r}{n,p} \neq \ell$, then $\ProjMod_{n,p}$ is the direct summand of the $\func{r}{n,p}$-fold induced module
\[ \NatMod_{n-\func{r}{n,p},p} \underset{\text{$\func{r}{n,p}$ times}}{\underbrace{\uparrow \cdots \uparrow}} \]
consisting of the generalised eigenspace, under the action of $F_n$, whose generalised eigenvalue is $f_{n,p} = q^{n-2p+1} + q^{-\brac{n-2p+1}}$.  Furthermore, there is a non-split exact sequence
\begin{equation}\label{eq:THEes}
\dses{\NatMod_{n,p+\func{r}{n,p}}}{}{\ProjMod_{n,p}}{}{\NatMod_{n,p}}.
\end{equation}
\end{itemize}
Moreover, $\set{\IrrMod_{n,p} \st p=0,1,\ldots,\flr{n/2}}$ is a complete set of pairwise non-isomorphic irreducibles, except when $n$ is even and $\beta = 0$.  In this latter case, the range must be restricted to $p=0,1,\ldots, n/2-1$.
\end{theorem}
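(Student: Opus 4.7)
\medskip

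The plan is to proceed by induction on $n$, inducing the decomposition of $\tl{n-1}$ into its principal indecomposables and then splitting the resulting $\tl{n}$-modules using the central element $F_n$. The base case is the semisimple range, where $\ProjMod_{n,p} \cong \NatMod_{n,p}$ by \eqnref{eqn:PNI}; no $(n,p)$ with $k(n,p) > 0$ is non-critical until $n \geqslant \ell$, and for $\beta \neq 0$ the pairs with $k(n,p) = 0$ are always semisimple in the required sense. The inductive step rests on the identity $\tl{n} \cong \tl{n} \otimes_{\tl{n-1}} \tl{n-1}$, which together with the inductive form of \thmref{thm:GW} gives
\begin{equation}
\tl{n} \cong \bigoplus_p \bigl( \dim \IrrMod_{n-1,p} \bigr) \: \Ind{\ProjMod_{n-1,p}},
\end{equation}
so the task reduces to decomposing each $\Ind{\ProjMod_{n-1,p}}$ into principal indecomposables of $\tl{n}$.

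The key splitting tool is $F_n$. Because it is central, its generalised eigenspaces on any $\tl{n}$-module are direct summands; by \propref{lem:FEigs} and the orbit calculation \eqref{eqn:FOrbitEigs}, the eigenvalue $f_{n,p}$ distinguishes distinct orbits of $(n,p)$ under reflection across critical lines. The structure of each $\Ind{\ProjMod_{n-1,p}}$ can be read off by applying right-exactness of induction (\propref{prop:IndRExact}) to the inductive exact sequence \eqref{eq:THEes} for $\ProjMod_{n-1,p}$, then applying \corref{cor:Induct} to each standard module that appears. Splitting the result along generalised $F_n$-eigenspaces isolates summands supported on a single orbit, and when $(n-1,p)$ is non-critical the splitting \corref{cor:IndSplit} simplifies matters further.

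For the main non-trivial identification, fix $(n,p)$ with $k(n,p) > 0$ and $r = r(n,p) < \ell$. Since $n - r - 2p + 1 = k(n,p)\ell$, the pair $(n-r, p)$ is critical, hence $\NatMod_{n-r,p} \cong \ProjMod_{n-r,p}$ by the first bullet. Induction preserves projectivity (\propref{prop:IndProj}), so the iterated induction $\NatMod_{n-r,p}\uparrow\cdots\uparrow$ is projective, and its generalised $f_{n,p}$-eigenspace under $F_n$ is a projective direct summand. I would show this summand has exactly $\IrrMod_{n,p}$ as head and $\IrrMod_{n,p+r} \cong \RadMod_{n,p}$ as socle --- the latter via \corref{cor:2013}, which is the crucial input made available by \thmref{thm:RisL} --- by tracking composition factors through each induction via \eqref{SES:Ind}, observing that only the two orbit-mates survive the projection to the $f_{n,p}$-eigenspace. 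A dimension count anchored by \eqnref{eqn:LP=VV}, applied to the running decomposition of $\tl{n}$, forces the summand to have exactly the composition factors claimed, with no extra multiplicity. Indecomposability then follows because the head is simple, giving $\ProjMod_{n,p}$ and the exact sequence \eqref{eq:THEes}; non-splitness is forced by $\NatMod_{n,p}$ being indecomposable (\propref{prop:S/R}) and non-irreducible (by \thmref{thm:RisL}).

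Completeness of $\set{\IrrMod_{n,p}}$ comes for free: every composition factor appearing in the above inductive decomposition of $\tl{n}$ is of the form $\IrrMod_{n,p}$, so \thmref{thm:GW} (together with the inductive hypothesis) leaves no room for additional irreducibles, and in the exceptional case ($\beta=0$, $n=2p$) the vanishing $\IrrMod_{2p,p} = \set{0}$ is simply excluded. The principal obstacle in executing this plan is the bookkeeping of iterated induction combined with the $F_n$-eigenspace decomposition: one must verify that after $r$ inductions, the $f_{n,p}$-eigenspace has the correct total dimension $\dim \NatMod_{n,p} + \dim \NatMod_{n,p+r}$ and carries no composition factor outside the orbit of $(n,p)$. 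This amounts to a careful induction on $r$ using \corref{cor:Induct} and the multiplicities forced by \eqnref{eqn:LP=VV}, and is the heart of the argument.
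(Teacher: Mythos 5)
Your proposal follows the same broad strategy as the paper's proof: induction on $n$, using the fact that $\tl{n} \cong \Ind{\tl{n-1}}$ so that each $\Ind{\ProjMod_{n-1,p'}}$ is a projective $\tl{n}$-module, projecting onto generalised $F_n$-eigenspaces to isolate the contribution from each non-critical orbit (using the observation, proved in \eqnref{eqn:FOrbitEigs}, that $F_n$ separates orbits completely), and closing the argument with the Wedderburn dimension constraint \eqnref{eqn:LP=VV}. That is exactly the paper's road map, and your identification of \thmref{thm:RisL} and \corref{cor:2013} as the crucial external input is also accurate.

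However, the structural claims you make along the way contain errors that would derail a full write-up. You assert that the summand has ``$\IrrMod_{n,p+r} \cong \RadMod_{n,p}$ as socle.'' Both halves are wrong. By \corref{cor:2013} (with the convention that $(n,p)$ and $(n,p')$ form a symmetric pair with $p > p'$), $\RadMod_{n,p} \cong \IrrMod_{n,p'}$ where $p' = p + \func{r}{n,p} - \ell < p$, not $\IrrMod_{n,p+\func{r}{n,p}}$ --- the radical lies one step to the \emph{right} (larger $n-2p+1$), whereas $p + \func{r}{n,p}$ lies to the \emph{left}. Moreover, the socle of $\ProjMod_{n,p}$ is $\IrrMod_{n,p}$, not $\IrrMod_{n,p+\func{r}{n,p}}$: the submodule $\NatMod_{n,p+\func{r}{n,p}}$ in \eqref{eq:THEes} has socle $\RadMod_{n,p+\func{r}{n,p}} \cong \IrrMod_{n,p}$, since $(p+\func{r}{n,p}) + \func{r}{n,p+\func{r}{n,p}} - \ell = p$, and this agrees with the self-dual Loewy diagram in \secref{sec:Summary}. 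Similarly, ``only the two orbit-mates survive'' undersells the count: when both exist, $\ProjMod_{n,p}$ has composition factors drawn from \emph{three} distinct orbit positions (with $\IrrMod_{n,p}$ itself appearing twice), and verifying that nothing else escapes into the eigenspace is precisely where the paper's case analysis on $\func{r}{n,p} \in \{1, \ell-1, \text{else}\}$ is needed --- you acknowledge the bookkeeping is ``the heart of the argument'' but do not address why the boundary cases $\func{r}{n,p} = 1$ and $\func{r}{n,p} = \ell-1$ behave differently (one of $(n-1,p)$, $(n-1,p-1)$ becomes critical) and how that alters the multiplicity tally. Finally, \propref{prop:IndProj} does not exist in the paper; the projectivity of $\Ind{\ProjMod_{n-1,p}}$ is instead deduced from the fact that it is a direct summand of $\tl{n}$ (via $\tl{n} \cong \tl{n} \otimes_{\tl{n-1}} \tl{n-1}$) and \propref{prop:ProjSum}. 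These are real errors that must be repaired, even though the overall strategy is sound and matches the paper.
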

\noindent The proof, which constitutes the remainder of this section, is by induction on $n$.  As in the $\ell = 4$ example detailed above, the key is to note that applying the induced module construction to a projective $\tl{n-1}$-module results in a projective $\tl{n}$-module.  The labour mostly concerns keeping track of the multiplicities with which these projectives appear. Before commencing the induction arguments, it is convenient to deal with the exceptional cases $\beta = 0, \pm 2$. Indeed, the above statement makes it clear that the case $\beta=0$ and $n$ even is special, since there is now no principal indecomposable $\ProjMod_{2p,p}$, the number of distinct principal indecomposables being then one less.
\begin{proposition} \label{prop:IdentPB=0}
When either $\beta = \pm 2$ ($\ell = 1$), or $n$ is odd and $\beta = 0$ ($\ell = 2$), the principal indecomposables of $\tl{n}$ may be identified as $\ProjMod_{n,p} \cong \NatMod_{n,p}\cong\IrrMod_{n,p}$, for $p = 0, 1, \ldots, \flr{n/2}$.  Moreover, the $\IrrMod_{n,p}$ form a complete set of pairwise non-isomorphic irreducibles for $0 \leqslant p \leqslant \flr{n/2}$.

When $n$ is even and $\beta = 0$, we have instead $\ProjMod_{n,p} \cong \Ind{\NatMod_{n-1,p}}$, for $p = 0, 1, \ldots, n/2 - 1$.  The $\IrrMod_{n,p}$ therefore form a complete set of pairwise non-isomorphic irreducibles for $0 \leqslant p \leqslant n/2 - 1$.
\end{proposition}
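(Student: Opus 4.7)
The proof naturally splits into the two regimes listed in the statement, and both largely echo techniques already used for the $\beta = \sqrt{2}$ example preceding \thmref{thm:IdentP}.

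For the first regime ($\beta = \pm 2$, or $\beta = 0$ with $n$ odd), every pair $\brac{n,p}$ is critical and hence $\tl{n}$ is semisimple, as already observed after \corref{cor:RadCrit}. Semisimplicity then identifies the three modules $\NatMod_{n,p}$, $\IrrMod_{n,p}$ and $\ProjMod_{n,p}$ for every $p$, and \corref{cor:VLDistinct} ensures that distinct values of $p$ give non-isomorphic modules: its hypothesis holds because the form $\inner{\cdot}{\cdot}_{n,p}$ is non-zero in both subcases (either $\beta \neq 0$, or $n$ odd forces $n \neq 2p$, making the ``snake'' configuration \eqref{eqn:Snake} contribute). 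Completeness of the irreducibles is then a direct consequence of the identity \eqref{eqn:Curious}, which yields $\sum_{p} (\dim \NatMod_{n,p})^2 = \dim \tl{n}$, combined with \thmref{thm:W}.

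The second regime ($\beta = 0$, $n$ even) is genuinely harder, in part because no $\brac{n,p}$ on line $n$ is critical but $\tl{n}$ fails to be semisimple, and because $F_n$ is useless here (it vanishes on every $\NatMod_{n,p}$, as one computes from \propref{lem:FEigs} at $q = \pm\ii$). The plan is to proceed by induction on even $n$, with the base case $n = 2$ handled by direct computation of the two-dimensional left regular representation. For the inductive step, the first regime (applied to the odd integer $n-1$) gives $\tl{n-1}$ semisimple with $\tl{n-1} \cong \bigoplus_{p} d_{n-1,p}\NatMod_{n-1,p}$. Inducing yields a decomposition $\tl{n} \cong \bigoplus_{p} d_{n-1,p}\Ind{\NatMod_{n-1,p}}$ of the left regular $\tl{n}$-module into projective summands, using that induction preserves projectivity.

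The crux is to show $\Ind{\NatMod_{n-1,p}} \cong \ProjMod_{n,p}$ for $0 \leqslant p \leqslant n/2-1$. As this module is projective, it decomposes via its simple top as $\bigoplus_{p'} \dim \Hom_{\tl{n}}(\Ind{\NatMod_{n-1,p}}, \IrrMod_{n,p'}) \cdot \ProjMod_{n,p'}$, and Frobenius reciprocity (\propref{prop:Frob}) reduces the Hom space to $\Hom_{\tl{n-1}}(\NatMod_{n-1,p}, \Res{\IrrMod_{n,p'}})$. I expect the main obstacle to be the explicit computation of $\Res{\IrrMod_{n,p'}}$. The plan for this is to combine: no $\brac{n,p'}$ being critical, so \corref{cor:ResSplit} gives $\Res{\NatMod_{n,p'}} \cong \NatMod_{n-1,p'} \oplus \NatMod_{n-1,p'-1}$; the short exact sequence $\ses{\RadMod_{n,p'}}{\NatMod_{n,p'}}{\IrrMod_{n,p'}}$, whose restriction splits by semisimplicity of $\tl{n-1}$; and \thmref{thm:RisL}, which identifies $\RadMod_{n,p'} \cong \IrrMod_{n,p'-1}$. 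Induction on $p'$, starting from the trivial computation $\Res{\IrrMod_{n,0}} \cong \NatMod_{n-1,0}$, should then yield $\Res{\IrrMod_{n,p'}} \cong \NatMod_{n-1,p'}$ for every relevant $p'$.

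Granted this, Schur's lemma makes $\Hom_{\tl{n-1}}(\NatMod_{n-1,p},\NatMod_{n-1,p'})$ one-dimensional when $p = p'$ and zero otherwise, so $\Ind{\NatMod_{n-1,p}}$ has simple top $\IrrMod_{n,p}$ and is therefore $\ProjMod_{n,p}$. The resulting decomposition $\tl{n} \cong \bigoplus_{p=0}^{n/2-1} d_{n-1,p}\ProjMod_{n,p}$ exhausts the principal indecomposables, so by \thmref{thm:GW} the $\IrrMod_{n,p}$ with $0 \leqslant p \leqslant n/2-1$ form a complete set of irreducibles; their pairwise non-isomorphism is again supplied by \corref{cor:VLDistinct}, applicable since $n \neq 2p$ throughout this range.
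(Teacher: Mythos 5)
Your argument for the first regime coincides with the paper's; the second regime is where you genuinely diverge. Where the paper identifies $\Ind{\NatMod_{n-1,p}}$ with $\ProjMod_{n,p}$ by a multiplicity count --- it argues that $\Ind{\NatMod_{n-1,p}} \cong \ProjMod_{n,p} \oplus \cdots$ and then invokes \thmref{thm:GW} to show that the $\dim \IrrMod_{n,p}$ copies of $\ProjMod_{n,p}$ demanded by the regular representation are already exhausted, leaving no room for extras --- you instead pin down the simple top of $\Ind{\NatMod_{n-1,p}}$ directly via Frobenius reciprocity, reducing the computation to that of $\Res{\IrrMod_{n,p'}}$, which you determine by induction on $p'$ using the semisimplicity of $\tl{n-1}$, the splitting of restricted standards and \thmref{thm:RisL}. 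This is a clean alternative, and the identity $\Res{\IrrMod_{n,p'}} \cong \NatMod_{n-1,p'}$ that you prove along the way is a nice fact in its own right. (As a side note, your ``induction on even $n$'' is vacuous: nothing in the inductive step uses the $n-2$ case, only the semisimplicity of $\tl{n-1}$ from the first regime.)

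There is, however, one genuine gap. When you write that the projective $\Ind{\NatMod_{n-1,p}}$ ``decomposes via its simple top as $\bigoplus_{p'} \dim \Hom(\Ind{\NatMod_{n-1,p}}, \IrrMod_{n,p'}) \cdot \ProjMod_{n,p'}$'' and then read off from $\dim \Hom = \delta_{p,p'}$ that the top is simple, you are implicitly assuming that every simple summand of the top is of the form $\IrrMod_{n,p'}$. But that is exactly the completeness claim you have not yet established: a priori, $\Ind{\NatMod_{n-1,p}}$ might decompose as $\ProjMod_{n,p} \oplus \ProjMod_L$ for some simple $L$ not isomorphic to any $\IrrMod_{n,p'}$, and your Hom computation would not detect this $L$ (it would contribute $0$ to every $\Hom(\cdot, \IrrMod_{n,p'})$). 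The fix is the observation the paper itself makes: by \corref{cor:Induct} there is an exact sequence $\ses{\NatMod_{n,p+1}}{\Ind{\NatMod_{n-1,p}}}{\NatMod_{n,p}}$, and by \corref{cor:2013} the composition factors of the standard modules are all of the form $\IrrMod_{n,p'}$; hence every simple quotient of $\Ind{\NatMod_{n-1,p}}$ --- in particular every summand of its top --- lies in this list. Once this is in place, your Hom calculation does indeed force the top to be $\IrrMod_{n,p}$, yielding $\Ind{\NatMod_{n-1,p}} \cong \ProjMod_{n,p}$, and the rest of your argument goes through.
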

\begin{proof}
When $\beta = \pm 2$ or $n$ is odd and $\beta = 0$, $\tl{n}$ is semisimple (as remarked after \corref{cor:RadCrit}), hence $\ProjMod_{n,p} \cong \NatMod_{n,p} = \IrrMod_{n,p}$ for all $p$.  So, assume that $n$ is even and $\beta = 0$.  Then, $\ell = 2$ and the semisimplicity of $\tl{n-1}$ gives
\begin{equation}
\tl{n-1} \cong \bigoplus_{p=0}^{n/2-1} \brac{\dim \IrrMod_{n-1,p}} \NatMod_{n-1,p},
\end{equation}
by \thmref{thm:W}, hence
\begin{equation} \label{eqn:TLDecompB=0}
\tl{n} \cong \bigoplus_{p=0}^{n/2-1} \brac{\dim \IrrMod_{n-1,p}} \Ind{\NatMod_{n-1,p}} = \bigoplus_{p=0}^{n/2-1} \brac{\dim \IrrMod_{n,p}} \Ind{\NatMod_{n-1,p}}.
\end{equation}
Here, we have used \corref{cor:DimL}, noting that $\func{r}{n,p} = 1 = \ell-1$.

We see that $\Ind{\NatMod_{n-1,p}}$ is a projective $\tl{n}$-module, for $0 \leqslant p \leqslant n/2-1$, hence it may be written as a direct sum of principal indecomposables.  Because of the exact sequence
\begin{equation}
\dses{\NatMod_{n,p+1}}{}{\Ind{\NatMod_{n-1,p}}}{}{\NatMod_{n,p}},
\end{equation}
we know that $\IrrMod_{n,p}$ is an irreducible quotient of $\Ind{\NatMod_{n-1,p}}$, hence (at least) one of these principal indecomposables is $\ProjMod_{n,p}$.  Thus, $\Ind{\NatMod_{n-1,p}} \cong \ProjMod_{n,p} \oplus \cdots$.  Moreover, it follows from this sequence and the fact that the composition factors of the standard modules all have the form $\IrrMod_{n,p'}$, with $0 \leqslant p' \leqslant n/2-1$, that any other principal indecomposable appearing in this decomposition has the form $\ProjMod_{n,p'}$, with $0 \leqslant p' \leqslant n/2-1$.  But, \thmref{thm:GW} ensures that the multiplicity of $\ProjMod_{n,p}$ in $\tl{n}$ is exactly $\dim \IrrMod_{n,p}$.  We see now that this is only consistent with \eqref{eqn:TLDecompB=0} if $\Ind{\NatMod_{n-1,p}} \cong \ProjMod_{n-1,p}$.  There are therefore no other principal indecomposables, nor other irreducibles.
\end{proof}

\begin{proof}[Proof of \thmref{thm:IdentP}]
Because of \propref{prop:IdentPB=0}, we can (and will) assume throughout the proof of \thmref{thm:IdentP} that $\ell > 2$.  So recall that when $n=1$, $\ProjMod_{1,0} = \tl{1}$ is the only principal indecomposable and it coincides with $\NatMod_{1,0}$ (and $\IrrMod_{1,0}$).  This therefore agrees with the statement of the theorem.  Clearly $\IrrMod_{1,0}$ is the unique irreducible, up to isomorphism. The proof then proceeds in three steps: We first establish that $\set{\IrrMod_{n,p}}$ is a complete set of non-isomorphic irreducibles, then compute how many copies of the projective $\ProjMod_{n,p}$ can be accounted for by the induction hypothesis, and end with the study of the structure of the projective modules.

\smallskip

\noindent{\itshape The set $\set{\IrrMod_{n,p}}$ is a complete set of non-isomorphic irreducibles} --- The module $\tl{n}$ can be written, on the one hand, as
\begin{subequations}
\begin{align}
\tl{n} &= \bigoplus_{p=0}^{\flr{\brac{n-1} / 2}} \brac{\dim \IrrMod_{n-1,p}} \Ind{\ProjMod_{n-1,p}}, \label{eqn:TlByInd} \\
\intertext{by induction over $n$, and, on the other hand, as}
&= \bigoplus_{p=0}^{\flr{n/2}} \brac{\dim \IrrMod_{n,p}}{\ProjMod_{n,p}} \oplus \ProjMod_{\text{new}}, \label{eqn:TlByGW}
\end{align}
\end{subequations}
by \thmref{thm:GW}.  Here, $\ProjMod_{\text{new}}$ is a (possibly empty) direct sum of principal indecomposables that \emph{are not} of the form $\ProjMod_{n,p'}$.  Our first task is to show that $\ProjMod_{\text{new}} = 0$.  This will follow as a consequence of the following analysis of the induced modules $\Ind{\ProjMod_{n-1,p}}$ appearing in \eqref{eqn:TlByInd}.  There are three cases to consider:
\begin{enumerate}[leftmargin=*, label=\textup{(\roman*)}, widest=iii]
\item\label{it:ProofP1} If $\func{k}{n-1,p} = 0$ and $\func{r}{n-1,p} < \ell$, so $\brac{n-1,p}$ lies to the left of the first critical line, then the inductive hypothesis gives $\ProjMod_{n-1,p} \cong \NatMod_{n-1,p}$ and $\Ind{\ProjMod_{n-1,p}} \cong \NatMod_{n,p+1} \oplus \NatMod_{n,p}$, by \corref{cor:IndSplit}.  We therefore identify $\ProjMod_{n,p+1} \cong \NatMod_{n,p+1}$, $\ProjMod_{n,p} \cong \NatMod_{n,p}$ and
\begin{equation} \label{eqn:IndProj1}
\Ind{\ProjMod_{n-1,p}} \cong \ProjMod_{n,p+1} \oplus \ProjMod_{n,p}.
\end{equation}
(Recall that direct summands of projective modules are projective (\propref{prop:ProjSum}).)
\item\label{it:ProofP2} If $\func{r}{n-1,p} = \ell$, so $\brac{n-1,p}$ is critical, then we again have $\ProjMod_{n-1,p} \cong \NatMod_{n-1,p}$, but now \corref{cor:IndSplit} does not apply and we are left with the exact sequence $\ses{\NatMod_{n,p+1}}{\Ind{\ProjMod_{n-1,p}}}{\NatMod_{n,p}}$ that follows from \corref{cor:Induct}.  However, this tells us that the projective module $\Ind{\ProjMod_{n-1,p}}$ has $\IrrMod_{n,p}$ as a quotient, hence that
\begin{equation} \label{eqn:IndProj2}
\Ind{\ProjMod_{n-1,p}} \cong \ProjMod_{n,p} \oplus \ProjMod_{\text{old}}.
\end{equation}
Here, $\ProjMod_{\text{old}}$ is a (possibly empty) direct sum of principal indecomposables that \emph{are} of the form $\ProjMod_{n,p'}$ because the composition factors of $\Ind{\ProjMod_{n-1,p}}$ are those of $\NatMod_{n,p+1}$ and $\NatMod_{n,p}$, hence have the form $\IrrMod_{n,p'}$ for some $p'$. 
\item\label{it:ProofP3} Finally, if $\func{k}{n-1,p} > 0$ and $\func{r}{n-1,p} < \ell$ (all remaining cases), then the inductive hypothesis says that the sequence $\ses{\NatMod_{n-1,p+\func{r}{n-1,p}}}{\ProjMod_{n-1,p}}{\NatMod_{n-1,p}}$ is exact.  Applying the induction functor of \propref{prop:IndRExact} and \corref{cor:IndSplit}, we obtain the exact sequence
\begin{equation} \label{res:Proof3}
\rdses{\NatMod_{n,p+\func{r}{n-1,p}+1} \oplus \NatMod_{n,p+\func{r}{n-1,p}}}{}{\Ind{\ProjMod_{n-1,p}}}{}{\NatMod_{n,p+1} \oplus \NatMod_{n,p}}
\end{equation}
which, when projected onto (generalised) eigenspaces of $F_n$, results in exact sequences
\begin{equation}
\rdses{\NatMod_{n,p+\func{r}{n-1,p}+1}}{}{\ProjMod}{}{\NatMod_{n,p}} \qquad \text{and} \qquad 
\rdses{\NatMod_{n,p+\func{r}{n-1,p}}}{}{\ProjMod'}{}{\NatMod_{n,p+1}},
\end{equation}
where $\Ind{\ProjMod_{n-1,p}} \cong \ProjMod \oplus \ProjMod'$.  The projectives $\ProjMod$ and $\ProjMod'$ then have quotients $\IrrMod_{n,p}$ and $\IrrMod_{n,p+1}$, respectively, so we can write
\begin{equation} \label{eqn:IndProj3}
\Ind{\ProjMod_{n-1,p}} \cong \ProjMod_{n,p} \oplus \ProjMod_{n,p+1} \oplus \ProjMod'_{\text{old}}.
\end{equation}
Here, $\ProjMod'_{\text{old}}$ is a (possibly empty) direct sum of principal indecomposables that \emph{are} of the form $\ProjMod_{n,p'}$ since, as before, all composition factors of $\Ind{\ProjMod_{n-1,p}}$ are of the form $\IrrMod_{n,p'}$ for some $p'$. 
\end{enumerate}
Substituting these conclusions into \eqref{eqn:TlByInd} and comparing with \eqref{eqn:TlByGW}, we see that all the principal indecomposables of $\tl{n}$ have the form $\ProjMod_{n,p}$.  Thus, $\ProjMod_{\text{new}} = 0$ and the $\IrrMod_{n,p}$ with $0 \leqslant p \leqslant \flr{n/2}$ constitute a complete set of pairwise non-isomorphic irreducibles.

\medskip

\noindent{\itshape Counting copies of $\ProjMod_{n,p}$} --- We can now use the above information to complete the proof.  The easiest case is, naturally enough, when $\brac{n,p}$ is sufficiently far from the critical lines.  However, the 
tactic in all cases is the same:  
We fix $p$ and identify some of the $\Ind{\ProjMod_{n-1,p'}}$ that have copies of $\ProjMod_{n,p}$ appearing in their decomposition. 
Then, we compare \eqref{eqn:TlByInd} and \eqref{eqn:TlByGW} to verify that the copies of $\ProjMod_{n,p}$ obtained from these $\ProjMod_{n-1,p'}$ saturate the multiplicity of $\ProjMod_{n,p}$ in the regular representation. This information is then used to identify $\ProjMod_{n,p}$, at least at the level of an exact sequence.

\smallskip

\noindent \textbf{Case 1} [$1 < \func{r}{n,p} < \ell - 1$]:  In this case, both $\brac{n-1,p}$ and $\brac{n-1,p-1}$ are non-critical.  We split the analysis into two sub-cases for clarity:
\begin{itemize}[leftmargin=*]
\item If $\func{k}{n,p} = 0$, then $\brac{n-1,p}$ and $\brac{n-1,p-1}$ lie to the left of the first critical line, so case \ref{it:ProofP1} above applies.\footnote{
Of course, it might happen that $p>\flr{ (n-1)/2}$, so $\brac{n-1,p}$ falls outside the Bratteli diagram.  If so, then we have $n=2p$ and we must formally set $\dim\IrrMod_{n-1,p}$ to zero as indicated by \corref{cor:DimL}.  With this proviso, the argument that follows remains unchanged.
}
\eqnref{eqn:IndProj1} then says that each copy of $\Ind{\ProjMod_{n-1,p}}$ and each copy of $\Ind{\ProjMod_{n-1,p-1}}$ contribute exactly one copy of $\ProjMod_{n,p}$ to $\tl{n}$.  From \eqref{eqn:TlByInd}, this means that we get $\dim \IrrMod_{n-1,p} + \dim \IrrMod_{n-1,p-1}$ copies of $\ProjMod_{n,p}$ in total.  By \corref{cor:DimL}, this is $\dim \IrrMod_{n,p}$ copies, which completely accounts for the multiplicity of $\ProjMod_{n,p}$ in \eqref{eqn:TlByGW}.   
This also means that $\ProjMod_{\text{old}}$ and $\ProjMod'_{\text{old}}$ appearing in \eqref{eqn:IndProj2} and \eqref{eqn:IndProj3} may not contain any $\ProjMod_{n,p}$ such that $1<r(n,p)<\ell-1$. Finally, 
case \ref{it:ProofP1} also gives $\ProjMod_{n,p} \cong \NatMod_{n,p}$, in agreement with \thmref{thm:IdentP}.
\item When $\func{k}{n,p} > 0$, case \ref{it:ProofP3} applies and we find that $\Ind{\ProjMod_{n-1,p}}$ and $\Ind{\ProjMod_{n-1,p-1}}$ again contribute at least $\dim \IrrMod_{n-1,p} + \dim \IrrMod_{n-1,p-1} = \dim \IrrMod_{n,p}$ copies of $\ProjMod_{n,p}$, by \eqnref{eqn:IndProj3} and its analogue with $p \to p-1$.  The projections of the exact sequence \eqref{res:Proof3}, and its $p \to p-1$ analogue, onto the (generalised) eigenspace of $F_n$ of eigenvalue $f_{n,p}$ are
\begin{equation}
\rdses{\NatMod_{n,p+\func{r}{n-1,p}+1}}{\iota}{\ProjMod}{}{\NatMod_{n,p}} \qquad \text{and} \qquad 
\rdses{\NatMod_{n,p-1+\func{r}{n-1,p-1}}}{\iota'}{\ProjMod'}{}{\NatMod_{n,p}},
\end{equation}
which shows that the projectives $\ProjMod$ and $\ProjMod'$ each have at least one direct summand isomorphic to $\ProjMod_{n,p}$.  Noting that $p+\func{r}{n-1,p}+1 = p-1+\func{r}{n-1,p-1} = p+\func{r}{n,p}$, we can rewrite these exact sequences in the form
\begin{equation} \label{es:Nearly1}
\dses{\frac{\NatMod_{n,p+\func{r}{n,p}}}{\ker \iota}}{}{\ProjMod}{}{\NatMod_{n,p}} \qquad \text{and} \qquad 
\dses{\frac{\NatMod_{n,p+\func{r}{n,p}}}{\ker \iota'}}{}{\ProjMod'}{}{\NatMod_{n,p}}.
\end{equation}
Now, if $\ProjMod \ncong \ProjMod_{n,p}$ or $\ProjMod' \ncong \ProjMod_{n,p}$, then we would generate additional copies of projectives $\ProjMod_{n,p'}$ with $F_n$-eigenvalue $f_{n,p'} = f_{n,p}$.  However, all such $\brac{n,p'}$ will have $1 < \func{r}{n,p'} < \ell - 1$, hence are covered by our analysis (see \eqnref{eqn:FOrbitEigs}). Any additional copies of such a $\ProjMod_{n,p'}$ would then contradict \eqnref{eqn:TlByGW}, hence we conclude that $\ProjMod \cong \ProjMod_{n,p}$ and $\ProjMod' \cong \ProjMod_{n,p}$.  The exact sequences \eqref{es:Nearly1} will then prove \thmref{thm:IdentP}, in the case $1 < \func{r}{n,p} < \ell - 1$, once we show that $\ker \iota = \ker \iota' = 0$.  This will follow from a simple dimension argument after we have settled the remaining cases.
\end{itemize}

\smallskip

\noindent \textbf{Case 2} [$\func{r}{n,p} = 1$ or $\func{r}{n,p} = \ell - 1$]:  When $\func{r}{n,p} = 1$, $\brac{n-1,p}$ becomes critical though $\brac{n-1,p-1}$ does not (recall that we may assume that $\ell > 2$).  However, this has almost no effect upon the analysis --- as with Case 1, we arrive at the exact sequences \eqref{es:Nearly1} (the only difference is that we use case \ref{it:ProofP2} above and that the first sequence has $\ker \iota = 0$).\footnote{
From here on, we will omit explicit consideration of the case $\func{k}{n,p} = 0$.  It is easy to see that this case is recovered from the general case by setting modules with labels $p > \flr{n/2}$ to zero.
}
In particular, all copies of $\ProjMod_{n,p}$ in $\tl{n}$ are accounted for by inducing $\ProjMod_{n-1,p}$ and $\ProjMod_{n-1,p-1}$.  However, the additional projectives $\ProjMod_{n,p'}$, with $f_{n,p'} = f_{n,p}$, that would be generated if $\ProjMod \ncong \ProjMod_{n,p}$ or $\ProjMod' \ncong \ProjMod_{n,p}$, could now have $\func{r}{n,p'} = \ell - 1$.  To complete the argument as in Case 1, we therefore need to verify that the $\ProjMod_{n,p'}$ with $\func{r}{n,p'} = \ell - 1$ are likewise also accounted for by inducing appropriate modules $\ProjMod_{n-1,p''}$.

So, suppose that $\func{r}{n,p} = \ell - 1$.  Now, $\brac{n-1,p}$ is non-critical, but $\brac{n-1,p-1}$ is critical (and this makes a difference!).  As in the analysis of Case 1, inducing $\ProjMod_{n-1,p}$ leads to an exact sequence
\begin{equation} \label{es:Nearly2}
\dses{\frac{\NatMod_{n,p+\func{r}{n,p}}}{\ker \iota}}{}{\ProjMod}{}{\NatMod_{n,p}}
\end{equation}
(where $\ProjMod$ is the projection of $\Ind{\ProjMod_{n-1,p}}$ onto the $F_n$-eigenspace of eigenvalue $f_{n,p}$) and thereby to the conclusion that $\Ind{\ProjMod_{n-1,p}}$ contributes $\dim \IrrMod_{n-1,p}$ copies of $\ProjMod_{n,p}$ to $\tl{n}$.  However, $\dim \IrrMod_{n-1,p} = \dim \IrrMod_{n,p}$ in this case, by \corref{cor:DimL}, so all copies of $\ProjMod_{n,p}$ in $\tl{n}$ are accounted for, this time by only inducing $\ProjMod_{n-1,p}$.  The same argument as for Case 1 now identifies $\ProjMod_{n,p}$ as $\ProjMod$ (and $\ProjMod'$) when $\func{r}{n,p} = 1$ or $\ell - 1$.  Again, it only remains to show that $\ker \iota = \ker \iota' = 0$.

\smallskip

\noindent \textbf{Case 3} [$\func{r}{n,p} = \ell$]:  When $\brac{n,p}$ is critical, $\func{r}{n-1,p} = \ell - 1$ and $\func{r}{n-1,p-1} = 1$.  Inducing and projecting as above, we obtain exact sequences for the summands $\ProjMod$ of $\Ind{\ProjMod_{n-1,p}}$ and $\ProjMod'$ of $\Ind{\ProjMod_{n-1,p-1}}$ whose $F_n$-eigenvalue is $f_{n,p}$:
\begin{equation}
\rdses{\NatMod_{n,p + \ell}}{\iota}{\ProjMod}{}{\NatMod_{n,p}} \qquad \text{and} \qquad 
\rdses{\NatMod_{n,p}}{\iota'}{\ProjMod'}{}{\NatMod_{n,p}}.
\end{equation}
Now, all of the standard modules appearing in these sequences are critical, hence irreducible (\corref{cor:RadCrit}).  There are therefore only three possibilities for $\ProjMod$:  First, $\ker \iota = \NatMod_{n,p + \ell}$, hence $\ProjMod \cong \NatMod_{n,p}$.  Second, $\ker \iota = 0$ and $\ProjMod$ is indecomposable with exact sequence $\ses{\NatMod_{n,p + \ell}}{\ProjMod}{\NatMod_{n,p}}$.  Third, $\ker \iota = 0$ and $\ProjMod$ decomposes as $\NatMod_{n,p + \ell} \oplus \NatMod_{n,p}$. Each $\Ind{\ProjMod_{n-1,p}}$ therefore accounts for one copy of $\ProjMod_{n,p}$ in the first two possibilities (for a total of $\dim\IrrMod_{n-1,p}$ copies), but one copy of $\ProjMod_{n,p}$ and one copy of $\ProjMod_{n,p + \ell}$ in the third (so the total becomes $\dim\IrrMod_{n-1,p}+\dim\IrrMod_{n-1,p-\ell}$).  The possibilities for $\ProjMod'$ are likewise $\ProjMod' \cong \NatMod_{n,p}$, $\ProjMod'$ indecomposable with exact sequence $\ses{\NatMod_{n,p}}{\ProjMod'}{\NatMod_{n,p}}$, and $\ProjMod' \cong 2 \: \NatMod_{n,p}$, so each $\Ind{\ProjMod_{n-1,p-1}}$ contributes one copy of $\ProjMod_{n,p}$ in the first two possibilities (for a total of $\dim\IrrMod_{n-1,p-1}$ copies), but two copies in the third (yielding $2\dim\IrrMod_{n-1,p-1}$ copies in all). 

The number of copies of $\ProjMod_{n,p}$ required by \eqnref{eqn:TlByGW} is
\begin{align}
\dim \IrrMod_{n,p} &= \dim \NatMod_{n,p} = \dim \NatMod_{n-1,p} + \dim \NatMod_{n-1,p-1} & &\text{(by \propref{prop:Restrict})} \notag \\
&= \dim \IrrMod_{n-1,p} + \dim \RadMod_{n-1,p} + \dim \IrrMod_{n-1,p-1} + \dim \RadMod_{n-1,p-1} & &\text{(by \propref{prop:S/R})} \notag \\
&= \dim \IrrMod_{n-1,p} + 2 \: \dim \IrrMod_{n-1,p-1} + \dim \IrrMod_{n-1,p-\ell} & &\text{(by \propref{prop:DimR=DimL}).}
\end{align}
Considering all the critical $\brac{n,p}$, we see that this multiplicity can only be attained if the third possibilities for $\ProjMod$ and $\ProjMod'$ occur in each case.\footnote{
Again, any module whose label is out of the allowed range $0\leqslant p\leqslant \lfloor n/2\rfloor$ is understood to be trivial.
}
Then, contributions are received from inducing $\ProjMod_{n-1,p}$, $\ProjMod_{n-1,p-1}$ and $\ProjMod_{n-1,p-\ell}$.  As the second possibility above is 
ruled out, it follows that $\ProjMod_{n,p} \cong \NatMod_{n,p}$ for all critical $\brac{n,p}$, as required.

\medskip

\noindent{\itshape The structure of the projective $\ProjMod_{n,p}$} --- It only remains to prove that $\ker \iota$ (and $\ker \iota'$) are $0$ in the exact sequences \eqref{es:Nearly1} and \eqref{es:Nearly2} that we have derived for the non-critical $\ProjMod_{n,p}$:
\begin{equation} \label{es:Nearly}
\dses{\frac{\NatMod_{n,p+\func{r}{n,p}}}{\ker \iota}}{}{\ProjMod_{n,p}}{}{\NatMod_{n,p}}.
\end{equation}
We remark that this equation still holds when $\brac{n,p}$ lies to the left of the first critical line ($\func{k}{n,p} = 0$) if we understand that the $\NatMod_{n,p'}$ with $p' > \flr{n/2}$ are $0$.

Since we have shown that the $\ProjMod_{n,p}$ form a complete set of principal indecomposables, the constraint \eqref{eqn:LP=VV} relating the dimensions of the $\ProjMod_{n,p}$, $\NatMod_{n,p}$ and $\IrrMod_{n,p}$ may be written in the form
\begin{equation}
\sum_{p=0}^{\flr{n/2}} \dim \IrrMod_{n,p} \: \dim \ProjMod_{n,p} = \sum_{p=0}^{\flr{n/2}} \brac{\dim\NatMod_{n,p}}^2.
\end{equation}
This constraint is now used to determine the dimensions of the projective modules $\ProjMod_{n,p}$ and, therefore, the dimensions of the kernel of $\iota$ and $\iota'$.
The contributions from the critical $\brac{n,p}$ may be cancelled in the constraint because then $\ProjMod_{n,p} \cong \NatMod_{n,p} \cong \IrrMod_{n,p}$.  

We break down the remaining sum over $p$ into non-critical orbits under reflection about critical lines (see the discussion after \corref{cor:2013}). We thus rewrite the original sum as, first, a sum over non-critical orbits, labelled by their leftmost element $p_1$, and then as a sum over elements $p_1 > p_2 > \cdots > p_m \geqslant 0$ of the orbit of $p_1$:
\begin{equation} \label{eq:dimGW}
\sum_{n-2p_1+1 < \ell} \ \sum_{i=1}^m \dim \IrrMod_{n,p_i} \: \dim \ProjMod_{n,p_i} = 
\sum_{n-2p_1+1 < \ell} \ \sum_{i=1}^m \brac{\dim\NatMod_{n,p_i}}^2.
\end{equation}
For the $\brac{n,p}$ contributing to \eqref{eq:dimGW}, the exact sequences \eqref{es:Nearly} and the relations \eqref{eq:LackOfImagination} immediately imply the bounds 
\begin{equation} \label{eqn:DimPBound}
\dim \ProjMod_{n,p_i} \leqslant \dim \NatMod_{n,p_i+\func{r}{n,p_i}} + \dim \NatMod_{n,p_i} = 
   \dim \NatMod_{n,p_{i-1}} + \dim \NatMod_{n,p_i},
\end{equation}
with saturation attained if and only if $\ker \iota = 0$.  
(All dimensions of modules indexed by $p_0$ or $p_{m+1}$ are understood to be zero.) 
These, in turn, give the following bound for the contribution from each non-critical orbit to the left-hand side of \eqref{eq:dimGW}:
\begin{align}
\sum_{i=1}^{m} \dim \IrrMod_{n,p_i} \: \dim \ProjMod_{n,p_i} 
&\leqslant \sum_{i=1}^{m} \brac{\dim \IrrMod_{n,p_i} \: \dim \NatMod_{n,p_i} + \dim \IrrMod_{n,p_i} \: \dim \NatMod_{n,p_{i-1}}} \notag \\
&= \sum_{i=1}^{m} \brac{\brac{\dim \NatMod_{n,p_i}}^2 - \dim \RadMod_{n,p_i} \: \dim \NatMod_{n,p_i} + \dim \IrrMod_{n,p_i} \: \dim \NatMod_{n,p_{i-1}}} \notag \\
&= \sum_{i=1}^{m} \brac{\brac{\dim \NatMod_{n,p_i}}^2 - \dim \IrrMod_{n,p_{i+1}} \: \dim \NatMod_{n,p_i} + \dim \IrrMod_{n,p_i} \: \dim \NatMod_{n,p_{i-1}}} \notag \\
&=\sum_{i=1}^{m}(\dim\NatMod_{n,p_i})^2.
\end{align}
Here, we have used \eqref{eqn:DimPBound}, then \propref{prop:S/R} and, finally, \propref{prop:DimR=DimL}.  For \eqref{eq:dimGW} to hold, this inequality must be an equality for all non-critical orbits, hence the $\ker \iota$ (and $\ker \iota'$) must always vanish.  This proves that the sequence \eqref{eq:THEes} is exact for all non-critical $\ProjMod_{n,p}$ (to the right of the first critical line), completing the proof of \thmref{thm:IdentP}.
\end{proof}

\section{Summary of Results} \label{sec:Summary}

We conclude the article with a brief summary outlining what has been proven and a few ideas concerning what can be done with the results.  First, we have presented the well-known equivalence between the algebraic and diagrammatic definitions of the Temperley-Lieb algebra $\tl{n}$ (with parameter $\beta = q + q^{-1}$), deducing as a consequence that its dimension is given by the $n$-th Catalan number.  We have then discussed the standard $\tl{n}$-modules $\NatMod_{n,p}$, with $0 \leqslant p \leqslant \flr{n/2}$, each of which admits a natural invariant bilinear form, and explained that a study of the irreducibility of the $\NatMod_{n,p}$ may be reduced, in almost all cases, to the consideration of the non-degeneracy of this bilinear form.  Of course, we have also detailed how to analyse the exceptional cases when the natural bilinear form is not useful.

The question of whether these standard modules are irreducible or not turned out to have structural implications for the Temperley-Lieb algebra.  We used restriction from $\tl{n}$ to $\tl{n-1}$ to derive a recursion relation describing the kernel of the bilinear form on each $\NatMod_{n,p}$.  This gave a complete answer to the irreducibility of the $\NatMod_{n,p}$ and led to a criterion for the semisimplicity of $\tl{n}$.  In particular, this recovered the well-known result that the $\tl{n}$ are all semisimple when $q$ is not a root of unity.  While our strategy follows that of \cite{WesRep95} rather closely, we believe that our proof is new.  Moreover, we found that the aforementioned description of the kernel suggested that the standard modules were either irreducible or had an irreducible maximal submodule, the radical.

To prove this suggestion, we turned to the induced modules obtained from the $\NatMod_{n,p}$ by including $\tl{n}$ in $\tl{n+1}$ in the obvious way.  After deriving the required structural information concerning these induced modules, we detailed a non-trivial computation involving the action of a little-known central element $F_n \in \tl{n}$ on the induced modules.  This then allowed us to deduce the existence of non-trivial homomorphisms between certain standard modules.  As a consequence, we obtained the desired irreducibility of the (non-trivial) radicals as well as a complete description of the space of homomorphisms between any two standard modules.  All of these proofs appear to be new.

Finally, we used our knowledge of induced modules to systematically construct the principal indecomposable modules $\ProjMod_{n,p}$ for the Temperley-Lieb algebra.  These turn out to be realised as either standard modules or as submodules, corresponding to a given generalised eigenspace of the central element $F_n$, of a multiply-induced standard module.  The structures of the $\ProjMod_{n,p}$ followed immediately from the analysis, as did the statement that the irreducible $\tl{n}$-modules $\IrrMod_{n,p}$ that can be constructed from the $\NatMod_{n,p}$ are, in fact, exhaustive (up to isomorphism).  While the structure of the principal indecomposables is known (it appeared first in \cite{Martin}), we are confident that our straight-forward proof is also new.  Certainly, it relies upon a particularly remarkable property of $F_n$, namely that it completely distinguishes the non-critical blocks of $\tl{n}$, a property which does not seem to be shared by other better-known central elements.

We include, for convenience, pictorial representations of the structures of the standard and principal indecomposable modules when $q$ is a root of unity.  These take the form of annotated Loewy diagrams which are popular in, for example, logarithmic conformal field theory.  The Loewy diagrams for the standard modules have two forms according as to whether $\brac{n,p}$ is \emph{critical}, meaning $q^{2 \brac{n-2p+1}} = 1$, or not.  Let $\brac{n,p}$ and $\brac{n,p'}$ form a symmetric pair, in the sense of \secref{sec:R=L}, with $p > p'$.  The possible Loewy diagrams for the $\NatMod_{n,p}$ ($0 \leqslant p \leqslant \flr{n/2}$) are then as follows:
\begin{center}
\begin{tikzpicture}[thick,
	nom/.style={circle,draw=black!10,fill=black!10,inner sep=1pt}
	]
\node (l) at (0,2) [] {$\IrrMod_{n,p}$};
\node (mt) at (5,3) [] {$\IrrMod_{n,p}$};
\node (mb) at (5,1) [] {$\IrrMod_{n,p'}$};
\node (r) at (10,2) [] {$\IrrMod_{2p,p-1}$};
\node at (0,0) [nom] {$\NatMod_{n,p}$};
\node at (5,0) [nom] {$\NatMod_{n,p}$};
\node at (10,0) [nom] {$\NatMod_{n,p}$};
\node at (0,-0.75) {$\brac{n,p}$ critical};
\node at (5,-0.75) {$\brac{n,p}$ non-critical};
\node at (5,-1.25) {($n \neq 2p$ or $\beta \neq 0$)};
\node at (10,-0.75) {($n = 2p$ and $\beta = 0$).};
\draw [->] (mt) -- (mb);
\end{tikzpicture}
\end{center}
To explain, these diagrams present the (irreducible) composition factors of the $\NatMod_{n,p}$ as nodes, connected by arrows, that are meant to represent the action of the algebra.  The diagrams on the left and right indicate that $\NatMod_{n,p}$ is in fact irreducible (and isomorphic to $\IrrMod_{n,p}$ and $\IrrMod_{2p,p-1}$, respectively) in these cases.  By contrast, the diagram in the middle has its only arrow pointing towards $\IrrMod_{n,p'}$, indicating that $\NatMod_{n,p}$ has, in this case, a submodule isomorphic to $\IrrMod_{n,p'}$.  In fact, it is the maximal submodule --- quotienting corresponds to removing the node $\IrrMod_{n,p'}$ and the arrow, leaving us with only $\IrrMod_{n,p}$ (and no arrows).  In other words, $\NatMod_{n,p} / \IrrMod_{n,p'} \cong \IrrMod_{n,p}$.  We remark that for sufficiently small $p$, the $p'$ required to form a symmetric pair would be negative.  In that case, $\IrrMod_{n,p'}$ should be understood to be $\set{0}$.  The diagram in the middle should then be taken to degenerate into the diagram on the left.

The Loewy diagrams for the principal indecomposables are only a little more complicated.  First, we remark that when $\beta = 0$ and $n$ is even, $p$ is restricted to the range $0 \leqslant p \leqslant n/2 - 1$; otherwise, the range is $0 \leqslant p \leqslant \flr{n/2}$.  Now, take $p'' > p > p'$ so that $\brac{n,p''}$ and $\brac{n,p}$, as well as $\brac{n,p}$ and $\brac{n,p'}$, form symmetric pairs.  The possible Loewy diagrams for the $\ProjMod_{n,p}$ are then as follows:
\begin{center}
\begin{tikzpicture}[thick,
	nom/.style={circle,draw=black!10,fill=black!10,inner sep=1pt}
	]
\node (l) at (0,2.5) [] {$\IrrMod_{n,p}$};
\node (mt) at (5,4) [] {$\IrrMod_{n,p}$};
\node (ml) at (3.5,2.5) [] {$\IrrMod_{n,p''}$};
\node (mr) at (6.5,2.5) [] {$\IrrMod_{n,p'}$};
\node (mb) at (5,1) [] {$\IrrMod_{n,p}$};
\node (rt) at (10,4) [] {$\IrrMod_{2p+2,p}$};
\node (rr) at (11.5,2.5) [] {$\IrrMod_{2p+2,p-1}$};
\node (rb) at (10,1) [] {$\IrrMod_{2p+2,p}$};
\node at (0,0) [nom] {$\ProjMod_{n,p}$};
\node at (5,0) [nom] {$\ProjMod_{n,p}$};
\node at (10,0) [nom] {$\ProjMod_{n,p}$};
\node at (0,-0.75) {$\brac{n,p}$ critical};
\node at (5,-0.75) {$\brac{n,p}$ non-critical};
\node at (5,-1.25) {($n \neq 2p+2$ or $\beta \neq 0$)};
\node at (10,-0.75) {($n = 2p+2$ and $\beta = 0$).};
\draw [->] (mt) -- (ml);
\draw [->] (mt) -- (mr);
\draw [->] (ml) -- (mb);
\draw [->] (mr) -- (mb);
\draw [->] (rt) -- (rr);
\draw [->] (rr) -- (rb);
\end{tikzpicture}
\end{center}
Again, the diagram on the left indicates that the critical principal indecomposables are irreducible.  To understand the diagram in the middle, note that there is a submodule isomorphic to $\IrrMod_{n,p}$ (the bottom node with arrows only pointing in).  Quotienting by this submodule leads to a module with three composition factors.  It has a submodule isomorphic to $\IrrMod_{n,p''} \oplus \IrrMod_{n,p'}$,\footnote{
One can (and should) ask why this submodule is a direct sum.  Equivalently, why is there no arrow from the factor $\IrrMod_{n,p'}$ to the factor $\IrrMod_{n,p''}$ in the middle diagram?  (That there is no arrow in the opposite direction follows directly from \eqref{eq:THEes}.)  This may be answered by showing that any short exact sequence of the form $\ses{\IrrMod_{n,p''}}{\mathcal{M}}{\IrrMod_{n,p'}}$ necessarily splits.  To see this, note that projectivity implies the existence of a homomorphism $\delta \colon \ProjMod_{n,p'} \rightarrow \mathcal{M}$ such that the following diagram commutes:
\begin{center}
\begin{tikzpicture}[auto,thick,scale=0.5]
\node (ll) at (-4,0) [] {$0$};
\node (l) at (-2,0) [] {$\IrrMod_{n,p''}$};
\node (m) at (0,0) [] {$\mathcal{M}$};
\node (r) at (2,0) [] {$\IrrMod_{n,p'}$};
\node (rr) at (4,0) [] {$0$.};
\node (t) at (0,2) [] {$\ProjMod_{n,p'}$};
\draw [->] (ll) -- (l);
\draw [->] (l) -- (m);
\draw [->] (m) -- (r);
\draw [->] (r) -- (rr);
\draw [->] (t) to node [swap] {$\delta$} (m);
\draw [->] (t) -- (r);
\end{tikzpicture}
\end{center}
Comparing the composition factors of $\mathcal{M}$ and $\ProjMod_{n,p'}$, we see that either $\delta = 0$ or $\func{\im}{\delta} \cong \IrrMod_{n,p'}$.  The former contradicts the commutativity of the diagram, so we conclude that $\IrrMod_{n,p'}$ is a submodule of $\mathcal{M}$, whence $\mathcal{M} \cong \IrrMod_{n,p'} \oplus \IrrMod_{n,p''}$.
} 
and quotienting by this submodule results in something isomorphic to $\IrrMod_{n,p}$.  The diagram on the right is interpreted similarly.

The degeneracy conditions are likewise a little more complicated:  When $p$ is small enough in the middle diagram that $p'$ would have to be negative, one must delete the node $\IrrMod_{n,p'}$ and arrows pointing towards or away from it.  The resulting principal indecomposable therefore has three composition factors.  However, when $p$ is large enough that $p''$ would have to be greater than $\flr{n/2}$, one should delete \emph{both} $\IrrMod_{n,p''}$ and the bottom $\IrrMod_{n,p}$, so that $\ProjMod_{n,p}$ has only two composition factors.  This means that $\ProjMod_{n,p}\cong \NatMod_{n,p}$ when $p$ is on the left of the first critical line. It may, of course, happen for $n$ sufficiently small that there are non-critical $p$ for which both $p'$ and $p''$ fall outside the allowed range.  Then, the middle diagram degenerates into the diagram on the left.  Similarly, the diagram on the right degenerates only when $n=2$ and $p=0$.  In this case, the Loewy diagram for $\ProjMod_{2,0}$ has two composition factors, both isomorphic to $\IrrMod_{2,0}$, connected by an arrow.

Finally, we remark that such a complete description of the principal indecomposables allows one to apply standard tools from homological algebra to answer more advanced questions regarding the variety of possible indecomposable structures.  We shall not do so here, but will content ourselves with mentioning that it is now easy to write down projective presentations and resolutions for standard and irreducible modules and so compute the extension groups between them.  This technology answers, in particular, the question of whether giving a non-split short exact sequence completely determines the module in the middle.  For example, one can use this to show that the characterisation we have obtained for the principal indecomposables, in terms of exact sequences, does in fact identify them up to isomorphism.  One can also use the extension computations to construct injective modules and, indeed, give a complete classification of the indecomposable modules of the Temperley-Lieb algebra (something which is not possible for most associative algebras).  These directions are clearly crucial for a sound mathematical understanding of Temperley-Lieb representation theory as well as necessary for a complete identification of Temperley-Lieb modules in physical models.  In both cases, the structural results proven here provide a springboard from which one can profitably tackle advanced questions.

\bigskip

\section*{Acknowledgements}

The authors would like to thank Alexi Morin-Duchesne for a useful suggestion, Azat Gainutdinov, Gus Lehrer and Paul Martin for helpful discussions, and Fred Goodman for pointing out a gap in an earlier version. DR is supported by an Australian Research Council Discovery Project DP1093910 and YSA holds a grant from the Canadian Natural Sciences and Engineering Research Council. This support is gratefully acknowledged.

\appendix

\section{Central Elements} \label{app:Casimir}

The goal of this appendix is to introduce the central element $F_n \in \tl{n}$ which plays a crucial role in \secDref{sec:R=L}{sec:Proj}. This is not the only important central element in the theory of the Temperley-Lieb algebra. One should also mention the well-known central element $C_n = \brac{t_1 \cdots t_{n-1}}^n \in \tl{n}$, where $t_i = \wun - q u_i$, which is derived from its analogue in the braid group \cite{Chow,Birman}, as well as the central idempotents described by Jones \cite{JonInd83} and Wenzl \cite{Wenzl88}. For our purposes, $F_n$ will suffice (whereas $C_n$ in particular will not). This appendix uses notation and one result from each of \secDref{sec:Diagrams}{sec:Reps}; results from other sections are also used to provide examples. The element $F_n$ is first used in \corref{cor:ResSplit}.

The role of central elements is well-known to physicists, where the quadratic Casimir of a semisimple Lie algebra probably provides the most familiar example. One use for such central elements is to decompose representations into their eigenspaces. More precisely, if $c$ is a central element of an algebra $\mathsf{A}$ and $\mathcal{M}$ a module over this algebra, then the linear map $\phi_c \colon \mathcal{M} \rightarrow \mathcal{M}$ obtained by left-multiplying elements of $\mathcal{M}$ by $c$ defines a homomorphism of modules: $a \phi_c(m) = acm = cam = \phi_c(am)$ for all $a \in \mathsf{A}$, $m \in \mathcal{M}$. The eigenspaces of $\phi_c$ are then submodules of $\mathcal{M}$. However, it might happen that $\phi_c$ is not diagonalisable on $\mathcal{M}$, in which case the generalised eigenspaces $\mathcal M_\lambda=\{m\in\mathcal M\,|\, (\phi_c-\lambda \wun)^{\dim\mathcal M}m=0\}$ are also submodules. In fact, $\mathcal{M}$ splits, as an $\mathsf{A}$-module, into the \emph{direct sum} of the generalised eigenspaces of $\phi_c$. This is a very useful property.

We therefore turn to the central element $F_n\in \tl{n}$ and its properties. The computations which follow, and even the definition of $F_n$ itself, are diagrammatic.  The construction is based upon the recent analysis of a double row transfer matrix \cite{PeaLog06}. It also appears briefly in \cite{PeaSol10} (with $\beta=0$), though it does not seem to have been recognised as central there.  This important property was made explicit in \cite{MorLog10}, where its eigenvalues were computed.  We will follow their approach with only minor changes.

To define $F_n$ in $\tl{n}$, it is useful to introduce the following shorthand (recall that $\beta = q + q^{-1}$):
\begin{equation} \label{eqn:Shorthand}
\parbox{9mm}{\begin{center}
\includegraphics[height=8mm]{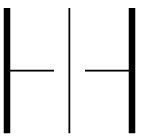}
\end{center}}
\ = q^{1/2} \ 
\parbox{9mm}{\begin{center}
\includegraphics[height=8mm]{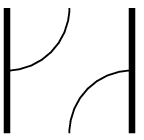}
\end{center}}
\ - q^{-1/2} \ 
\parbox{9mm}{\begin{center}
\includegraphics[height=8mm]{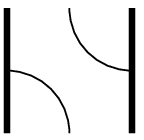}
\end{center}}
\qquad \text{and} \qquad
\parbox{9mm}{\begin{center}
\includegraphics[height=8mm]{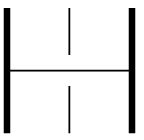}
\end{center}}
\ = q^{1/2} \ 
\parbox{9mm}{\begin{center}
\includegraphics[height=8mm]{tl-boxa}
\end{center}}
\ - q^{-1/2} \ 
\parbox{9mm}{\begin{center}
\includegraphics[height=8mm]{tl-boxd}
\end{center}}
\ \ .
\end{equation}
These \emph{crossings} are to be interpreted as formal objects which will be used as building blocks to form (linear combinations of) $n$-diagrams.  In particular, we define $F_n$ as follows:
\begin{equation}\label{eq:leFn}
F_n = \ 
\parbox{13mm}{\begin{center}
\includegraphics[height=39mm]{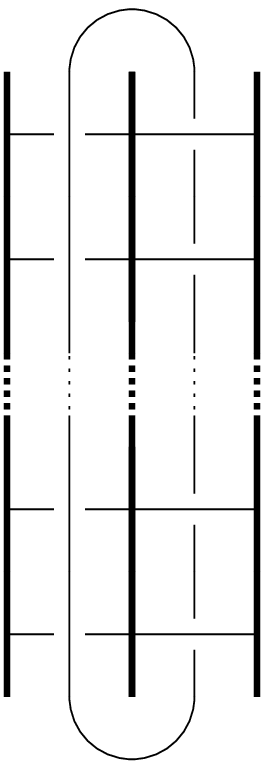}
\end{center}}
\ .
\end{equation}
This looks deceptively simple, but the notation hides a large number of diagrams as each crossing of thin lines stands for a sum of two terms.  Computing even $F_3$ explicitly requires considering a linear combination of $2^6 = 64$ $3$-diagrams, even though $\dim \tl{3} = 5$. For comparison, we list the first few explicit forms for $F_n$ and the better known $C_n$:
\begin{subequations}
\begin{align}
F_1 &= (q^2+q^{-2}) \wun, & 
C_1 &= \wun, \\
F_2 &= (q^3+q^{-3}) \wun - (q-q^{-1})^2 u_1, & 
C_2 &= \wun + q^2 (q-q^{-1}) u_1, \\
F_3 &= (q^4+q^{-4}) \wun - (q-q^{-1}) (q^2-q^{-2}) (u_1+u_2) & 
C_3 &= \wun + q^3 (q^2-q^{-2}) (u_1+u_2) \notag \\
&\mspace{30mu} + (q-q^{-1})^2 (u_1u_2+u_2u_1) & 
&\mspace{30mu} - q^3 (q-q^{-1}) (u_1u_2+u_2u_1).
\end{align}
\end{subequations}
We note that because the number of crossings used to construct $F_n$ is even, the coefficients of the corresponding linear combination of $n$-diagrams will only contain integral powers of $q$ and $q^{-1}$.  Note further that
\begin{equation}
\parbox{9mm}{\begin{center}
\includegraphics[height=8mm]{tl-boxv}
\end{center}}
\raisebox{2mm}{${}^{\ \dag}$} = q^{1/2} \ 
\parbox{9mm}{\begin{center}
\includegraphics[height=8mm]{tl-boxa}
\end{center}}
\ - q^{-1/2} \ 
\parbox{9mm}{\begin{center}
\includegraphics[height=8mm]{tl-boxd}
\end{center}}
\ = \ 
\parbox{9mm}{\begin{center}
\includegraphics[height=8mm]{tl-boxh}
\end{center}}
\ ,
\end{equation}
recalling that the adjoint was chosen in \secref{sec:Reps} to be linear, not antilinear.  It follows that $F_n$ is \emph{self-adjoint}. Finally, it is also true \cite{MorLog10} that $F_n$ is invariant under $q \leftrightarrow q^{-1}$ (we remark that $C_n$ does not have this last property).

\begin{proposition} \label{prop:FCentral}
$F_n \in \tl{n}$ is central.
\end{proposition}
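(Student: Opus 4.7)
The plan is to verify $u_i F_n = F_n u_i$ for every generator $u_i$, $i = 1, \ldots, n-1$ (unit is trivial), by working entirely at the diagrammatic level inside $\tl{n}$. Since $F_n$ is presented as a closed rectangular arrangement of crossings with appropriate closures on the top and bottom, the proof is a variant of the standard Markov trace argument: one drags $u_i$ around the closed loop of crossings and recovers the same element on the opposite side.

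The argument rests on two local identities for the crossings of \eqref{eqn:Shorthand}, both of which follow by direct expansion and the relations $u^2 = \beta u$, $\beta = q + q^{-1}$. First, a Reidemeister II-type identity: stacking a vertical crossing on top of a horizontal crossing at the same pair of strands yields $-\wun$. Indeed,
\begin{equation*}
\bigl(q^{1/2}\wun - q^{-1/2}u\bigr)\bigl(q^{1/2}u - q^{-1/2}\wun\bigr) = qu - \wun - \beta u + q^{-1}u = -\wun.
\end{equation*}
Second, a Reidemeister III / Yang--Baxter type identity expressing how $u_j$ (or a cup/cap) can be slid through a pair of crossings on adjacent strands; together with the obvious commutation when $|i-j| > 1$, this gives the required ``pull-through'' rules. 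All of these identities are straightforward consequences of the expansions in \eqref{eqn:Shorthand}, albeit tedious to enumerate.

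With the local identities established, I would attach $u_i$ at the bottom of the diagram defining $F_n$ and use the pull-through relations to slide $u_i$ along one side column of crossings of $F_n$, then across the top closure, then down the other column, and finally across the bottom closure back to its original position. Each step of the slide is a local application of the Reidemeister-type identities; the closures at the top and bottom of $F_n$ are crossed using the Reidemeister II identity, which permits $u_i$ to turn corners without introducing spurious factors. After one full traversal, $u_i$ reappears at the top, giving $F_n u_i$, so that $u_i F_n = F_n u_i$.

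The main obstacle is the bookkeeping: one must clearly specify at each step which strands $u_i$ occupies and verify that the local move being invoked is indeed applicable (respecting the orientations of the two flavours of crossings). The interior slides are mechanical, but the ``corners'' of the diagram and the interaction of $u_i$ with the closing loops require the most care --- this is essentially the point at which the Markov property of the closure of $F_n$ is used. An alternative, less elementary but more conceptual route, is to recognise that $F_n$ is the image in $\tl{n}$ of a double-row transfer matrix in the associated braid/Hecke presentation, whose centrality follows from general properties of the $R$-matrix; see \cite{MorLog10} for an approach in this spirit.
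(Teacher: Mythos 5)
Your sliding / Yang--Baxter approach is a genuinely different route from the paper's, which avoids the pull-through argument entirely. The paper's proof is much shorter: after noting (just before the proposition) that $F_n$ is self-adjoint under the reflection antiautomorphism $U \mapsto U^\dagger$, it uses only the two local ``absorption'' identities \eqref{eqn:LinkProp1}--\eqref{eqn:LinkProp2} --- close relatives of the Reidemeister~II computation you carry out, applied with a cup/cap instead of the identity --- to show that left-multiplying $F_n$ by $u_i$ produces a diagram that is manifestly symmetric under reflection across a vertical line, hence self-adjoint. Since $u_i$ and $F_n$ are each self-adjoint and $\dagger$ reverses products, self-adjointness of $u_i F_n$ gives $u_i F_n = (u_i F_n)^\dagger = F_n^\dagger u_i^\dagger = F_n u_i$ with no further work. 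No braid relation and no corner-by-corner traversal of the closure are needed; the adjoint does both ``sides'' of the slide at once.

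Your sketch is conceptually correct for a transfer-matrix-like element, but it is left open at exactly the points you flag as delicate: the Reidemeister~III / Yang--Baxter relation for the two crossing flavours in \eqref{eqn:Shorthand} is asserted and never checked, and the corner/closure moves where the Markov property enters are not spelled out. To complete your version you would need to verify the braid relation for the crossings inside $\tl{n}$ (a computation in the spirit of, but considerably longer than, your Reidemeister~II check) and then track $u_i$ through the full circuit, being careful that the two columns of $F_n$ use \emph{different} crossings. The paper's adjoint-symmetry trick is worth internalising as a general shortcut: for a self-adjoint candidate central element, it suffices to show $u_i F_n$ is self-adjoint, which converts a global ``drag the generator all the way around'' argument into a single local simplification plus a reflection.
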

\begin{proof}
The key insight is that
\begin{subequations} \label{eqn:LinkProp}
\begin{equation} \label{eqn:LinkProp1}
\parbox{13mm}{\begin{center}
\includegraphics[height=16mm]{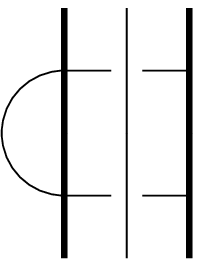}
\end{center}}
\ = q \ 
\parbox{13mm}{\begin{center}
\includegraphics[height=16mm]{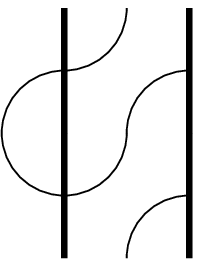}
\end{center}}
\ - \ 
\parbox{13mm}{\begin{center}
\includegraphics[height=16mm]{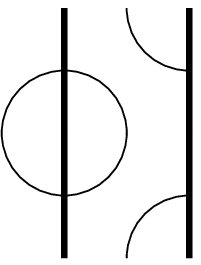}
\end{center}}
\ - \ 
\parbox{13mm}{\begin{center}
\includegraphics[height=16mm]{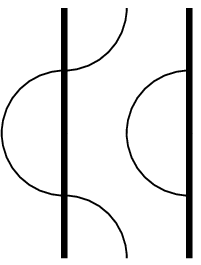}
\end{center}}
\ + q^{-1} \ 
\parbox{13mm}{\begin{center}
\includegraphics[height=16mm]{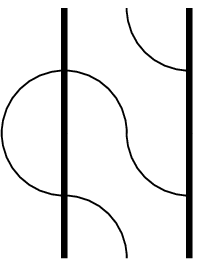}
\end{center}}
\ = -\ 
\parbox{13mm}{\begin{center}
\includegraphics[height=16mm]{tl-ubb3}
\end{center}}
\end{equation}
and similarly, that
\begin{equation} \label{eqn:LinkProp2}
\parbox{13mm}{\begin{center}
\includegraphics[height=16mm]{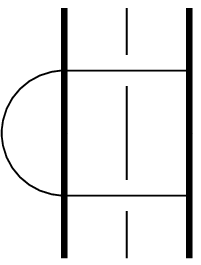}
\end{center}}
\ = -\ 
\parbox{13mm}{\begin{center}
\includegraphics[height=16mm]{tl-ubb3}
\end{center}}
\ .
\end{equation}
\end{subequations}
Thus,
\begin{equation}
\parbox{25mm}{\begin{center}
\includegraphics[height=16mm]{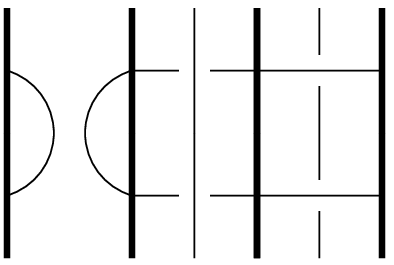}
\end{center}}
\ = -\ 
\parbox{25mm}{\begin{center}
\includegraphics[height=16mm]{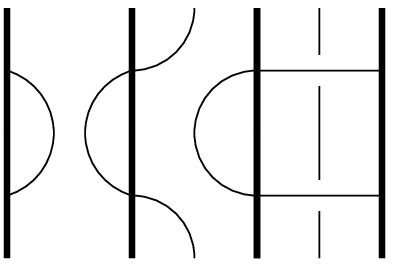}
\end{center}}
\ = \ 
\parbox{25mm}{\begin{center}
\includegraphics[height=16mm]{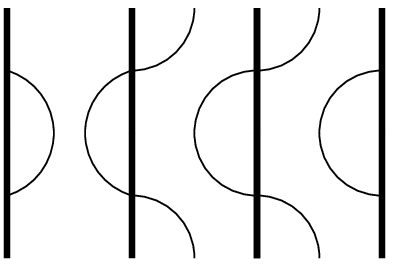}
\end{center}}
\end{equation}
is symmetric under a vertical reflection, hence self-adjoint.  It follows that $u_i F_n$ must be self-adjoint for all $i$, whence
\begin{equation}
u_i F_n = \brac{u_i F_n}^{\dag} = F_n^{\dag} u_i^{\dag} = F_n u_i,
\end{equation}
as required.
\end{proof}
The eigenvalue of $C_n$ on the standard module $\NatMod_{n,p}$ is easily shown to be $q^{2p \brac{n+1-p}}$ (see for example \cite{WesRep95}). The corresponding result for $F_n$ is as follows.
\begin{proposition} \label{lem:FEigs}
The element $F_n$ acts on $\NatMod_{n,p}$ as the identity times $f_{n,p} = q^{n-2p+1} + q^{-\brac{n-2p+1}}$.
\end{proposition}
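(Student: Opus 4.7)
The plan proceeds in two phases: first establishing that $F_n$ acts as \emph{some} scalar on $\NatMod_{n,p}$, and then identifying that scalar.

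\textbf{Phase 1 (scalar action).} Since $F_n$ is central by the preceding proposition, left-multiplication by $F_n$ commutes with the $\tl{n}$-action on $\NatMod_{n,p}$ and thus defines a $\tl{n}$-module endomorphism $\phi \colon \NatMod_{n,p}\to\NatMod_{n,p}$. By \propref{prop:EndLinks}, every such endomorphism is a multiple of the identity, so $F_n$ acts on $\NatMod_{n,p}$ as $f_{n,p}\wun$ for some constant $f_{n,p}\in\CC$, which a priori depends on $n$, $p$ and $q$.

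\textbf{Phase 2 (identification of the scalar).} To compute $f_{n,p}$, evaluate $F_n$ on a convenient representative of $\NatMod_{n,p}$. I propose the link state $z_p$ having simple links at positions $1,3,\ldots,2p-1$ at the top and $n-2p$ defects below; then $f_{n,p}$ is the coefficient of $z_p$ in $F_n z_p$. The top $p$ pairs of simple links in $z_p$ can be used to absorb and simplify the top rows of crossings of $F_n$, by precisely the mechanism exploited in the proof of \lemref{lem:FnNonDiagonal}: if $\beta\neq 0$, the identity $z_p = \beta^{-1} u_1 z_p$ lets one expand the two crossings acting on the topmost simple link and collapse them (the four terms of the expansion pair up so that only the ``pass-through'' term survives), effectively peeling off the topmost row. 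Iterating this reduction for each of the $p$ simple links reduces the computation to the evaluation of (essentially) $F_{n-2p}$ on the all-defect link state in $\NatMod_{n-2p,0}$.

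The base case $p=0$ is then the core calculation. On $\NatMod_{n,0}$, every generator $u_i$ acts as zero (it would close two defects, producing a link state that lies in the submodule being quotiented out), so $f_{n,0}$ equals the coefficient of $\wun$ in the expansion of $F_n$. That coefficient is extracted by expanding every crossing via \eqnref{eqn:Shorthand} and retaining only those choices that glue together to form the identity diagram. A direct inspection of the diagram defining $F_n$ shows that there are exactly two surviving configurations: one in which every crossing resolves into a vertical strand (contributing $q^{n+1}$) and its ``dual'' in which the wrapping is resolved the other way (contributing $q^{-(n+1)}$), giving $f_{n,0}=q^{n+1}+q^{-(n+1)}$. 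Combining with the peeling argument yields $f_{n,p}=q^{n-2p+1}+q^{-(n-2p+1)}$.

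\textbf{Main obstacle.} The bookkeeping in the diagrammatic reduction --- carefully tracking signs, powers of $q^{\pm 1/2}$, and the pairwise cancellations at each crossing --- is the only real work; the strategy itself is forced by centrality and \propref{prop:EndLinks}. A subsidiary obstacle is that the peeling step uses $\beta\neq 0$; however, $f_{n,p}$ is a priori a Laurent polynomial in $q$, so once the identity $f_{n,p}=q^{n-2p+1}+q^{-(n-2p+1)}$ is established for all $q$ with $q+q^{-1}\neq 0$, continuity (or equivalently polynomial identity in $q$, $q^{-1}$) extends it to $q=\pm\ii$ as well.
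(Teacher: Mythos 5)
Your Phase 1 matches the paper exactly: centrality of $F_n$ together with \propref{prop:EndLinks} gives scalar action, and the paper, like you, then evaluates on the same link state $z_p$. Your Phase 2, however, takes a genuinely different route. The paper peels the $p$ simple links using the $\beta$-free diagrammatic identity \eqnref{eqn:LinkProp} established in the proof of \propref{prop:FCentral} (a crossing adjacent to a simple link resolves into the link slid through, up to a single sign); this requires neither $\beta\neq 0$ nor any continuity detour, and crucially the peeling does \emph{not} decouple the links --- the resulting ``big loop'' wiggling down is retained and tracked explicitly through the cascade of crossings on the $n-2p$ defects, with the two terms $q^{\pm(n-2p+1)}$ emerging because exactly one summand of each remaining crossing survives. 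You instead propose the $\beta\neq 0$ identity $z_p=\beta^{-1}u_1z_p$ (the mechanism of \lemref{lem:FnNonDiagonal}), followed by continuity, plus a clean $p=0$ base case (on $\NatMod_{n,0}$ every $u_i$ acts as zero, so $f_{n,0}$ is the $\wun$-coefficient of $F_n$, namely $q^{n+1}+q^{-(n+1)}$, which matches the displayed $F_1,F_2,F_3$) that the paper does not exploit. This is a plausible and perhaps tidier alternative, but two details need care before it is complete: the peeling in \lemref{lem:FnNonDiagonal} reduces $F_{n+1}$ to $F_{n-2p+1}$ \emph{inside a nest of wrapping arcs}, not a free-standing $F_{n-2p}$, so one must verify in your setting that the wraps reassemble into $z_p$ without contributing spurious scalar factors; and ``exactly two surviving configurations'' for the $\wun$-coefficient of $F_n$ requires a genuine topological argument to exclude cancellations among the $2^{2n}$ crossing resolutions, not merely an appeal to inspection.
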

\begin{proof}
First note that any central element must act as a multiple of the identity on $\NatMod_{n,p}$ by \propref{prop:EndLinks}.  We may therefore determine this multiple by computing the action of $F_n$ on any $\brac{n,p}$-link state (modulo terms which have more than $p$ links).  A convenient choice is the state $z_p \in \NatMod_{n,p}$ which has first $p$ simple links, followed by $n-2p$ defects.

We will break the computation up into two pieces, corresponding to the action on the $p$ simple links and the action on the $n-2p$ defects. The first is easy:  Using \eqnref{eqn:LinkProp} (or rather its adjoint) repeatedly, we obtain
\begin{equation} \label{eqn:ContributionLinks}
\parbox{11mm}{\begin{center}
\includegraphics[height=28mm]{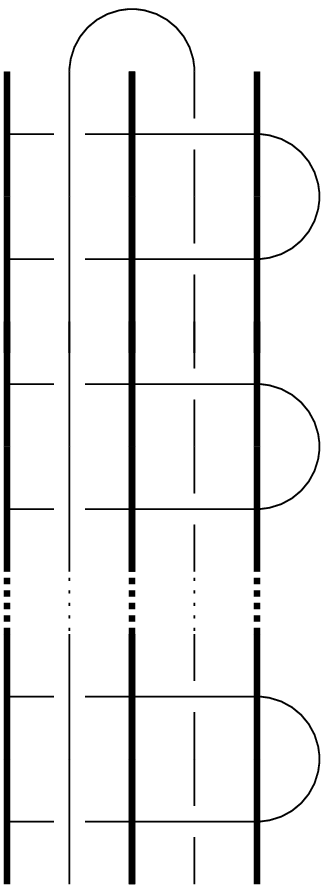}
\end{center}}
\ = \ 
\parbox{11mm}{\begin{center}
\includegraphics[height=28mm]{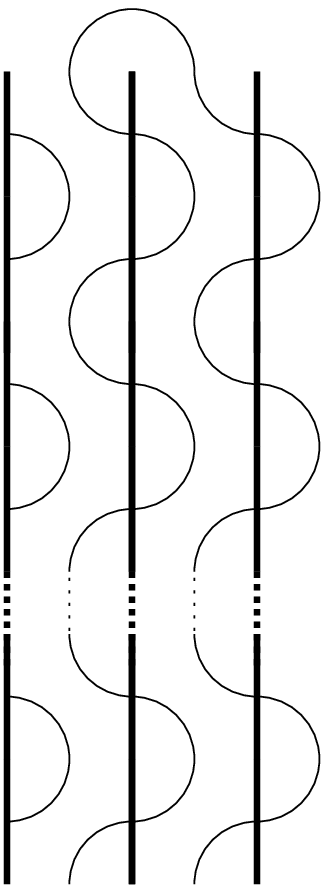}
\end{center}}
\ .
\end{equation}
This will clearly contribute $1$ or $\beta$, according as to whether the loop wiggling down is closed by the action on the defects.

This latter action requires a bit of explanation.  First, apply the definition \eqref{eqn:Shorthand} to expand the top-right partial crossing (the loop at the top of these diagrams stands for the big loop of \eqref{eqn:ContributionLinks}):
\begin{equation} \label{eqn:ContributionLinks2}
\parbox{11mm}{\begin{center}
\includegraphics[height=26mm]{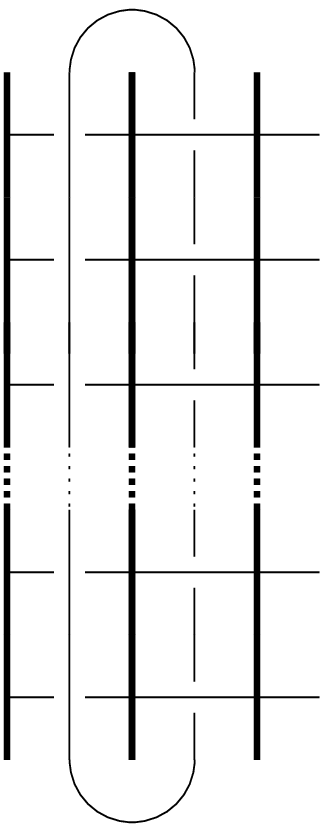}
\end{center}}
\ = q^{1/2} \ 
\parbox{11mm}{\begin{center}
\includegraphics[height=26mm]{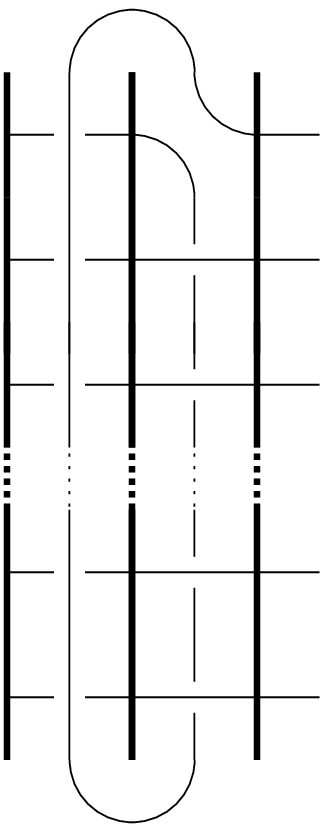}
\end{center}}
\ - q^{-1/2} \ 
\parbox{11mm}{\begin{center}
\includegraphics[height=26mm]{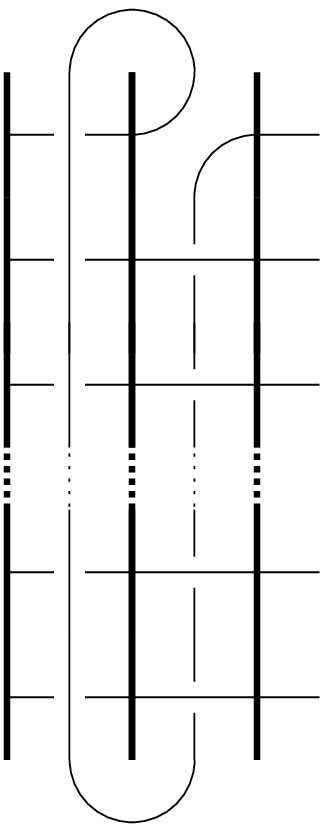}
\end{center}}
\ .
\end{equation}
Consider now the first term on the right-hand side.  When we expand the crossing immediately below that just considered, only one term contributes.  The other leads to a link on the middle vertical border which creates an extra link in the result (using the adjoint of \eqnref{eqn:LinkProp1}), so the result vanishes.  This observation propagates down the right side, yielding a factor of $q^{1/2}$ at each step.  Thus,
\begin{equation}
q^{1/2} \ 
\parbox{11mm}{\begin{center}
\includegraphics[height=26mm]{tl-Fz4}
\end{center}}
\ = q^{\brac{n-2p}/2} \ 
\parbox{11mm}{\begin{center}
\includegraphics[height=26mm]{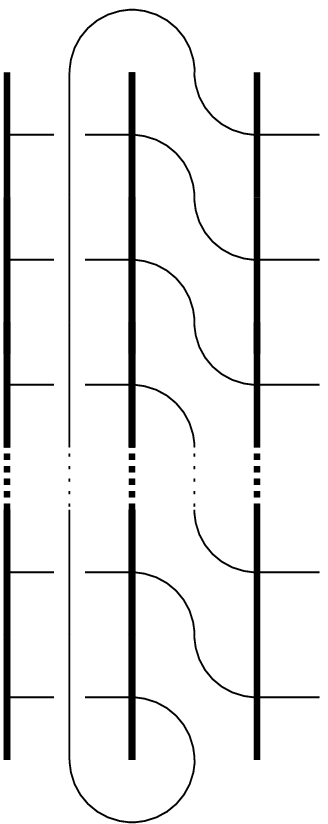}
\end{center}}
\ .
\end{equation}
Now expand the crossing at the top-left.  Because of the big loop at the top, we again find that only one term contributes --- the other corresponds to closing two defects and hence vanishes.  This observation propagates down the left side, picking up a factor of $q^{1/2}$ at each step, until we reach the last crossing for which \emph{both} terms contribute:
\begin{equation}
q^{\brac{n-2p}/2} \ 
\parbox{11mm}{\begin{center}
\includegraphics[height=26mm]{tl-Fz7}
\end{center}}
\ = q^{n-2p-1/2} \ 
\parbox{11mm}{\begin{center}
\includegraphics[height=26mm]{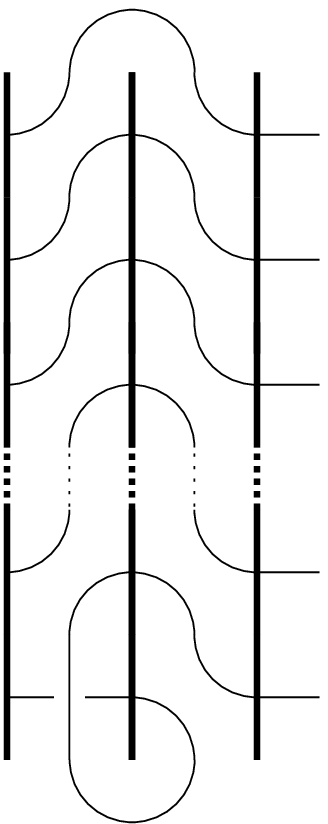}
\end{center}}
\ = q^{n-2p} \ 
\parbox{11mm}{\begin{center}
\includegraphics[height=26mm]{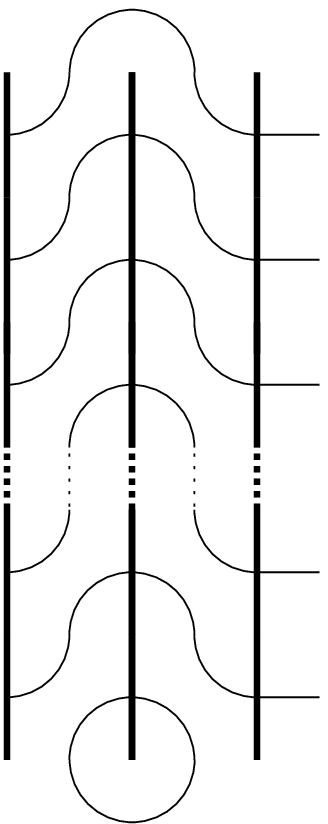}
\end{center}}
\ - q^{n-2p-1} \ 
\parbox{11mm}{\begin{center}
\includegraphics[height=26mm]{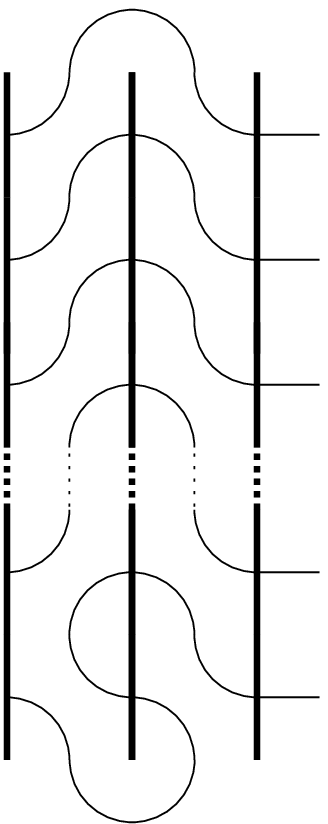}
\end{center}}
\ .
\end{equation}
The overall contribution from this analysis is therefore a factor of $q^{n-2p} \beta - q^{n-2p-1} = q^{n-2p+1}$.

The analysis of the second term on the right-hand side of \eqref{eqn:ContributionLinks2} is almost identical.  After analysing the crossings on the right side, the diagrams become horizontal reflections of those we have just analysed (this reflection corresponds to the automorphism $u_i \leftrightarrow u_{n-i}$) and $q$ is replaced by $q^{-1}$.  The resulting contribution is therefore $q^{-\brac{n-2p+1}}$.  Summing the contributions from both terms now gives the desired result.
\end{proof}

We conclude by indicating why we consider the central element $F_n$ superior to the more familiar $C_n$ for the purpose of analysing the indecomposable structure of $\tl{n}$-modules.  The algebra $\tl{n}$ is semisimple when $q$ is not a root of unity by \corref{cor:RootOfUnity}, so we may as well suppose that there exists a minimal positive integer $\ell$ such that $q^{2\ell} = 1$.  In other words, we have $q = e^{\ii \pi k / \ell}$ for some $k \in \ZZ$ coprime to $\ell$.  The eigenvalue of $F_n$ on the standard module $\NatMod_{n,p}$ is therefore
\begin{equation}
f_{n,p} = 2 \cos \frac{\pi k \brac{n-2p+1}}{\ell}.
\end{equation}
In \secref{sec:Gram}, we defined $\brac{n,p}$ to be critical when $\ell$ divides $n-2p+1$.  In that case, $f_{n,p}$ achieves it maximal or minimal value $\pm 2$.  More generally, we defined critical lines that partition the Bratteli diagram of the Temperley-Lieb algebras in \secref{sec:Explore}.  The crucial property of $F_n$ is that if $\brac{n,p}$ and $\brac{n,p'}$ are distinct pairs, both lying in the same strip bounded by two consecutive critical lines, then $f_{n,p} \neq f_{n,p'}$.  In other words, the eigenvalue of $F_n$ completely distinguishes standard modules within such strips.  In the language of Goodman and Wenzl \cite{GoodWenzl93}, this means that $F_n$ completely characterises the non-critical \emph{blocks} of the Temperley-Lieb algebra $\tl{n}$.  This property is not shared by $C_n$. A simple example is provided at $q=e^{\ii \pi / 6}$ for $\tl 4$. The three modules $\NatMod_{4,2}$, $\NatMod_{4,1}$ and $\NatMod_{4,0}$ lie to the left of the first critical line. The eigenvalues of $C_4$ on each are $1$, $e^{4\pi\ii/3}$ and $1$, and fail to distinguish $\NatMod_{4,2}$ and $\NatMod_{4,0}$. The eigenvalues of $F_4$ are $-\sqrt3$, $0$ and $\sqrt3$.

\section{Basic results}\label{app:Review}

This appendix reviews certain classical tools of representation theory that are not commonly seen in a first course on representation theory, say at the level of \cite{JamesLiebeck}.
These are Wedderburn's theorem and its generalisation to non-semisimple algebras, Frobenius reciprocity, the right-exactness of induction,  the Jordan-H\"older theorem and the basic properties of projective modules. We state them here without proof. The reader can find proofs of Wedderburn's theorem in \cite{Lam1,PieAss82}. Proofs of Frobenius reciprocity for algebras may be found in \cite{Martin, PieAss82}. The right-exactness of tensor products is proven in \cite{HGKAlgI} and projective modules are covered in \cite{CurRep62, PieAss82, Lam2}. Finally, we note that \cite[App.~A]{Mathas} provides a concise introduction to finite-dimensional algebras that gives proofs of Wedderburn's theorem and its non-semisimple generalisation, a proof of the Jordan-H\"older theorem, 
as well as describing what is needed on projective modules in order to understand \secref{sec:Proj}. All modules in this summary should be understood to be finite-dimensional left modules.

Let $\mathsf{A}$ be a finite-dimensional associative algebra over $\mathbb C$. Every element of $\mathsf{A}$ can be seen to act on $\mathsf{A}$ by left multiplication. This action makes $\mathsf{A}$ into a left $\mathsf{A}$-module called the {\em regular module}. The algebra $\mathsf{A}$ is said to be semisimple if the regular module is completely reducible, that is, if it can be written as a direct sum of irreducible modules. (In the case of a finite group $G$, its group algebra $\mathbb CG$ is always semisimple \cite{JamesLiebeck}.) A key consequence of semisimplicity is the following theorem.
\begin{theorem}[Wedderburn] \label{thm:W}
Let $\mathsf{A}$ be a complex, finite-dimensional, semisimple, associative algebra.  Then, the regular module decomposes as
\begin{equation} \label{eqn:WedderburnDecomp}
\mathsf{A} \cong \bigoplus_{i=1}^r \brac{\dim \IrrMod_i} \IrrMod_i,
\end{equation}
where the $\IrrMod_1, \ldots , \IrrMod_r$ form a complete set of non-isomorphic irreducible $\mathsf{A}$-modules.  That is, $\mathsf{A}$ decomposes as the direct sum of irreducibles with each irreducible appearing with multiplicity equal to its dimension.
\end{theorem}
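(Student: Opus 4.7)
The plan is to decompose $\mathsf{A}$ as a direct sum of irreducibles, collect isomorphic summands, and pin down the multiplicities by a Schur's lemma / Hom-space computation. The starting point is the standard observation that, because the regular module is completely reducible by hypothesis, the same is true of every finitely generated $\mathsf{A}$-module. This follows from the classical criterion that complete reducibility is equivalent to every submodule admitting a complement: any finitely generated $\mathsf{A}$-module $\mathcal{M}$ is a quotient of some direct sum $\mathsf{A}^{\oplus k}$, which is completely reducible, and a quotient of a completely reducible module is completely reducible. In particular, $\mathsf{A}$ itself decomposes as a direct sum of irreducible submodules.

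Next I would argue that every irreducible $\mathsf{A}$-module appears, up to isomorphism, among the summands. If $\IrrMod$ is any irreducible $\mathsf{A}$-module and $v \in \IrrMod$ is non-zero, then $\mathsf{A}v$ is a non-zero submodule of $\IrrMod$, so $\mathsf{A}v = \IrrMod$; the map $\mathsf{A} \to \IrrMod$ sending $a \mapsto av$ is therefore surjective. Its kernel has a complement $\mathsf{N} \subseteq \mathsf{A}$ by complete reducibility, so $\IrrMod \cong \mathsf{N}$ is a submodule of $\mathsf{A}$. Decomposing $\mathsf{N}$ as a direct sum of irreducibles, at least one of these summands is isomorphic to $\IrrMod$. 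Hence there is a (necessarily finite, since $\mathsf{A}$ is finite-dimensional) set $\IrrMod_1, \ldots, \IrrMod_r$ of pairwise non-isomorphic irreducibles such that
\begin{equation*}
\mathsf{A} \;\cong\; \bigoplus_{i=1}^{r} m_i \IrrMod_i
\end{equation*}
for some positive integers $m_i$, and every irreducible is isomorphic to one of these $\IrrMod_i$.

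To determine the multiplicities, I would compute $\Hom_{\mathsf{A}}(\mathsf{A}, \IrrMod_j)$ in two different ways. On the one hand, the map $\phi \mapsto \phi(\wun)$ is a vector space isomorphism $\Hom_{\mathsf{A}}(\mathsf{A}, \IrrMod_j) \cong \IrrMod_j$, so the left-hand side has dimension $\dim \IrrMod_j$. On the other hand, using the decomposition above and the additivity of $\Hom$,
\begin{equation*}
\Hom_{\mathsf{A}}(\mathsf{A}, \IrrMod_j) \;\cong\; \bigoplus_{i=1}^{r} m_i \Hom_{\mathsf{A}}(\IrrMod_i, \IrrMod_j).
\end{equation*}
Schur's lemma, combined with the algebraic closure of $\CC$, gives $\Hom_{\mathsf{A}}(\IrrMod_i, \IrrMod_j) = \delta_{ij}\CC$: a non-zero homomorphism between irreducibles is an isomorphism, and an endomorphism of a finite-dimensional irreducible module over $\CC$ has an eigenvalue $\lambda$, so subtracting $\lambda \id$ produces an endomorphism with non-zero kernel, hence zero, proving that every endomorphism is a scalar. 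Combining the two computations yields $m_j = \dim \IrrMod_j$, which is precisely \eqref{eqn:WedderburnDecomp}.

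The main obstacle is really just marshalling the standard lemmas cleanly: the equivalence of complete reducibility with every submodule being a direct summand, and the use of algebraic closure in Schur's lemma to get $\Hom_{\mathsf{A}}(\IrrMod_i, \IrrMod_i) = \CC$ rather than a division algebra. Neither is deep, but both are essential, and without algebraic closure one would instead obtain the more general Artin--Wedderburn statement involving matrix algebras over division rings rather than the clean dimension count above.
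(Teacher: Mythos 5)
Your proof is correct, and it is the standard textbook argument: decompose the regular module into irreducibles, show every irreducible appears by mapping $\mathsf{A}$ onto $\IrrMod$ via $a \mapsto av$, and then pin down the multiplicities by computing $\Hom_{\mathsf{A}}(\mathsf{A}, \IrrMod_j)$ two ways and invoking Schur's lemma over the algebraically closed field $\CC$. There is nothing in the paper to compare this against: the theorem appears in Appendix B in a list of ``classical tools of representation theory,'' and the authors explicitly state that these results are given without proof, citing standard references (Lam, Pierce, and Mathas) instead. Your write-up correctly isolates the two points a reader needs to supply --- that complete reducibility of $\mathsf{A}$ propagates to all finitely generated modules via the complement criterion, and that algebraic closure is what upgrades Schur's lemma from a division algebra to $\Hom_{\mathsf{A}}(\IrrMod_i,\IrrMod_i) \cong \CC$, giving the clean multiplicity $m_j = \dim \IrrMod_j$ rather than the general Artin--Wedderburn form.
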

\noindent A useful ``converse'' to this result which will be invoked in \secref{sec:Gram} is the following:
\begin{proposition} \label{prop:WeddConverse}
If the regular representation of a complex, finite-dimensional, associative algebra $\mathsf{A}$ decomposes as in \eqref{eqn:WedderburnDecomp}, where the $\IrrMod_1, \ldots , \IrrMod_r$ form a complete set of non-isomorphic irreducible $\mathsf{A}$-modules, then $\mathsf{A}$ is semisimple.
\end{proposition}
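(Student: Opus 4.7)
The plan is to observe that, under the definition of semisimplicity adopted at the start of this appendix — that $\mathsf{A}$ is semisimple iff its regular module is completely reducible — the proposition is almost tautological. The hypothesis explicitly presents the regular module as the direct sum of irreducibles displayed in \eqref{eqn:WedderburnDecomp}, and so by definition $\mathsf{A}$ is semisimple. The completeness assumption on the $\IrrMod_i$ is, strictly speaking, redundant for this conclusion but serves to make clear that we have a full Wedderburn-style decomposition rather than just an arbitrary sum of irreducibles.

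If one prefers the equivalent stronger formulation in which semisimplicity demands that \emph{every} finite-dimensional $\mathsf{A}$-module be completely reducible, then a little more is required. First I would observe that any finite-dimensional $\mathsf{A}$-module $M$ is a quotient of some free module $\mathsf{A}^{k}$: picking a generating set $\{m_1,\dots,m_k\}$ for $M$, the map $(a_1,\dots,a_k)\mapsto\sum_j a_j m_j$ is a surjective $\mathsf{A}$-homomorphism $\mathsf{A}^k\twoheadrightarrow M$. Next, $\mathsf{A}^k\cong\bigoplus_i k(\dim\IrrMod_i)\IrrMod_i$ by the hypothesis applied $k$ times, so $\mathsf{A}^k$ is itself a direct sum of irreducibles.

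The remaining ingredient is the standard lemma that any quotient of a direct sum of irreducibles is again a direct sum of irreducibles. I would deduce this from the key complementation fact: every submodule $P$ of a semisimple module $N=\bigoplus_\alpha S_\alpha$ (with each $S_\alpha$ simple) has a direct complement which is itself a direct sum of some of the $S_\alpha$. Indeed, a maximal subcollection $\{S_\alpha\st \alpha\in I\}$ whose sum with $P$ remains direct must exhaust $N$, for otherwise some $S_\beta$ with $\beta\notin I$ would intersect $P\oplus\bigoplus_{\alpha\in I}S_\alpha$ non-trivially, forcing $S_\beta\subseteq P\oplus\bigoplus_{\alpha\in I}S_\alpha$ by simplicity, and contradicting maximality of $I$. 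Hence $N\cong P\oplus\bigoplus_{\alpha\in I}S_\alpha$, and $N/P\cong\bigoplus_{\alpha\in I}S_\alpha$ is visibly a direct sum of irreducibles. Applied with $N=\mathsf{A}^k$ and $P$ the kernel of the surjection $\mathsf{A}^k\twoheadrightarrow M$, this yields the result. There is no genuine obstacle here; the argument is a routine combination of definitional unwinding with the standard Maschke-type complementation lemma, which is why the paper leaves it to the reader as a ``basic result''.
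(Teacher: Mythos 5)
The paper supplies no proof of this proposition. Appendix~B opens by declaring that its results are quoted without proof and deferred to standard references (\cite{Lam1,PieAss82,Mathas}), so there is no internal argument to compare against. Your main observation --- that with the definition of semisimplicity adopted at the start of the appendix (the regular module is completely reducible), the hypothesis of \eqref{eqn:WedderburnDecomp} literally \emph{is} the conclusion --- is correct and is presumably why the authors felt entitled to state it without proof. Your second paragraph, supplying the free-cover-plus-complementation argument for the stronger ``every finite-dimensional module is completely reducible'' characterisation, follows the standard route and is sound in substance.

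The one blemish is the logical scaffolding of the complementation step. You run it as a proof by contradiction, writing that if the partial sum $P \oplus \bigoplus_{\alpha\in I}S_\alpha$ failed to exhaust $N$ then some $S_\beta$ ``would intersect $P\oplus\bigoplus_{\alpha\in I}S_\alpha$ non-trivially, forcing $S_\beta\subseteq\cdots$ by simplicity, and contradicting maximality of $I$.'' But containment of $S_\beta$ in the partial sum is perfectly compatible with maximality of $I$ (it just means $\beta$ contributes nothing new), so the purported contradiction never arrives; and ``does not exhaust $N$'' does not by itself imply that any particular $S_\beta$ meets the partial sum non-trivially. The clean version is a direct argument rather than one by contradiction: maximality of $I$ forces every $S_\beta$ with $\beta\notin I$ to meet $P\oplus\bigoplus_{\alpha\in I}S_\alpha$ non-trivially (otherwise adjoining $\beta$ to $I$ would preserve directness, contradicting maximality); simplicity then forces $S_\beta\subseteq P\oplus\bigoplus_{\alpha\in I}S_\alpha$ for all $\beta$; hence the partial sum contains every $S_\gamma$ and equals $N$. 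All the ingredients are present in your paragraph --- simplicity, maximality, containment --- but the implications need to be reassembled in this order for the step to close.
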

When the algebra $\mathsf{A}$ is not semisimple, the decomposition of its regular module will contain modules that are reducible, but not completely reducible. An {\em indecomposable} module $\mathcal M$ is one which cannot be written as the direct sum of two proper non-trivial submodules. Irreducible modules are examples. When the regular module of an algebra is written as a direct sum of indecomposable modules, those that appear in the decomposition are called the {\em principal indecomposable modules} of the algebra. Let $\{\mathcal P_i\}$ and $\{\mathcal L_i\}$ be complete sets of non-isomorphic principal indecomposable modules  of $\mathsf{A}$ and non-isomorphic irreducible modules, respectively. Wedderburn's theorem does not hold when semisimplicity is relaxed. It is replaced by the following generalisation:
\begin{theorem} \label{thm:GW}
Let $\mathsf{A}$ be a complex, finite-dimensional, associative algebra. The two sets $\{\mathcal P_i\}$ and $\{\mathcal L_i\}$ are put in one-to-one correspondence by associating a given principal indecomposable with its unique irreducible quotient. If $r$ is the common cardinality of these sets, then the regular representation decomposes as
\begin{equation}\label{eq:wedderNSS}
\mathsf{A} \cong \bigoplus_{i=1}^r \brac{\dim \IrrMod_i} \ProjMod_i.
\end{equation}
\end{theorem}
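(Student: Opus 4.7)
The plan is to prove this classical result by combining the Krull–Schmidt theorem with a careful analysis of the Jacobson radical $J$ of $\mathsf{A}$. First I would decompose the identity as a sum $\wun = e_1 + \cdots + e_s$ of pairwise orthogonal primitive idempotents, which yields the decomposition $\mathsf{A} = \bigoplus_{j=1}^s \mathsf{A} e_j$ of the regular module as a direct sum of indecomposable submodules. By Krull–Schmidt, these summands are, up to isomorphism and reordering, exactly the principal indecomposable modules of $\mathsf{A}$; denote a complete list of non-isomorphic representatives by $\ProjMod_1, \ldots, \ProjMod_r$ with multiplicities $n_i$, so that $\mathsf{A} \cong \bigoplus_{i=1}^r n_i \ProjMod_i$.

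Second, I would establish that each $\ProjMod_i$ has a unique maximal proper submodule and hence a unique irreducible quotient. The key fact here is that $\ProjMod_i = \mathsf{A} e$ for some primitive idempotent $e$, and primitivity of $e$ (in the finite-dimensional setting) is equivalent to the endomorphism ring $e \mathsf{A} e$ being local. Standard arguments involving Nakayama's lemma then show that $J \ProjMod_i$ is the unique maximal submodule. Consequently $\ProjMod_i / J \ProjMod_i$ is irreducible, giving a well-defined map from isomorphism classes of principal indecomposables to isomorphism classes of irreducible modules.

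Third, I would prove this map is a bijection. For injectivity: if two principal indecomposables $\ProjMod$ and $\ProjMod'$ share the same irreducible head, then a lift of the projection $\ProjMod \rightarrow \ProjMod / J \ProjMod$ to a morphism $\ProjMod' \rightarrow \ProjMod$ has image surjecting onto the head, and hence the image must equal $\ProjMod$ by Nakayama's lemma (valid because $J$ is nilpotent in the finite-dimensional setting); a dimension count then forces this surjection to be an isomorphism. For surjectivity: every irreducible $\IrrMod$ occurs as a composition factor of the regular module $\mathsf{A}$, hence of some $\mathsf{A} e_j$, and the filtration of $\mathsf{A} e_j$ by the powers $J^k \mathsf{A} e_j$ forces $\IrrMod$ to appear as the head of some $\mathsf{A} e_k$.

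Finally, to determine the multiplicities $n_i$, I would compare two decompositions of $\mathsf{A}/J$. On one hand, quotienting $\mathsf{A} \cong \bigoplus_i n_i \ProjMod_i$ by $J$ gives $\mathsf{A}/J \cong \bigoplus_i n_i \IrrMod_i$, since each $J \ProjMod_i$ is the unique maximal submodule established above. On the other hand, $\mathsf{A}/J$ is semisimple by definition of the Jacobson radical, so Wedderburn's theorem (\thmref{thm:W}) yields $\mathsf{A}/J \cong \bigoplus_i \brac{\dim \IrrMod_i} \IrrMod_i$. Since the $\IrrMod_i$ are pairwise non-isomorphic, comparing multiplicities gives $n_i = \dim \IrrMod_i$, completing the proof. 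The main obstacle is the step showing that primitive idempotents have local endomorphism rings, which underpins both the uniqueness of the maximal submodule in $\ProjMod_i$ and the existence of the lift used in the injectivity argument; once this local-ring machinery and Nakayama's lemma are in hand, the rest of the proof is driven by elementary dimension counts.
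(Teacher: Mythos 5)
The paper does not actually prove \thmref{thm:GW}: \appref{app:Review} states it as a classical fact and refers the reader to \cite{CurRep62}, \cite{PieAss82} and \cite{Mathas} for proofs, so there is no in-paper argument to compare against. Your proposal is a correct outline of precisely the standard argument given in those references: decomposing $\wun$ into orthogonal primitive idempotents and invoking Krull--Schmidt to identify the principal indecomposables, using the locality of $e\mathsf{A}e$ together with Nakayama's lemma to show that $J\ProjMod_i$ is the unique maximal submodule, using projectivity plus Nakayama for injectivity of the head map, and comparing the two semisimple decompositions of $\mathsf{A}/J$ (the one inherited from $\mathsf{A}\cong\bigoplus_i n_i\ProjMod_i$ and the Wedderburn one from \thmref{thm:W}) to read off $n_i=\dim\IrrMod_i$. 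The one step I would tighten is surjectivity of the head map: being a composition factor of some $\mathsf{A}e_j$ does not by itself place $\IrrMod$ at the head of some summand. Either argue that every irreducible $\IrrMod$ is cyclic, hence a quotient of the regular module, hence the image of a non-zero (therefore surjective) map from some indecomposable summand $\mathsf{A}e_j$, whose kernel must then be the unique maximal submodule $J\mathsf{A}e_j$; or, if you keep the filtration argument, add the observation that each layer $J^k\mathsf{A}/J^{k+1}\mathsf{A}$ is a semisimple $\mathsf{A}/J$-module whose simple constituents are exactly the heads of the $\mathsf{A}e_j$ appearing in the Wedderburn decomposition of $\mathsf{A}/J$. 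With that point made explicit, the proof is complete.
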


One way to characterise the structure of an $\mathsf{A}$-module $\mathcal M$ is through its composition series. A \emph{filtration} is a sequence of submodules of $\mathcal M$ such that
\begin{equation}
0=\mathcal M_0\subset \mathcal M_1\subset \mathcal M_2 \subset \dots \subset \mathcal M_k=\mathcal M.
\end{equation}
A filtration is said to be a \emph{composition series} if every quotient $\mathcal M_i/\mathcal M_{i-1}$, for $1\leqslant i\leqslant k$, is irreducible (and non-zero). These quotients are called the 
{\em composition factors} of $\mathcal M$. It is easily shown that every (finite-dimensional) module has a composition series. Indeed, a module may have several composition series.  However, the following theorem shows that the composition factors of $\mathcal M$ do not depend upon the choice of composition series.

\begin{theorem}[Jordan-H\"older]\label{thm:JH}Let 
\begin{equation}
0=\mathcal M_0\subset\mathcal  M_1\subset\mathcal  M_2 \subset \dots \subset\mathcal  M_k=\mathcal M 
\qquad \text{and}\qquad 
0=\mathcal N_0\subset\mathcal  N_1\subset\mathcal  N_2 \subset \dots \subset \mathcal N_{\ell}=\mathcal M
\end{equation}
be two composition series of the $\mathsf{A}$-module $\mathcal M$. Then, $k=\ell$ and, up to a permutation of their indices, the composition factors $D_i=\mathcal M_i/\mathcal M_{i-1}$ and $E_j=\mathcal N_j/\mathcal N_{j-1}$, for $1 \leqslant i,j \leqslant k$, coincide.
\end{theorem}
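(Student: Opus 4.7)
The plan is to prove the theorem by induction on $k$, the length of the first composition series. The base case $k=1$ is immediate: $\mathcal M$ is then irreducible, so the second series must also have length $1$ and the single composition factors coincide with $\mathcal M$ itself. For the inductive step, I would compare the two penultimate submodules $\mathcal M_{k-1}$ and $\mathcal N_{\ell-1}$. Both are maximal proper submodules of $\mathcal M$ (otherwise $\mathcal M/\mathcal M_{k-1}$ or $\mathcal M/\mathcal N_{\ell-1}$ would fail to be irreducible), which gives two quite different situations to handle.

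The easy case is $\mathcal M_{k-1}=\mathcal N_{\ell-1}$: then the two series restrict to composition series of the same module $\mathcal M_{k-1}$, one of length $k-1$ and the other of length $\ell-1$. The inductive hypothesis applied to $\mathcal M_{k-1}$ gives $k-1=\ell-1$ and a matching of the composition factors $D_1,\dots ,D_{k-1}$ with $E_1,\dots ,E_{\ell-1}$; the top factors $D_k$ and $E_\ell$ already coincide as both equal $\mathcal M/\mathcal M_{k-1}$, and we are done.

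The substantive case is $\mathcal M_{k-1}\neq \mathcal N_{\ell-1}$. By maximality of both, their sum is all of $\mathcal M$, and I would use the second isomorphism theorem to identify
\begin{equation}
\frac{\mathcal M}{\mathcal M_{k-1}}\cong \frac{\mathcal N_{\ell-1}}{\mathcal K}\qquad \text{and}\qquad \frac{\mathcal M}{\mathcal N_{\ell-1}}\cong \frac{\mathcal M_{k-1}}{\mathcal K},\qquad \text{where}\quad \mathcal K=\mathcal M_{k-1}\cap \mathcal N_{\ell-1}.
\end{equation}
Picking any composition series $0=\mathcal K_0\subset \mathcal K_1\subset\cdots\subset \mathcal K_s=\mathcal K$ of $\mathcal K$, I can now build two new composition series of $\mathcal M$: one that factors through $\mathcal M_{k-1}$, namely the $\mathcal K_i$ followed by $\mathcal K\subset \mathcal M_{k-1}\subset \mathcal M$, and another that factors through $\mathcal N_{\ell-1}$, namely the $\mathcal K_i$ followed by $\mathcal K\subset \mathcal N_{\ell-1}\subset \mathcal M$. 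Because the top two quotients of these two new series are swapped copies of the two irreducibles $\mathcal M/\mathcal M_{k-1}$ and $\mathcal M/\mathcal N_{\ell-1}$, their multisets of composition factors agree.

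To close the argument, I apply the inductive hypothesis twice: once to compare the original $(\mathcal M_i)$ series with the first new series (both pass through $\mathcal M_{k-1}$, so the comparison takes place inside a module of strictly smaller dimension), and once to compare the two new series via the $(\mathcal N_j)$ series (analogously through $\mathcal N_{\ell-1}$). Combining these three identifications of multisets of composition factors yields equality of the original two. The main obstacle I foresee is purely bookkeeping: keeping the inductive hypothesis applicable at each step requires verifying that each comparison is being made on a module of length strictly less than $\max(k,\ell)$; using induction on $\min(k,\ell)$ and arguing that $\mathcal M_{k-1}$ and $\mathcal N_{\ell-1}$ both admit composition series whose lengths can be related through the intermediate series built from $\mathcal K$ takes a little care, but the construction above is designed precisely so that this works.
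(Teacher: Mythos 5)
The paper gives no proof of this theorem: Appendix B explicitly states that the Jordan--H\"older theorem, along with the other background results it lists, is ``stated here without proof,'' with references supplied in lieu of an argument. Your proof is the classical one and it is correct. The bookkeeping you flag at the end does work out, and is perhaps cleanest if the induction is run on the minimal length of a composition series of $\mathcal M$: in your second case, the first application of the inductive hypothesis inside $\mathcal M_{k-1}$ (comparing the restriction of the $(\mathcal M_i)$-series, of length $k-1$, with the $\mathcal K$-series extended by $\mathcal K\subset\mathcal M_{k-1}$, of length $s+1$) forces $s+1=k-1$; the second application, inside $\mathcal N_{\ell-1}$, then compares series of lengths $\ell-1$ and $s+1=k-1$, which is again covered because $k-1$ is strictly below the inductive parameter. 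The two uses of the second isomorphism theorem precisely swap the top two composition factors of the two auxiliary series through $\mathcal K$, so the three multiset identifications chain together as you describe and yield $k=\ell$ together with the matching of factors.
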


\noindent The composition factors of a submodule $\mathcal N\subset \mathcal M$ form a subset of those of $\mathcal M$. The same holds for quotients of $M$ and more general subquotients.

The theorem of reciprocity due to Frobenius is the 
next result covered in this appendix. It relates modules constructed from known ones through the classical constructions of \emph{restriction} and \emph{induction}. The first is easy to describe: When the algebra $\mathsf{A}$ contains a subalgebra $\mathsf{B}$, any module $\mathcal M$ over $\mathsf{A}$ is also a module over $\mathsf{B}$. (The action of an element of $\mathsf{B}$ on $\mathcal M$ is simply that of this element seen as an element of $\mathsf{A}$.) We will denote this $\mathsf{B}$-module by $\Res{\mathcal{M}}$. As vector spaces, $\mathcal M$ and $\Res{\mathcal{M}}$ are identical and have the same dimension. The second construction, induction, also uses a pair of algebras $\mathsf{B}\subset \mathsf{A}$ as before, but the starting module $\mathcal M$ is now over $\mathsf{B}$. The induced $\mathsf{A}$-module, which we denote by $\Ind{\mathcal{M}}$, is the tensor product $\mathsf{A}\otimes_{\mathsf{B}}\mathcal M$. (\secref{sec:Ind} provides a more detailed definition and several explicit examples.) Restriction and induction are ``dual'' operations in the following sense.
\begin{proposition}[Frobenius reciprocity] \label{prop:Frob}
Let $\mathsf{B} \subset \mathsf{A}$ be two finite-dimensional associative algebras over $\mathbb{C}$. Let $\mathcal{M}$ be a $\mathsf{B}$-module and $\mathcal{N}$ be an $\mathsf{A}$-module. Then, the following isomorphism between vector spaces of module homomorphisms holds:
\begin{equation}
\Hom_{\mathsf{A}}(\Ind{\mathcal{M}},\mathcal{N})\cong \Hom_{\mathsf{B}}(\mathcal{M}, \Res{\mathcal{N}}).
\end{equation}
\end{proposition}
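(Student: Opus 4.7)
The plan is to exhibit an explicit pair of mutually inverse linear maps between the two $\Hom$-spaces. First I would define
\begin{equation*}
\Phi \colon \Hom_{\mathsf{A}}(\Ind{\mathcal{M}},\mathcal{N}) \longrightarrow \Hom_{\mathsf{B}}(\mathcal{M}, \Res{\mathcal{N}}), \qquad \Phi(f)(m) = f(\wun \otimes m),
\end{equation*}
and check that $\Phi(f)$ really is a $\mathsf{B}$-homomorphism: for $b \in \mathsf{B}$ and $m \in \mathcal{M}$, the ``permeability'' of the tensor product to $\mathsf{B}$ yields $\Phi(f)(bm) = f(\wun \otimes bm) = f(b \otimes m) = f(b(\wun \otimes m)) = b \cdot f(\wun \otimes m) = b \cdot \Phi(f)(m)$, where in the last step $b$ acts on $\Res{\mathcal{N}}$ through its image in $\mathsf{A}$.

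Next I would construct the inverse. Given $g \in \Hom_{\mathsf{B}}(\mathcal{M}, \Res{\mathcal{N}})$, I would first define a $\CC$-bilinear map $\mathsf{A} \times \mathcal{M} \to \mathcal{N}$ by $(a,m) \mapsto a \cdot g(m)$ and then descend it to the $\mathsf{B}$-balanced tensor product. The key computation is that this bilinear map is $\mathsf{B}$-balanced: for $b \in \mathsf{B}$,
\begin{equation*}
(ab) \cdot g(m) = a \cdot \bigl( b \cdot g(m) \bigr) = a \cdot g(bm),
\end{equation*}
using that $g$ is a $\mathsf{B}$-homomorphism. This gives a well-defined linear map $\Psi(g) \colon \Ind{\mathcal{M}} \to \mathcal{N}$ with $\Psi(g)(a \otimes m) = a \cdot g(m)$, which is manifestly $\mathsf{A}$-linear since $a'(a \otimes m) = (a'a) \otimes m$ and $(a'a) \cdot g(m) = a' \cdot \bigl( a \cdot g(m) \bigr)$.

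Finally I would verify that $\Phi$ and $\Psi$ are mutually inverse. For $g \in \Hom_{\mathsf{B}}(\mathcal{M}, \Res{\mathcal{N}})$, one has $\Phi(\Psi(g))(m) = \Psi(g)(\wun \otimes m) = \wun \cdot g(m) = g(m)$. Conversely, for $f \in \Hom_{\mathsf{A}}(\Ind{\mathcal{M}},\mathcal{N})$, the $\mathsf{A}$-linearity of $f$ gives $\Psi(\Phi(f))(a \otimes m) = a \cdot \Phi(f)(m) = a \cdot f(\wun \otimes m) = f(a \otimes m)$, and since $\Ind{\mathcal{M}}$ is spanned by elements of the form $a \otimes m$, this suffices. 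Both $\Phi$ and $\Psi$ are visibly linear, yielding the desired isomorphism.

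The only delicate point — the ``main obstacle'', though it is very mild here — is the well-definedness of $\Psi(g)$ on the quotient defining $\Ind{\mathcal{M}} = \mathsf{A} \otimes_{\mathsf{B}} \mathcal{M}$; everything else is essentially bookkeeping. No finiteness assumption is actually required for this argument, so the result holds verbatim in greater generality than stated.
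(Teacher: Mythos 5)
Your proof is correct. It is the classical tensor--hom adjunction argument: $\Phi(f)(m)=f(\wun\otimes m)$ in one direction, and in the other the $\mathsf{B}$-balancedness of $(a,m)\mapsto a\cdot g(m)$ lets $\Psi(g)$ descend to $\mathsf{A}\otimes_{\mathsf{B}}\mathcal{M}$; both are $\mathsf{A}$- resp.\ $\mathsf{B}$-linear and mutually inverse, as you verify. The paper itself does not prove this proposition --- Appendix~\ref{app:Review} explicitly states the results there ``without proof'' and refers the reader to Martin and Pierce --- so there is no in-text argument to compare against, but your proof is exactly the standard one those references supply. One tacit hypothesis you rely on (and which the paper also assumes throughout) is that $\mathsf{B}$ is a \emph{unital} subalgebra with $\wun_{\mathsf{B}}=\wun_{\mathsf{A}}$, so that $\wun\otimes bm=b\otimes m$ makes sense; this holds for the inclusions $\tl{n}\subset\tl{n+1}$ used in the paper. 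Your closing remark is also correct: the adjunction $\Hom_{\mathsf{A}}(\mathsf{A}\otimes_{\mathsf{B}}\mathcal{M},\mathcal{N})\cong\Hom_{\mathsf{B}}(\mathcal{M},\Res{\mathcal{N}})$ needs no finite-dimensionality; the paper imposes finiteness only as a blanket assumption for the appendix, not because this particular statement requires it.
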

\noindent The most familiar version of reciprocity corresponds to taking $\mathsf{B}$ and $\mathsf{A}$ to be the group algebras $\mathbb{C} H$ and $\mathbb{C} G$ of a pair of finite groups $H\subset G$. Finally, we shall need the behaviour of induction with respect to exact sequences.
\begin{proposition} \label{prop:IndRExact}
Suppose that
\begin{equation}
\dses{\mathcal M_1}{}{\mathcal M_2}{}{\mathcal M_3}
\end{equation}
is an exact short sequence of $\mathsf{B}$-modules. If $\mathsf{B}\subset \mathsf{A}$, then there is an exact sequence involving the induced $\mathsf A$-modules $\Ind{\mathcal M_i}=\mathsf{A}\otimes_{\mathsf{B}}\mathcal M_i$:
\begin{equation}
\rdses{\Ind{\mathcal M_1}}{}{\Ind{\mathcal M_2}}{}{\Ind{\mathcal M_3}}.
\end{equation}
\end{proposition}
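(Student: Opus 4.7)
The plan is to apply the standard right-exactness argument for the tensor product functor $\mathsf{A} \otimes_{\mathsf{B}} (-)$, adapted to the concrete level at which the paper works. Let $f \colon \mathcal{M}_1 \to \mathcal{M}_2$ and $g \colon \mathcal{M}_2 \to \mathcal{M}_3$ denote the maps in the given short exact sequence. I would first define $\Ind{f}$ and $\Ind{g}$ on elementary tensors by $\Ind{f}(a \otimes m) = a \otimes f(m)$ and $\Ind{g}(a \otimes m) = a \otimes g(m)$, and extend linearly. The verification that these descend to well-defined $\mathsf{A}$-module maps on $\Ind{\mathcal{M}_i}$ uses only the $\mathsf{B}$-linearity of $f$ and $g$: the defining relation $ab \otimes m - a \otimes bm \mapsto ab \otimes f(m) - a \otimes b f(m)$ vanishes in $\Ind{\mathcal{M}_2}$ since $f(bm) = bf(m)$, and similarly for $g$.

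Next I would dispatch the two easy parts of the exactness. Surjectivity of $\Ind{g}$ follows because the elementary tensors span $\Ind{\mathcal{M}_3}$ and, given any such $a \otimes m_3$, surjectivity of $g$ yields some $m_2 \in \mathcal{M}_2$ with $g(m_2) = m_3$, so $a \otimes m_3 = \Ind{g}(a \otimes m_2)$. The inclusion $\im(\Ind{f}) \subseteq \ker(\Ind{g})$ is immediate from $g \circ f = 0$, since $\Ind{g} \circ \Ind{f}(a \otimes m_1) = a \otimes g(f(m_1)) = 0$.

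The main content is then the reverse inclusion $\ker(\Ind{g}) \subseteq \im(\Ind{f})$. I would prove this by constructing an inverse to the induced homomorphism $\bar{\Ind{g}} \colon \Ind{\mathcal{M}_2}/\im(\Ind{f}) \to \Ind{\mathcal{M}_3}$. Define a candidate $h \colon \Ind{\mathcal{M}_3} \to \Ind{\mathcal{M}_2}/\im(\Ind{f})$ on elementary tensors by $h(a \otimes m_3) = a \otimes m_2 + \im(\Ind{f})$, where $m_2$ is any preimage of $m_3$ under $g$. The crux is well-definedness, which is where the work lies:
\begin{itemize}
\item[(i)] Independence of the choice of preimage: if $g(m_2) = g(m_2')$, then $m_2 - m_2' \in \ker g = \im f$, so $m_2 - m_2' = f(m_1)$ and $a \otimes (m_2 - m_2') = \Ind{f}(a \otimes m_1) \in \im(\Ind{f})$.
\item[(ii)] Compatibility with the tensor product relations: $(ab) \otimes m_3$ and $a \otimes (bm_3)$ must go to the same class, for all $b \in \mathsf{B}$; picking $m_2$ with $g(m_2) = m_3$, the element $bm_2$ satisfies $g(bm_2) = bm_3$, so the two images are $(ab) \otimes m_2$ and $a \otimes (bm_2)$, which coincide in $\Ind{\mathcal{M}_2}$ by the defining relation of the tensor product over $\mathsf{B}$.
\item[(iii)] Bilinearity and $\mathsf{A}$-linearity, which are routine from the construction.
\end{itemize}
Once $h$ is established, the equalities $\bar{\Ind{g}} \circ h = \id$ and $h \circ \bar{\Ind{g}} = \id$ are direct from the definitions, so $\bar{\Ind{g}}$ is an isomorphism, which is equivalent to $\ker(\Ind{g}) = \im(\Ind{f})$.

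The expected obstacle is purely bookkeeping: verifying step (ii), where one must show that the preimage recipe is compatible with the relations of $\otimes_{\mathsf{B}}$ even though preimages are not canonical. All other steps are essentially formal. Since the rest of the paper already invokes the concrete description of $\Ind{\mathcal{M}}$ as a quotient of $\mathsf{A} \otimes_{\CC} \mathcal{M}$ by relations of the form $ab \otimes_{\CC} m - a \otimes_{\CC} bm$ (see the discussion at the start of \secref{sec:Ind}), the argument above is naturally phrased in that language.
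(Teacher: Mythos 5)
Your proof is correct. Note, however, that the paper does not actually prove \propref{prop:IndRExact}: it appears in \appref{app:Review}, which the authors explicitly describe as a review of standard results stated \emph{without} proof, the right-exactness of the tensor product being referred to \cite{HGKAlgI}. Your argument --- exhibiting an inverse to the induced map $\overline{\Ind{g}} \colon \Ind{\mathcal{M}_2}/\im(\Ind{f}) \rightarrow \Ind{\mathcal{M}_3}$ --- is the standard textbook one and is complete; the three well-definedness checks you isolate are exactly the content of the lemma, and (i) is indeed where $\ker g = \im f$ is used. One small presentational remark: since $h$ is first constructed as a map out of $\mathsf{A}\otimes_{\CC}\mathcal{M}_3$ (for which (i) together with the bilinearity in (iii) suffice) and only then descended to the quotient $\mathsf{A}\otimes_{\mathsf{B}}\mathcal{M}_3$ (for which (ii) is the balancing condition), reordering the checks as (i), (iii), (ii) would make the factorisation through the $\CC$-tensor product, which the paper itself invokes at the start of \secref{sec:Ind}, slightly more transparent; this does not affect correctness.
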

\noindent Because induction preserves the exactness of short exact sequences except at the leftmost position, induction is said to be right-exact.

We turn now to projective modules which play a central role in \secref{sec:Proj}. An $\mathsf{A}$-module $\mathcal P$ is projective if, when there are two other $\mathsf{A}$-modules $\mathcal M$ and $\mathcal N$ and homomorphisms $\alpha:\mathcal P\rightarrow \mathcal N$ and $\gamma:\mathcal M\rightarrow\mathcal N$ with $\gamma$ surjective, then there exists a homomorphism $\delta: \mathcal P\rightarrow \mathcal M$ such that $\gamma\circ\delta=\alpha$. Equivalently, with the same input, there exists a homomorphism $\delta: \mathcal P\rightarrow \mathcal M$ such that the following diagram, with the bottom row exact, commutes:
\begin{center}
\begin{tikzpicture}[node distance=1.5cm, auto]
  \node (P) {$\mathcal P$};
  \node (N) [below of=P] {$\mathcal N$};
  \node (M) [left of=N] {$\mathcal M$};
  \node (O) [right of=N] {$0$.};
  \draw[->] (P) to node {$\alpha$} (N);
  \draw[->, dashed] (P) to node [swap] {$\delta$} (M);
  \draw[->] (M) to node [swap] {$\gamma$} (N);
  \draw[->] (N) to node [swap] {$\ $} (O);
\end{tikzpicture}
\end{center}
Here are some basic properties of projective modules.
\begin{proposition} \label{prop:ProjSum} \ 

\begin{enumerate}[label=\textup{(\roman*)}]
\item Any direct sum of projective modules is projective.
\item Every direct summand of a projective module is projective. \label{it:Proj2}
\end{enumerate}
\end{proposition}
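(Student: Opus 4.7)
The plan is to verify both claims directly from the lifting property that defines projectivity, using only the natural inclusion and projection maps associated with a direct sum decomposition.

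For part (i), I would start with a finite family of projective modules $\mathcal{P}_1, \ldots, \mathcal{P}_k$ and set $\mathcal{P} = \bigoplus_i \mathcal{P}_i$, letting $\iota_i \colon \mathcal{P}_i \hookrightarrow \mathcal{P}$ and $\pi_i \colon \mathcal{P} \twoheadrightarrow \mathcal{P}_i$ denote the canonical injections and projections (so $\sum_i \iota_i \circ \pi_i = \id_{\mathcal{P}}$). Given a surjection $\gamma \colon \mathcal{M} \twoheadrightarrow \mathcal{N}$ and any homomorphism $\alpha \colon \mathcal{P} \to \mathcal{N}$, each composite $\alpha \circ \iota_i \colon \mathcal{P}_i \to \mathcal{N}$ lifts, by the projectivity of $\mathcal{P}_i$, to some $\delta_i \colon \mathcal{P}_i \to \mathcal{M}$ with $\gamma \circ \delta_i = \alpha \circ \iota_i$. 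I would then define $\delta = \sum_i \delta_i \circ \pi_i \colon \mathcal{P} \to \mathcal{M}$ and verify that $\gamma \circ \delta = \sum_i (\alpha \circ \iota_i) \circ \pi_i = \alpha$, establishing the lifting property for $\mathcal{P}$.

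For part (ii), I would write the projective module as $\mathcal{P} = \mathcal{Q} \oplus \mathcal{Q}'$, with associated inclusion $\iota \colon \mathcal{Q} \hookrightarrow \mathcal{P}$ and projection $\pi \colon \mathcal{P} \twoheadrightarrow \mathcal{Q}$ satisfying $\pi \circ \iota = \id_{\mathcal{Q}}$. Given $\alpha \colon \mathcal{Q} \to \mathcal{N}$ and a surjection $\gamma \colon \mathcal{M} \twoheadrightarrow \mathcal{N}$, I would extend $\alpha$ to all of $\mathcal{P}$ by setting $\tilde{\alpha} = \alpha \circ \pi$. Projectivity of $\mathcal{P}$ supplies a lift $\tilde{\delta} \colon \mathcal{P} \to \mathcal{M}$ with $\gamma \circ \tilde{\delta} = \tilde{\alpha}$, and the restriction $\delta = \tilde{\delta} \circ \iota \colon \mathcal{Q} \to \mathcal{M}$ is the desired lift of $\alpha$, since $\gamma \circ \delta = \alpha \circ \pi \circ \iota = \alpha$.

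Neither step presents any real obstacle; both are essentially formal diagram chases. The only mildly delicate point is (i), where, if one wished to work with infinite direct sums, additivity of $\Hom$ would have to be invoked to ensure that $\delta = \sum_i \delta_i \circ \pi_i$ is a well-defined homomorphism. In the present setting, however, all modules are finite-dimensional, so every direct sum involved is finite and the sum defining $\delta$ contains only finitely many non-zero terms, making this a non-issue.
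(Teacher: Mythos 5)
Your proof is correct. The paper itself gives no argument for this proposition --- it appears in the review appendix (\appref{app:Review}), where all results are ``stated without proof'' with references to standard texts such as \cite{CurRep62,PieAss82,Lam2} --- so there is no in-paper proof to compare against. What you have written is precisely the standard textbook diagram chase via the lifting property, with both directions handled cleanly (splitting $\alpha$ through the canonical injections for (i), extending along the projection and restricting along the injection for (ii)), and your closing remark about finiteness correctly identifies the only place where care would be needed in a more general setting.
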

\noindent Clearly \ref{it:Proj2} implies that every projective module can be written as a sum of indecomposable projective modules. There are many other ways to characterise projective modules.
\begin{proposition} \label{prop:ProjEquiv}
Let $\mathcal P$ be a module over an associative algebra. The three following conditions are equivalent:
\begin{enumerate}[label=\textup{(\roman*)}]
\item $\mathcal P$ is projective.
\item if $\mathcal M$ and $\mathcal N$ are any modules such that the sequence
$\ses{\mathcal M}{\mathcal N}{\mathcal P}$
is exact, then this sequence splits.
\item $\mathcal P$ is a direct sum of principal indecomposable modules.
\end{enumerate}
\end{proposition}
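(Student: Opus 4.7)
The plan is to prove the equivalence in the cyclic order $(i) \Rightarrow (ii) \Rightarrow (iii) \Rightarrow (i)$, drawing on the decomposition of the regular module provided by \thmref{thm:GW} and the closure properties of projectives given by \propref{prop:ProjSum}.

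For $(i) \Rightarrow (ii)$, the argument is essentially tautological. Given an exact sequence $\ses{\mathcal M}{\mathcal N}{\mathcal P}$ with surjection $\gamma \colon \mathcal N \to \mathcal P$, apply the defining property of projectivity to $\alpha = \mathrm{id}_{\mathcal P}$ and this $\gamma$. This yields a homomorphism $\delta \colon \mathcal P \to \mathcal N$ with $\gamma \circ \delta = \mathrm{id}_{\mathcal P}$, which is precisely a splitting of the short exact sequence.

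For $(iii) \Rightarrow (i)$, the first step is to observe that the regular module $\mathsf{A}$ is itself projective: given any surjection $\gamma \colon \mathcal M \to \mathcal N$ and any homomorphism $\alpha \colon \mathsf{A} \to \mathcal N$, choose $m \in \mathcal M$ with $\gamma(m) = \alpha(\mathbf{1})$ and define $\delta(a) = am$; then $\gamma \circ \delta = \alpha$. By \propref{prop:ProjSum}\,(ii), every direct summand of $\mathsf{A}$ is projective, and the principal indecomposables are by definition precisely the summands appearing in a decomposition of $\mathsf{A}$ into indecomposables. \propref{prop:ProjSum}\,(i) then upgrades this to arbitrary direct sums of principal indecomposables, establishing $(i)$.

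The implication $(ii) \Rightarrow (iii)$ is the step I expect to require the most care. Since all modules in sight are finite-dimensional, a choice of $\mathbb{C}$-basis $\{p_1,\ldots,p_k\}$ of $\mathcal P$ yields a surjective $\mathsf{A}$-homomorphism $\mathsf{A}^k \twoheadrightarrow \mathcal P$ (sending $(a_1,\ldots,a_k)$ to $\sum_i a_i p_i$), producing a short exact sequence $\ses{\mathcal K}{\mathsf{A}^k}{\mathcal P}$. By hypothesis (ii), this sequence splits, so $\mathcal P$ is a direct summand of $\mathsf{A}^k$. Now \thmref{thm:GW} expresses $\mathsf{A}^k$ as a direct sum of principal indecomposables, and the final task is to conclude that $\mathcal P$ is itself such a direct sum. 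This requires the Krull--Schmidt theorem: any decomposition of a finite-dimensional module into indecomposables is unique up to reordering and isomorphism, so any direct summand of a direct sum of indecomposables is itself isomorphic to a direct sum of a subcollection of those indecomposables. Since these subcollection members are principal indecomposables by construction, $\mathcal P$ has the desired form. The main subtlety, and the reason I flag this as the chief obstacle, is that the Krull--Schmidt theorem is not stated explicitly in the preceding appendix and must be invoked (or proven, via the fact that endomorphism rings of indecomposable finite-dimensional modules over $\mathbb{C}$ are local).
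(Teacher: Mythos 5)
Your proof is correct, and it is the standard textbook argument. Note that the paper's \appref{app:Review} explicitly states this proposition \emph{without} proof, deferring to the references cited there, so there is no in-paper proof to compare against. You are right to flag that $(ii) \Rightarrow (iii)$ genuinely requires the Krull--Schmidt theorem (more precisely, its consequence that a direct summand of a finite direct sum of indecomposable finite-dimensional modules is, up to isomorphism, a direct sum of a subcollection of them) and that this is not supplied in the appendix; the cited textbooks provide it, for instance via the locality of endomorphism rings of indecomposable finite-dimensional modules. The implications $(i) \Rightarrow (ii)$ and $(iii) \Rightarrow (i)$ are exactly as short as you present them, the latter correctly resting on the freeness of the regular module together with \propref{prop:ProjSum}.
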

\noindent The last statement shows the particular role played by the summands of \eqref{eq:wedderNSS}: The principal indecomposables are precisely the indecomposable projective modules.

\raggedright

\end{document}